\def\nonumchapter#1{%
    \chapter*{#1}
    \addcontentsline{toc}{chapter}{#1}}
\def\prefacesection#1{%
    \chapter*{#1}
    }
\newlength{\defbaselineskip}
\newcommand{\setlinespacing}[1]%
           {\setlength{\baselineskip}{#1 \defbaselineskip}}
\newcommand{\map}{\rightarrow}
\newcommand{\der}{\operatorname{der}}
\newcommand{\coc}{\operatorname{coc}}
\newcommand{\ad}{\operatorname{ad}}
\newcommand{\Span}{\operatorname{span}}
\newcommand{\inv}{\operatorname{inv}}
\newcommand{\slp}{\mathop{\mathrm {sl} }\nolimits}
\newcommand{\End}{\operatorname{End}}
\newcommand{\gl}{\operatorname{gl}}
\newcommand{\jor}{\operatorname{jor}}
\newcommand{\g}{\operatorname{g}}
\renewcommand{\j}{\operatorname{j}}
\newcommand{\fa}{\operatorname{\psi}}
\newcommand{\fb}{\operatorname{\phi}}
\newcommand{\fc}{\operatorname{\phi^0}}
\newcommand{\aut}{\mathop{\mathrm {Aut} }\nolimits}
\newcommand{\tr}{\mathop{\mathrm {tr} }\nolimits}
\newcommand{\el}{{\cal L}}
\newcommand{\q}{\quad}
\newcommand{\B}{{\cal B}}
\newcommand{\A}{{\cal A}}
\renewcommand{\epsilon}{\varepsilon}
\newcommand{\ep}{\varepsilon}
\newcommand{\la}{\lambda}
\newcommand{\al}{\alpha}
\renewcommand{\rho}{\varrho}
\renewcommand{\phi}{\varphi}
\newcommand{\R}{{\mathbb{R}}}
\newcommand{\N}{{\mathbb N}}
\newcommand{\Com}{{\mathbb C}}
\newcommand{\Z}{\mathbb{Z}}
\newcommand{\C}{\circ}
\newcommand{\set}[2]{\left\{#1 \, |\, #2 \right\}}
\newcommand{\wt}{\widetilde}
\newcommand{\gc}[2]{\coc_{( #1 )}\, #2}
\newcommand{\gd}[2]{\der_{( #1 )}\, #2}
\newcommand{\ga}[2]{\aut_{( #1 )}\, #2}
\newcommand{\comb}[2]{\begin{pmatrix}
     #1\\
     #2
  \end{pmatrix}}
\theoremstyle{definition}
\newtheorem{defn}{Definice}[section]
\newtheorem{thm}[defn]{Theorem}
\newtheorem{lemma}[defn]{Lemma}
\newtheorem{tvr}[defn]{Proposition}
\newtheorem{cor}[defn]{Corollary}
\theoremstyle{remark}
\newtheorem{example}{Example}
\begin{document}

\begin{titlepage}
\begin{center}
\Large{CZECH TECHNICAL UNIVERSITY IN PRAGUE} \\\normalsize Faculty of Nuclear Sciences
and Physical Engineering
\end{center}
\addvspace{150pt}
\begin{center}
\LARGE{\bf{DOCTORAL THESIS}}
\end{center}
\addvspace{50pt}
\begin{center}
\LARGE INVARIANTS OF LIE ALGEBRAS
\end{center}
\addvspace{50pt}
\begin{center}
\Large Ji\v{r}\'{\i} Hrivn\'ak \bigskip\bigskip\\ \normalsize Supervisor: Prof. Ing. {J.}
Tolar, DrSc.
\end{center}
\addvspace{80pt}
\begin{center}
\large September 12, 2007
\end{center}
\end{titlepage}
\begin{titlepage}
\addvspace{350pt}
\bigskip
This thesis is the result of my own work, except where explicit
reference is made to the work of others and has not been submitted
for another qualification to this or any other university.
\begin{flushright}
\addvspace{30pt}
\bigskip
Ji\v{r}\'{\i} Hrivn\'ak
\end{flushright}
\end{titlepage}
\setlinespacing{1.25}
\prefacesection{Acknowledgements}
\thispagestyle{empty} \pagenumbering{roman} I would like to thank
Petr Novotn\'y, prof. Edita Pelantov\'a, prof. Ji\v{r}\'{\i}
Patera and all who contributed to this work during the past
several years by advice, consultation or remark.

Especially, I would like to thank prof. Ji\v{r}\'{\i} Tolar for
his kind supervision during the past seven years.

\newpage
\addcontentsline{toc}{chapter}{Contents} \pagenumbering{arabic}
\setlinespacing{1.15} \tableofcontents
\newpage
\setlinespacing{1.25}

\nonumchapter{Introduction}

Finite--dimensional complex Lie algebras form extremely useful and
frequent part of Lie theory in physics and elsewhere. With an
increasing amount of theory and applications concerning Lie
algebras of various dimensions, it is becoming necessary to
ascertain applicable tools for handling them. Their miscellaneous
characteristics constitute such tools and have also found
applications: {\it Casimir operators} \cite{inv}, {\it derived},
{\it lower central} and {\it upper central sequences}, Lie {\it
algebra of derivations}, {\it radical}, {\it nilradical}, {\it
ideals}, {\it subalgebras} \cite{Jacobson,ide} and recently {\it
megaideals} \cite{Pop1}. These characteristics are central when
considering possible affinities among Lie algebras.

Physically motivated relations between two Lie algebras, namely
{\it contractions} and {\it deformations}, were extensively
studied for instance in~\cite{G,PdeM}. When investigating these
kinds of relations in dimensions higher than five, one can
encounter insurmountable difficulties.

Firstly, aside the {\it semisimple} ones, Lie algebras are
completely classified only up to dimension 5 and the {\it
nilpotent} ones up to dimension 6. In higher dimensions, lists of
only special types such as {\it rigid} Lie algebras \cite{Goze} or
Lie algebras with fixed structure of nilradical are determined
\cite{SW}. For a detailed survey of classification results in
lower dimensions see the very recent paper \cite{Pop1} and
references therein.

Secondly, known invariant characteristics of Lie algebras are, in
some cases, insufficient for their classification. After
performing the standard identification procedure in \cite{ide},
one ends with the set of classical invariants \cite{ccc}. Even
though there has been progress in extending this set of classical
invariants, the results still turn out to be insufficient,
especially for the description of nilpotent Lie algebras.
Alongside the classification of orbit closures of
four--dimensional complex Lie algebras~\cite{Bur1}, so called
$C_{pq}$ invariants were introduced; in~\cite{AY} the invariants
$\chi_i$ were used for the identification of four--dimensional
complex Lie algebras. In~\cite{Bur2,Bur3}, the dimensions of
cohomology spaces with respect to the adjoint and trivial
representations were used as invariants. However, none of the
above invariants is able to resolve a nilpotent parametric
continuum of Lie algebras. These nilpotent continua frequently
appear as results of the graded contraction procedure.

The graded contraction procedure, originally introduced in
\cite{PdeM}, was later extended to graded contractions of the
representations of Lie algebras~\cite{MP} and to the Jordan
algebras~\cite{KP}. In~\cite{HN2,W2} all graded contractions of
$\slp(3,\Com)$ corresponding to the Pauli grading and toroidal
grading, respectively, were found. Due to the difficulties
mentioned above, the thorough study of resulting parametric
continua was omitted in these papers. Later it became evident that
the ultimate result -- graded contractions corresponding to all
four gradings of $\slp(3,\Com)$ -- involves classifying hundreds
of parametric Lie algebras and is, without more appropriate
invariants, out of reach. Of course, when one has two identical
sets of invariants for two given algebras, one can always try to
find explicitly an isomorphism in a given basis -- that is, find a
regular solution of a very large system of quadratic equations. If
the search for an explicit regular solution fails, one has to
exclude the existence of any regular solution -- this, in some
cases, can be done by hand. There has been progress in developing
algorithms for these direct calculations: employing modern
computational methods and the theory of Gr$\mathrm{\ddot{o}}$bner
bases, the classification of the three and four--dimensional
solvable Lie algebras has been obtained~\cite{Graaf}. Application
of these types of direct computations seems, however, not
surmountable for hundreds of parametric Lie algebras in higher
dimensions.

Thus, the {\it goals} of this thesis are the following:
\begin{itemize}
\item add new objects to the existing set of invariants of Lie algebras
\item formulate the properties of these
invariants and -- in view of possible alternative classifications
-- investigate their behaviour on known lower--dimensional Lie
algebras
\item demonstrate that these invariants
are -- in view of their application on graded contractions of
$\slp(3,\Com)$ -- also effective in higher dimensions
\item formulate a necessary contraction criterion involving these
invariants and apply it to lower--dimensional cases
\item investigate possible application of these invariant characteristics
to Jordan algebras
\end{itemize}

One can see from the references that our investigation belongs to
a very lively domain of Lie theory. Only Chapter 1. is completely
of {\it review character}. Chapters 2, 3, 4 and Appendices are
devoted to thorough description of {\it new results}.
\newpage {\it Original results} of this thesis are
\begin{itemize}
\item contained in \cite{NH}---\cite{HNx}: Chapter 2, Section 4.2
\item unpublished: Chapter 3, Section 4.1, Appendices
\end{itemize}

The {\it most significant original results} are the following:
\begin{itemize}
\item Chapter 2: Corollaries~\ref{corgender}, \ref{invarfunc},
\ref{inewww}, Theorems~\ref{Lielist}, \ref{prunklass} and
\ref{class3dim}
\item Chapter 3: Corollaries~\ref{invdim}, \ref{invarfunc2}, Theorems~\ref{klass2} and \ref{class4dim}
\item Chapter 4: Theorems~\ref{dimderconthm}, \ref{conthmmain},
\ref{contract3dimJ} and~\ref{zvast}
\item Appendix A: classification of
$(\alpha,\beta,\gamma)$--derivations of two and three--dimensional
Lie algebras
\item Appendix B: the invariant functions allowing
the classification of four--dimensional Lie algebras
\end{itemize}

In Chapter~\ref{CHinv}, the definitions and theorems used in this
work are summarized. We define invariant characteristics of
algebras and their mutual independence. We review the concept of a
Lie algebra and their cohomology. We also state basic facts about
linear groups and Jordan algebras.

In Chapter~\ref{CHgen}, the concept of the derivation of a Lie
algebra is generalized; $(\alpha,\beta,\gamma)$--derivations are
introduced and their pertinent properties shown. All possible
intersections of spaces containing these derivations are
investigated. Examples of spaces of
$(\alpha,\beta,\gamma)$--derivations for low--dimensional Lie
algebras are presented. In special cases, the spaces of
$(\alpha,\beta,\gamma)$--derivations form Lie or Jordan operator
algebras. These algebras are investigated and the corresponding
Lie groups constructed. Invariant functions $\fa, \fa^0$ are
defined and their values estimated. The invariant function $\fa$
is used as the classification tool of three--dimensional Lie
algebras.

In Chapter~\ref{CHtwi}, the concept of cocycles of Chevalley
cohomology is generalized; $\kappa$--twisted cocycles are
introduced and shown that for two--dimensional twisted cocycles,
analogous properties to the properties of
$(\alpha,\beta,\gamma)$--derivations hold. Examples of selected
spaces of twisted cocycles are presented. Two invariant functions
$\fb$ and $\fc$ are defined and their behaviour on
low--dimensional Lie algebras demonstrated. The invariant
functions $\fa$ and $\fb$ are used to classify all
four--dimensional Lie algebras. New algorithm for the
identification of a four--dimensional Lie algebra is also
formulated.

In Chapter~\ref{CHcon}, possible application of the invariant
functions $\fa,\fb,\fc$ to contractions is considered. Necessary
criterion for existence of a continuous contraction is formulated.
The invariant function $\fa$ is used to classify continuous
contractions among three--dimensional Lie algebras. This function
is also employed to the classification of two--dimensional Jordan
algebras and their continuous contractions. The invariant
functions are used to distinguish among results of graded
contraction procedure. Application of the invariant functions on
nilpotent parametric continua of Lie algebras resulting from
contractions of the Pauli graded $\slp(3,\Com)$ is demonstrated.


In Conclusion, we shortly review other generalizations of
derivations, make notes on a computation of
$(\alpha,\beta,\gamma)$--derivations and twisted cocycles. We
summarize achieved results and make comments concerning further
applications.

In Appendix~\ref{APA}, explicit matrices of
$(\alpha,\beta,\gamma)$--derivations for two and
three--dimensional Lie algebras and for two--dimensional Jordan
algebras are completely classified.

In Appendix~\ref{IFLJ}, the tables of the invariant functions
$\fa,\,\fb,\,\fc$ for two, three and four--dimensional Lie
algebras and for two--dimensional Jordan algebras are listed.

\newpage

\chapter{Invariants of Lie Algebras}\label{CHinv}
\section{Invariant Characteristics of Algebras}
In order to unify the notation and definitions, we amass in this
section basic notions concerning linear algebras. Except for the
definition of independent invariant, the content of this section
may be found e. g. in \cite{Goto,Jacobson}.

We call a vector space $V$ of finite dimension $n$ over the field
of complex numbers $\Com$ together with a bilinear map $ V\times
V\ni (a,b)\mapsto a\cdot b \in V$ a (complex) {\bf algebra}
$\A=(V,\,\cdot\,)$. We call the map $\cdot$ a {\bf multiplication}
of $\A$. An algebra $\A$ is {\bf associative} if the rule $(a\cdot
b)\cdot c =a\cdot( b \cdot c)$ holds for all $a,b,c \in \A$. Let
$\mathcal{X} = (x_1, \dots , x_n)$ be some basis of $\A$. Then the
numbers $c^k_{ij}\in \Com$ defined by
\begin{equation}\label{STRUCT}
  x_i\cdot x_j=\sum_{k=1}^n c^k_{ij} x_k
\end{equation}
are called {\bf structural constants} with respect to the basis
$\mathcal{X}$. A vector subspace $B$ of $\A$ is called a {\bf
subalgebra} of $\A$ if $a\cdot b \in B$ for all $a,b\in B$. For
arbitrary subsets $B,C$ of $\A$ the symbol $B\cdot C$ denotes the
linear span of all products of elements $b\cdot c$, where $b\in B$
and $c\in C$. A subalgebra $B$ is called an {\bf ideal} in $\A$ if
$\A\cdot B\subset B$ and $B\cdot \A\subset B$.

Given an ideal $B$ in an algebra $\A$ then the factor space $$\A/B
= \set{ [a] = a+B }{ a \in \A } $$ with a well--defined
multiplication $ [a] \cdot [b] = [a \cdot b], \ \forall a,b \in
\mathcal{A}$ is called the {\bf factor algebra} and denoted
$\A/B$.

Suppose we have two algebras $\A,\,\wt\A$ over $\Com$ with
multiplications $\cdot$ and $\ast$, respectively. Then a linear
map $f:\A\map \wt\A$ is called a {\bf homomorphism} if the
relation
\begin{equation}\label{homomorph}
f(a\cdot b)=f (a) \ast f (b)
\end{equation}
holds for all $a,b\in \A.$ The kernel $\ker f$ is an ideal in
$\A$. If $\ker f = 0$ then $f$ is called an {\bf isomorphism} and
algebras the $\A,\wt\A$ are called {\bf isomorphic}, $\A\cong
\wt\A$. Let us denote the group of all regular linear maps on an
arbitrary vector space $V$ by $GL(V)$; for an algebra
$\A=(V,\,\cdot\,)$ we define the symbol $GL(\A)$ by
$GL(\A)=GL(V)$. An isomorphism $f:\A\map \A$ is called an {\bf
automorphism} of an algebra $\A=(V,\,\cdot\,)$; the set of all
automorphisms forms a multiplicative group $\aut \A\subset GL
(\A)$, i.~e.
\begin{equation}\label{aaaa}
\aut \A=\set{f\in GL(\A)}{f(a\cdot b)= f(a) \cdot f(b)\q \forall
a,b \in \A}.
\end{equation}
We denote by $\End V$ a vector space of all linear operators on
the vector space $V$; if we have an algebra $\A=(V,\,\cdot\,)$, we
define the symbol $\End \A$ by $\End\A=\End V$. Considering the
composition of linear operators as a multiplication, $\End V$
becomes an associative algebra. If $f:\A \map\wt\A $ is an
isomorphism of complex algebras $\A$ and $\wt\A$, then the mapping
$\rho :\End\A\map \End\wt\A$, defined for all $D\in \End\A$ by
\begin{equation}\label{rho}
\rho (D) = f D f^{-1}
\end{equation}
is an isomorphism of the associative algebras $\End \A$ and $
\End\wt\A$, i.~e. $\End\A\cong \End\wt\A$.

A map $D\in\End \A $ which satisfies $D(a\cdot b) = (D a)\cdot b +
a\cdot (D b)$ for all $a,b\in\A$ is called a {\bf derivation} of
$\A$; we denote the set of all derivations by $\der \A$, i.~e.
\begin{equation}\label{dddd}
\der \A=\set{D\in \End \A}{D(a\cdot b)=  (Da)\cdot b + a\cdot (D
b)\q \forall a,b \in \A}.
\end{equation}
One can easily prove that $\der \A$ forms a linear subspace of the
vector space $\End \A$.

The relation of isomorphism $\cong$ between two algebras is an
equivalence relation, i.~e. it is symmetric, reflexive and
transitive. Thus, the set of all algebras is decomposed into {\bf
isomorphism classes} -- cosets of this equivalence; we denote such
coset containing an algebra $\A$ by $[\A]$. Then, indeed, $\B\in
[\A]$ holds if and only if $\B\cong \A$. Suppose we have a
non--empty set $M$ and some subset $\Theta$ of the set of all
isomorphism classes of algebras. We call a map
\begin{equation}\label{invvv}
 \Phi : \Theta \map M
\end{equation}
an {\bf invariant characteristic} of $\Theta$ or shortly an {\bf
invariant}. In other words, the mapping~$\Phi$ assigns to each
coset $[\A]\in \Theta$, which contains all mutually isomorphic
algebras, some element in~$M$.

For example, we may consider $\Theta$ equal to the set of all
isomorphism classes of algebras, set $M=\N_0$ and for any
$[\A]\in\Theta$ we may define $\Phi [\A]= \dim \A =n$. It is clear
that this mapping $\Phi$ is well--defined and indeed has the same
value for all isomorphic algebras. We sometimes say that the
dimension of $\A$ is an invariant characteristic of algebras. We
call such an invariant, for which $M=\N_0$, a {\bf numerical
invariant}. As another example of a numerical invariant for an
arbitrary algebra may serve $\Phi_{\der} [\A] =\dim \der \A$;
indeed, for $\A,\wt\A\in [\A]$ one has
\begin{equation}\label{rhoder}
  \rho (\der\A)=\der\wt\A,
\end{equation}
i.~e. $\dim\der\A=\dim \der\wt\A $. Thus, the mapping
$\Phi_{\der}$ is well--defined.

As a number of known invariants for any given type of algebra
grows, there arises natural claim to reflect their independence.
We refine the notion of independence in the following way. Suppose
we have a subset $\Theta$ of the set of all isomorphism classes of
algebras and a set of invariants $\Omega$ of $\Theta$. Then we
call an invariant $\Phi$ of $\Theta$ {\bf independent} on the set
of invariants $\Omega$ if there exist two cosets
$[\A],\,[\B]\in\Theta$ such that
\begin{align}
1.\q \q& \Psi [\A] = \Psi [\B],\q \forall \Psi  \in \Omega\q \q
\nonumber \\2. \q\q & \Phi [\A] \neq \Phi [\B].\label{independent}
\end{align}
Note that this definition does not, in general, exclude some
possible relation among invariants in $\Omega$ and $\Phi$.
However, it is pertinent that the invariant $\Phi$ does not depend
{\it only} on those in $\Omega$. Since the main aim of the notion
of invariant characteristic is to distinguish among different
cosets of algebras, the independent invariant $\Phi$ thus
distinguishes cosets $[\A]$ and $[\B]$.


\section{Lie Algebras}
Basic facts included in this section, concerning Lie algebras,
their representations and ideals may be found for instance
in~\cite{Goto,Jacobson}. Suppose we have a (complex) algebra $\el$
with the multiplication $[\,,\,]$ which for all $x,y,z\in \el$
satisfies
\begin{enumerate}[(1)]
  \item  $[x,x]=0$ (anti--commutativity)
  \item  $[x,[y,z]]+[z,[x,y]]+[y,[z,x]]=0$ (Jacobi's identity).
\end{enumerate}

Such an algebra $\el$ is then called a (complex) {\bf Lie
algebra}. In terms of structure constants (\ref{STRUCT})
anti--commutativity and Jacobi's identity may be written as
\begin{enumerate}[(1')]
  \item  $c^m_{ij}+c^m_{ji}=0$
  \item  $\sum_{l=1}^{n}(c^l_{jk}c^m_{il}+c^l_{ij}c^m_{kl}+c^l_{ki}c^m_{jl})=0,\ \forall i,j,k,m \in \{1,\dots,n\}$.
\end{enumerate}

Suppose $V$ is an arbitrary vector space. Then we may introduce a
new multiplication on the associative algebra $\End V$. For two
linear operators $X,Y\in\End V$ we put
\begin{equation}\label{glgl}
 [X,Y]= X Y-Y X.
\end{equation}
Then we indeed obtain a Lie algebra which we denote by $\gl V$.
For an algebra $\A=(V,\,\cdot\,)$ we define $\gl\A=\gl V$. Then we
have:
\begin{tvr}\label{dersub}
  The set of all derivations $\der \A$ of an algebra $\A$ is a Lie
  subalgebra of~$\gl \A$.
\end{tvr}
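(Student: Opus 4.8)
The plan is to show two things: first that $\der\A$ is closed under the commutator bracket \eqref{glgl}, and second that it is a linear subspace (which was already asserted in the text after \eqref{dddd}). Since $\gl\A$ is by definition the associative algebra $\End\A$ equipped with the bracket $[X,Y]=XY-YX$, and $\der\A$ is a subset of $\End\A$, it suffices to verify that $\der\A$ is a vector subspace closed under this bracket; the Jacobi identity and anti-commutativity are then inherited automatically from $\gl\A$.

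First I would recall that linearity of the defining condition in \eqref{dddd} immediately gives that $\der\A$ is a linear subspace of $\End\A$: if $D_1,D_2$ are derivations and $\la\in\Com$, then $\la D_1 + D_2$ again satisfies the Leibniz rule $D(a\cdot b)=(Da)\cdot b + a\cdot(Db)$, because each summand does and the rule is linear in $D$. This is the routine part.

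The key computation is closure under the bracket. Fix $D_1,D_2\in\der\A$ and set $E=[D_1,D_2]=D_1D_2-D_2D_1$. I would compute $E(a\cdot b)$ by applying the Leibniz rule for $D_2$ first, then $D_1$, to the expression $D_1D_2(a\cdot b)$, and symmetrically for $D_2D_1(a\cdot b)$. Each of these produces four terms; in the difference, the two ``mixed'' terms of the form $(D_1a)\cdot(D_2b)$ and $(D_2a)\cdot(D_1b)$ appear in both $D_1D_2(a\cdot b)$ and $D_2D_1(a\cdot b)$ and hence cancel. What survives is exactly $(Ea)\cdot b + a\cdot(Eb)$, showing $E\in\der\A$.

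The main obstacle, such as it is, is purely bookkeeping: one must keep careful track of the eight terms arising from the double application of the Leibniz rule and verify that precisely the four mixed terms cancel in pairs while the remaining four reassemble into the two-term Leibniz expression for $E$. There is no conceptual difficulty, since anti-commutativity and Jacobi's identity for $\der\A$ are not needed directly — they are inherited from $\gl\A$ once closure is established. Thus the whole proof reduces to the linearity remark plus this single cancellation, after which one concludes that $\der\A$, being a bracket-closed subspace of the Lie algebra $\gl\A$, is itself a Lie subalgebra.
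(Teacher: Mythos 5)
Your proof is correct, and it is the standard argument: the paper itself states Proposition~\ref{dersub} without proof, as part of the review chapter citing \cite{Goto,Jacobson}, and your computation --- linearity of the Leibniz condition plus the cancellation of the four mixed terms $(D_1a)\cdot(D_2b)$, $(D_2a)\cdot(D_1b)$ in the difference $D_1D_2(a\cdot b)-D_2D_1(a\cdot b)$ --- is exactly what those references do. Your observation that anti-commutativity and the Jacobi identity need not be checked separately, being inherited by any bracket-closed subspace of $\gl\A$, is also correct and completes the argument.
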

If $f:\A \map\wt\A $ is an isomorphism of complex algebras $\A$
and $\wt\A$, then the mapping $\rho :\gl\A\map \gl\wt\A$, defined
by~(\ref{rho}) is an isomorphism of the Lie algebras $\gl \A$ and
$ \gl\wt\A$, i.~e. $\gl\A\cong \gl\wt\A$. From this fact and
from~Proposition~\ref{dersub} and~(\ref{rhoder}) one also obtains
that
\begin{equation}\label{izoder}
  \der\A\cong \der\wt\A.
\end{equation}
holds.

If we choose some basis in $\A$ then to each operator from $\gl
\A$ is assigned a matrix; the space of these matrices is denoted
by $\gl (n,\Com)$ and forms also a Lie algebra with respect to the
matrix multiplication $[\,,\,]$. An important matrix algebra of
traceless matrices $\slp (n,\Com)$ is defined by the relation
\begin{equation}\label{slnc}
\slp (n,\Com)=\{S\in \gl (n,\Com)| \tr S=0\}.
\end{equation}
Having two Lie algebras
$\el_1=(V_1,\,[\,,\,]_1),\,\el_2=(V_2,\,[\,,\,]_2)$ one may define
on the direct sum of the vector spaces $$V_1\oplus
V_2=\set{(x,y)}{x\in V_1,y\in V_2} $$ a Lie multiplication
$$[(x_1,y_1),(x_2,y_2)]_\oplus=([x_1,x_2]_1,[y_1,y_2]_2)$$ and
obtain a Lie algebra $\el_1\oplus\el_2=(V_1\oplus
V_2,\,[\,,\,]_\oplus )$ called the {\bf direct sum} of $\el_1$ and
$\el_2$.

Let $V$ be a vector space over $\Com$. We call a homomorphism $$f
: \el \map \gl V $$ a {\bf representation} of a Lie algebra $\el$
over $\Com$ on the vector spaces $V$ and we denote it by $(V,f)$.
A map $\ad_\el: \el\map \gl \el$ defined for all $x,y\in \el$ by
the relation $$(\ad_\el x)y =[x,y]  $$ is a representation
$(\ad_\el,\el)$ and is called the {\bf adjoint representation}.
\begin{tvr}
For any Lie algebra $\el$ the set $\ad \el = \set{\ad_\el x}{x\in
\el }$ is an ideal in $\der \el $.
\end{tvr}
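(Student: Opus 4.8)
The plan is to show two things: that $\ad\el$ is a Lie subalgebra of $\der\el$, and that it is in fact an ideal, i.e. that bracketing any derivation with an inner derivation yields again an inner derivation. I would first recall from Proposition~\ref{dersub} that $\der\el$ is a Lie subalgebra of $\gl\el$, so it suffices to work inside $\der\el$ and verify the ideal condition $[\der\el,\ad\el]\subset\ad\el$.

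The first step is to check that each $\ad_\el x$ really lies in $\der\el$. Applying $\ad_\el x$ to a bracket $[y,z]$ gives $[x,[y,z]]$, and by Jacobi's identity this equals $[[x,y],z]+[y,[x,z]]=[(\ad_\el x)y,z]+[y,(\ad_\el x)z]$, which is exactly the Leibniz rule~(\ref{dddd}); hence $\ad_\el x\in\der\el$ and $\ad\el\subset\der\el$. Since $\ad\el$ is the image of the linear map $x\mapsto\ad_\el x$, it is a linear subspace of $\der\el$.

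The key computation is the ideal property. Take an arbitrary $D\in\der\el$ and an inner derivation $\ad_\el x$, and evaluate the commutator $[D,\ad_\el x]=D\,\ad_\el x-\ad_\el x\,D$ on an arbitrary $y\in\el$. Using that $D$ is a derivation, $D[x,y]=[Dx,y]+[x,Dy]$, so
\begin{equation*}
(D\,\ad_\el x)y=D[x,y]=[Dx,y]+[x,Dy],\qquad (\ad_\el x\,D)y=[x,Dy].
\end{equation*}
Subtracting, the term $[x,Dy]$ cancels and we are left with $[D,\ad_\el x]y=[Dx,y]=(\ad_\el Dx)y$. Since $y$ was arbitrary this gives the identity $[D,\ad_\el x]=\ad_\el(Dx)\in\ad\el$, which establishes the ideal condition. (Setting $D=\ad_\el z$ specializes this to $[\ad_\el z,\ad_\el x]=\ad_\el[z,x]$, confirming closure under the bracket as a byproduct.)

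I do not expect any genuine obstacle here; the whole statement reduces to the two applications of Jacobi's identity above, one to place $\ad\el$ inside $\der\el$ and one to verify the ideal property. The only point demanding a little care is being explicit that both operators in the commutator are being evaluated on the same element $y$ and that the cancellation of $[x,Dy]$ uses precisely the derivation property of $D$ rather than any structure special to inner derivations.
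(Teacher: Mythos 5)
Your proof is correct and is exactly the standard argument the paper relies on: the paper states this proposition without proof (it belongs to the review chapter, citing textbook sources), and the expected justification is precisely your two computations — Jacobi's identity giving $\ad_\el x\in\der\el$, and the derivation property of $D$ giving $[D,\ad_\el x]=\ad_\el(Dx)$. Nothing is missing.
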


Having $(V,f)$ a representation of $\el$ we define a
representation on the dual space $V^*$. For $s\in \gl V$, $\la \in
V^*$ and $v\in V $ consider the mapping $s^{t}\in \gl V^*$ defined
by $$(s^t \la) v = \la (s v) .$$ If we put $f^*=-f^t$ then $(V^*,
f^*)$ is a representation called the {\bf dual representation} of
$(V,f)$.

We point out some important ideals of a Lie algebra $\el$. We
denote by $C(\el)$ a {\bf center} of $\el$ defined by
\begin{equation} C(\el) =
\set{x \in \el}{[x,y]=0,\, \forall y \in \el }
\end{equation}
and a {\bf derived algebra} $\el^2$ of $\el$
\begin{equation}
\el^2 = [\el,\el].
\end{equation}
A {\bf centralizer of the adjoint representation}
$C_{\ad}(\el)\subset\gl \el $ is defined as follows:
\begin{equation}\label{centralizer}
C_{\ad}(\el)=\set{A\in\gl \el}{[A,\ad_\el(x)]=0,\, \forall x\in
\el}.
\end{equation}

Next, we introduce sequences of ideals. Sequence of ideals $D^0(\el)
\supset D^1(\el) \supset \ldots$ defined by
\begin{equation} D^0(\el) = \el, \qquad
    D^{k+1}(\el) = [D^k(\el),D^k(\el)], \qquad k\in\N_0
\end{equation}
is called a {\bf derived sequence} of $\el$. Sequence of ideals
$\el^1 \supset \el^2 \supset \ldots$ defined by
\begin{equation} \el^1 = \el,
    \qquad \el^{k+1} = [\el^k,\el], \qquad k\in\N
\end{equation} is called a {\bf descending central sequence}.
Sequence of ideals $C^0(\el) \subset C^1(\el) \ldots $ defined by
\begin{equation} C^0(\el) = 0, \qquad C^{k+1}(\el)/C^k(\el)
=C(\el/C^k(\el)), \qquad k\in\N_0
\end{equation} is called an {\bf ascending central sequence}. We
define three sequences of numerical invariants $d_k,\,c_k,\,l_k,\,
k=0,1,\dots$ by the relations
\begin{align}\label{seq1}
 d_k (\el)  &=\dim D^{k}(\el) \\  l_k (\el)&=\dim \el^{k+1} \label{seq2}  \\  c_k (\el)&=\dim C^{k+1}(\el).\label{seq3}
\end{align}
If, for some $k$ and a Lie algebra $\el$, $D^k(\el)=0$ holds then
$\el$ is called {\bf solvable}; if $\el^k=0$ then it is called
{\bf nilpotent}. If $\el$ contains no solvable ideal then it is
called {\bf semisimple}; if it contains only trivial ideals $\el$
and $\{0\}$ and $\dim (\el)>1$ then $\el$ is called {\bf simple}.

\begin{thm}[Engel]\label{Engel}
Let $\el$ be a Lie algebra. Then $\el$ is nilpotent if and only if
$\ad_\el x$ is nilpotent for every $x\in \el$.
\end{thm}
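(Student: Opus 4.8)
The plan is to prove the two implications separately, the forward direction being routine and the converse resting on the classical lemma on Lie algebras of nilpotent operators. For the easy direction, suppose $\el$ is nilpotent, so $\el^k = 0$ for some $k \in \N$. I would show by induction on $m$ that $(\ad_\el x)^m \el \subset \el^{m+1}$ for every $x \in \el$: the base case is $[x,\el] \subset [\el,\el] = \el^2$, and the inductive step uses $(\ad_\el x)^{m+1}\el = [x,(\ad_\el x)^m \el] \subset [\el,\el^{m+1}] = \el^{m+2}$. Taking $m = k-1$ gives $(\ad_\el x)^{k-1}\el \subset \el^k = 0$, so $\ad_\el x$ is nilpotent.

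For the converse, assume every $\ad_\el x$ is nilpotent; I want $\el$ nilpotent. The key is Engel's lemma: \emph{if $V \neq 0$ is a finite-dimensional space and $\G \subset \gl V$ is a Lie subalgebra all of whose elements are nilpotent operators, then there is a nonzero $v \in V$ with $Xv = 0$ for all $X \in \G$}. A preliminary observation needed throughout is that if $X \in \gl V$ is nilpotent, then $\ad X$, viewed as an operator on $\gl V$, is nilpotent as well; this follows by writing $\ad X$ as the difference of the commuting left- and right-multiplication operators, both nilpotent, and expanding the binomial.

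I would prove Engel's lemma by induction on $\dim \G$, the case $\dim \G \le 1$ being immediate since a single nilpotent operator has a nonzero kernel. For the inductive step, choose a maximal proper subalgebra $\h \subsetneq \G$. The preliminary observation shows that $\ad \h$ acts by nilpotent operators on $\G$, hence descends to nilpotent operators on $\G/\h$; the induction hypothesis applied to this action produces $y \in \G \setminus \h$ with $[\h,y] \subset \h$, so $\h \oplus \Com y$ is a strictly larger subalgebra and, by maximality, equals $\G$, exhibiting $\h$ as a codimension-one ideal. Applying the induction hypothesis to $\h$ acting on $V$ gives $W = \{v \in V : \h v = 0\} \neq 0$. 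A short computation using that $\h$ is an ideal shows that $\G$ preserves $W$, and since $\G = \h \oplus \Com y$ with $y$ nilpotent on $W$, any nonzero vector in $\ker(y|_W)$ is annihilated by all of $\G$.

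Finally I would iterate the lemma to produce a full flag. Applying it to $\ad \el \subset \gl \el$ (which consists of nilpotent operators by hypothesis together with the preliminary observation) and passing to successive quotients yields subspaces $0 = \el_0 \subset \el_1 \subset \cdots \subset \el_n = \el$ with $[\el,\el_i] \subset \el_{i-1}$ for all $i$. A further induction then gives $\el^{k+1} \subset \el_{n-k}$, whence $\el^{n+1} = 0$ and $\el$ is nilpotent. The main obstacle is the inductive step of Engel's lemma — in particular, manufacturing the codimension-one ideal $\h$, which is exactly where the nilpotency of $\ad X$ and the induced adjoint action on $\G/\h$ are required.
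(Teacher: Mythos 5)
Your proposal is correct, and since the paper states Engel's theorem in its review chapter without any proof, deferring to the standard references \cite{Goto,Jacobson}, there is no in-paper argument to diverge from: your route --- the inclusion $(\ad_\el x)^m\el\subset\el^{m+1}$ for the easy direction, then Engel's lemma on a common null vector proved by induction on $\dim\G$ via a maximal proper subalgebra shown to be a codimension-one ideal, followed by the flag $0=\el_0\subset\el_1\subset\dots\subset\el_n=\el$ with $[\el,\el_i]\subset\el_{i-1}$ and the estimate $\el^{k+1}\subset\el_{n-k}$ --- is precisely the classical proof found in those cited sources. All the delicate points (nilpotency of $\ad X$ for nilpotent $X$ via commuting left and right multiplications, invariance of $W=\set{v}{\h v=0}$ under $\G$ using that $\h$ is an ideal) are handled correctly.
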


It is well known that the sum of two nilpotent ideals in a Lie
algebra is again a nilpotent ideal. The sum of all nilpotent
ideals is a maximal nilpotent ideal called a {\bf nilradical}. It
is also true that the sum of two solvable ideals is a solvable
ideal. The sum of all solvable ideals is a maximal solvable ideal
called a {\bf radical}. It is clear that if radical is zero then
such a Lie algebra is semisimple.

The element $F$ of the universal enveloping algebra
\cite{Goto,Jacobson} of $\el$ which satisfies
\begin{equation}\label{deca}
x F-F x = 0, \qquad \forall x\in \el,
\end{equation}
is called a {\bf Casimir operator} \cite{inv,ccc}. These operators
can be calculated in the following way. We take the representation
of the elements of the basis $(e_1,\dots,e_n)$ of $\el$ by vector
fields
\begin{equation}
e_i \rightarrow\hat{x}_i =
\sum_{j,k=1}^nc_{ij}^kx_k\frac{\partial}{\partial x_j}.
\end{equation}
These vector fields act on the space of continuously
differentiable functions $F(x_1,\ldots,x_n)$. We call a function
$F$ {\bf formal invariant} of $\el$ if it is a solution of
\begin{equation*}
\hat{x}_iF=0,\qquad i\in\{1,\dots,n\}.
\end{equation*}
The number of algebraically independent formal invariants is
\begin{equation}\label{formal}
\tau(\el) = \dim\el-r(\el),
\end{equation}
where $r(\el)$ is the rank of the antisymmetric matrix $M_{\el}$
and $(M_{\el})_{ij} = \sum_{k}c_{ij}^ke_k$:
\begin{equation*}
r(\el) =
\begin{array}[t]{c}
  \sup \\
  ^{(e_1,\ldots,e_n)} \\
\end{array} \operatorname{rank}(M_{\el}).
\end{equation*}
The map $\tau$ defined via~(\ref{formal}) forms a numerical
invariant. In \cite{inv}, a procedure for obtaining Casimir
invariant from a polynomial formal invariant $F(x_1,\ldots,x_n)$
is formulated.

Let $p,q\in \N$ be fixed numbers. Suppose there exist $u,v\in\el$
such that $\tr (\ad_\el u)^p\neq 0$,  $\tr(\ad_\el v)^q\neq 0$ and
$\tr [(\ad_\el u)^p (\ad_\el v)^q] \neq 0$. If there exists
$C_{pq}\in \Com$ such that for all $x,\,y\in\el$ the equality
\begin{equation}\label{Cpq}
\tr (\ad_\el x)^p \tr(\ad_\el y)^q = C_{pq} \tr [(\ad_\el x)^p
(\ad_\el y)^q]
\end{equation}
holds, then $C_{pq}$ is called the {\bf $C_{pq}$--invariant} of
$\el$ \cite{Bur1}.

In \cite{AY} were introduced the functions: \begin{eqnarray*}
&p_{111}(x)= -\tr \ad_\el x
\\ &p_{222}(x)= \frac{1}{2}\left( (\tr \ad_\el x )^2- \tr (\ad_\el x)^2
\right)\\ &p_{333}(x)= -\frac{1}{6}\left((\tr \ad_\el x)^3-3\tr
\ad_\el x \, \tr (\ad_\el x)^2 + 2\tr (\ad_\el x)^3 \right)
\end{eqnarray*}
If there exists $u\in \el$ such that $p_{222}(u)\neq 0$,
$p_{111}^2(u)\neq 0$ and exists $\chi_1 \in \Com$ such that for
all $x\in\el$ it holds:
\begin{equation}\label{Chi}  p_{222}(x) =\chi_{1}p_{111}^2(x)
\end{equation}
then we have the {\bf invariant} $\chi_{1}(\el)$. Similarly are
defined invariants $\chi_{2}(\el)$ and $\chi_{3}(\el)$, i.~e. by
relations:
\begin{equation}   p_{333}(x)
=\chi_{2}p_{111}^3(x) ,\ p^2_{333}(x) =\chi_{3}p^3_{222}(x).
\end{equation}

\section{Chevalley Cohomology of Lie Algebras}

The content of this section may be found for instance
in~\cite{Goto}. Let $V$ be a vector space over $\Com$ and let
$(V,f)$ be a representation of $\el$. We call a $q-$linear map
$c:\underbrace{\el\times\el\times\dots\times\el}_{q-times}\map V$
a {\bf $V$--cochain} of dimension $q$, if for all pairs of indices
$i,j,\,(1\leq i<j\leq q)$ the relation
$$c(x_1,\dots,\begin{array}[t]{c}
  x_i \\
  i
\end{array},\dots,\begin{array}[t]{c}
  x_j \\
  j
\end{array},\dots,x_q)+c(x_1,\dots,\begin{array}[t]{c}
  x_j \\
  i
\end{array},\dots,\begin{array}[t]{c}
  x_i \\
  j
\end{array},\dots,x_q)=0 $$
holds. We denote by $C^q(\el,V)$ the vector space of all
$V$-cochains of dimension $q$ for $q\in \N$ and  $C^0(\el,V)=V$.
We define a map $d:  C^q(\el,V)\map  C^{q+1}(\el,V)$ for
$q=0,1,2,\dots$ by
\begin{align}\label{koho1}
d c(x)&=f(x)c\q c\in  C^0(\el,V)\\ \nonumber d
c(x_1,\dots,x_{q+1}) &= \sum_{i=1}^{q+1}
(-1)^{i+1}f(x_i)c(x_1,\dots,\hat{x_i},\dots,x_{q+1})+\\ \nonumber
&+ \sum_{\begin{smallmatrix}
  i,j=1 \\
  i<j
\end{smallmatrix}}^{q+1} (-1)^{i+j} c([x_i,x_j],x_1,\dots,\hat{x_i},\dots,\hat{x_j},\dots,x_{q+1})
\end{align}
where the symbol $\hat{x_i}$ means that the term $x_i$ is omitted.
We summarize the crucial results concerning the map $d$.
\begin{thm}
For the map $d:  C^q(\el,V)\map  C^{q+1}(\el,V)$, defined by
(\ref{koho1}), it holds:
\begin{equation}\label{ddd}
dd=0.
\end{equation}
\end{thm}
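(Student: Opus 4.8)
The plan is to verify the identity $dd=0$ by a direct, if lengthy, computation on an arbitrary cochain $c \in C^q(\el,V)$ evaluated at arbitrary arguments $x_1,\dots,x_{q+2}$. The strategy is to write out $d(dc)(x_1,\dots,x_{q+2})$ by applying the formula (\ref{koho1}) twice, expand everything into a sum of terms, group them by type, and show that the terms cancel in pairs or via the structural identities. Concretely, since $dc$ is a $(q+1)$--cochain, applying $d$ again produces two families of terms: those of the form $f(x_i)\bigl((dc)(\dots)\bigr)$ coming from the first sum of (\ref{koho1}), and those of the form $(dc)([x_i,x_j],\dots)$ coming from the second sum. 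Each of these families, upon substituting the definition of $dc$, splits again into two pieces, so that $d(dc)$ becomes a sum of four groups of terms.

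The bookkeeping then proceeds by classifying all resulting terms according to their structure. First I would isolate the terms that are quadratic in $f$, i.e.\ of the form $f(x_i)f(x_j)\,c(\dots)$. These arise only from applying the first sum of $d$ to the first sum of $dc$, and they come in ordered and unordered versions $f(x_i)f(x_j)$ and $f(x_j)f(x_i)$; collecting them produces commutators $[f(x_i),f(x_j)] = f(x_i)f(x_j)-f(x_j)f(x_i)$ with appropriate signs. Second, I would collect the terms of the form $f([x_i,x_j])\,c(\dots)$, which originate from applying the first sum of $d$ to the second sum of $dc$ together with the genuinely mixed contributions. Because $(V,f)$ is a representation of $\el$, we have the homomorphism property $f([x_i,x_j]) = [f(x_i),f(x_j)]$, so these terms are designed to cancel exactly against the commutator terms from the previous group. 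Third, the purely combinatorial terms involving two nested brackets, of the form $c([[x_i,x_j],x_k],\dots)$ and $c([x_i,x_j],[x_k,x_l],\dots)$, must be shown to vanish: the doubly--bracketed ones cancel by the Jacobi identity (item (2) in the Lie algebra axioms), while the terms with two independent brackets cancel in pairs by symmetry once the signs are tracked.

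The main obstacle, and the part requiring genuine care rather than insight, is the sign bookkeeping. Each term carries a sign $(-1)^{i+j}$ or $(-1)^{i+j+1}$ inherited from two successive applications of (\ref{koho1}), and after a deletion the positional index of a later argument shifts, so the exponents must be recomputed relative to the reduced argument list. The delicate point is to verify that when the same abstract term is produced by two different routes through the double sum, the accumulated signs are genuinely opposite, guaranteeing cancellation. I would handle this by fixing a pair of distinguished indices and comparing, for each term type, the two paths that generate it; the antisymmetry of $c$ in its arguments (the cochain condition) is what permits reindexing one copy to match the other up to sign. Once the sign matching is confirmed, the representation property and the Jacobi identity dispose of the remaining two groups, and every term is accounted for, yielding $d(dc)=0$ for all $c$, hence $dd=0$.
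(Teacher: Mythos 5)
Your plan is essentially correct, but there is nothing in the paper to compare it against: the theorem sits in the review Chapter~1, is stated without proof, and the section's opening sentence delegates all content to the literature (\cite{Goto}), so you are reconstructing the classical Chevalley--Eilenberg argument rather than matching an in-paper one. Measured against that classical proof, your decomposition is the right one: the quadratic terms $f(x_i)f(x_j)c(\dots)$ arise only from the first sum applied to the first sum and assemble into commutators; these are cancelled by the $f([x_i,x_j])c(\dots)$ terms via the representation property $f([x_i,x_j])=f(x_i)f(x_j)-f(x_j)f(x_i)$; the nested terms $c([[x_i,x_j],x_k],\dots)$ vanish by the Jacobi identity after summing the three cyclic routes; and the disjoint terms $c([x_i,x_j],[x_k,x_l],\dots)$ cancel in pairs according to which pair is contracted first, using the alternating property of $c$ to align the two bracket slots. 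You also correctly identify the one genuinely delicate point, the recomputation of the exponents $(-1)^{i+j}$ after deletions shift positions. One slip of attribution: the $f([x_i,x_j])c(\dots)$ terms do \emph{not} come from applying the first sum of the outer $d$ to the second sum of $dc$ --- that combination produces the mixed terms $f(x_k)c([x_i,x_j],\dots)$, which must cancel among themselves against the terms of the same shape produced by the opposite order (outer second sum, inner first sum acting on an argument other than the bracket); the $f([x_i,x_j])$ terms instead arise when the inner first sum acts on the newly created bracket, which occupies position $1$ and hence carries sign $(-1)^{1+1}(-1)^{i+j}=(-1)^{i+j}$. This misattribution does not derail the argument, since your cancellation scheme accounts for all four families once the provenance is corrected; in a fully written-out version you should also record that $dc$ is again alternating (so the formula for $d$ may legitimately be applied a second time) and exhibit at least one representative sign computation, since the entire content of the theorem lives in those exponents.
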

Such $z\in C^q(\el,V)$ for which $d z=0$ holds is called {\bf
cocycle} of dimension $q$ corresponding to $f$; the set of all
cocycles of dimension $q$ corresponding to $f$ is denoted by
$Z^{q} (\el,f)$. An element $w\in C^q(\el,V)$ for which such $c\in
C^{q-1}(\el,V)$ exists that $d c=w$ is called {\bf coboundary};
the set of all coboundaries of dimension $q$ is denoted by $B^{q}
(\el,f)$, i.~e. $B^{q} (\el,f)=dC^{q-1} (\el,V)$. The spaces
$B^{q} (\el,f)$ and  $Z^{q} (\el,f)$ are vector subspaces of
$C^{q} (\el,V)$ and from (\ref{ddd}) we have $B^{q} (\el,f)
\subset  Z^{q} (\el,f)$. The factor space $ Z^{q} (\el,f)/B^{q}
(\el,f)=H^{q} (\el,f) $ is then called a {\bf cohomology space} of
dimension $q$ of $\el$ with respect to the representation $(V,f)$.
Directly from the definition, one obtains the following
proposition.
\begin{tvr}\label{kohoder}
$$ Z^{1} (\el,\ad_\el)=\der \el,\q B^{1} (\el,\ad_\el)=\ad \el.
$$
\end{tvr}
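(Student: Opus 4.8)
The plan is to unwind the coboundary formula~(\ref{koho1}) in the two lowest dimensions and match the resulting conditions against the definitions of $\der\el$ and $\ad\el$. A first observation organizes the whole argument: when $q=1$ there is no antisymmetry constraint to impose (the index range $1\le i<j\le 1$ is empty), so the space of $1$-cochains is simply $C^1(\el,\el)=\End\el$, and each candidate element of $Z^1(\el,\ad_\el)$ is an arbitrary linear map $c:\el\map\el$; likewise $C^0(\el,\el)=\el$.

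First I would treat the cocycle identity. Specializing~(\ref{koho1}) to $q=1$ with $f=\ad_\el$ gives, for all $x,y\in\el$,
$$ dc(x,y) = (\ad_\el x)c(y) - (\ad_\el y)c(x) - c([x,y]) = [x,c(y)] - [y,c(x)] - c([x,y]). $$
Thus $c\in Z^1(\el,\ad_\el)$ precisely when $c([x,y]) = [x,c(y)] - [y,c(x)]$ for all $x,y$. Using anticommutativity to rewrite $-[y,c(x)]=[c(x),y]$, this becomes exactly the Leibniz condition $c([x,y])=[c(x),y]+[x,c(y)]$ from~(\ref{dddd}), with the Lie bracket playing the role of the multiplication. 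Hence a $1$-cochain is a cocycle if and only if it is a derivation, giving $Z^1(\el,\ad_\el)=\der\el$.

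Next I would compute the coboundaries. By definition $B^1(\el,\ad_\el)=dC^0(\el,\el)$, and for a $0$-cochain $v\in\el$ the first line of~(\ref{koho1}) gives $dv(x)=(\ad_\el x)v=[x,v]=-[v,x]=-(\ad_\el v)(x)$, so $dv=-\ad_\el v$. As $v$ ranges over $\el$ the maps $-\ad_\el v$ fill out the linear subspace $\ad\el=\set{\ad_\el v}{v\in\el}$, which is stable under negation; whence $B^1(\el,\ad_\el)=\ad\el$.

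I expect the only real care needed is bookkeeping the signs and index conventions when instantiating~(\ref{koho1}) at $q=0$ and $q=1$ — in particular confirming that the overall sign in $dv=-\ad_\el v$ does not alter the image, and recording that the empty antisymmetry condition makes every linear endomorphism a legitimate $1$-cochain. No deeper structural input is required; the statement is a direct translation of the degree-one Chevalley differential into the language of derivations and inner derivations.
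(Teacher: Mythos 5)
Your proposal is correct and is exactly the direct verification the paper has in mind: the paper offers no written proof, stating only that the proposition follows ``directly from the definition'' of $d$ in~(\ref{koho1}), and your two computations --- $dc(x,y)=[x,c(y)]-[y,c(x)]-c([x,y])$ at $q=1$ yielding the Leibniz rule via anticommutativity, and $dv=-\ad_\el v$ at $q=0$ with the sign absorbed by linearity of $\ad\el$ --- are precisely that unwinding. Nothing is missing.
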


\section{Linear Groups}
The content of this section may be found for instance
in~\cite{Goto}. We denote the set of $n\times n$ regular matrices
by $GL(n,\Com)$. A closed set which is a subgroup of $GL(n,\Com)$
is called a {\bf linear group}. If a linear group is
$\Com$--holomorphic submanifold of $GL(n,\Com)$ then we call it
{\bf complex}. In general, a group $G$ is called a {\bf complex
Lie group} if $G$ is a $\Com$--holomorphic manifold and the map
$$G\times G \ni (a,b)\mapsto ab^{-1}\in G $$ is
$\Com$--holomorphic. A complex linear group forms a complex Lie
group. The exponential map $\exp : \gl (n,\Com)\map GL(n,\Com) $
for $A\in \gl (n,\Com)$ has the form  $$ \exp A =
\sum_{k=0}^{\infty} \frac{1}{k!} A^k .$$ It is well known that if
$G$ is a linear group then the set
\begin{equation}\label{Lieg}
\g= \{ X\in \gl (n,\Com)| \exp (\R X)\subset G \}
\end{equation}
forms a Lie algebra over $\R$ and the following propositions hold:
\begin{tvr}
  Let $G$ be a linear group in $GL(n,\Com)$ and $\g$ its Lie
  algebra. Then $G$ is complex if and only if $\g$ is a subalgebra
  of $\gl ( n,\Com )$ over $\Com$.
\end{tvr}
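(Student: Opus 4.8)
The plan is to translate the analytic condition ``$G$ is complex'' into the linear--algebraic condition ``$\g$ is invariant under multiplication by the imaginary unit'', and to transport this condition from the identity to every point of $G$ by left translations. Throughout I regard $GL(n,\Com)$ as an open subset of $\gl(n,\Com)\cong\Com^{n^2}$, so that it is a complex manifold whose tangent space at each point is canonically identified with $\gl(n,\Com)$, the complex structure being the operator $J$ given by $JX=\sqrt{-1}\,X$. A real subspace $W\subset\gl(n,\Com)$ is a $\Com$--subspace precisely when $JW\subset W$; note also that $J\g\subset\g$ forces $\g=J(J\g)\subset J\g$, hence $J\g=\g$. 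Since $\g$ is already a Lie algebra over $\R$ by the remark preceding the statement, once I know $J\g=\g$ it follows that $\g$ is closed under brackets and under multiplication by arbitrary complex scalars, i.e. it is a subalgebra of $\gl(n,\Com)$ over $\Com$; conversely a $\Com$--subalgebra satisfies $J\g=\g$. Thus the whole statement reduces to the equivalence ``$G$ is complex'' $\iff$ ``$J\g=\g$''. I will use the standard fact, implicit in the cited closed--subgroup results, that $G$ is an embedded real submanifold with tangent space $T_I G=\g$ at the identity $I$, with $\exp$ providing a local chart.

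First I would record the key left--invariance lemma. For $a\in G$ the left translation $L_a\colon GL(n,\Com)\map GL(n,\Com)$, $L_a(b)=ab$, is $\Com$--linear in $b$, hence holomorphic, with holomorphic inverse $L_{a^{-1}}$; moreover $L_a(G)=G$. Being linear, its differential at every point is $L_a$ itself, so $dL_a$ carries $T_I G=\g$ onto $T_a G$, giving $T_a G=a\g$. Because $L_a$ is holomorphic, its differential commutes with the complex structure, $L_a\circ J=J\circ L_a$ (directly: $a(\sqrt{-1}\,X)=\sqrt{-1}\,(aX)$). Consequently $J(T_a G)=J(a\g)=a(J\g)$, so the tangent space $T_a G$ is $J$--invariant for every $a\in G$ if and only if $J\g=\g$. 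This is the mechanism that reduces a condition ``at all points'' to a single condition at the identity.

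With the lemma in hand both implications are short. If $G$ is complex, its tangent space $T_I G=\g$ is a $\Com$--subspace, whence $J\g=\g$ and, by the opening remark, $\g$ is a $\Com$--subalgebra. Conversely, assume $\g$ is a $\Com$--subalgebra, so $J\g=\g$; by the lemma every tangent space $T_a G$ is $J$--invariant, and I would invoke the criterion that a real submanifold of a complex manifold all of whose tangent spaces are $J$--invariant is a complex submanifold, concluding that $G$ is complex.

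The step I expect to be the main obstacle is precisely this last criterion: passing from ``all tangent spaces are $\Com$--subspaces'' to ``$G$ is a $\Com$--holomorphic submanifold''. For submanifolds of $\Com^{n^2}$ the induced almost complex structure is automatically integrable, so the conclusion holds, but a careful argument requires either the holomorphic implicit/rank theorem to produce holomorphic defining equations, or an appeal to the real--analyticity of $G$ as a closed subgroup together with $J$--invariance to upgrade local real charts to holomorphic ones. Establishing $T_I G=\g$ and the chart property of $\exp$ is the other ingredient, but that is exactly the content of the closed--subgroup results taken as known here.
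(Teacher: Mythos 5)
This proposition sits in the review Chapter~1 of the thesis, whose content is explicitly attributed to~\cite{Goto}; the paper states it without proof, so there is no in-paper argument to match your proposal against. Judged on its own, your proof is correct, and the two points you flag as the load-bearing ones are exactly the right ones: (a) the closed-subgroup theorem, giving that $G$ is an embedded real-analytic submanifold with $T_I G=\g$ and $\exp$ as a local chart (legitimately taken as known, since the paper's surrounding statements presuppose it), and (b) the criterion that a real submanifold of $\Com^{n^2}$ all of whose tangent spaces are $J$-invariant is a complex submanifold. Fact (b) is true and fillable by a local graph argument rather than the integrability route you gesture at: writing $G$ near $a$ as a graph of a $C^1$ map $f$ from an open set of $T_aG$ to a complex complement, $J$-invariance of the tangent spaces forces $df$ to be $\Com$-linear at every point, i.e.\ $f$ satisfies the Cauchy--Riemann equations, so the graph is holomorphic. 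Your left-translation lemma $T_aG=a\g$ with $L_a\circ J=J\circ L_a$ correctly reduces everything to the single condition $J\g=\g$ at the identity, and the forward implication is immediate as you say. It is worth noting a more elementary route for the converse, which is presumably closer to the textbook proof behind the citation: if $\g$ is a $\Com$-subspace, then a small ball in $\g$ is a complex submanifold of $\gl(n,\Com)$, and the holomorphic map $\exp$, restricted to it, is a biholomorphism onto a neighbourhood of $I$ in $G$ (its differential at $0$ being the identity), exhibiting $G$ near $I$ directly as a holomorphic submanifold; left translations, which are biholomorphisms of $GL(n,\Com)$ preserving $G$, then spread this to all of $G$. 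That version buys you independence from criterion (b), trading your pointwise tangent-space characterization for the canonical $\exp$-chart; your version, in exchange, isolates the geometric content (``complex $\iff$ all tangent spaces $J$-invariant $\iff$ $J\g=\g$'') in a way that makes both implications symmetric. Either way the argument is sound.
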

\begin{tvr}\label{prunikgrup}
  Let $G_1$ and $G_2$ be linear groups in $GL(n,\Com)$ and let $\g_1$ and $\g_2$ be their Lie
  algebras. Then the Lie algebra of $G_1\cap G_2$ is
  $\g_1\cap\g_2$.
\end{tvr}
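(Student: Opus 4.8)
The plan is to observe that the intersection $G_1\cap G_2$ is itself a linear group, so that its Lie algebra is again given by the defining formula~(\ref{Lieg}), and then to reduce the claim to an elementary membership argument about one-parameter subgroups. First I would verify that $H:=G_1\cap G_2$ is a linear group: as an intersection of subgroups of $GL(n,\Com)$ it is again a subgroup, and as an intersection of closed sets it is closed. Hence formula~(\ref{Lieg}) applies to $H$ and defines its Lie algebra
$$\mathfrak h=\set{X\in\gl(n,\Com)}{\exp(\R X)\subset H}.$$

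The heart of the argument is then a direct equivalence using the exponential characterization. For any $X\in\gl(n,\Com)$ one has the chain
\begin{align*}
X\in\mathfrak h &\iff \exp(\R X)\subset G_1\cap G_2 \\
&\iff \exp(\R X)\subset G_1 \text{ and } \exp(\R X)\subset G_2 \\
&\iff X\in\g_1 \text{ and } X\in\g_2 \iff X\in\g_1\cap\g_2,
\end{align*}
where the second equivalence is the elementary fact that a set is contained in an intersection precisely when it is contained in each member, and the third is just the defining formula~(\ref{Lieg}) applied separately to $G_1$ and $G_2$. This establishes $\mathfrak h=\g_1\cap\g_2$, which is the assertion.

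I do not expect any genuine obstacle here, since the exponential definition of the Lie algebra turns the intersection statement into a near-tautology. The only point requiring care -- and the step I would state explicitly -- is the verification that $G_1\cap G_2$ is a bona fide linear group (a closed subgroup), because this is exactly what guarantees that the phrase ``Lie algebra of $G_1\cap G_2$'' is meaningful and is computed by~(\ref{Lieg}). Once that is in place, no further analytic input about $\exp$ beyond its appearance in the definition is needed.
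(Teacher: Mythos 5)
Your proof is correct, and there is nothing in the paper to diverge from: Proposition~\ref{prunikgrup} appears in the review section on linear groups and is stated without proof (with a reference to \cite{Goto}), and your argument via the exponential characterization~(\ref{Lieg}) is the standard one. You correctly isolate the only substantive point --- that $G_1\cap G_2$ is again a closed subgroup, hence a linear group to which~(\ref{Lieg}) applies --- after which the membership equivalence is immediate, using that the quantifier ``for all $t\in\R$'' distributes over the conjunction $\exp(tX)\in G_1$ and $\exp(tX)\in G_2$.
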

A subgroup $G$ of $GL(n,\Com)$ is called an {\bf algebraic group}
if there exists a set of polynomials $P\subset \Com
[x_{11},x_{12},\dots,x_{nn}]$ such that $$G=\{(a_{ij})\in
GL(n,\Com)\,|\, p(a_{11},a_{12},\dots,a_{nn})=0 \q\forall p\in P
\} $$ An algebraic group is a linear group. Moreover,
\begin{tvr}\label{alggroup}
  An algebraic group in $GL(n,\Com)$ is a complex linear group.
\end{tvr}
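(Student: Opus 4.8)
The plan is to reduce the statement to the criterion already established, namely that a linear group $G$ is complex if and only if its Lie algebra $\g$, defined by~(\ref{Lieg}), is a subalgebra of $\gl(n,\Com)$ over $\Com$. Since the text records that an algebraic group is in particular a linear group, the associated $\g$ is automatically a Lie algebra over $\R$; what remains is to promote this to a $\Com$-subalgebra. As $\g$ is already a real subspace closed under the bracket, it suffices to verify that it is invariant under multiplication by the scalar $i$, i.e. that $X\in\g$ implies $iX\in\g$, for then $\g$ becomes a complex subspace and the bracket-closedness is inherited.

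First I would fix $X\in\g$ and an arbitrary defining polynomial $p\in P$, and consider the function $z\mapsto p(\exp(zX))$ for $z\in\Com$. The entries of $\exp(zX)=\sum_{k=0}^{\infty}\frac{z^k}{k!}X^k$ are everywhere-convergent power series in $z$, hence entire holomorphic functions; composing with the polynomial $p$ shows that $z\mapsto p(\exp(zX))$ is entire as well. Because $X\in\g$ means $\exp(tX)\in G$ for every real $t$, and $G$ is cut out precisely by the polynomials in $P$, this entire function vanishes for all $z=t\in\R$.

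The decisive step is the identity theorem: an entire function vanishing on the real axis, a set with accumulation points, vanishes identically on $\Com$. Applying this to each $p\in P$ gives $p(\exp(zX))=0$ for all $z\in\Com$; specialising to $z=it$ with $t\in\R$ yields $p(\exp(itX))=0$ for every $p\in P$. Since $\exp(itX)$ is invertible, it lies in $G$ for all real $t$, which is exactly the statement $\exp(\R\,iX)\subset G$, i.e. $iX\in\g$. Thus $\g$ is stable under multiplication by $i$, hence a $\Com$-subalgebra, and the cited criterion delivers that $G$ is a complex linear group.

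The genuinely nontrivial ingredient is the passage from the real line to all of $\Com$ via the identity theorem; everything else is the bookkeeping of recognising the matrix entries of $z\mapsto\exp(zX)$ as honest entire functions of the single complex variable $z$ and invoking the already-proven characterisation of complex linear groups. The one subtlety worth flagging is exactly that joint-to-single-variable point: one must confirm that each individual matrix entry is entire in $z$, which is immediate from the power series but should be stated explicitly before the identity theorem is applied.
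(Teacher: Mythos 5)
The paper itself contains no proof of this proposition: the section on linear groups is explicitly review material, and the statement is quoted from the literature (\cite{Goto}), so there is no in-paper argument to compare yours against. That said, your proof is correct, and it is precisely the classical argument one finds in the cited sources. Every step checks out: since the entries of $\exp(zX)=\sum_{k\geq 0}\frac{z^k}{k!}X^k$ are power series converging on all of $\Com$, each function $z\mapsto p(\exp(zX))$ with $p\in P$ is entire; it vanishes on $\R$ because $X\in\g$ means $\exp(tX)\in G$ for all real $t$ and $G$ is exactly the common zero set of $P$ inside $GL(n,\Com)$; the identity theorem then forces $p(\exp(zX))\equiv 0$, and your remark that $\exp(itX)$ is invertible is exactly the point needed to conclude $\exp(itX)\in G$, since membership in $G$ requires both the polynomial conditions and regularity. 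Finally, since $\g$ as defined in~(\ref{Lieg}) is already an $\R$-subspace closed under the bracket, stability under multiplication by $i$ together with $\R$-linearity gives closure under all of $\Com$ (write $\lambda=a+bi$ and use $\lambda X=aX+b(iX)$), so $\g$ is a $\Com$-subalgebra of $\gl(n,\Com)$, and the preceding proposition of the paper (complexity of $G$ is equivalent to $\g$ being a subalgebra over $\Com$) concludes. The one subtlety you flagged --- that one must pass from the formal power series to honest entire functions of the single variable $z$ entrywise before invoking the identity theorem --- is indeed the only point where care is required, and you handled it correctly.
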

A subgroup of $GL(\A)$, where $\A=(V,\,\cdot\,)$, is called an
{\bf algebraic group}, if it is represented by an algebraic group
in $GL(n,\Com)$ with respect to some basis of $V$.
\begin{thm}\label{groupI}
  Let $\A$ be an algebra over $\Com$. Then the automorphism group
  $\aut \A$ is an algebraic group in $GL(\A)$ and the Lie algebra
  of~$\aut \A$ is~$\der \A$.
\end{thm}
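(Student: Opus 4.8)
The plan is to treat the two claims in turn: first that $\aut\A$ is the zero set of a family of polynomials inside $GL(\A)$, and then that the Lie algebra attached to this linear group through the one--parameter subgroups of~(\ref{Lieg}) is exactly $\der\A$.

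For algebraicity I would fix the basis $\mathcal{X}=(x_1,\dots,x_n)$ with its structure constants $c^k_{ij}$ and represent an operator $f\in GL(\A)$ by the matrix $(a_{ij})$ given by $f(x_j)=\sum_l a_{lj}x_l$. Expanding both sides of the homomorphism condition $f(x_i\cdot x_j)=f(x_i)\cdot f(x_j)$ and comparing coefficients of $x_l$ produces, for all $i,j,l$, the relations
\begin{equation*}
\sum_k c^k_{ij}\,a_{lk}=\sum_{p,q} c^l_{pq}\,a_{pi}\,a_{qj}.
\end{equation*}
By bilinearity of the multiplication these finitely many identities are equivalent to $f\in\aut\A$, and each is polynomial (in fact quadratic) in the entries $a_{ij}$. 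Hence $\aut\A$ is an algebraic group in $GL(n,\Com)$ relative to $\mathcal{X}$, that is, an algebraic group in $GL(\A)$; by Proposition~\ref{alggroup} it is a complex linear group, so its Lie algebra $\g$ is well defined through~(\ref{Lieg}).

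To identify $\g$ with $\der\A$ I would establish both inclusions as sets. For $\der\A\subset\g$, take $D\in\der\A$; an easy induction on $n$ starting from the Leibniz rule yields $D^n(a\cdot b)=\sum_{k=0}^n\binom{n}{k}(D^k a)\cdot(D^{n-k}b)$. Substituting this into the exponential series, which converges absolutely because $\A$ is finite--dimensional, and collecting the Cauchy product of the series for $\exp(tD)(a)$ and $\exp(tD)(b)$ by total degree, I can match $\exp(tD)(a\cdot b)$ with $\exp(tD)(a)\cdot\exp(tD)(b)$ term by term. Thus $\exp(tD)\in\aut\A$ for every $t\in\R$ and so $D\in\g$. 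For the reverse inclusion, let $X\in\g$, so that $\exp(tX)(a\cdot b)=\exp(tX)(a)\cdot\exp(tX)(b)$ for all $t$; differentiating this identity at $t=0$ and using that the multiplication is bilinear, hence subject to the product rule, gives $X(a\cdot b)=(Xa)\cdot b+a\cdot(Xb)$, i.e.\ $X\in\der\A$.

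I expect the forward inclusion $\der\A\subset\g$ to be the only genuinely delicate point: the reverse inclusion is a one--line differentiation, whereas upgrading the infinitesimal Leibniz identity to the global assertion that $\exp(tD)$ is an honest automorphism for all finite $t$ is exactly where the binomial bookkeeping and the rearrangement of the absolutely convergent double series have to be justified. The remaining care is purely clerical -- keeping the index conventions straight in the structure--constant identity -- and, because I prove the set equality $\g=\der\A$ directly, no separate verification of the real versus complex linear structure of $\g$ is needed.
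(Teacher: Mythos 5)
Your proof is correct, but there is no proof in the paper to compare it against: Theorem~\ref{groupI} sits in the review chapter, whose content the text explicitly imports from the literature (``may be found for instance in~\cite{Goto}''), so the result is stated without argument. On its own merits your proposal is sound. The structure--constant identity $\sum_k c^k_{ij}a_{lk}=\sum_{p,q}c^l_{pq}a_{pi}a_{qj}$ is exactly the polynomial system exhibiting $\aut\A$ as an algebraic group (the subgroup property is already recorded in the paper right before~(\ref{aaaa})), Proposition~\ref{alggroup} then upgrades it to a complex linear group so that the Lie algebra of~(\ref{Lieg}) is defined, and both inclusions are handled properly: the Leibniz formula $D^n(a\cdot b)=\sum_{k=0}^{n}\binom{n}{k}(D^k a)\cdot(D^{n-k}b)$ together with absolute convergence of the exponential series in finite dimension justifies the Cauchy--product rearrangement for $\der\A\subset\g$ (you might add the one--line remark that $\exp(tD)$ is automatically invertible, with inverse $\exp(-tD)$, so it genuinely lands in $GL(\A)$), and differentiation at $t=0$ gives the converse. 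It is worth noting that your argument is precisely the template the thesis itself deploys later when proving that $\gd{0,1,1}\el$ and $\gd{1,1,0}\el$ are the Lie algebras of $\ga{0,1,1}\el$ and $\ga{1,1,0}\el$: the lemma~(\ref{indukc}) plays the role of your binomial Leibniz formula, the $\subset$ direction is the same Cauchy--product computation, and where you differentiate at $t=0$ the paper instead takes the limit of the difference quotient $\frac{1}{t}(\exp(tA)-1)$ --- the same step phrased without calculus on operator--valued curves. Your closing observation is also apt: proving the set equality $\g=\der\A$ directly sidesteps the real--versus--complex issue, since $\der\A$ is a complex subspace and, $\aut\A$ being algebraic and hence complex, $\g$ is a complex subalgebra in any case.
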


\section{Jordan Algebras}\label{jordan}
Basic facts about Jordan algebras included in this section may be
found in~\cite{nic2,emch,koech}. Suppose we have a (complex)
algebra $\mathcal{J}$ with multiplication $\circ$ which satisfies
for all $x,y\in \mathcal{J}$
\begin{enumerate}[(1)]
\item $x \C y= x \C y$ (commutativity)
\item $x \C (x^2\C y)= x^2 \C (x\C y)$ (Jordan's identity)
\end{enumerate}
where $x^2= x \C x$. Such an algebra $\mathcal{J}$ is called a
(complex) {\bf Jordan algebra}. In terms of structure constants
(\ref{STRUCT}) commutativity and Jordan's identity may be written
as
\begin{enumerate}[(1')]
  \item  $c^m_{ij}-c^m_{ji}=0$
  \item
  $\sum_{h,l=1}^{n}(c^h_{ik}c^l_{mj}c^r_{hl}-c^h_{ik}c^l_{hj}c^r_{lm}+c^h_{im}c^l_{kj}c^r_{hl}-c^h_{im}c^l_{hj}c^r_{lk}+c^h_{mk}c^l_{ij}c^r_{hl}-c^h_{mk}c^l_{hj}c^r_{li})=0,$\newline
  $\forall i,j,k,m,r \in \{1,\dots,n\}$.
\end{enumerate}

For an arbitrary vector space $V$ we may introduce a new
multiplication on the associative algebra $\End V$. For two linear
operators $X,Y\in\End V$ we put
\begin{equation}\label{gjgj}
 X \circ Y = \frac{1}{2}(X Y+Y X).
\end{equation}
In this way we obtain a Jordan algebra which we denote by $\jor
V$.

Note that we have defined two different products on $\End V$
(formulas (\ref{glgl}) and (\ref{gjgj})). It was pointed out in
\cite{albert} that these two products together with the original
associative composition of linear mappings exhaust all products of
the type $\la XY+\mu YX$ on $\End V$. We refine this statement in
the following way. We say that a subspace $U\subset \End V$ is
$(\la,\mu)$--{\bf closed} if there exist $\la,\mu\in \Com$, not
both zero, such that for all $X,Y\in U$ $$\la XY+\mu YX\in U $$
holds. Then one can easily prove the following
result~\cite{albert}:
\begin{tvr}\label{Albertprop}
Let $\la,\mu\in \Com$ not both zero and $U$ is $(\la,\mu)$--closed
set of $\End V$. Then $U$ is some of the following:
\begin{enumerate}[(a)]
  \item an associative subalgebra of $\End V$
  \item a Lie subalgebra of $\gl V$
  \item a Jordan subalgebra of $\jor V$.
\end{enumerate}
\end{tvr}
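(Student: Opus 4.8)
The plan is to treat the claim as a purely linear-algebraic case distinction on the pair $(\la,\mu)$, exploiting that $U$ is a \emph{subspace} (not merely a subset), so that whenever two elements lie in $U$ any linear combination of them does too. Write $B(X,Y)=\la XY+\mu YX$ for the bilinear operation that, by hypothesis, maps $U\times U$ into $U$. The single extra ingredient I would use is the symmetry of this operation: since $U$ is closed under swapping its arguments, $B(Y,X)=\la YX+\mu XY$ also lies in $U$ for all $X,Y\in U$.

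First I would dispose of the degenerate cases. If $\mu=0$ then $\la\neq0$ and $\la XY\in U$, whence $XY\in U$; symmetrically, if $\la=0$ then $YX\in U$. In either case $U$ is closed under the associative product and is thus an associative subalgebra of $\End V$, giving alternative (a).

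Assume now both $\la,\mu\neq0$. Forming the sum and difference of $B(X,Y)$ and $B(Y,X)$ yields
\begin{align*}
(\la+\mu)(XY+YX)&\in U,\\
(\la-\mu)(XY-YX)&\in U.
\end{align*}
I would then split into three exhaustive subcases. If $\la+\mu\neq0$ and $\la-\mu\neq0$, both $XY+YX$ and $XY-YX$ lie in $U$, hence so do $XY$ and $YX$, and $U$ is again associative, recovering case (a). If $\la+\mu=0$ (so $\mu=-\la$ and $\la-\mu=2\la\neq0$), then $[X,Y]=XY-YX\in U$ and $U$ is a Lie subalgebra of $\gl V$, which is case (b). If $\la-\mu=0$ (so $\la=\mu$ and $\la+\mu=2\la\neq0$), then $XY+YX\in U$; since $U$ is a subspace this is equivalent to $X\C Y=\tfrac12(XY+YX)\in U$, so $U$ is a Jordan subalgebra of $\jor V$, which is case (c).

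These subcases, together with the two degenerate ones, exhaust every admissible pair $(\la,\mu)\neq(0,0)$, so the trichotomy is complete. There is no genuine analytic obstacle here; the only point requiring care is bookkeeping, namely verifying that the distinguished values $\mu=0$, $\la=0$, $\la=\mu$, $\la=-\mu$ (together with the generic case) really cover all admissible pairs, and that in each case it is precisely the \emph{subspace} property that converts a membership statement of the form ``$cXY+\dots\in U$'' for a nonzero scalar $c$ into membership of the bare product. I expect the whole argument to be short and elementary once the swap-symmetry observation is in place.
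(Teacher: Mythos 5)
Your proof is correct, and in fact the paper never writes one out: Proposition~\ref{Albertprop} is stated with the remark that ``one can easily prove'' it, deferring to Albert's paper. Your argument --- swap the arguments, add and subtract to get $(\la+\mu)(XY+YX)\in U$ and $(\la-\mu)(XY-YX)\in U$, then split on the vanishing of $\la$, $\mu$, $\la+\mu$, $\la-\mu$ --- is exactly the symmetrization trick the paper itself uses in the proof of its Lemma on $(\alpha,\beta,\gamma)$--derivations (the passage from equations~(\ref{trik1}) to~(\ref{trik2})), and the subspace property is correctly invoked at each step to strip nonzero scalars; the case analysis is exhaustive, so this is a complete proof of the omitted statement.
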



\chapter{Generalized Derivations}\label{CHgen}

\section{$(\alpha,\beta,\gamma)$--derivations}\label{gender}

We defined a derivation of an arbitrary algebra $\A$ as a linear
operator $D \in \End \el$ satisfying relation (\ref{dddd}). For
Lie algebras, several non--equivalent ways generalizing this
definition have recently been studied~\cite{Bresar,Hartwig,Leger}.
In this chapter we bring forward another type of generalization
introduced in~\cite{NH,HN6,HNx}.

Let $\A=(V,\,\cdot\,)$ be an arbitrary algebra. We call a linear
operator $D\in \End\A$ an
\textbf{$(\alpha,\beta,\gamma)$--derivation} of $\A$ if there
exist $\alpha,\beta,\gamma\in \Com$ such that for all $x,y\in \A$
the following relation is satisfied
\begin{equation}\label{gd}
\alpha D (x\cdot y) = \beta (D x)\cdot y + \gamma\, x\cdot (D y).
\end{equation}
For given $\alpha, \beta, \gamma \in \Com$ we denote the set of
all $(\alpha,\beta,\gamma)$--derivations as
$\gd{\alpha,\beta,\gamma}{ \A}$, i.~e.
\begin{equation}
\gd{\alpha,\beta,\gamma}{ \A} = \{ D \in \End\A\ |\ \alpha D
(x\cdot y) = \beta (D x)\cdot y + \gamma\, x\cdot (D y),\ \
\forall x,y \in \A\}.
\end{equation}
It is clear that $\gd{\alpha,\beta,\gamma}{ \A}$ is a linear
subspace of $\End\A$. The advantage of such a generalization of
derivations can be seen from the following crucial results.
\begin{thm}\label{tvr1}
Let $f:\A \map \wt\A$ be an isomorphism of complex algebras $\A$
and $\wt\A$. Then the mapping $\rho :\End\A\map \End\wt\A$,
defined by~(\ref{rho}), is an isomorphism of the vector spaces
$\gd{\alpha,\beta,\gamma}{\A}$ and
$\gd{\alpha,\beta,\gamma}{\wt\A}$, i.~e. for any $\alpha, \beta,
\gamma\in \Com$
\begin{equation}\label{invar}
\rho(\gd{\alpha,\beta,\gamma}{\A})
=\gd{\alpha,\beta,\gamma}{\wt\A}.
\end{equation}
\end{thm}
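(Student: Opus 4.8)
The plan is to exploit the fact, established in the discussion following~(\ref{rho}), that $\rho$ is already a linear isomorphism of the whole space $\End\A$ onto $\End\wt\A$. Consequently it suffices to prove the single set equality~(\ref{invar}): once $\rho$ is shown to carry $\gd{\alpha,\beta,\gamma}{\A}$ onto $\gd{\alpha,\beta,\gamma}{\wt\A}$, its restriction to this subspace is automatically injective and surjective, hence a vector--space isomorphism. I would therefore establish the two inclusions separately, obtaining the second from the first by symmetry.

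First I would prove $\rho(\gd{\alpha,\beta,\gamma}{\A})\subseteq\gd{\alpha,\beta,\gamma}{\wt\A}$. Fix $D\in\gd{\alpha,\beta,\gamma}{\A}$ and set $\wt D=\rho(D)=fDf^{-1}$, noting at once the relation $\wt D f=fD$. To verify the defining relation~(\ref{gd}) for $\wt D$ in $\wt\A$, I would test it on arbitrary $u,v\in\wt\A$; since $f$ is surjective I may write $u=f(x)$, $v=f(y)$ with $x,y\in\A$. The core computation is then a chain of rewritings using only the homomorphism property~(\ref{homomorph}), $f(x\cdot y)=f(x)\ast f(y)$, the linearity of $f$, and $\wt D f=fD$. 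Concretely, $\alpha\wt D(u\ast v)=\alpha fD(x\cdot y)=f\bigl(\alpha D(x\cdot y)\bigr)$, and applying~(\ref{gd}) for $D$ inside $f$ turns this into $\beta f((Dx)\cdot y)+\gamma f(x\cdot(Dy))$; pushing $f$ through each product and using $f(Dx)=\wt D u$, $f(Dy)=\wt D v$ yields exactly $\beta(\wt D u)\ast v+\gamma\,u\ast(\wt D v)$. This shows $\wt D\in\gd{\alpha,\beta,\gamma}{\wt\A}$.

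For the reverse inclusion I would simply observe that $f^{-1}:\wt\A\map\A$ is again an isomorphism of complex algebras and that $\rho^{-1}(\wt D)=f^{-1}\wt D f$ is the map associated to it by~(\ref{rho}). Applying the inclusion already proved, but with the roles of $\A$ and $\wt\A$ (and of $f$ and $f^{-1}$) interchanged, gives $\rho^{-1}(\gd{\alpha,\beta,\gamma}{\wt\A})\subseteq\gd{\alpha,\beta,\gamma}{\A}$, which is equivalent to $\gd{\alpha,\beta,\gamma}{\wt\A}\subseteq\rho(\gd{\alpha,\beta,\gamma}{\A})$. Combining the two inclusions yields~(\ref{invar}).

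I do not anticipate a genuine obstacle: this is a transport--of--structure statement, and essentially all of its content lies in the bookkeeping of the single displayed computation. The only point demanding a little care is the systematic use of surjectivity of $f$ to reduce the verification over $\wt\A$ to the known identity over $\A$, together with keeping the two distinct multiplications $\cdot$ and $\ast$ straight throughout; once arbitrary elements of $\wt\A$ are written as $f(x)$, the relation~(\ref{gd}) for $D$ propagates through $f$ verbatim.
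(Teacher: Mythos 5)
Your proposal is correct and follows essentially the same route as the paper: the core computation (writing arbitrary elements of $\wt\A$ as $f(x),f(y)$, conjugating $D$ to $fDf^{-1}$, and pushing the relation~(\ref{gd}) through $f$ using~(\ref{homomorph})) is exactly the paper's displayed argument. The only difference is that you explicitly supply the reverse inclusion by applying the same reasoning to $f^{-1}$, a step the paper leaves implicit; this is a welcome bit of completeness, not a different approach.
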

\begin{proof}
Suppose we have $\A=(V,\,\cdot\,)$ and $\wt\A=(\wt{V},\,\ast\,)$.
The isomorphism relation~(\ref{homomorph}) implies that for all
$x,y\in \wt\A$ $$x\ast y=f (f^{-1}(x)\cdot f^{-1}(y)). $$ By
rewriting the definition (\ref{gd}) we have for $D\in
\gd{\alpha,\beta,\gamma}{\A}$ $$\alpha D (f^{-1}(x)\cdot
f^{-1}(y)) = \beta (Df^{-1}x)\cdot f^{-1}y + \gamma\, f^{-1}x\cdot
(Df^{-1}y) .$$ Applying the mapping $f$ on this equation and
taking into account that $\alpha, \beta, \gamma\in \Com$ one has
\begin{equation}
\alpha f D f^{-1} (x\ast y) = \beta (f D f^{-1}x)\ast y  + \gamma
\,x\ast (f D f^{-1}y),
\end{equation}
i.~e. $f Df^{-1}\in \gd{\alpha,\beta,\gamma}{\wt\A}$.
\end{proof}
\begin{cor}\label{corgender}
For any $\alpha, \beta, \gamma\in \Com$ the dimension of the
vector space $\gd{\alpha,\beta,\gamma}{\A}$ is an invariant
characteristic of algebras.
\end{cor}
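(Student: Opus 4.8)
The plan is to follow exactly the pattern used above for the invariant $\Phi_{\der}$, since being an ``invariant characteristic'' in the sense of (\ref{invvv}) means nothing more than being a well--defined map on isomorphism classes. First I would fix $\alpha,\beta,\gamma\in\Com$, take $\Theta$ to be the set of all isomorphism classes of algebras and $M=\N_0$, and propose the map $\Phi[\A]=\dim\gd{\alpha,\beta,\gamma}{\A}$. The whole content of the corollary is then the verification that $\Phi$ is well--defined on cosets, i.e. that $\dim\gd{\alpha,\beta,\gamma}{\A}$ is the same for any two representatives of a given class $[\A]$.

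To carry this out, I would let $\A$ and $\wt\A$ be isomorphic algebras and pick an isomorphism $f:\A\map\wt\A$. The decisive input is Theorem~\ref{tvr1}, which gives that the induced map $\rho$ of (\ref{rho}) satisfies (\ref{invar}), namely $\rho(\gd{\alpha,\beta,\gamma}{\A})=\gd{\alpha,\beta,\gamma}{\wt\A}$. Because $\rho$ is an isomorphism of the associative algebras $\End\A$ and $\End\wt\A$ (in particular a bijective linear map on $\End\A$), its restriction to the subspace $\gd{\alpha,\beta,\gamma}{\A}$ is an injective linear map onto $\gd{\alpha,\beta,\gamma}{\wt\A}$, hence a linear isomorphism between the two spaces. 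Since isomorphic vector spaces have equal dimension, I conclude $\dim\gd{\alpha,\beta,\gamma}{\A}=\dim\gd{\alpha,\beta,\gamma}{\wt\A}$, so $\Phi$ takes one and the same value on the whole class $[\A]$ and is therefore a well--defined numerical invariant.

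I do not expect any genuine obstacle here, as all the real work has already been discharged in Theorem~\ref{tvr1}; this corollary is a purely formal consequence. The only point worth flagging is logical rather than computational: to pass from (\ref{invar}) to equality of dimensions one must use that $\rho$ is \emph{injective} on the subspace, not merely that it surjects onto the target space. This injectivity is automatic because $\rho$ is globally bijective on $\End\A$, as recorded in the discussion around (\ref{rho}), so restricting it to any subspace keeps it injective and the dimension count goes through.
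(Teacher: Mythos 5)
Your proposal is correct and follows exactly the route the paper intends: Corollary~\ref{corgender} is stated as an immediate consequence of Theorem~\ref{tvr1}, mirroring the paper's earlier treatment of $\Phi_{\der}$ via~(\ref{rhoder}), and your filling-in of the routine details (restriction of the bijective $\rho$ to the subspace yields a linear isomorphism, hence equal dimensions) is precisely the omitted verification.
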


Now we restrict our investigations to commutative or
anti--commutative algebras and it follows immediately from
(\ref{gd}) that for any $\varepsilon \in \Com \backslash\{0\}$ it
holds:
\begin{equation}\label{vla1}
\gd{\alpha,\beta,\gamma}{ \A} =
\gd{\varepsilon\alpha,\varepsilon\beta,\varepsilon\gamma}{ \A}=
\gd{\alpha,\gamma,\beta}{ \A}.
\end{equation} Furthermore, we have the
following important property.
\begin{lemma} Let $\A$ be a commutative or
anti--commutative algebra. Then for any $\alpha, \beta, \gamma \in
\Com$
\begin{equation}\label{vla2}
\gd{\alpha,\beta,\gamma}{\A} = \gd{0,\beta-\gamma,\gamma
-\beta}{\A}\, \cap\, \gd{2\alpha,\beta+\gamma,\beta+\gamma}{\A}
\end{equation}
holds.
\end{lemma}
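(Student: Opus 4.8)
The plan is to establish the two set inclusions separately, the whole argument resting on pairing the defining identity of $\gd{\alpha,\beta,\gamma}{\A}$ with its ``transposed'' version. For the inclusion $\subseteq$ I would take $D\in\gd{\alpha,\beta,\gamma}{\A}$, so that by~(\ref{gd}) the relation $\alpha D(x\cdot y)=\beta(Dx)\cdot y+\gamma\,x\cdot(Dy)$ holds for all $x,y\in\A$. Interchanging $x$ and $y$ and invoking commutativity or anti--commutativity --- in the anti--commutative case the minus signs produced by $x\cdot y=-y\cdot x$ cancel uniformly on both sides --- yields the companion identity $\alpha D(x\cdot y)=\gamma(Dx)\cdot y+\beta\,x\cdot(Dy)$. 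This is nothing but the equality $\gd{\alpha,\beta,\gamma}{\A}=\gd{\alpha,\gamma,\beta}{\A}$ already recorded in~(\ref{vla1}), so I may simply cite it rather than redo the swap.

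Having the two identities in hand, the rest of the $\subseteq$ direction is pure linear combination. Subtracting the companion identity from the original cancels the $D(x\cdot y)$ term and leaves $0=(\beta-\gamma)(Dx)\cdot y+(\gamma-\beta)\,x\cdot(Dy)$, which is precisely the defining relation of $\gd{0,\beta-\gamma,\gamma-\beta}{\A}$; adding the two identities gives $2\alpha D(x\cdot y)=(\beta+\gamma)(Dx)\cdot y+(\beta+\gamma)\,x\cdot(Dy)$, the defining relation of $\gd{2\alpha,\beta+\gamma,\beta+\gamma}{\A}$. Hence $D$ lies in both spaces, which establishes $\gd{\alpha,\beta,\gamma}{\A}\subseteq\gd{0,\beta-\gamma,\gamma-\beta}{\A}\cap\gd{2\alpha,\beta+\gamma,\beta+\gamma}{\A}$.

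For the reverse inclusion $\supseteq$ I would take $D$ belonging to both spaces on the right, so that $0=(\beta-\gamma)(Dx)\cdot y+(\gamma-\beta)\,x\cdot(Dy)$ and $2\alpha D(x\cdot y)=(\beta+\gamma)(Dx)\cdot y+(\beta+\gamma)\,x\cdot(Dy)$ both hold for all $x,y\in\A$. Adding these and dividing by $2$ recovers exactly $\alpha D(x\cdot y)=\beta(Dx)\cdot y+\gamma\,x\cdot(Dy)$, i.e.\ $D\in\gd{\alpha,\beta,\gamma}{\A}$. I expect no genuine obstacle beyond careful sign bookkeeping in the anti--commutative case; the only conceptual point worth flagging is that the hypothesis of (anti--)commutativity is used \emph{only} in the forward direction, to produce the transposed identity via~(\ref{vla1}), whereas the reverse inclusion is a formal identity valid in any algebra.
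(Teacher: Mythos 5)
Your proposal is correct and follows essentially the same route as the paper: the paper also obtains the transposed identity $\alpha D(x\cdot y)=\gamma\,(Dx)\cdot y+\beta\,x\cdot(Dy)$ (by writing the defining relation at $(y,x)$ and using anti--commutativity, which is exactly the content of~(\ref{vla1})) and then adds and subtracts the pair of identities in both directions. Your explicit remark that the reverse inclusion is a formal linear combination needing no (anti--)commutativity is a small sharpening the paper leaves implicit in its ``similarly, starting with equations~(\ref{trik2})'' step, but it does not change the argument.
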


\begin{proof} Suppose any $\alpha, \beta, \gamma \in \Com$ are given. We carry out the proof
for an anti--commutative algebra $\A_-$ with a multiplication
$[\,,\,]_{-}$, i.~e. for all $x,y\in \A_-$ the relation
$[x,y]_-=-[y,x]_- $ is satisfied; the proof for a commutative
algebra is analogous. Then for $D \in
\gd{\alpha,\beta,\gamma}{\A_-}$
 and arbitrary $x,y\in \A_-$ we have
    \begin{eqnarray}
    \alpha D[x,y]_- & = & \beta[Dx,y]_- + \gamma [x,Dy]_- \label{trik1} \\
    \alpha D[y,x]_- & = & \beta[Dy,x]_- + \gamma [y,Dx]_-. \nonumber
    \end{eqnarray}
    By adding and subtracting equations (\ref{trik1}) we obtain
     \begin{eqnarray}
    0 & = & (\beta -\gamma)([Dx,y]_- - [x,Dy]_-) \label{trik2} \\
    2\alpha D[x,y]_- & = & (\beta + \gamma)([Dx,y]_- + [x,Dy]_-) \nonumber
     \end{eqnarray}
    and thus $\gd{\alpha,\beta,\gamma}{\A_-} \subset \gd{0,\beta-\gamma,\gamma -\beta}{\A_-}\,
    \cap\,
\gd{2\alpha,\beta+\gamma,\beta+\gamma}{\A_-}$. Similarly, starting
with equations (\ref{trik2}) we obtain equations (\ref{trik1}) and
the remaining inclusion is proven.
\end{proof}
Further, we proceed to formulate the theorem which reveals the
structure of the spaces $\gd{\alpha,\beta,\gamma}{\A}$; the three
original parameters are in fact reduced to only one.
\begin{thm}\label{klass} Let $\A$ be a commutative or
anti--commutative algebra. Then for any $\alpha, \beta, \gamma \in
\Com$ there exists $\delta\in \Com$ such that the subspace
$\gd{\alpha,\beta,\gamma}{\A} \subset \End \A$ is equal to some of
the four following subspaces:
\begin{enumerate}
 \item $\gd{\delta,0,0}{\A}$
 \item $\gd{\delta,1,-1}{\A}$
 \item $\gd{\delta,1,0}{\A}$
 \item $\gd{\delta,1,1}{\A}$.
\end{enumerate}
\end{thm}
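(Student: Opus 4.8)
The plan is to run a case analysis on the pair $(\beta,\gamma)$, exploiting the two reductions already available: the scaling-and-swap identities~(\ref{vla1}) and the intersection formula~(\ref{vla2}). The guiding idea is that $\gd{\alpha,\beta,\gamma}{\A}$ depends on $(\alpha,\beta,\gamma)$ only through the projective class $[\alpha:\beta:\gamma]$ taken modulo the swap $\beta\leftrightarrow\gamma$, so the normalization of $(\beta,\gamma)$ is governed entirely by whether the combinations $\beta-\gamma$ and $\beta+\gamma$ vanish.

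First I would dispose of the degenerate pairs by direct use of~(\ref{vla1}). If $\beta=\gamma=0$ the space is literally $\gd{\alpha,0,0}{\A}$, which is form~1 with $\delta=\alpha$. If $\beta=\gamma\neq 0$, dividing all three parameters by $\beta$ yields form~4 with $\delta=\alpha/\beta$. If $\beta=-\gamma$ with the pair nonzero (which forces $\beta\neq 0$), the same scaling gives form~2 with $\delta=\alpha/\beta$. Each of these is immediate.

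The substantive case is the generic one, where $\beta-\gamma\neq 0$ and $\beta+\gamma\neq 0$ simultaneously. Here I would invoke~(\ref{vla2}) to write $\gd{\alpha,\beta,\gamma}{\A}=\gd{0,\beta-\gamma,\gamma-\beta}{\A}\cap\gd{2\alpha,\beta+\gamma,\beta+\gamma}{\A}$, and then rescale each factor by~(\ref{vla1}): the first becomes $\gd{0,1,-1}{\A}$ (dividing by $\beta-\gamma$) and the second becomes $\gd{2\alpha/(\beta+\gamma),1,1}{\A}$ (dividing by $\beta+\gamma$). The key observation is that this intersection is itself a single space of type~3: setting $\delta=\alpha/(\beta+\gamma)$ and applying~(\ref{vla2}) \emph{to the triple} $(\delta,1,0)$ gives $\gd{\delta,1,0}{\A}=\gd{0,1,-1}{\A}\cap\gd{2\delta,1,1}{\A}$, whose right-hand side coincides with the intersection just computed. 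Hence $\gd{\alpha,\beta,\gamma}{\A}=\gd{\delta,1,0}{\A}$.

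I expect the main obstacle to be recognizing that the generic case collapses to a \emph{single} form rather than remaining an intersection of two spaces; the trick is to read~(\ref{vla2}) both forwards (to decompose the given triple) and backwards (to recompose form~3), which pins down the correct value $\delta=\alpha/(\beta+\gamma)$. Everything else is bookkeeping with the scaling-and-swap relation~(\ref{vla1}), together with a check that the four listed cases are genuinely exhaustive, i.e. that every pair $(\beta,\gamma)$ falls into exactly one of $\{\beta=\gamma=0\}$, $\{\beta=\gamma\neq0\}$, $\{\beta=-\gamma\neq0\}$, $\{\beta\neq\pm\gamma\}$.
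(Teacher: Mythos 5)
Your proposal is correct and takes essentially the same route as the paper's proof: the identical four-way case split on $(\beta,\gamma)$, with the generic case $\beta\neq\pm\gamma$ handled exactly as in the paper by reading~(\ref{vla2}) forwards to decompose $\gd{\alpha,\beta,\gamma}{\A}$ and backwards on the triple $(\delta,1,0)$ with $\delta=\alpha/(\beta+\gamma)$. The only harmless divergence is the case $\beta=-\gamma\neq 0$, where you invoke the scaling~(\ref{vla1}) directly to obtain $\delta=\alpha/\beta$, whereas the paper applies~(\ref{vla2}) twice to land on $\delta=\alpha$; both choices are valid since the theorem asserts only the existence of some $\delta$, and your shortcut is if anything slightly cleaner.
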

\begin{proof}
\begin{enumerate}
    \item Suppose $\beta + \gamma = 0 $. Then either $\beta = \gamma
    =0$ or $\beta = -\gamma \neq 0$.
    \begin{enumerate}
    \item For $\beta = \gamma = 0$, we
    have $$\gd{\alpha,\beta,\gamma}{\A}  =  \gd{\alpha,0,0}{\A}.$$
    \item For $\beta = -\gamma \neq 0$, it follows from~(\ref{vla1}),
    (\ref{vla2}):
 $$\gd{\alpha,\beta,\gamma}{\A} = \gd{0,\beta-\gamma,\gamma-\beta}{\A}\cap
 \gd{2\alpha,0,0}{\A}
    = \gd{0,1,-1}{\A}\cap \gd{\alpha,0,0}{\A}.$$ On the other hand it holds $$ \gd{\alpha,1,-1}{\A} = \gd{0,2,-2}{\A}\cap
    \gd{2\alpha,0,0}{\A}
    = \gd{0,1,-1}{\A}\cap \gd{\alpha,0,0}{\A}$$ and therefore
    $$\gd{\alpha,\beta,\gamma}{\A}= \gd{\alpha,1,-1}{\A}.$$
    \end{enumerate}
    \item Suppose $\beta + \gamma \neq 0 $. Then either $\beta - \gamma \neq
    0$ or $\beta = \gamma \neq 0 $.
    \begin{enumerate}
        \item For $\beta - \gamma \neq 0$, we have
    $$\gd{\alpha,\beta,\gamma}{\A} = \gd{0,\beta-\gamma,\gamma-\beta}{\A}\cap
    \gd{2\alpha,\beta+\gamma,\beta+\gamma}{\A} = \gd{0,1,-1}{\A}\cap
    \gd{\frac{2\alpha}{\beta+\gamma},1,1}{\A}$$
    and taking into account (\ref{vla2}), this is equal to
    $\gd{\frac{\alpha}{\beta+\gamma},1,0}{\A}$, i.~e.
    $$\gd{\alpha,\beta,\gamma}{\A} =
    \gd{\frac{\alpha}{\beta+\gamma},1,0}{\A}.$$
        \item For $\beta = \gamma \neq 0 $ we have
        $$\gd{\alpha,\beta,\gamma}{\A} = \gd{\frac{\alpha}{\beta},1,1}{\A}.$$
    \end{enumerate}
\end{enumerate}
\end{proof}

We define two complex functions with fundamental property
--- invariance under isomorphisms. We use the one--parametric sets $\gd{\alpha,1,0}{\A}$ and
$\gd{\alpha,1,1}{\A}$ from Theorem~\ref{klass} to define invariant
functions of an arbitrary algebra $\A$. Functions
$\fa\el,\fa^0\el:\Com \rightarrow \{0,1,\dots,(\dim\A)^2\}$
defined by the formulas
\begin{align}
(\fa\A)(\alpha) =& \dim\gd{\alpha,1,1}{\A}\\ (\fa^0\A)(\alpha) =&
\dim\gd{\alpha,1,0}{\A}
\end{align}
are called {\bf invariant functions} corresponding to
$(\alpha,\beta,\gamma)$--derivations of an algebra $\A$.

The following statement follows immediately from
Theorem~\ref{tvr1}.
\begin{cor}\label{invarfunc} If two complex algebras $\A,\wt\A$ are isomorphic, $\A \cong
\wt\A$, then it holds:
\begin{enumerate}
\item $\fa\A = \fa\wt\A$,
\item $\fa^0\A = \fa^0\wt\A$.
\end{enumerate}
\end{cor}
Note that sometimes in the literature the name 'invariant
functions' denotes (formal) Casimir invariants; their form,
however, depends on the choice of the basis of $\el$. Here by
invariant functions we rather mean 'basis independent' complex
functions, such as~$\fa$ and~$\fa^0$.

\section{$(\alpha,\beta,\gamma)$--derivations of Lie Algebras}
Suppose we have a complex Lie algebra $\el$ and let us discuss in
detail possible outcome of Theorem~\ref{klass}.
\begin{thm}\label{Lielist} Let $\el$ be a complex Lie algebra and
$\alpha,\beta,\gamma\in\Com$ not all zero. Then the space
$\gd{\alpha,\beta,\gamma}{\el}$ is equal to some of the following:
\begin{enumerate}
\item Lie algebra of derivations $\gd{1,1,1}{\el}\subset
\gl\el$,
\item Lie algebra $\gd{0,1,1}{\el}\subset \gl\el$,
\item associative algebra $\gd{1,1,0}{\el}=C_{\ad}(\el)\subset \gl\el $,
\item associative algebra $\gd{1,0,0}{\el}\subset\End\el$ of dimension
\begin{equation}\label{fle1}
\dim \gd{1,0,0}{\el} = \operatorname{codim}\el^2
\dim\el,\end{equation}\label{GD100}\vspace{-24pt}
\item associative algebra $\gd{0,1,0}{\el}\subset\End\el$ of dimension
\begin{equation}\label{fle2}\dim \gd{0,1,0}{\el} = \dim \el \, \dim C(\el),\end{equation}\label{GD010}\vspace{-24pt}
\item Jordan algebra $\gd{1,1,-1}{\el}\subset \jor\el$,
\item Jordan algebra $\gd{0,1,-1}{\el}\subset \jor\el$,
\item subspace $\gd{\delta,1,0}\el$, for some $\delta\in\Com$, $\delta\neq
0,1$.
\item subspace $\gd{\delta,1,1}\el$, for some $\delta\in\Com$, $\delta\neq
0,1$.
\end{enumerate}
\end{thm}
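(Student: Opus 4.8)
The plan is to read off the list from Theorem~\ref{klass} by pinning down, within each of its four one--parameter families, the remaining scalar $\delta$, and then identifying the algebraic type of the space that results. Since $\el$ is anti--commutative, Theorem~\ref{klass} tells us that $\gd{\alpha,\beta,\gamma}{\el}$ coincides with one of $\gd{\delta,0,0}{\el}$, $\gd{\delta,1,-1}{\el}$, $\gd{\delta,1,0}{\el}$, $\gd{\delta,1,1}{\el}$. First I would use the hypothesis that $(\alpha,\beta,\gamma)$ is not all zero together with the scaling relation~(\ref{vla1}) to normalise: in the family $\gd{\delta,0,0}{\el}$ the excluded value $\delta=0$ would force $(\alpha,\beta,\gamma)=0$, so $\delta\neq0$ and $\gd{\delta,0,0}{\el}=\gd{1,0,0}{\el}$, which is item~4. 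For the families $\gd{\delta,1,0}{\el}$ and $\gd{\delta,1,1}{\el}$ I would simply split according to whether $\delta\in\{0,1\}$: this yields items~5,~3 and~8 in the first family and items~2,~1 and~9 in the second.

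The only family needing real work is $\gd{\delta,1,-1}{\el}$. Here I would apply~(\ref{vla2}) and~(\ref{vla1}) to obtain $\gd{\delta,1,-1}{\el}=\gd{0,1,-1}{\el}\cap\gd{\delta,0,0}{\el}$; this exhibits the parameter collapse, since for every $\delta\neq0$ the right--hand side equals $\gd{0,1,-1}{\el}\cap\gd{1,0,0}{\el}=\gd{1,1,-1}{\el}$ (item~6), while $\delta=0$ gives $\gd{0,1,-1}{\el}$ (item~7). This step, where~(\ref{vla2}) does the essential work, is the one I expect to carry the proof.

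It remains to identify the algebraic type of the seven structured spaces. For the associative items I would argue directly: by~(\ref{vla1}) one has $\gd{1,1,0}{\el}=\gd{1,0,1}{\el}$, whose defining relation $D[x,y]=[x,Dy]$ is exactly $[D,\ad_\el x]=0$ for all $x$, so this space is $C_{\ad}(\el)$; the relation of $\gd{1,0,0}{\el}$ reads $D[x,y]=0$, i.e.\ $D$ annihilates $\el^2$, and that of $\gd{0,1,0}{\el}$ reads $[Dx,y]=0$ for all $y$, i.e.\ $\operatorname{im}D\subseteq C(\el)$. Each is closed under composition, and counting the free parameters of a map vanishing on $\el^2$, respectively a map into $C(\el)$, gives the dimension formulas~(\ref{fle1}) and~(\ref{fle2}). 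For the Lie items, $\gd{1,1,1}{\el}=\der\el$ is a Lie subalgebra by Proposition~\ref{dersub}, and a two--line computation using anti--commutativity shows $\gd{0,1,1}{\el}$ is closed under the commutator. For the Jordan items, the same style of computation shows $\gd{0,1,-1}{\el}$ is closed under the anticommutator, and $\gd{1,1,-1}{\el}=\gd{0,1,-1}{\el}\cap\gd{1,0,0}{\el}$ is then an intersection of a Jordan subalgebra with an associative (hence Jordan) one; alternatively one may invoke Proposition~\ref{Albertprop}. The genuinely delicate points are the closure computations that separate the Lie type (items~1,~2) from the Jordan type (items~6,~7), though each reduces to a short manipulation of the defining identity.
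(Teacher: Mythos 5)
Your proposal is correct and follows essentially the same route as the paper's proof: reduce to the four families of Theorem~\ref{klass}, split on $\delta\in\{0,1\}$ in the $\gd{\delta,1,0}{\el}$ and $\gd{\delta,1,1}{\el}$ families, collapse $\gd{\delta,1,-1}{\el}$ for $\delta\neq 0$ to $\gd{0,1,-1}{\el}\cap\gd{1,0,0}{\el}=\gd{1,1,-1}{\el}$ via~(\ref{vla1}) and~(\ref{vla2}), and verify the associative/Lie/Jordan structures and the dimension formulas~(\ref{fle1}), (\ref{fle2}) by direct computation. Your identification of $\gd{1,1,0}{\el}=\gd{1,0,1}{\el}$ with $C_{\ad}(\el)$ via the commutator relation $[D,\ad_\el x]=0$ merely makes explicit what the paper states by reference to~(\ref{centralizer}).
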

\begin{proof}
We list the four cases in Theorem~\ref{klass} and discuss all
possible values of the parameter $\delta\in \Com$:
    \begin{enumerate}
        \item $\gd{\delta,0,0}{\el}$:
                \begin{enumerate}
                \item Since we assumed some $\alpha,\beta, \gamma\in \Com$ non-zero, the case $\delta = 0 $ cannot occur.
                \item For $\delta \neq 0$, the space $\gd{1,0,0}{\el}$ is an associative subalgebra of $\End \el$, which maps the derived
                algebra $\el^2 = [\el,\el]$ to the zero vector:
                $$\gd{1,0,0}{\el} = \{A \in \End\el\ |\ A(\el^2) =
                0\},$$
                and therefore its dimension is as follows:
                      $$\dim \gd{1,0,0}{\el} = \operatorname{codim}\el^2 \dim\el.$$
                \end{enumerate}
        \item $\gd{\delta,1,-1}{\el}$:
                \begin{enumerate}
                \item For $\delta = 0$, we have a Jordan algebra
                $\gd{0,1,-1}{\el}\subset \jor \el$,
                $$\gd{0,1,-1}{\el} = \{A \in \End\el\ |\ [Ax,y] = [x,A y],\ \ \forall
                x,y\in\el \}.$$ The proof of the property $A,B\in \gd{0,1,-1}{\el}\Rightarrow\frac{1}{2}(AB+BA)\in \gd{0,1,-1}{\el}
                $ is straightforward.
                \item For $\delta \neq 0$, we obtain the Jordan algebra
                $\gd{1,1,-1}{\el}\subset \jor \el$ as an intersection of two Jordan
                algebras:
                \begin{align*} \gd{\delta,1,-1}{\el} =& \gd{0,1,-1}{\el} \cap
                \gd{\delta,0,0}{\el} = \gd{0,1,-1}{\el} \cap \gd{1,0,0}{\el}\\=&\gd{1,1,-1}{\el}
                .\end{align*}
                \end{enumerate}
        \item $\gd{\delta,1,0}{\el}$:
                \begin{enumerate}
                \item For $\delta = 0$, we get an associative algebra of all linear operators
                of the vector space $\el$, which maps the whole $\el$ into its center $ C(\el)$:
                $$ \gd{0,1,0}{\el} = \{A \in \End\el\ |\ A(\el) \subseteq
                C(\el)\}, $$ and its dimension is
                        $$\dim \gd{0,1,0}{\el} = \dim \el \, \dim C(\el).$$

                \item For $\delta = 1$, the space $\gd{1,1,0}{\el}$ is the centralizer of the adjoint representation $C_{\ad} (\el)$, see (\ref{centralizer}).

                \item For the remaining values of $\delta$, the space $\gd{\delta,1,0}{\el}$
                forms, in the general case of a Lie algebra $\el$,
                only a vector subspace of
                $\End\el$. Thus we have the one--parametric set of vector
                spaces:
                 $$ \gd{\delta,1,0}{\el} = \gd{0,1,-1}{\el} \cap
                \gd{2\delta,1,1}{\el}. $$
                \end{enumerate}
        \item $\gd{\delta,1,1}{\el}$:
                \begin{enumerate}
                \item For $\delta = 0$, we have a Lie algebra
                $$\gd{0,1,1}{\el} = \{A \in \End\el\ |\ [Ax,y] = -[x,A y],\ \ \forall x,y\in\el \}.$$
                Verification of the property $A,B\in
                \gd{0,1,1}{\el}\Rightarrow(AB-BA)\in \gd{0,1,1}{\el}
                $ is straightforward.
                \item For $\delta = 1$, we get the algebra of derivations of
                $\el$,
                    $$ \gd{1,1,1}{\el} = \der \el.$$
                \item For the remaining values of $\delta$, the space $\gd{\delta,1,1}{\el}$
                forms, in the general case of a Lie algebra $\el$,
                 only a vector subspace of
                $\End\el$.
                \end{enumerate}
    \end{enumerate}
\end{proof}

Since the definition of $(\alpha,\beta,\gamma)$--derivations
partially overlaps other generalizations, some of the sets from
the above theorem naturally appeared already in the literature.
For instance, a considerable amount of theory concerning relations
between $\gd{1,1,0}{\el}$ and $\gd{0,1,-1}{\el}$ has been
developed in~\cite{Leger}. Later on, we are mostly interested in
the form of the one--parametric spaces $\gd{\delta,1,1}\el$ and
$\gd{\delta,1,0}\el$.

\begin{example}
If $\el$ is a simple complex Lie algebra then $\gd{1,1,1} \cong
\el$, $$ \gd{1,0,0}{\el} = \gd{0,1,0}{\el} = \{0\},$$ and
$\gd{1,1,0}{\el}$ is the one--dimensional Lie algebra containing
multiples of the identity operator.
\end{example}
\subsection{Intersections of the Spaces $\gd{\alpha,\beta,\gamma}{\el}$
}\label{intersections} A thorough study of various intersections
of two different subspaces $\gd{\alpha,\beta,\gamma}{\el}$ turned
out to be very valuable. Besides new independent invariants, we
also obtain a new operator algebra with a non--trivial structure,
as well as restrictions on $\fa\el,\,\fa^0\el$. We commence with:
\begin{thm}\label{tvrprun1}
Let $f:\A \map \wt\A$ be an isomorphism of complex algebras $\A$
and $\wt\A$. Then the mapping $\rho :\End\A\map \End\wt\A$,
defined by~(\ref{rho}), is an isomorphism of the vector spaces
$\gd{\alpha,\beta,\gamma}{\A}\cap\gd{\alpha',\beta',\gamma'}{\A}$
and
$\gd{\alpha,\beta,\gamma}{\wt\A}\cap\gd{\alpha',\beta',\gamma'}{\wt\A}$,
i.~e. for any $\alpha, \beta, \gamma,\alpha', \beta', \gamma'\in
\Com$
\begin{equation}\label{rhoabcprun}
\rho(\gd{\alpha,\beta,\gamma}{\A}\cap\gd{\alpha',\beta',\gamma'}{\A}
)
=\gd{\alpha,\beta,\gamma}{\wt\A}\cap\gd{\alpha',\beta',\gamma'}{\wt\A}.
\end{equation}
\end{thm}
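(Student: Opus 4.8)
The plan is to reduce everything to Theorem~\ref{tvr1} together with the elementary fact that a bijection commutes with intersections. First I would recall that the map $\rho:\End\A\map\End\wt\A$ defined by~(\ref{rho}) is not merely linear but bijective: as noted in the excerpt it is an isomorphism of the associative algebras $\End\A$ and $\End\wt\A$, with inverse given by $E\mapsto f^{-1}Ef$. This bijectivity is the only structural input beyond Theorem~\ref{tvr1} that the argument requires.

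Next I would invoke Theorem~\ref{tvr1} twice, once for each prescribed triple of parameters, to obtain
$$\rho(\gd{\alpha,\beta,\gamma}{\A})=\gd{\alpha,\beta,\gamma}{\wt\A},\qquad \rho(\gd{\alpha',\beta',\gamma'}{\A})=\gd{\alpha',\beta',\gamma'}{\wt\A}.$$
It then suffices to establish the purely set--theoretic identity $\rho(U\cap W)=\rho(U)\cap\rho(W)$ for the subspaces $U=\gd{\alpha,\beta,\gamma}{\A}$ and $W=\gd{\alpha',\beta',\gamma'}{\A}$, after which substituting the two displayed equalities yields precisely~(\ref{rhoabcprun}).

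For that identity, the inclusion $\rho(U\cap W)\subseteq\rho(U)\cap\rho(W)$ holds for \emph{any} map and any pair of subsets, since $D\in U\cap W$ forces $\rho(D)$ to lie in both $\rho(U)$ and $\rho(W)$. For the reverse inclusion I would use injectivity of $\rho$: given $E\in\rho(U)\cap\rho(W)$, write $E=\rho(D_1)=\rho(D_2)$ with $D_1\in U$ and $D_2\in W$; injectivity forces $D_1=D_2$, so this common operator lies in $U\cap W$ and hence $E\in\rho(U\cap W)$.

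There is essentially no hard step here; the genuine content is already contained in Theorem~\ref{tvr1}, and the intersection statement is a formal consequence of it. The single point that must not be skipped is the appeal to injectivity of $\rho$ in the non--trivial inclusion: for a map that fails to be injective, $\rho(U)\cap\rho(W)$ can be strictly larger than $\rho(U\cap W)$, since it may collect images of distinct operators $D_1\in U$, $D_2\in W$ with $\rho(D_1)=\rho(D_2)$. The bijectivity of $\rho$ inherited from the isomorphism $\End\A\cong\End\wt\A$ is exactly what rules this out.
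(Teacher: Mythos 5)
Your proof is correct, but it takes a genuinely different route from the paper's. The paper proves Theorem~\ref{tvrprun1} by direct computation: it characterizes $D\in \gd{\alpha,\beta,\gamma}{\A}\cap\gd{\alpha',\beta',\gamma'}{\A}$ by the simultaneous validity of the two defining equations, conjugates both by $f$ exactly as in the proof of Theorem~\ref{tvr1}, and reads off that $fDf^{-1}$ lies in the corresponding intersection for $\wt\A$ --- in effect re-running the argument of Theorem~\ref{tvr1} twice in parallel, with membership in the intersection tracked through the equivalences. You instead treat Theorem~\ref{tvr1} as a black box, apply it once for each parameter triple, and reduce the claim to the set-theoretic identity $\rho(U\cap W)=\rho(U)\cap\rho(W)$, which you correctly justify via injectivity of $\rho$ (itself on record in the paper, since $\rho$ is stated to be an isomorphism of the associative algebras $\End\A$ and $\End\wt\A$, with inverse $E\mapsto f^{-1}Ef$). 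Your isolation of injectivity as the one non-automatic ingredient is exactly right: for a non-injective map the inclusion $\rho(U)\cap\rho(W)\supseteq\rho(U\cap W)$ can be strict. What your approach buys is economy and generality --- it shows Corollary~\ref{corgenderprun} needs no new computation and would extend verbatim to intersections of any family of $\rho$-invariant subspaces; what the paper's direct verification buys is self-containedness, and it makes visible that the same conjugation simultaneously preserves both defining identities, which is the computation a reader would want to see spelled out at least once. One small remark in the same spirit applies to both proofs: the paper's displayed argument, like yours via Theorem~\ref{tvr1}, establishes equality rather than a single inclusion only because the underlying computation is reversible (equivalently, one may apply the inclusion to $f^{-1}$ and $\rho^{-1}$); your version makes this symmetry explicit through bijectivity, which is a minor expository advantage.
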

\begin{proof}
Suppose we have $\A=(V,\cdot)$ and $\wt\A=(\wt{V},\ast)$. The
isomorphism relation~(\ref{homomorph}) implies that for all
$x,y\in \wt\A$ the relation $x\ast y=f (f^{-1}(x)\cdot f^{-1}(y))$
holds. By rewriting the definition~(\ref{gd}) we obtain $D\in
\gd{\alpha,\beta,\gamma}{\A}\cap\gd{\alpha',\beta',\gamma'}{\A}$
if and only if both of the equations
\begin{align*}
\alpha D (f^{-1}(x)\cdot f^{-1}(y)) &= \beta (Df^{-1}x)\cdot
f^{-1}y + \gamma\, f^{-1}x\cdot (Df^{-1}y)\\ \alpha' D
(f^{-1}(x)\cdot f^{-1}(y)) &= \beta' (Df^{-1}x)\cdot f^{-1}y +
\gamma'\, f^{-1}x\cdot (Df^{-1}y)
\end{align*}
are satisfied for all $x,y\in \wt\A$. Applying the mapping $f$ on
these two equations and taking into account that $\alpha, \beta,
\gamma, \alpha', \beta', \gamma'$ are complex numbers, one has
\begin{align*}
\alpha f D f^{-1} (x\ast y) &= \beta (f D f^{-1}x)\ast y  + \gamma
\,x\ast (f D f^{-1}y),\\ \alpha' f D f^{-1} (x\ast y) &= \beta' (f
D f^{-1}x)\ast y  + \gamma' \,x\ast (f D f^{-1}y),
\end{align*}
i.~e. $f Df^{-1}\in
\gd{\alpha,\beta,\gamma}{\wt\A}\cap\gd{\alpha',\beta',\gamma'}{\wt\A}$.
\end{proof}
\begin{cor}\label{corgenderprun}
For any $\alpha, \beta, \gamma,\alpha', \beta', \gamma'\in \Com$
the number
$$\dim\,(\gd{\alpha,\beta,\gamma}{\A}\cap\gd{\alpha',\beta',\gamma'}{\A})$$
is an invariant characteristic of algebras.
\end{cor}

In our search for new invariants of complex Lie algebras, we
systematically explored all possible intersections of the spaces
$\gd{\alpha,\beta,\gamma}{\el}$. We classify these intersections
in the following theorem.

\begin{thm}\label{prunklass}
Let $\el$ be a complex Lie algebra. Suppose
$\alpha,\beta,\gamma\in\Com$ are not all zero and
$\alpha',\beta',\gamma'\in\Com$ are not all zero. Then the
intersection
$\gd{\alpha,\beta,\gamma}{\el}\cap\gd{\alpha',\beta',\gamma'}{\el}$
is equal to some of the cases 1. -- 9. of Theorem~\ref{Lielist} or
to some of the following:
\begin{enumerate}
\item associative algebra  $\gd{1,0,0}{\el}\cap\gd{0,1,0}{\el}\subset\End{\el}$ of
dimension \begin{equation}\label{dimprunex}
\dim(\gd{1,0,0}{\el}\cap\gd{0,1,0}{\el}) = \operatorname{codim}
\el^2
    \dim C(\el),\end{equation}
\item Lie algebra $\gd{1,1,1}{\el}\cap \gd{0,1,1}{\el}\subset
\gl\el$.
\end{enumerate}
\end{thm}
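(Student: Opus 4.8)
The plan is to reduce both factors to the canonical forms of Theorem~\ref{Lielist} and then to exploit the decomposition (\ref{vla2}), which shows that every space $\gd{\alpha,\beta,\gamma}{\el}$ splits as a ``symmetric'' factor intersected with an ``antisymmetric'' factor. Writing $S(x,y)=[Dx,y]+[x,Dy]$ and $A(x,y)=[Dx,y]-[x,Dy]$, anticommutativity of the bracket makes $A$ symmetric and $S$ antisymmetric in $x,y$, so by (\ref{vla2}) the membership $D\in\gd{\alpha,\beta,\gamma}{\el}$ is equivalent to the conjunction $D\in P_A$ and $D\in P_S$, where $P_A\in\{\gd{0,1,-1}{\el},\End\el\}$ (the second occurring exactly when $\beta=\gamma$) and $P_S\in\{\gd{\delta,1,1}{\el}:\delta\in\Com\}\cup\{\gd{1,0,0}{\el},\End\el\}$. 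Granting this, the intersection of two spaces factors as $(P_A\cap P_A')\cap(P_S\cap P_S')$, and it suffices to classify each factor separately.

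First I would dispose of the antisymmetric factor: since $P_A,P_A'\in\{\gd{0,1,-1}{\el},\End\el\}$, the intersection $P_A\cap P_A'$ is again one of these two. Next I would treat the symmetric factor. If $P_S$ and $P_S'$ carry the same parameter the intersection is unchanged; the only genuinely new phenomenon occurs for two distinct parameters (including the ``$\delta=\infty$'' value $\gd{1,0,0}{\el}$): subtracting the two defining relations $\delta D[x,y]=S(x,y)$ and $\delta'D[x,y]=S(x,y)$ yields $D[x,y]=0$ and hence $S(x,y)=0$, so that $\gd{\delta,1,1}{\el}\cap\gd{\delta',1,1}{\el}=\gd{1,0,0}{\el}\cap\gd{0,1,1}{\el}$. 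This last space is exactly the new Lie algebra of case~2, since $\gd{1,1,1}{\el}\cap\gd{0,1,1}{\el}$ imposes $S=0$ together with $D[x,y]=S=0$, the same two conditions; being the intersection of the two Lie subalgebras $\der\el$ and $\gd{0,1,1}{\el}$ of $\gl\el$, it is itself a Lie subalgebra.

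It then remains to assemble the two factors. When $P_A\cap P_A'=\End\el$ the intersection is $P_S\cap P_S'$, landing in cases 1, 2, 4 or 9 of Theorem~\ref{Lielist}, or in the new case~2. When $P_A\cap P_A'=\gd{0,1,-1}{\el}$ I would use the identities supplied by (\ref{vla2}), namely $\gd{0,1,-1}{\el}\cap\gd{\delta,1,1}{\el}=\gd{\delta/2,1,0}{\el}$ and $\gd{0,1,-1}{\el}\cap\gd{1,0,0}{\el}=\gd{1,1,-1}{\el}$, to absorb every symmetric outcome into cases 3, 5, 6, 7 or 8, with the single exception $\gd{0,1,-1}{\el}\cap(\gd{1,0,0}{\el}\cap\gd{0,1,1}{\el})$. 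For this one I would note that $\gd{0,1,-1}{\el}\cap\gd{0,1,1}{\el}$ forces $[Dx,y]=[x,Dy]=-[x,Dy]$, hence $[Dx,y]=0$ for all $x,y$, i.e. $D(\el)\subseteq C(\el)$; thus it equals $\gd{0,1,0}{\el}$, and intersecting further with $\gd{1,0,0}{\el}$ produces precisely the new associative algebra $\gd{1,0,0}{\el}\cap\gd{0,1,0}{\el}$ of case~1.

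Finally I would compute its dimension. An operator $D$ lies in $\gd{1,0,0}{\el}\cap\gd{0,1,0}{\el}$ iff $D(\el^2)=0$ and $D(\el)\subseteq C(\el)$; such $D$ are exactly the linear maps factoring through $\el/\el^2$ and landing in $C(\el)$, so the space is canonically isomorphic to $\operatorname{Hom}(\el/\el^2,C(\el))$, giving $\dim=\operatorname{codim}\el^2\cdot\dim C(\el)$. The main obstacle I anticipate is the bookkeeping: verifying that the distinct-parameter collapses together with the identity $\gd{0,1,-1}{\el}\cap\gd{0,1,1}{\el}=\gd{0,1,0}{\el}$ really do force every remaining combination into the listed eleven spaces, so that no twelfth type can appear.
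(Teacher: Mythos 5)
Your proof is correct, and it takes a genuinely different route from the paper's. The paper first reduces each factor to one of the five canonical spaces $\gd{\delta,1,1}\el$, $\gd{\delta,1,0}\el$, $\gd{1,0,0}\el$, $\gd{1,1,-1}\el$, $\gd{0,1,-1}\el$ supplied by Theorem~\ref{Lielist}, enumerates the resulting $17$ pairs, and collapses each pair by ad hoc chains built from the identities (\ref{obv1})--(\ref{obv3}), (\ref{lemm1}) and (\ref{lemm2}). You instead push Lemma~\ref{vla2} to the front: every $\gd{\alpha,\beta,\gamma}{\el}$ factors as $P_A\cap P_S$ with $P_A\in\{\gd{0,1,-1}{\el},\End\el\}$ and $P_S\in\{\gd{\delta,1,1}{\el}\}\cup\{\gd{1,0,0}{\el},\End\el\}$, intersection distributes over the two factors, the antisymmetric family is closed under intersection, and in the symmetric family the only nontrivial collapse is the distinct-parameter one, where subtracting $\delta D[x,y]=S(x,y)$ from $\delta' D[x,y]=S(x,y)$ forces $D[x,y]=0$ and then $S=0$, giving $\gd{1,0,0}{\el}\cap\gd{0,1,1}{\el}=\gd{1,1,1}{\el}\cap\gd{0,1,1}{\el}$ --- the same equality the paper records at the end of its proof. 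Your recombination, using $\gd{0,1,-1}{\el}\cap\gd{\delta,1,1}{\el}=\gd{\delta/2,1,0}{\el}$, $\gd{0,1,-1}{\el}\cap\gd{1,0,0}{\el}=\gd{1,1,-1}{\el}$ and $\gd{0,1,-1}{\el}\cap\gd{0,1,1}{\el}=\gd{0,1,0}{\el}$ (the last is just (\ref{lemm2}) at $\delta=0$, so it was already available), exhausts all combinations: two antisymmetric outcomes times four symmetric outcomes, against the paper's seventeen pairs, and I checked that this sweep hits exactly cases 1.--9. plus the two new spaces, so no twelfth type can appear --- your anticipated bookkeeping worry is resolved. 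Your approach buys a visibly complete and smaller case analysis and gets the inclusion $\gd{\delta,1,0}{\el}\subset\gd{0,1,-1}{\el}$ for free, and your identification $\gd{1,0,0}{\el}\cap\gd{0,1,0}{\el}\cong\operatorname{Hom}(\el/\el^2,C(\el))$ makes the dimension formula (\ref{dimprunex}) explicit where the paper leaves it implicit in the characterizations of Theorem~\ref{Lielist}; the paper's enumeration, in exchange, exhibits each of the $17$ intersections individually, which it exploits elsewhere. One small omission: the statement asserts that $\gd{1,0,0}{\el}\cap\gd{0,1,0}{\el}$ is an \emph{associative} subalgebra of $\End\el$; add the one-line remark (as you did for the Lie case) that it is the intersection of two associative subalgebras of $\End\el$, or directly that if $A,B$ kill $\el^2$ and map $\el$ into $C(\el)$ then so does $AB$.
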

\begin{proof}
According to Theorem~\ref{Lielist}, for given
$\alpha,\beta,\gamma,\alpha',\beta',\gamma' \in\Com$ there exists
$\delta\in\Com$ such that $\gd{\alpha,\beta,\gamma}\el$ is equal
to some of the spaces:
$$\gd{\delta,1,1}\el,\gd{\delta,1,0}\el,\gd{1,0,0}\el,\gd{1,1,-1}\el,\gd{0,1,-1}\el,$$
and $\delta'\in\Com$ such that $\gd{\alpha',\beta',\gamma'}\el$ is
equal to some of the spaces:
$$\gd{\delta',1,1}\el,\gd{\delta',1,0}\el,\gd{1,0,0}\el,\gd{1,1,-1}\el,\gd{0,1,-1}\el.$$
There are 17 possible pairs the above spaces. Five pairs consist
of equal spaces and corresponding intersections lead to some of
the cases 1. -- 9. of Theorem~\ref{Lielist}. Then there are $10$
obvious pairs with different subspaces, plus two pairs
$\gd{\delta,1,1}\el\cap\gd{\delta',1,1}\el$ and
$\gd{\delta,1,0}\el\cap\gd{\delta',1,0}\el$ with
$\delta\neq\delta'$. Assuming $\delta\neq\delta'$, the following
equalities among intersections are obvious:
\begin{align}
&\gd{1,0,0}{\el} \cap \gd{0,1,0}{\el}  =\gd{1,0,0}{\el} \cap
\gd{\delta,1,0}{\el}=\gd{\delta,1,0}{\el}\cap
\gd{\delta',1,0}{\el}\label{obv1}\\ &\gd{1,0,0}{\el} \cap
\gd{0,1,1}{\el} =\gd{1,0,0}{\el} \cap \gd{\delta,1,1}{\el} =
\gd{\delta,1,1}{\el} \cap \gd{\delta',1,1}{\el}\label{obv2}\\
&\gd{1,0,0}{\el} \cap \gd{0,1,-1}{\el}=\gd{1,0,0}{\el} \cap
\gd{1,1,-1}{\el}=\gd{1,1,-1}{\el} \cap
\gd{0,1,-1}{\el}\label{obv3}
\end{align}
Using Lemma~\ref{vla2} we obtain:
\begin{align}
&\gd{1,1,-1}{\el}=\gd{0,1,-1}{\el} \cap
\gd{1,0,0}{\el}\label{lemm1}\\
 &
\gd{\delta,1,0}{\el}=\gd{0,1,-1}{\el}\cap\gd{2\delta,1,1}{\el}\label{lemm2}
\end{align}
The equality~(\ref{lemm1}) implies that all {\it three}
intersections in~(\ref{obv3}) are equal to the case~6. of
Theorem~\ref{Lielist}. Using successively~(\ref{lemm1}),
(\ref{lemm2}) and (\ref{obv1}), we obtain:
\begin{align}
\gd{1,1,-1}{\el} \cap \gd{\delta,1,1}{\el}&=\gd{1,0,0}{\el} \cap
\gd{0,1,-1}{\el}\cap\gd{\delta,1,1}{\el} \nonumber\\
&=\gd{1,0,0}{\el}\cap\gd{\frac{\delta}{2},1,0}{\el}=\gd{1,0,0}{\el}\cap\gd{0,1,0}{\el}.\label{nobv1}
\end{align}
Using successively~(\ref{lemm2}), (\ref{obv3}) and (\ref{lemm1}),
(\ref{nobv1}) we obtain:
\begin{align*}
\gd{1,1,-1}{\el} \cap \gd{\delta,1,0}{\el}&=\gd{1,1,-1}{\el} \cap
\gd{0,1,-1}{\el}\cap\gd{2\delta,1,1}{\el} \nonumber\\
&=\gd{1,1,-1}{\el}\cap\gd{2\delta,1,1}{\el}=\gd{1,0,0}{\el}\cap\gd{0,1,0}{\el}.
\end{align*}
Using successively~(\ref{lemm2}), (\ref{obv2}), (\ref{lemm1}),
(\ref{nobv1}) and assuming firstly $\delta'\neq 2 \delta$, we
obtain:
\begin{align*}
\gd{\delta,1,0}{\el} \cap \gd{\delta',1,1}{\el}&=\gd{0,1,-1}{\el}
\cap \gd{2\delta,1,1}{\el}\cap\gd{\delta',1,1}{\el} \nonumber\\
&=\gd{0,1,-1}{\el}\cap\gd{1,0,0}{\el}\cap\gd{0,1,1}{\el}\nonumber\\
&= \gd{1,1,-1}{\el}\cap\gd{0,1,1}{\el}
=\gd{1,0,0}{\el}\cap\gd{0,1,0}{\el}.
\end{align*}
Secondly, using twice~(\ref{lemm2}) we obtain:
\begin{align*}
\gd{\delta,1,0}{\el} \cap \gd{2\delta,1,1}{\el}&=\gd{0,1,-1}{\el}
\cap \gd{2\delta,1,1}{\el}\cap\gd{2\delta,1,1}{\el} \nonumber\\
&=\gd{\delta,1,0}{\el}.
\end{align*}
Since from (\ref{vla1}) follows
$\gd{\delta,1,0}{\el}=\gd{\delta,0,1}{\el}$, we have for
$A\in\gd{\delta,1,0}{\el}$ and all $x,y\in \el$: $$\delta
A[x,y]=[Ax,y]=[x,Ay]. $$ Thus, we have $A\in \gd{0,1,-1}{\el} $
and the inclusion
\begin{equation}\label{inclustrap}
\gd{\delta,1,0}{\el}\subset \gd{0,1,-1}{\el}
\end{equation}
implies
\begin{equation*}
\gd{\delta,1,0}{\el}=\gd{0,1,-1}{\el}\cap\gd{\delta,1,0}{\el}.
\end{equation*}

The space $\gd{1,0,0}{\el}\cap\gd{0,1,0}{\el}$, as the
intersection of two associative subalgebras of~$\End \el$, forms
also an associative subalgebra of~$\End \el$; the space
$$\gd{1,0,0}{\el}\cap\gd{0,1,1}{\el}=\gd{1,1,1}{\el}\cap\gd{0,1,1}{\el},$$
as an intersection of two Lie subalgebras of~$\gl \el$, forms also
a Lie subalgebra of~$\gl \el$.  
\end{proof}

\begin{lemma}\label{trap}
Let $\el$ be a complex Lie algebra. Then for all
$\alpha,\beta,\gamma\in\Com$ $$\gd{1,0,0}{\el}\cap\gd{0,1,0}{\el}
\subset\gd{\alpha,\beta,\gamma}{\el}.$$
\end{lemma}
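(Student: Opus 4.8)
The plan is to reduce everything to the concrete descriptions of the two spaces being intersected, both of which were already recorded inside the proof of Theorem~\ref{Lielist}. There we saw that $\gd{1,0,0}{\el}=\{A\in\End\el\mid A(\el^2)=0\}$, i.e.\ an operator lies in $\gd{1,0,0}{\el}$ exactly when it annihilates the derived algebra $\el^2=[\el,\el]$, and that $\gd{0,1,0}{\el}=\{A\in\End\el\mid A(\el)\subseteq C(\el)\}$, i.e.\ it lies there exactly when its image is contained in the center. So I would begin by fixing an arbitrary $D\in\gd{1,0,0}{\el}\cap\gd{0,1,0}{\el}$ together with arbitrary $x,y\in\el$.

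The heart of the argument is then to check that each of the three bracket terms appearing in the defining relation~(\ref{gd}) vanishes separately. From $D\in\gd{1,0,0}{\el}$ I get $D[x,y]=0$, so the left-hand side $\alpha D[x,y]$ is zero no matter what $\alpha$ is. From $D\in\gd{0,1,0}{\el}$ I get $Dx,Dy\in C(\el)$; by the definition of the center this forces $[Dx,y]=0$ and $[Dy,x]=0$, and anti-commutativity turns the latter into $[x,Dy]=0$. Hence the right-hand side $\beta[Dx,y]+\gamma[x,Dy]$ is zero no matter what $\beta,\gamma$ are.

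Putting these together, the relation $\alpha D[x,y]=\beta[Dx,y]+\gamma[x,Dy]$ holds for all $x,y\in\el$ and for \emph{every} choice of $\alpha,\beta,\gamma\in\Com$, which is precisely the statement $D\in\gd{\alpha,\beta,\gamma}{\el}$. Since $D$ was arbitrary in the intersection, the asserted inclusion follows.

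I do not expect a genuine obstacle here: the whole content is that the two defining conditions are individually strong enough to kill each term of~(\ref{gd}), so the identity collapses to $0=0$ and the parameters $\alpha,\beta,\gamma$ never play a role. The only points demanding a little care are invoking the correct explicit forms of $\gd{1,0,0}{\el}$ and $\gd{0,1,0}{\el}$ and using anti-commutativity to pass from $[Dy,x]=0$ to $[x,Dy]=0$; beyond that the verification is immediate.
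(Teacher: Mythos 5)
Your proof is correct and follows essentially the same route as the paper: the paper likewise establishes $D[x,y]=0$, $[Dx,y]=0$ and $[x,Dy]=0$ for any $D$ in the intersection (obtaining the third equation from the symmetry $\gd{0,1,0}{\el}=\gd{0,0,1}{\el}$ of~(\ref{vla1}), which is just your anti-commutativity step in disguise) and then combines them with coefficients $\alpha,\beta,\gamma$ to conclude $D\in\gd{\alpha,\beta,\gamma}{\el}$. Your detour through the explicit descriptions $\gd{1,0,0}{\el}=\set{A\in\End\el}{A(\el^2)=0}$ and $\gd{0,1,0}{\el}=\set{A\in\End\el}{A(\el)\subseteq C(\el)}$ from Theorem~\ref{Lielist} is a harmless repackaging of the same computation.
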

\begin{proof}
Let $A\in\gd{1,0,0}{\el}\cap\gd{0,1,0}{\el}.$ Since from
(\ref{vla1}) follows $\gd{0,1,0}{\el}=\gd{0,0,1}{\el}$, we have
for all $x,y\in\el$:
\begin{align}
A[x,y]=& 0 \label{trap1}\\ 0=& [Ax,y]  \label{trap2}\\ 0=&[x,Ay]
\label{trap3}
\end{align}
Multiplying (\ref{trap1}), (\ref{trap2}), (\ref{trap3}) by
$\alpha,\beta,\gamma\in\Com$, respectively, and summing these
equations one has: $$\alpha A[x,y]=\beta  [Ax,y]+\gamma[x,Ay],  $$
i.~e. $A\in \gd{\alpha,\beta,\gamma}{\el}.$
\end{proof}
\begin{cor}\label{inewww}
Let $\el$ be a complex Lie algebra. Then the following
inequalities hold:
\begin{align}
\operatorname{codim} \el^2 \dim C(\el)&\leq  \fa^0\el\leq \dim
\gd{0,1,-1}{\el}\label{ineq1} \\ \fa^0\el (\alpha)&\leq\fa\el
(2\alpha),\q \forall \al\in \Com\label{ineq2} \\
\operatorname{codim} \el^2 \dim C(\el)&\leq  \fa\el.\label{ineq3}
\end{align}
\end{cor}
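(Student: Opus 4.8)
The plan is to derive all three chains of inequalities directly from the structural results already established, since each is a dimension inequality arising from a set-theoretic inclusion among the spaces $\gd{\cdot,\cdot,\cdot}{\el}$. Throughout, the inequalities are to be read pointwise, i.e.\ they are asserted for every $\alpha\in\Com$, recalling that $\fa^0\el(\alpha)=\dim\gd{\alpha,1,0}{\el}$ and $\fa\el(\alpha)=\dim\gd{\alpha,1,1}{\el}$.

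First I would establish the lower bounds appearing in (\ref{ineq1}) and (\ref{ineq3}). By Lemma~\ref{trap}, the space $\gd{1,0,0}{\el}\cap\gd{0,1,0}{\el}$ is contained in $\gd{\alpha,\beta,\gamma}{\el}$ for all choices of parameters, in particular in both $\gd{\alpha,1,0}{\el}$ and $\gd{\alpha,1,1}{\el}$. Passing to dimensions and inserting the dimension formula (\ref{dimprunex}), namely $\dim(\gd{1,0,0}{\el}\cap\gd{0,1,0}{\el})=\operatorname{codim}\el^2\,\dim C(\el)$, yields at once $\operatorname{codim}\el^2\,\dim C(\el)\leq\fa^0\el(\alpha)$ and $\operatorname{codim}\el^2\,\dim C(\el)\leq\fa\el(\alpha)$, which are the lower bound in (\ref{ineq1}) and the inequality (\ref{ineq3}).

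Next I would obtain the upper bound in (\ref{ineq1}) and the inequality (\ref{ineq2}) from the two inclusions already recorded in the proof of Theorem~\ref{prunklass}. The inclusion (\ref{inclustrap}), $\gd{\alpha,1,0}{\el}\subset\gd{0,1,-1}{\el}$, gives directly $\fa^0\el(\alpha)=\dim\gd{\alpha,1,0}{\el}\leq\dim\gd{0,1,-1}{\el}$, completing (\ref{ineq1}). For (\ref{ineq2}), the identity (\ref{lemm2}), $\gd{\alpha,1,0}{\el}=\gd{0,1,-1}{\el}\cap\gd{2\alpha,1,1}{\el}$, shows in particular $\gd{\alpha,1,0}{\el}\subset\gd{2\alpha,1,1}{\el}$, whence $\fa^0\el(\alpha)\leq\dim\gd{2\alpha,1,1}{\el}=\fa\el(2\alpha)$.

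Since each assertion reduces to an inclusion proven earlier, there is no genuinely hard step; the only care needed is the bookkeeping of matching each inequality with the correct inclusion and with the dimension formula (\ref{dimprunex}), and of noting that all statements are understood for every $\alpha\in\Com$. If anything is the \emph{main obstacle}, it is simply recognising that the space $\gd{\alpha,1,0}{\el}$ sits simultaneously above $\gd{1,0,0}{\el}\cap\gd{0,1,0}{\el}$ and below both $\gd{0,1,-1}{\el}$ and $\gd{2\alpha,1,1}{\el}$, so that the whole corollary follows from Lemma~\ref{trap} together with (\ref{inclustrap}) and (\ref{lemm2}).
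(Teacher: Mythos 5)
Your proposal is correct and follows exactly the paper's own argument: the lower bounds in (\ref{ineq1}) and (\ref{ineq3}) from Lemma~\ref{trap} together with the dimension formula (\ref{dimprunex}) of Theorem~\ref{prunklass}, the inequality (\ref{ineq2}) from the inclusion $\gd{\delta,1,0}{\el}\subset\gd{2\delta,1,1}{\el}$ furnished by (\ref{lemm2}), and the upper bound in (\ref{ineq1}) from (\ref{inclustrap}). Your write-up merely makes the pointwise reading and the bookkeeping explicit, which the paper leaves implicit.
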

\begin{proof}
The inequality~(\ref{ineq3}) and the first part of the
inequality~(\ref{ineq1}) follow directly from Lemma~\ref{trap} and
Theorem~\ref{prunklass}. Since we have from~(\ref{lemm2}) the
inclusion $\gd{\delta,1,0}{\el}\subset\gd{2\delta,1,1}{\el}$, the
inequality~(\ref{ineq2}) follows. The second part of the
inequality~(\ref{ineq1}) follows directly from~(\ref{inclustrap}).
\end{proof}
\begin{example}
We demonstrate the non--triviality of the inequalities in
Corollary~\ref{inewww}. Consider the four--dimensional Lie algebra
$\el=\slp (2,\Com)\oplus\g_1$ with non--zero commutation
relations: $[e_1,e_2]=e_1,\ [e_2,e_3]=e_3,\ [e_1,e_3]=2e_2$. The
invariant functions $\fa\el$ and $\fa^0\el$ have the following
form: \vspace{-8pt}
\begin{center}
\begin{tabular}[t]{|l||c|c|c|c|c|}
\hline \parbox[l][20pt][c]{0pt}{}   $\alpha$ & 1 & $0$ & $-1$& 2&
\\ \hline
\parbox[l][20pt][c]{0pt}{} $\fa\el(\alpha)$ & 4 & 4 &6& 2 & 1  \\ \hline
\end{tabular}\qquad
\begin{tabular}[t]{|l||c|c|c|c|c|}
\hline  \parbox[l][20pt][c]{0pt}{}  $\alpha$ & $1$& 0 &
\\ \hline
\parbox[l][20pt][c]{0pt}{} $\fa^0\el(\alpha)$ & 2  & 4& 1 \\
\hline
\end{tabular}
\end{center}
A blank space in the table of the function $\fa$ denotes a general
complex number, different from all previously listed values, i.~e.
it holds: $\fa\el(\al)=1$, $\al\neq 0,\pm 1,2$. It is clear that
$\dim C(\el)=1$ and $\dim \el^2=3$. Hence we have
$\operatorname{codim} \el^2=1$. We also calculate $$\dim
\gd{0,1,-1}{\el}=5$$ and obtain from Corollary~\ref{inewww} the
following inequalities
\begin{align*}
1&\leq  \fa^0\el\leq 5 \\ 4=\fa^0\el (0)&\leq\fa\el(0)=4
\\ 2=\fa^0\el (1)&\leq\fa\el(2)=2\\
1=\fa^0\el (\alpha)&\leq\fa\el(2\alpha)=1,\q \forall \al\in \Com,
\al\neq 0, 1, \pm 1/2
\\ 1&\leq \fa\el.
\end{align*}
\end{example}
%

\subsection{$(\alpha,\beta,\gamma)$--derivations of Low--dimensional Lie Algebras}\label{exder} In Section \ref{APAL} of
Appendix~\ref{APA} all $(\alpha,\beta,\gamma)$--derivations of all
two and three dimensional non--abelian Lie algebras are listed.
Here we present one typical example of
$(\alpha,\beta,\gamma)$--derivations for a four--dimensional Lie
algebra.
\begin{example}\label{4dimder}  The four--dimensional Lie algebra $\g_{4,2}(a)$ has
non--zero commutation relations $[e_1,e_4]=ae_1,\ [e_2,e_4]=e_2, \
[e_3,e_4]=e_2+e_3$, with a complex parameter $(a\neq 0,\pm 1 ,-2)
$. We present the complete set of its
$(\alpha,\beta,\gamma)$--derivations. Commutation relations of the
presented matrix Lie and Jordan algebras are placed in Appendix
\ref{IFLJ}. Note especially the form of the one--parametric
subspace $\gd{\delta,1,1}\g_{4,2}(a)$. We encounter here for the
first time an important phenomenon: the dimensionality of the
matrix subspace $\gd{\delta,1,1}\g_{4,2}(a)$ {\it depends} on the
value of the parameter $a\in\Com$. In the following formulas, we
abbreviate the notation and write $\gd{\alpha,\beta,\gamma}$
instead of $\gd{\alpha,\beta,\gamma}\g_{4,2}(a)$.

\begin{flushleft}
$\:\:\gd{1,1,1}= \Span_{\Com}{ \left\{ \left(
  \begin{smallmatrix}
    0 & 0 & 0 & 1 \\
    0 & 0 & 0 & 0 \\
    0 & 0 & 0 & 0 \\
    0 & 0 & 0 & 0
  \end{smallmatrix}
   \right),
   \left(
  \begin{smallmatrix}
    0 & 0 & 0 & 0 \\
    0 & 0 & 0 & 1 \\
    0 & 0 & 0 & 0 \\
    0 & 0 & 0 & 0
  \end{smallmatrix}
   \right),
   \left(
  \begin{smallmatrix}
    0 & 0 & 0 & 0 \\
    0 & 0 & 0 & 0 \\
    0 & 0 & 0 & 1 \\
    0 & 0 & 0 & 0
  \end{smallmatrix}
   \right),
   \left(
  \begin{smallmatrix}
    0 & 0 & 0 & 0 \\
    0 & 0 & 1 & 0 \\
    0 & 0 & 0 & 0 \\
    0 & 0 & 0 & 0
  \end{smallmatrix}
   \right),
   \left(
  \begin{smallmatrix}
    1 & 0 & 0 & 0 \\
    0 & 0 & 0 & 0 \\
    0 & 0 & 0 & 0 \\
    0 & 0 & 0 & 0
  \end{smallmatrix}
   \right),
   \left(
  \begin{smallmatrix}
    0 & 0 & 0 & 0 \\
    0 & 1 & 0 & 0 \\
    0 & 0 & 1 & 0 \\
    0 & 0 & 0 & 0
  \end{smallmatrix}
   \right)
   \right\}} $
\end{flushleft}
\begin{flushleft}
$\:\:\gd{0,1,1}= \Span_{\Com}{ \left\{ \left(
  \begin{smallmatrix}
    0 & 0 & 0 & 1 \\
    0 & 0 & 0 & 0 \\
    0 & 0 & 0 & 0 \\
    0 & 0 & 0 & 0
  \end{smallmatrix}
   \right),
   \left(
  \begin{smallmatrix}
    0 & 0 & 0 & 0 \\
    0 & 0 & 0 & 1 \\
    0 & 0 & 0 & 0 \\
    0 & 0 & 0 & 0
  \end{smallmatrix}
   \right),
   \left(
  \begin{smallmatrix}
    0 & 0 & 0 & 0 \\
    0 & 0 & 0 & 0 \\
    0 & 0 & 0 & 1 \\
    0 & 0 & 0 & 0
  \end{smallmatrix}
   \right),
   \left(
  \begin{smallmatrix}
    1 & 0 & 0 & 0 \\
    0 & 1 & 0 & 0 \\
    0 & 0 & 1 & 0 \\
    0 & 0 & 0 & -1
  \end{smallmatrix}
   \right)
   \right\} \cong \g_{4,5}(1,1) }$
\end{flushleft}
\begin{flushleft}
$\:\:\gd{1,1,0}= \Span_{\Com}{ \left\{ \left(
  \begin{smallmatrix}
    1 & 0 & 0 & 0 \\
    0 & 1 & 0 & 0 \\
    0 & 0 & 1 & 0 \\
    0 & 0 & 0 & 1
  \end{smallmatrix}
   \right)
   \right\} \cong \g_{1} }$
\end{flushleft}
\begin{flushleft}
$\:\:\gd{1,0,0}= \Span_{\Com}{ \left\{ \left(
  \begin{smallmatrix}
    0 & 0 & 0 & 1 \\
    0 & 0 & 0 & 0 \\
    0 & 0 & 0 & 0 \\
    0 & 0 & 0 & 0
  \end{smallmatrix}
   \right),
   \left(
  \begin{smallmatrix}
    0 & 0 & 0 & 0 \\
    0 & 0 & 0 & 1 \\
    0 & 0 & 0 & 0 \\
    0 & 0 & 0 & 0
  \end{smallmatrix}
   \right),
   \left(
  \begin{smallmatrix}
    0 & 0 & 0 & 0 \\
    0 & 0 & 0 & 0 \\
    0 & 0 & 0 & 1 \\
    0 & 0 & 0 & 0
  \end{smallmatrix}
   \right),
   \left(
  \begin{smallmatrix}
    0 & 0 & 0 & 0 \\
    0 & 0 & 0 & 0 \\
    0 & 0 & 0 & 0 \\
    0 & 0 & 0 & 1
  \end{smallmatrix}
   \right)
   \right\} \cong \g_{4,5}(1,1) }$
\end{flushleft}
\begin{flushleft}
 $\:\:\gd{0,1,0}=\{0\}$
\end{flushleft}
\begin{flushleft}
$\:\:\gd{1,1,1} \cap \gd{0,1,1}= \Span_{\Com}{ \left\{ \left(
  \begin{smallmatrix}
    0 & 0 & 0 & 1 \\
    0 & 0 & 0 & 0 \\
    0 & 0 & 0 & 0 \\
    0 & 0 & 0 & 0
  \end{smallmatrix}
   \right),
   \left(
  \begin{smallmatrix}
    0 & 0 & 0 & 0 \\
    0 & 0 & 0 & 1 \\
    0 & 0 & 0 & 0 \\
    0 & 0 & 0 & 0
  \end{smallmatrix}
   \right),
   \left(
  \begin{smallmatrix}
    0 & 0 & 0 & 0 \\
    0 & 0 & 0 & 0 \\
    0 & 0 & 0 & 1 \\
    0 & 0 & 0 & 0
  \end{smallmatrix}
   \right)
   \right\} \cong 3\g_{1} }$
\end{flushleft}
\begin{flushleft}
 $\:\:\gd{1,0,0}\cap \gd{0,1,0}=\{0\}$
\end{flushleft}
\begin{flushleft}
 $\:\:\gd{1,1,-1}=\{0\}$
\end{flushleft}
\begin{flushleft}
$\:\:\gd{0,1,-1}= \Span_{\Com}{ \left\{
   \left(
  \begin{smallmatrix}
    1 & 0 & 0 & 0 \\
    0 & 1 & 0 & 0 \\
    0 & 0 & 1 & 0 \\
    0 & 0 & 0 & 1
  \end{smallmatrix}
   \right)
   \right\} \cong \j_{1} }$
\end{flushleft}
\begin{flushleft}
$\:\:\gd{\delta,1,0}=\{0\}_{\delta\neq 1}$
\end{flushleft}
\begin{flushleft}
$\:\:\gd{\delta,1,1}= \Span_{\Com}{ \left\{ \left(
  \begin{smallmatrix}
    0 & 0 & 0 & 1 \\
    0 & 0 & 0 & 0 \\
    0 & 0 & 0 & 0 \\
    0 & 0 & 0 & 0
  \end{smallmatrix}
   \right),
   \left(
  \begin{smallmatrix}
    0 & 0 & 0 & 0 \\
    0 & 0 & 0 & 1 \\
    0 & 0 & 0 & 0 \\
    0 & 0 & 0 & 0
  \end{smallmatrix}
   \right),
   \left(
  \begin{smallmatrix}
    0 & 0 & 0 & 0 \\
    0 & 0 & 0 & 0 \\
    0 & 0 & 0 & 1 \\
    0 & 0 & 0 & 0
  \end{smallmatrix}
   \right),
   \left(
  \begin{smallmatrix}
    1 & 0 & 0 & 0 \\
    0 & 1 & 0 & 0 \\
    0 & 0 & 1 & 0 \\
    0 & 0 & 0 & -1+\delta
  \end{smallmatrix}
   \right)
   \right\}_{\delta\neq 1,a,1/a} }$
\end{flushleft}
\begin{flushleft}
$\:\:\gd{a,1,1}= \Span_{\Com}{ \left\{ \left(
  \begin{smallmatrix}
    0 & 0 & 0 & 1 \\
    0 & 0 & 0 & 0 \\
    0 & 0 & 0 & 0 \\
    0 & 0 & 0 & 0
  \end{smallmatrix}
   \right),
   \left(
  \begin{smallmatrix}
    0 & 0 & 0 & 0 \\
    0 & 0 & 0 & 1 \\
    0 & 0 & 0 & 0 \\
    0 & 0 & 0 & 0
  \end{smallmatrix}
   \right),
   \left(
  \begin{smallmatrix}
    0 & 0 & 0 & 0 \\
    0 & 0 & 0 & 0 \\
    0 & 0 & 0 & 1 \\
    0 & 0 & 0 & 0
  \end{smallmatrix}
   \right),\left(
  \begin{smallmatrix}
    0 & 0 & 1 & 0 \\
    0 & 0 & 0 & 0 \\
    0 & 0 & 0 & 0 \\
    0 & 0 & 0 & 0
  \end{smallmatrix}
   \right),
   \left(
  \begin{smallmatrix}
    1 & 0 & 0 & 0 \\
    0 & 1 & 0 & 0 \\
    0 & 0 & 1 & 0 \\
    0 & 0 & 0 & -1+a
  \end{smallmatrix}
   \right)
   \right\} }$
\end{flushleft}
\begin{flushleft}
$\:\:\gd{\frac{1}{a},1,1}= \Span_{\Com}{ \left\{ \left(
  \begin{smallmatrix}
    0 & 0 & 0 & 1 \\
    0 & 0 & 0 & 0 \\
    0 & 0 & 0 & 0 \\
    0 & 0 & 0 & 0
  \end{smallmatrix}
   \right),
   \left(
  \begin{smallmatrix}
    0 & 0 & 0 & 0 \\
    0 & 0 & 0 & 1 \\
    0 & 0 & 0 & 0 \\
    0 & 0 & 0 & 0
  \end{smallmatrix}
   \right),
   \left(
  \begin{smallmatrix}
    0 & 0 & 0 & 0 \\
    0 & 0 & 0 & 0 \\
    0 & 0 & 0 & 1 \\
    0 & 0 & 0 & 0
  \end{smallmatrix}
   \right),\left(
  \begin{smallmatrix}
    0 & 0 & 0 & 0 \\
    1 & 0 & 0 & 0 \\
    0 & 0 & 0 & 0 \\
    0 & 0 & 0 & 0
  \end{smallmatrix}
   \right),
   \left(
  \begin{smallmatrix}
    1 & 0 & 0 & 0 \\
    0 & 1 & 0 & 0 \\
    0 & 0 & 1 & 0 \\
    0 & 0 & 0 & -1+\frac{1}{a}
  \end{smallmatrix}
   \right)
   \right\} }$.
\end{flushleft}
\end{example}


\section{Associated Lie Algebras}
In \cite{HN2} we have investigated the problem when the subspace
$\gd{\alpha,\beta,\gamma}{\el}$ forms a Lie subalgebra of
$\gl\el$, i.~e. for which $\alpha,\beta,\gamma \in \Com$ is this
set closed with respect to the Lie product in $\gl\el$: $$ A,B \in
\gd{\alpha,\beta,\gamma}{\el} \Rightarrow AB-BA \in
    \gd{\alpha,\beta,\gamma}{\el}.$$
We found out the solutions which are now included in
Theorem~\ref{Lielist} as the cases 1. -- 5. and called them
\textbf{associated Lie algebras} of the Lie algebra $\el$. There
were two reasons for this investigation. Firstly, from
Corollaries~\ref{corgender} and~\ref{corgenderprun} we know that
for fixed $\alpha,\beta,\gamma,\alpha',\beta',\gamma'\in\Com$ the
map $d_{(\alpha,\beta,\gamma)}$, defined by
\begin{equation}\label{defdim}
 d_{(\alpha,\beta,\gamma)}{\el} = \dim
 \gd{\alpha,\beta,\gamma}{\el},
\end{equation}
as well as the map
$d_{(\alpha,\beta,\gamma)(\alpha',\beta',\gamma')}$, defined by
\begin{equation}\label{defdimprun}
 d_{(\alpha,\beta,\gamma)(\alpha',\beta',\gamma')}{\el} = \dim (\gd{\alpha,\beta,\gamma}{\el}\cap  \gd{\alpha',\beta',\gamma'}{\el}),
\end{equation}
constitute invariant characteristics of Lie algebras. Secondly,
the following Proposition allows us to consider not only the
dimensions but also the Lie structure of the associated Lie
algebras.
\begin{tvr}\label{invarLie} If two complex Lie algebras are isomorphic, $\el \cong
\wt\el$, then the associated Lie algebras and their intersections
are isomorphic as well, i.~e. it holds:
\begin{enumerate}
\item $\gd{1,1,1}\el \cong \gd{1,1,1}\wt\el$
\item $\gd{0,1,1}\el \cong \gd{0,1,1}\wt\el$
\item $\gd{1,1,0}\el \cong \gd{1,1,0}\wt\el$
\item $\gd{1,1,1}\el\cap\gd{0,1,1}\el \cong
\gd{1,1,1}\wt\el\cap\gd{0,1,1}\wt\el$.
\end{enumerate}
\end{tvr}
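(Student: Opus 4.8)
The plan is to fix an isomorphism $f:\el\map\wt\el$ (which exists since $\el\cong\wt\el$) and to observe that the associated similarity transformation $\rho(D)=fDf^{-1}$ of~(\ref{rho}) is far more than a linear isomorphism: it is an isomorphism of the \emph{whole} Lie algebra $\gl\el$ onto $\gl\wt\el$, a fact already recorded in the text immediately following Proposition~\ref{dersub}. Indeed, from $\rho(D)=fDf^{-1}$ one computes directly that $\rho([D,E])=f(DE-ED)f^{-1}=(fDf^{-1})(fEf^{-1})-(fEf^{-1})(fDf^{-1})=[\rho(D),\rho(E)]$, so $\rho$ preserves the commutator bracket. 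Every item of the proposition then follows by restricting this single bracket-preserving bijection to the appropriate subspace.

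First I would treat the three associated Lie algebras in items 1.--3. Each of $\gd{1,1,1}{\el}$, $\gd{0,1,1}{\el}$ and $\gd{1,1,0}{\el}$ is a Lie subalgebra of $\gl\el$ (these are precisely the associated Lie algebras, cases 1.--3.\ of Theorem~\ref{Lielist}), and the same holds for $\wt\el$. By the invariance relation~(\ref{invar}) of Theorem~\ref{tvr1}, $\rho$ maps each of these subspaces bijectively onto the corresponding subspace of $\gl\wt\el$, e.g.\ $\rho(\gd{1,1,1}{\el})=\gd{1,1,1}{\wt\el}$. Restricting the Lie-algebra isomorphism $\rho$ to such a subalgebra yields a linear bijection onto the matching subalgebra that still preserves the bracket; hence it is a Lie-algebra isomorphism. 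This settles items 1., 2.\ and 3.\ uniformly.

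For item 4.\ I would argue identically, replacing the single-space invariance by its intersection version. By case 2.\ of Theorem~\ref{prunklass}, $\gd{1,1,1}{\el}\cap\gd{0,1,1}{\el}$ is a Lie subalgebra of $\gl\el$ (and likewise for $\wt\el$), and the relation~(\ref{rhoabcprun}) of Theorem~\ref{tvrprun1}, applied with $(\alpha,\beta,\gamma)=(1,1,1)$ and $(\alpha',\beta',\gamma')=(0,1,1)$, shows that $\rho$ restricts to a linear bijection of $\gd{1,1,1}{\el}\cap\gd{0,1,1}{\el}$ onto $\gd{1,1,1}{\wt\el}\cap\gd{0,1,1}{\wt\el}$. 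As before, this restriction preserves the bracket, so it is a Lie-algebra isomorphism.

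There is no genuine obstacle here; the proposition is essentially a bookkeeping corollary of the structural theorems already established. The only point that must be verified rather than merely invoked is that $\rho$ preserves the Lie bracket — the short computation in the first paragraph — together with the observation that the bracket is actually defined on these subspaces, i.e.\ that they are closed under the commutator. That closedness is exactly what Theorems~\ref{Lielist} and~\ref{prunklass} supply. Once those two facts are in place, the restriction-of-an-isomorphism argument finishes each of the four items.
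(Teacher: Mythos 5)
Your proof is correct and takes essentially the same route as the paper's own argument: both restrict the conjugation map $\rho(D)=fDf^{-1}$, already known to be a Lie algebra isomorphism of $\gl\el$ onto $\gl\wt\el$, to the associated subalgebras, and both invoke the invariance relations (\ref{invar}) and (\ref{rhoabcprun}) of Theorems~\ref{tvr1} and~\ref{tvrprun1} to get bijectivity onto the corresponding subspaces. The only cosmetic difference is that you spell out the bracket-preservation computation $\rho([D,E])=[\rho(D),\rho(E)]$ explicitly, where the paper simply cites the previously stated fact that $\rho$ is a homomorphism.
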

\begin{proof}
Since the map $\rho:\ \gl \el\map \gl \wt\el$, defined
by~(\ref{rho}), is a homomorphism, then its restriction on
subalgebras $\gd{1,1,1}\el\subset \gl \el$,
$\gd{0,1,1}\el\subset\gl \el, \dots$ is also a homomorphism. It
follows from~(\ref{invar}) and~(\ref{rhoabcprun}) that $\rho$ is
an isomorphism.
\end{proof}

Let us consider the following set of invariants of Lie algebras:
$$\inv =\{(d_k),(l_k),(c_k),\tau,d_{(1,1,1)}, d_{(0,1,1)},
d_{(1,1,0)},d_{(1,1,1)(0,1,1)} \}, $$ where the sequences $d_k,\,
l_k,\,c_k$ and the number $\tau$ were defined by relations
(\ref{seq1}) -- (\ref{seq3}) and~(\ref{formal}). We arrange the
values of the invariants in the set $\inv$ corresponding to some
coset of complex Lie algebras $[\el]$ into the following tuple:
\begin{equation}
\begin{array}{lll}
    \parbox[l][20pt][c]{0pt}{} \inv \el & = & (d_0(\el),d_1(\el),\dots )\ (l_0(\el),l_1(\el),\dots )\ (c_0(\el),c_1(\el),\dots )\quad
    \tau(\el) \\
    && [d_{(1,1,1)}{\el} ,\, d_{(0,1,1)}{\el} ,\, d_{(1,1,0)}{\el} ,\,d_{(1,1,1)(0,1,1)}\el ].
\end{array}
\end{equation}
In this notation the equations (\ref{fle1}), (\ref{fle2}) yield
$d_{(1,0,0)}{\el}=d_0(\el)(d_0(\el)-d_1(\el))$ and
$d_{(0,1,0)}{\el}=d_0(\el)c_0(\el)$. There is a natural question if
there exists a similar dependence among invariants in the set
$\inv$. Using the notion of independence defined by
(\ref{independent}), we give the answer in the following
proposition.

\begin{tvr}\label{indepass}
Let $\el$ be a complex Lie algebra. Then the invariant
\begin{enumerate}\item $d_{(1,1,1)}$ is independent on the set
$\:\inv \setminus \{d_{(1,1,1)}\}$ \item $d_{(0,1,1)}$ is
independent on the set $\:\inv \setminus\{ d_{(0,1,1)}\}$
\item $d_{(1,1,0)}$ is independent on the set
$\:\inv \setminus \{d_{(1,1,0)}\}$
 \item $d_{(1,1,1)(0,1,1)}$ is independent on the set
$\:\inv \setminus \{d_{(1,1,1)(0,1,1)}\}$.
\end{enumerate}
\end{tvr}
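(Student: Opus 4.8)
The plan is to unwind the definition of independence in~(\ref{independent}): to show that a given invariant $\Phi$ is independent on $\inv\setminus\{\Phi\}$, it suffices to produce a single pair of cosets $[\el^{(1)}],[\el^{(2)}]$ on which every invariant in $\inv\setminus\{\Phi\}$ agrees while $\Phi$ separates them. Thus the whole proposition reduces to exhibiting four such pairs, one for each item. Since $d_0(\el)=\dim\el$ belongs to $\inv$, the two algebras in each pair are forced to have the same dimension, so it is enough to search within the complete classifications of low--dimensional complex Lie algebras.

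Concretely, I would fix as search space the classified Lie algebras up to dimension four, whose commutation relations and whose invariant functions $\fa,\fa^0$ are tabulated in the Appendices; these supply $d_{(1,1,1)}$, $d_{(0,1,1)}$ and $d_{(1,1,0)}$ as the particular values $\fa\el(1)$, $\fa\el(0)$ and $\fa^0\el(1)$, while the explicit derivation matrices listed there allow $d_{(1,1,1)(0,1,1)}=\dim(\gd{1,1,1}{\el}\cap\gd{0,1,1}{\el})$ to be read off directly. For each candidate pair the verification splits into routine steps: compute the classical invariants $(d_k),(l_k),(c_k)$ and $\tau$ from the structure constants via~(\ref{seq1})--(\ref{formal}) and check that they coincide; then confirm from the tables that the three retained derivation--invariants agree and that the target invariant differs. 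Assembling one valid pair per item completes the argument.

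The main obstacle is not the verification but the \emph{search}: the constraint is stringent, because the two algebras must simultaneously match the long list of classical invariants and three of the four rather discriminating derivation--invariants, differing in exactly the remaining one. In particular $d_{(1,1,1)}=\dim\der\el$ and the intersection $d_{(1,1,1)(0,1,1)}$ tend to vary together, so isolating a difference in one of these while holding the other (together with $d_{(0,1,1)}$ and $d_{(1,1,0)}$) fixed is the delicate point. This is precisely where a systematic scan of the tabulated values, rather than an \emph{a priori} construction, is indispensable, and I expect each independence claim to be settled by a carefully chosen pair from the four--dimensional list.
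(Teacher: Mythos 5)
Your reduction of the proposition to exhibiting, for each of the four invariants, one pair of cosets satisfying the two conditions in~(\ref{independent}) is exactly the strategy of the paper's proof. But that reduction is the trivial part; the entire mathematical content of the proposition is the actual existence of the four pairs, and your proposal never produces them --- it defers everything to a search whose success you only ``expect''. This is a genuine gap, and the paper's own proof shows why: the witnessing pairs it constructs live in dimensions \emph{seven and eight} (two seven--dimensional nilpotent algebras for $d_{(1,1,1)}$, eight--dimensional nilpotent pairs for $d_{(0,1,1)}$ and $d_{(1,1,1)(0,1,1)}$, seven--dimensional solvable ones for $d_{(1,1,0)}$), not in the classified range $\dim\le 4$ you propose to scan. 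Your search space is at best partially adequate. For $d_{(1,1,1)}$ a four--dimensional pair does exist --- e.g.\ $\g_{3,3}\oplus\g_1$ and $\g_{3,4}(a)\oplus\g_1$ share $(d_k),(l_k),(c_k),\tau$ and have equal values of $d_{(0,1,1)}=7$, $d_{(1,1,0)}$ and $d_{(1,1,1)(0,1,1)}$, while $\dim\der$ is $8$ versus $6$. But for $d_{(0,1,1)}$ the four--dimensional tables show that $\fa(0)$ co--varies with the classical invariants: the decomposable algebras with one--dimensional centre all have $\fa(0)=7$ and $(d_k)=(4,2,0)$, whereas the indecomposable ones with trivial centre have $(d_k)=(4,3,0)$ and $\fa(0)\le 6$, so no two algebras agreeing on $\inv\setminus\{d_{(0,1,1)}\}$ and differing in $d_{(0,1,1)}$ are in sight there; the same difficulty appears for the centroid and intersection invariants, whose values in dimension four are too rigidly tied to the rest of $\inv$. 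A scan that comes up empty proves nothing, and you give no fallback.

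A secondary but concrete error concerns your data sources. The explicit matrices of $(\alpha,\beta,\gamma)$--derivations in Appendix~\ref{APAL} cover only the two-- and three--dimensional Lie algebras, and the four--dimensional tables in Appendix~\ref{IFLJL} list $\fa,\fb,\fc$ but neither $\fa^0$ nor the intersection dimension $d_{(1,1,1)(0,1,1)}$. So neither $d_{(1,1,0)}=\fa^0(1)$ nor $d_{(1,1,1)(0,1,1)}$ can be ``read off directly'' for dimension four; you would have to solve the defining linear systems for each candidate algebra yourself. In other words, even where your plan could work, the computations it relies on are not in the paper's tables --- which is precisely the work the paper performs instead by writing down its explicit seven-- and eight--dimensional pairs together with their full $\inv$--tuples.
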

\begin{proof}
\begin{enumerate}\item $d_{(1,1,1)}$:
In order to satisfy the definition (\ref{independent}), we have to
construct two Lie algebras $\el$, $\wt{\el}$ whose invariants from
the set $\inv$ differ only in the values $d_{(1,1,1)}\el\neq
d_{(1,1,1)}\wt{\el} $. We may consider the following two
seven--dimensional Lie algebras $\el$, $\wt\el$: $$
\begin{array}{ll}
\el: \quad & [e_4,e_6]=e_1,\ [e_4,e_7]=e_2,\ [e_5,e_6]=e_2,\
[e_5,e_7]=e_3\\ \widetilde{\el}: \quad & [e_4,e_6]=e_1,\
[e_4,e_7]=e_2,\ [e_5,e_7]=e_3\\
\end{array}
$$ and find the corresponding values of the invariants from
$\inv$: $$
\begin{array}{llll}
\inv \el = \quad & (7,3,0)(7,3,0)(3,7) & 3 & [19,24,13,15] \\ \inv
\widetilde{\el}= & (7,3,0)(7,3,0)(3,7) & 3 & [20,24,13,15].
\\
\end{array}
$$ We observe that the only different values of $\inv\el$ and
$\inv\wt\el$ are $$19=d_{(1,1,1)}\el\neq d_{(1,1,1)}\wt{\el}=20.
$$ The proof of the remaining cases is analogous; we present the
pairs $\el, \wt\el$ and the values $\inv\el,\inv\wt\el$ for each
case.
\item $d_{(0,1,1)}$:
$$
\begin{array}{ll}
\el: \quad & [e_2,e_3]=e_4,\ [e_2,e_4]=e_5,\ [e_2,e_6]=-e_7,\
[e_2,e_8]=e_1,\\
            & [e_3,e_7]=e_1,\ [e_4,e_6]=e_1,\ [e_6,e_8]=e_5 \\
\widetilde{\el}: \quad & [e_2,e_3]=e_4,\ [e_2,e_4]=e_5,\
[e_2,e_8]=e_1,\ [e_3,e_6]=e_8,\\
            & [e_3,e_7]=e_1,\ [e_4,e_6]=e_1,\ [e_6,e_8]=e_5 \\
\end{array}
$$ $$
\begin{array}{llll}
\inv\el= \quad & (8,4,0)(8,4,2,0)(2,5,8) & 2 & [17,19,9,11] \\
\inv\widetilde{\el}= & (8,4,0)(8,4,2,0)(2,5,8) & 2 & [17,20,9,11]
\\
\end{array}
$$
\item $d_{(1,1,0)}$:
$$\begin{array}{ll} \el: \quad & [e_1,e_2]=e_4,\ [e_1,e_3]=e_5,\
[e_1,e_6]=e_1,\ [e_1,e_7]=e_3,\\
            & [e_2,e_6]=-e_2,\ [e_3,e_6]=e_3,\ [e_5,e_6]=2e_5 \\

\widetilde{\el}: \quad & [e_1,e_2]=e_4,\ [e_1,e_4]=e_5,\
[e_1,e_6]=e_1,\ [e_1,e_7]=e_3,\\
            & [e_2,e_6]=-2e_2,\ [e_3,e_6]=e_3,\ [e_4,e_6]=-e_4 \\
\end{array}
$$
$$
\begin{array}{llll}
\inv\el= \quad & (7,5,2,0)(7,5)(1) & 3 & [10,11,3,3] \\
\inv\widetilde{\el}= & (7,5,2,0)(7,5)(1) & 3 & [10,11,4,3] \\
\end{array}
$$

\item $d_{(1,1,1)(0,1,1)}$: $$
\begin{array}{ll}
\el: \quad & [e_1,e_3]=-e_3,\ [e_1,e_4]=e_4,\ [e_1,e_6]=2e_6,\
[e_1,e_7]=-e_7,\\
            & [e_1,e_8]=e_8,\ [e_3,e_6]=e_8,\ [e_4,e_5]=e_8,\ [e_4,e_7]=e_2 \\
\widetilde{\el}: \quad & [e_1,e_2]=-2e_2,\ [e_1,e_3]=-e_3,\
[e_1,e_4]=e_4,\ [e_1,e_6]=2e_6,\\
            & [e_1,e_8]=e_8,\ [e_2,e_6]=e_7,\ [e_3,e_6]=e_8,\ [e_4,e_5]=e_8 \\
\end{array}
$$ $$
\begin{array}{llll}
\inv\el= \quad & (8,6,2,0)(8,6)(1) & 2 & [12,13,4,3] \\
\inv\widetilde{\el}=  & (8,6,2,0)(8,6)(1) & 2 & [12,13,4,4] \\
\end{array}
$$
\end{enumerate}
\end{proof}

As the following example shows, analyzing the Lie structure of the
associated Lie algebras can be very useful.
\begin{example}\label{intersect}
 Let us present two 8--dimensional nilpotent Lie
 algebras as a list of their non--zero commutation relations in $\Z_3$--labeled
basis $( l_{01}, l_{02}, l_{10},  l_{20}, l_{11}, l_{22},
l_{12},l_{21}):$

$$
\begin{array}{ll}
\el_{17,9}: \quad & [l_{01},l_{10}]=l_{11},\
[l_{01},l_{20}]=l_{21},\ [l_{01},l_{11}]=l_{12},\
[l_{01},l_{22}]=l_{20},\\
            & [l_{02},l_{10}]=l_{12},\ [l_{10},l_{11}]=l_{21},\ [l_{20},l_{22}]=l_{12} \\
\\
\el_{17,12}: \quad & [l_{01},l_{10}]=l_{11},\
[l_{01},l_{20}]=l_{21},\ [l_{01},l_{22}]=l_{20},\
[l_{02},l_{10}]=l_{12},\\
            & [l_{02},l_{22}]=l_{21},\ [l_{10},l_{11}]=l_{21},\ [l_{20},l_{22}]=l_{12}. \\
\end{array}
$$   Because we have $$\inv \el_{17,9}= \inv\el_{17,12}= \quad
(8,4,0) (8,4,2,0)\,(2,5,8)\quad  2\quad [16,19,9,11], $$  the
algebras $\el_{17,9},\,\el_{17,9}$ cannot be distinguished using
the set of invariants $\inv \el$. We can advance to the higher
level by computing: $$\begin{array}{lllll} \inv
\gd{1,1,1}\el_{17,9} =\inv \gd{1,1,1}\el_{17,12} &
(16,15,6,0)(16,15)(0)& 6 & [16,15,1,6]
\\ \inv \gd{0,1,1}\el_{17,9} =\inv
\gd{0,1,1}\el_{17,12} & (19,15)(19,15)(0)& 5 & [32,0,1,0]\\ \inv
\gd{1,1,0}\el_{17,9} =\inv \gd{1,1,0}\el_{17,12} & (9,0)(9,0)(9)&
9& [81,81,81,81].\end{array}$$ We see that the algebras
$\el_{17,9},\,\el_{17,12}$ are still not decidedly
non--isomorphic. Surprisingly, the algebras
$\el_{17,9},\,\el_{17,12}$ have very different Lie structure of
the intersection of the operator algebras $\gd{1,1,1}{\el}\cap
\gd{0,1,1}{\el}$: $$\begin{array}{lllll}
\inv\gd{1,1,1}{\el_{17,9}}\cap \gd{0,1,1}{\el_{17,9}}&=
(11,6,0)\,(11,6,0)\,(6,11)& 7 & [43,67,31,31]\\
\inv\gd{1,1,1}{\el_{17,12}}\cap \gd{0,1,1}{\el_{17,12}}&=
  (11,4,0)\,(11,4,0)\,(7,11) & 7 & [57,78,50,50].
\end{array}$$
The conclusion $\el_{17,9}\ncong\el_{17,12}$ follows from
Proposition~\ref{invarLie}.
\end{example}
\subsection{Associated Lie Groups}
The fact that $\der\el$ is a Lie algebra of the group $\aut\el$
was stated in Theorem~\ref{groupI}. In order to answer the
question whether the associated Lie algebras $\gd{0,1,1}\el$ and
$\gd{1,1,0}\el$ and intersection $\gd{1,1,1}\el\cap\gd{0,1,1}\el$
are also Lie algebras of some linear groups, we firstly define the
sets:
\begin{align}\label{Aut011}
\ga{0,1,1}\el=&\set{f\in GL(\el)}{[f x,fy]=[x,y],\q \forall x,y
\in \el}\\ \ga{1,1,0}\el=&\set{f\in GL(\el)}{f[ x,y]=[f x,y],\q
\forall x,y \in \el}.
\end{align}
\begin{tvr}
Let $\el$ be a complex Lie algebra. The sets $\ga{0,1,1}\el$ and
$\ga{1,1,0}\el$ are subgroups of $GL(\el)$.
\end{tvr}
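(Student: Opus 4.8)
The plan is to verify the three subgroup criteria for each set separately, taking composition of linear operators as the group law inherited from $GL(\el)$: nonemptiness via the identity map, closure under composition, and closure under inversion.

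For $\ga{0,1,1}\el$, I would first observe that the identity map lies in the set, since $[\mathrm{id}\,x,\mathrm{id}\,y]=[x,y]$. For closure under composition, given $f,g\in\ga{0,1,1}\el$, I would apply the defining relation of $f$ to the arguments $gx,gy$ and then the relation of $g$, obtaining $[fgx,fgy]=[gx,gy]=[x,y]$ for all $x,y\in\el$. For closure under inversion, given $f$ in the set and arbitrary $x,y\in\el$, I would set $u=f^{-1}x$ and $v=f^{-1}y$; the defining relation $[fu,fv]=[u,v]$ then reads $[x,y]=[f^{-1}x,f^{-1}y]$, so $f^{-1}\in\ga{0,1,1}\el$.

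For $\ga{1,1,0}\el$, the identity again works trivially since $\mathrm{id}[x,y]=[x,y]=[\mathrm{id}\,x,y]$. For composition I would chain the two relations as $(fg)[x,y]=f[gx,y]=[fgx,y]$, using first the relation for $g$ and then the relation for $f$ applied to the pair $(gx,y)$. The inversion step is the one place needing slight care, because the condition $f[x,y]=[fx,y]$ is not symmetric in its two arguments: I would apply $f^{-1}$ to both sides to get $[x,y]=f^{-1}[fx,y]$ and then substitute $x\mapsto f^{-1}x$, which yields $f^{-1}[x,y]=[f^{-1}x,y]$, so $f^{-1}\in\ga{1,1,0}\el$.

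I do not anticipate a genuine obstacle; the entire statement is a routine verification of the group axioms. The only step that is not a direct reading of the defining relation is the inversion closure for $\ga{1,1,0}\el$, where the asymmetry of the condition forces one to invert and re-substitute rather than simply substituting $f^{-1}x,f^{-1}y$ as in the symmetric $\ga{0,1,1}\el$ case.
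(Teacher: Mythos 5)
Your proof is correct and follows essentially the same route as the paper: both verify nonemptiness via the identity operator and obtain closure under inversion by the same substitution trick (including the same care with the asymmetric condition defining $\ga{1,1,0}\el$, where one applies $f^{-1}$ and re-substitutes). The only cosmetic difference is that you check closure under composition and inversion separately, while the paper combines them into the single criterion $f,g$ in the set $\Rightarrow fg^{-1}$ in the set.
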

\begin{proof}
Both $\ga{0,1,1}\el$ and $\ga{1,1,0}\el$ contain the identity
operator and therefore are non--empty. Let $f,g\in \ga{0,1,1}\el$.
Substituting $x=g^{-1}z,y=g^{-1}w$ into the equation $[g x,g
y]=[x,y] $ one obtains $[g^{-1}z,g^{-1}w]=[z,w]$, i.~e.
$g^{-1}\in\ga{0,1,1}\el$. Then for all $x,y\in\el$
$$[fg^{-1}x,fg^{-1}y ]=[g^{-1}x,g^{-1}y ]=[x,y ],$$ i.~e.
$fg^{-1}\in\ga{0,1,1}\el.$

Similarly, let $f,g\in \ga{1,1,0}\el$. Substituting $x=g^{-1}z$
into the equation $g[ x, y]=[g x,y] $ one obtains
$g[g^{-1}z,y]=[z,y]$, or equivalently $[g^{-1}z,y]=g^{-1}[z,y]$,
i.~e. $g^{-1}\in\ga{1,1,0}\el$. Then for all $x,y\in\el$
$$fg^{-1}[x,y ]=f[g^{-1}x,y ]=[fg^{-1}x,y ].$$
\end{proof}
\begin{tvr}
Let $\el$ be a complex Lie algebra. Then $\ga{0,1,1}\el$ and
$\ga{1,1,0}\el$ are algebraic groups.
\end{tvr}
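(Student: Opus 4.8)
The plan is to apply the definition of an algebraic group directly. Since the preceding proposition already established that $\ga{0,1,1}\el$ and $\ga{1,1,0}\el$ are subgroups of $GL(\el)$, it suffices to fix a basis and exhibit, for each of the two sets, a finite collection of polynomials in the matrix entries of $f$ whose common zero locus in $GL(n,\Com)$ is precisely the group in question. First I would fix a basis $(x_1,\dots,x_n)$ of $\el$ with structure constants $c^k_{ij}$ as in~(\ref{STRUCT}), and represent an arbitrary $f\in GL(\el)$ by its matrix $(a_{ij})$ via $f x_j=\sum_{i} a_{ij}x_i$.

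The crucial reduction is that both defining conditions, though stated for all $x,y\in\el$, are bilinear in the pair $(x,y)$. Hence, by bilinearity of the bracket, each condition is equivalent to the corresponding identity imposed only on pairs of basis vectors $(x_p,x_q)$; this replaces the continuum of constraints by finitely many. For $\ga{0,1,1}\el$, expanding $[f x_p,f x_q]=[x_p,x_q]$ in the basis and comparing the coefficient of each $x_k$ yields, for all $p,q,k$,
$$\sum_{i,j=1}^n c^k_{ij}\,a_{ip}a_{jq}=c^k_{pq},$$
which are quadratic polynomial equations in the entries $a_{ij}$. For $\ga{1,1,0}\el$, expanding $f[x_p,x_q]=[f x_p,x_q]$ and comparing the coefficient of each $x_m$ gives, for all $p,q,m$,
$$\sum_{k=1}^n c^k_{pq}\,a_{mk}=\sum_{i=1}^n c^m_{iq}\,a_{ip},$$
which are (even linear) polynomial equations in the $a_{ij}$.

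In each case the group is exactly the set of $(a_{ij})\in GL(n,\Com)$ annihilating the displayed finite polynomial system, so it is an algebraic group in $GL(n,\Com)$ and therefore an algebraic group in $GL(\el)$ in the sense defined earlier. The only step requiring genuine care is the bilinearity reduction, namely the verification that imposing the conditions on basis pairs $(x_p,x_q)$ is equivalent to imposing them on all $x,y\in\el$; once this is granted, translating the constraints into coordinates is a routine computation and presents no real obstacle.
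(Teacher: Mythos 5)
Your proposal is correct and follows essentially the same route as the paper's own proof: the paper likewise fixes a basis and rewrites the defining conditions of $\ga{0,1,1}\el$ and $\ga{1,1,0}\el$ as the quadratic system $\sum_{p,q} f_{pi}f_{qj}c^k_{pq}=c^k_{ij}$ and the linear system $\sum_p \wt f_{pi}c^k_{pj}-\wt f_{kp}c^p_{ij}=0$ in the matrix entries, which match your two systems up to relabeling of indices. The only difference is that you make explicit the bilinearity reduction to basis pairs, which the paper leaves implicit.
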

\begin{proof}
Let $c^k_{ij}$ denote the structural constants of $\el$,
$\dim\el=n$. Then $f=(f_{ij})\in\ga{0,1,1}\el$ if and only if
$$\sum_{p,q=1}^n f_{pi}f_{qj}c^k_{pq}=c^k_{ij},\
i,j,k\in\{1,\dots,n\} $$ and also $\wt f=(\wt
f_{ij})\in\ga{1,1,0}\el$ if and only if $$\sum_{p=1}^n \wt
f_{pi}c^k_{pj}-\wt f_{kp}c^p_{ij}=0,\ i,j,k\in\{1,\dots,n\}. $$
\end{proof}
\begin{lemma}
Let $\el$ be a complex Lie algebra and $A\in\gd{0,1,1}\el$. Then
for all $x,y\in \el$ and $m\in\N$
\begin{equation}\label{indukc}
  \sum_{i=0}^m \comb{m}{i}[A^i x,A^{m-i}y]=0.
\end{equation}
\end{lemma}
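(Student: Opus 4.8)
The plan is to prove the identity by induction on $m$, the driving mechanism being a Leibniz-type recursion produced by Pascal's rule. Abbreviate the left-hand side by $S_m(x,y)=\sum_{i=0}^m\comb{m}{i}[A^i x,A^{m-i}y]$. The base case $m=1$ is nothing but the defining relation of the space to which $A$ belongs: recall from Theorem~\ref{Lielist} that $A\in\gd{0,1,1}\el$ is characterised by $[Ax,y]=-[x,Ay]$ for all $x,y\in\el$, so that $S_1(x,y)=[x,Ay]+[Ax,y]=0$. This is the only place where the membership $A\in\gd{0,1,1}\el$ is used; the rest of the argument is formal.

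For the inductive step I would establish the recursion
\begin{equation*}
S_{m+1}(x,y)=S_m(Ax,y)+S_m(x,Ay),
\end{equation*}
which holds for \emph{any} linear operator $A$. It follows by splitting the binomial coefficient through $\comb{m+1}{i}=\comb{m}{i}+\comb{m}{i-1}$: in the terms carrying $\comb{m}{i}$ one writes $A^{m+1-i}y=A^{m-i}(Ay)$ to recover $S_m(x,Ay)$, and in the terms carrying $\comb{m}{i-1}$ one re-indexes $i\mapsto i+1$ and writes $A^{i+1}x=A^i(Ax)$ to recover $S_m(Ax,y)$. Granting this, the conclusion is immediate: by the induction hypothesis $S_m$ vanishes identically, hence $S_m(Ax,y)=S_m(x,Ay)=0$ and therefore $S_{m+1}(x,y)=0$. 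Induction then yields $S_m=0$ for every $m\in\N$.

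The genuinely nontrivial content is concentrated in the base case, where the antisymmetry defining $\gd{0,1,1}\el$ enters; everything afterwards is the formal structure of the binomial expansion. The only point requiring care is the index bookkeeping in the recursion, in particular that the boundary terms $i=0$ and $i=m+1$ of $S_{m+1}$ are each supplied by a single Pascal summand (since $\comb{m}{-1}=\comb{m}{m+1}=0$), so that the two shifted sums close up exactly into $S_m$ evaluated at the shifted arguments $(x,Ay)$ and $(Ax,y)$. I do not anticipate any real obstacle beyond verifying this re-indexing carefully.
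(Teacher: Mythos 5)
Your proposal is correct and is essentially the paper's own proof: the paper likewise inducts on $m$, anchors the base case $m=1$ in the defining relation $[Ax,y]+[x,Ay]=0$ of $\gd{0,1,1}\el$, and in the inductive step splits the binomial coefficients via Pascal's rule to rewrite the sum as $S_m(Ax,y)+S_m(x,Ay)$, each vanishing by hypothesis. Your explicit remark that the recursion holds for an arbitrary linear operator, with the membership in $\gd{0,1,1}\el$ used only at $m=1$, is a clean way of organizing exactly the same computation.
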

\begin{proof}
We prove the assertion by induction. For $m=1$ we obtain $[Ax,y]
+[x,Ay]=0$, i.~e. the definition of $\gd{0,1,1}\el$. Assuming that
(\ref{indukc}) is valid for $m$, we proceed to $m+1$ as follows:
\begin{align*}
 \sum_{i=0}^{m+1} \comb{m+1}{i}[A^i x,A^{m+1-i}y]=&[x,A^{m+1}y]+\sum_{i=0}^{m-1}
 \comb{m+1}{i+1}[A^{i+1}
 x,A^{m-i}y]+[A^{m+1}x,y]\\ =& [x,A^{m+1}y] + \sum_{i=0}^{m-1} \comb{m}{i+1}[A^{i+1} x,A^{m-i}y]+ \\
+&\sum_{i=0}^{m-1} \comb{m}{i}[A^{i+1} x,A^{m-i}y]+ [A^{m+1}x,y]\\
=&\sum_{i=0}^{m} \comb{m}{i}[A^i x,A^{m-i} (A y)]+\sum_{i=0}^{m}
\comb{m}{i}[A^i( Ax),A^{m-i}y].
\end{align*}
According to the assumption, each of the last two sums is equal to
zero.
\end{proof}
\begin{thm} Let $\el$ be a complex Lie algebra. Then it holds:
\begin{enumerate}
 \item $\gd{0,1,1}\el$ is the Lie algebra of $\ga{0,1,1}\el$,
 \item $\gd{1,1,0}\el$ is the Lie algebra of $\ga{1,1,0}\el$,
 \item $\gd{1,1,1}\el\cap\gd{0,1,1}\el$ is the Lie algebra of $\ga{0,1,1}\el\cap\aut\el$.
\end{enumerate}
\end{thm}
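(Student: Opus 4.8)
The plan is to use the definition~(\ref{Lieg}) of the Lie algebra of a linear group: for a linear group $G$, its Lie algebra consists of exactly those $X\in\gl\el$ with $\exp(tX)\in G$ for all $t\in\R$. Thus for each of the first two parts I would establish an equivalence of the form
$$A\in\gd{0,1,1}\el\iff \exp(tA)\in\ga{0,1,1}\el\ \ \forall t\in\R,$$
which by~(\ref{Lieg}) says precisely that the candidate subspace is the Lie algebra of the corresponding group. Since $\ga{0,1,1}\el$ and $\ga{1,1,0}\el$ were just shown to be algebraic, hence complex linear, groups, the resulting Lie algebras are automatically complex subalgebras, consistent with $\gd{0,1,1}\el$ and $\gd{1,1,0}\el$ being complex subspaces.

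For the first part the work lies in the forward implication, and this is exactly what the preceding Lemma delivers. Expanding the exponentials and collecting terms of equal total degree gives
$$[\exp(tA)x,\exp(tA)y]=\sum_{m=0}^\infty \frac{t^m}{m!}\sum_{i=0}^m\comb{m}{i}[A^i x,A^{m-i}y],$$
so identity~(\ref{indukc}) makes every inner sum with $m\geq 1$ vanish, leaving only the $m=0$ term $[x,y]$; hence $\exp(tA)\in\ga{0,1,1}\el$ for all $t$. For the converse I would differentiate the relation $[\exp(tA)x,\exp(tA)y]=[x,y]$ in $t$ at $t=0$, which yields $[Ax,y]+[x,Ay]=0$, i.e. $A\in\gd{0,1,1}\el$.

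The second part is identical in structure but rests on the simpler identity $A^m[x,y]=[A^m x,y]$, obtained by an immediate induction from the defining relation $A[x,y]=[Ax,y]$ of $\gd{1,1,0}\el$. Summing against $t^m/m!$ gives $\exp(tA)[x,y]=[\exp(tA)x,y]$, so $\exp(tA)\in\ga{1,1,0}\el$; conversely, differentiating $\exp(tA)[x,y]=[\exp(tA)x,y]$ at $t=0$ returns $A[x,y]=[Ax,y]$, so $A\in\gd{1,1,0}\el$.

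Finally, for the third part I would invoke Proposition~\ref{prunikgrup}, which states that the Lie algebra of an intersection of two linear groups is the intersection of their Lie algebras. By Theorem~\ref{groupI} the Lie algebra of $\aut\el$ is $\der\el=\gd{1,1,1}\el$, and by the first part the Lie algebra of $\ga{0,1,1}\el$ is $\gd{0,1,1}\el$; both groups are algebraic, hence linear, so Proposition~\ref{prunikgrup} applies and identifies $\gd{1,1,1}\el\cap\gd{0,1,1}\el$ as the Lie algebra of $\ga{0,1,1}\el\cap\aut\el$. The main obstacle throughout is the forward (exponentiation) direction of the first two parts: one must lift a single infinitesimal identity to a relation valid on the whole one--parameter group, and the combinatorial identity of the preceding Lemma is precisely the tool that accomplishes this.
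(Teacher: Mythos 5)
Your proposal is correct and follows essentially the same route as the paper: the same two-step verification of $\gd{0,1,1}\el=\set{A\in\gl\el}{\exp(\R A)\subset\ga{0,1,1}\el}$ using the binomial identity~(\ref{indukc}) for the exponentiation direction (your differentiation at $t=0$ is just the paper's difference-quotient limit in disguise), the same induction-free identity $A^m[x,y]=[A^m x,y]$ for part 2, and the same appeal to Theorem~\ref{groupI} and Proposition~\ref{prunikgrup} for part 3. No gaps.
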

\begin{proof}

1. We prove the equality of the sets $$\gd{0,1,1}\el=\set{ A\in
\gl \el}{ \exp (\R A)\subset\ga{0,1,1}\el } $$ in two steps.

$\supset:$ Consider any $A\in \gl\el$ such that $\exp(tA)\in\
\ga{0,1,1}\el$, i.~e. for all $x,y\in\el$ and all $t \in \R$ it
holds:
\begin{equation}\label{grr1}
  [\exp
(tA)x,\exp(tA)y] =[x,y].
\end{equation}
Rewriting equation (\ref{grr1}) in the form

$$\left[\frac{1}{t}(\exp
(tA)-1)x,\exp(tA)y\right]+\left[x,\frac{1}{t}(\exp(tA)-1)y\right]=0
$$ and taking the limit $t\rightarrow 0$ we obtain $[Ax,y]+[x,A
y]=0$, i.~e. $A\in\gd{0,1,1}\el$.

$\subset:$ Let $A\in \gd{0,1,1}\el$. It is sufficient to prove
that for all $x,y\in\el$ the relation $$[\exp (A)x,\exp(A)y]
=[x,y]$$ holds. Then, since also $tA\in \gd{0,1,1}\el$ for all
$t\in \R$, the relation (\ref{grr1}) follows. Using
(\ref{indukc}), we calculate
\begin{align*}
\left[\sum_{i=0}^{\infty}\frac{1}{i!}A^i
x,\sum_{i=0}^{\infty}\frac{1}{i!}A^i y
\right]=&\sum_{m=0}^{\infty}\sum_{i=0}^{m}\left[\frac{1}{i!}A^i
x,\frac{1}{(m-i)!}A^{m-i} y\right]\\ =&
[x,y]+\sum_{m=1}^{\infty}\frac{1}{m!} \sum_{i=0}^m \comb{m}{i}[A^i
x,A^{m-i}y]=[x,y].
\end{align*}

2. $\supset:$ Consider any $A\in \gl\el$ such that $\exp(tA)\in\
\ga{1,1,0}\el$, i.~e. for all $x,y\in\el$ and all $t \in \R$ it
holds:
\begin{equation}\label{grr2}
  \exp(tA)[
x,y] =[\exp(tA)x,y].
\end{equation}
Rewriting equation (\ref{grr2}) in the form

$$\frac{1}{t}(\exp
(tA)-1)[x,y]=\left[\frac{1}{t}(\exp(tA)-1)x,y\right]$$ and taking
the limit $t\rightarrow 0$ we obtain $A[x,y]=[Ax, y]$, i.~e.
$A\in\gd{1,1,0}\el$.

$\subset:$ Let $A\in \gd{1,1,0}\el$. It is sufficient to prove
that for all $x,y\in\el$ $$\exp (A)[x,y] =[\exp (A)x,y].$$ Then,
since also $tA\in \gd{1,1,0}\el$ for all $t\in \R$, the relation
(\ref{grr2}) follows. Since for all $m\in\N$ is obviously the
relation $A^m[x,y]=[A^mx,y]$ satisfied, we have:

$$\sum_{m=0}^{\infty}\frac{1}{m!}A^m\left[ x, y
\right]=\sum_{m=0}^{\infty}\frac{1}{m!}\left[A^m x, y\right]
=\left[\sum_{m=0}^{\infty}\frac{1}{m!}A^m x, y\right]. $$

3. The statement follows directly from case 1.,
Theorem~\ref{groupI} and Proposition~\ref{prunikgrup}.
\end{proof}

\section{Invariant Function $\fa$ of Low--dimensional
Algebras}\label{invfoflow} Since among three--dimensional Lie
algebras infinite continuum appears already, it is clear that the
finite set $\inv\el$ of certain dimensions, though useful, can
never completely characterize Lie algebras of dimension higher or
equal to 3. On the contrary, it turns out that the invariant
function $\fa$ alone(!) forms a complete set of invariant(s) for
3--dimensional Lie algebras. We use the notation for
3--dimensional Lie algebras as in \cite{ccc} and begin with the
following lemma, which states that the invariant function~$\fa$
provides the classification of infinitely many Lie algebras in the
continuum $\g_{3,4}(a)$.
\begin{lemma}\label{examg34}
Let $\g_{3,4}(a)$ be a three--dimensional Lie algebra with
non--zero brackets $[e_1,e_3]=e_1,\ [e_2,e_3]=ae_2$, where
$a\in\Com$ and $a\neq 0,\pm 1$. If
$\fa\g_{3,4}(a)=\fa\g_{3,4}(a')$ then $\g_{3,4}(a)\cong
\g_{3,4}(a')$.
\end{lemma}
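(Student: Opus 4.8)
The plan is to compute the invariant function $\fa\g_{3,4}(a)$ explicitly and to read $a$ off from it up to the exchange $a\leftrightarrow 1/a$, which I will then show is merely an isomorphism ambiguity. The whole argument reduces to one linear-algebra computation plus one short bracket check.

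First I would determine $\gd{\alpha,1,1}{\g_{3,4}(a)}$ for every $\alpha\in\Com$. Writing a candidate operator as a matrix $D=(d_{ij})$ in the basis $(e_1,e_2,e_3)$ and imposing $\alpha D[x,y]=[Dx,y]+[x,Dy]$ on the three basis pairs $(e_1,e_2)$, $(e_1,e_3)$, $(e_2,e_3)$ yields a linear system in the entries $d_{ij}$. The pair $(e_1,e_2)$ forces $d_{31}=d_{32}=0$; the entries $d_{13},d_{23}$ stay free for every $\alpha$; the remaining diagonal entries obey $\alpha d_{11}=d_{11}+d_{33}$ and $\alpha d_{22}=d_{22}+d_{33}$, while $d_{21}$ survives only when $\alpha=a$ and $d_{12}$ only when $\alpha a=1$, i.e. $\alpha=1/a$. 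A short case analysis in $\alpha$ then gives $\dim\gd{\alpha,1,1}{\g_{3,4}(a)}=4$ for $\alpha\in\{1,a,1/a\}$ and $=3$ for all other $\alpha$. Here the hypothesis $a\neq 0,\pm 1$ is exactly what guarantees that the three distinguished values $1,a,1/a$ are pairwise distinct, so that $\fa\g_{3,4}(a)$ takes the value $4$ on a set of precisely three points and is constant otherwise.

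Consequently the level set $\{\alpha\in\Com:(\fa\g_{3,4}(a))(\alpha)=4\}$ equals $\{1,a,1/a\}$, and deleting the point $1$ (which lies in this set for every admissible $a$ and is distinct from both $a$ and $1/a$) recovers the unordered pair $\{a,1/a\}$ from the function alone. Thus $\fa\g_{3,4}(a)=\fa\g_{3,4}(a')$ forces $\{a,1/a\}=\{a',1/a'\}$, so either $a'=a$ or $a'=1/a$. It then remains to supply the isomorphism in the second case: denoting by $\wt e_i$ the basis of $\g_{3,4}(1/a)$, with $[\wt e_1,\wt e_3]=\wt e_1$ and $[\wt e_2,\wt e_3]=\tfrac1a\wt e_2$, the linear map $e_1\mapsto\wt e_2$, $e_2\mapsto\wt e_1$, $e_3\mapsto a\wt e_3$ is checked directly to preserve all brackets, giving $\g_{3,4}(a)\cong\g_{3,4}(1/a)$. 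Together with the trivial case $a'=a$, this establishes $\g_{3,4}(a)\cong\g_{3,4}(a')$ and proves the lemma.

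The computation of $\gd{\alpha,1,1}{\g_{3,4}(a)}$ is routine; the only delicate point, and where I expect the main obstacle to lie, is the bookkeeping of degenerate configurations in which two of the candidate special values $1,a,1/a$ collide and would spuriously alter the dimension count (for instance $a=1/a$, or $a=1$). This is resolved entirely by the standing restriction $a\neq 0,\pm1$ (and likewise $a'\neq0,\pm1$), which rules out every such collision and makes the recovery of $\{a,1/a\}$ from $\fa$ unambiguous.
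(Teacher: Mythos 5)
Your proposal is correct and follows essentially the same route as the paper: the paper reads off the table $\fa\g_{3,4}(a)(\alpha)=4$ for $\alpha\in\{1,a,1/a\}$ and $3$ otherwise from the matrices of $(\alpha,\beta,\gamma)$--derivations tabulated in Appendix~A, concludes $a'=a$ or $a'=1/a$, and asserts that $\g_{3,4}(a)\cong\g_{3,4}(1/a)$ ``can be verified directly.'' You merely make the argument self--contained by deriving that table from the linear system on the entries $d_{ij}$ (your case analysis, including the role of $a\neq 0,\pm 1$ in keeping $1,a,1/a$ pairwise distinct, is accurate) and by exhibiting the isomorphism $e_1\mapsto\wt e_2$, $e_2\mapsto\wt e_1$, $e_3\mapsto a\wt e_3$ explicitly, which indeed preserves all brackets.
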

\begin{proof}
From the explicit form of matrices listed in Appendix \ref{APA} we
instantly see that the invariant function $\fa$ has the following
form
\begin{center}\vspace{-12pt}
\begin{tabular}[t]{|l||c|c|c|c|}
\hline \parbox[l][20pt][c]{0pt}{}   $\alpha$ & 1 & $a$
&$\frac{1}{a}$&
\\ \hline
\parbox[l][20pt][c]{0pt}{} $[\fa\g_{3,4}(a)](\alpha)$ & 4 & 4& 4&3   \\ \hline
\end{tabular}
\end{center}
A blank space in the table of the function $\fa$ denotes a general
complex number, different from all previously listed values, i.~e.
it holds: $[\fa\g_{3,4}(a)](\alpha)=3$, $\alpha\neq 1,a,1/a$. Let
us consider some $\g_{3,4}(a')$, $a'\neq 0,\pm 1$. If
$\fa\g_{3,4}(a)=\fa\g_{3,4}(a')$ then we need $a=a'$ or $a=1/a'$
-- otherwise $\fa\g_{3,4}(a)\neq\fa\g_{3,4}(a')$. The relation
\begin{equation}\label{izo3a}
 \g_{3,4}(a)\cong \g_{3,4}(1/a),
\end{equation}
can be verified directly.
\end{proof}

\begin{example}
The invariant function $\fa^0$ has for $\g_{3,1}$ the value
$\fa^0\g_{3,1}(\alpha) = 3$ and for the remaining algebras
$\g_{3,i},\,i=2,3, 4$ and $\slp (2,\Com)$ it holds:
$$\fa^0\g_{3,i}(\alpha) = \left\{
\begin{matrix}
    1, & \alpha = 1 \\
    0, & \alpha \neq 1 .
  \end{matrix} \right.
$$
\end{example}
\begin{thm}[Classification of three--dimensional complex Lie
algebras]\label{class3dim}$\,$\newline
  Two three--dimensional complex Lie
  algebras $\el$, $\wt\el$ are isomorphic if and only if
$\fa\el=\fa\wt\el$.
\end{thm}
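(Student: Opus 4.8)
The plan is to prove the two implications of the biconditional separately. The forward direction --- if $\el \cong \wt\el$ then $\fa\el = \fa\wt\el$ --- is immediate from Corollary~\ref{invarfunc}, so all of the work lies in the converse.

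For the converse I would invoke the known complete classification of three--dimensional complex Lie algebras (in the notation of~\cite{ccc}): every such algebra is isomorphic to exactly one of the abelian algebra $3\g_1$, the Heisenberg algebra $\g_{3,1}$, the solvable algebras $\g_{3,2}$ and $\g_{3,3}$, a member of the one--parameter continuum $\g_{3,4}(a)$ with $a \neq 0, \pm 1$, or the simple algebra $\slp(2,\Com)$. Using the explicit matrices of $(\alpha,\beta,\gamma)$--derivations tabulated in Appendix~\ref{APA}, I would first compute the invariant function $\fa$ for each of these representatives, recording for each the finitely many values of $\alpha$ at which $\dim\gd{\alpha,1,1}{\el}$ exceeds its generic value, together with those dimensions. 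This produces a characteristic table --- a \emph{fingerprint} --- of $\fa$ for every algebra on the list.

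The core of the argument is then a finite comparison. First I would check that the fingerprints of the exceptional algebras ($3\g_1$, $\g_{3,1}$, $\g_{3,2}$, $\g_{3,3}$, $\slp(2,\Com)$) are pairwise distinct, so that no two non--isomorphic algebras among them share the same $\fa$. The continuum is already handled by Lemma~\ref{examg34}, which shows that $\fa\g_{3,4}(a) = \fa\g_{3,4}(a')$ forces $a' = a$ or $a' = 1/a$, together with the genuine isomorphism $\g_{3,4}(a) \cong \g_{3,4}(1/a)$; hence $\fa$ separates the continuum up to isomorphism. It then remains only to confirm that no member of the continuum shares its fingerprint with any exceptional algebra.

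The main obstacle I anticipate is precisely this last cross--comparison, together with the boundary behaviour of the parameter. The generic signature of $\g_{3,4}(a)$ has distinguished values at $\alpha = 1, a, 1/a$; I must verify that for no admissible $a$ does this signature collapse or coincide with the fingerprint of one of the exceptional algebras, and conversely that the excluded values $a = 0, \pm 1$ --- where $\g_{3,4}(a)$ degenerates into an algebra already on the finite list --- are correctly accounted for. Because the invariant function takes integer values in a controlled range, this reduces to a case--by--case inspection of finitely many tables rather than a single uniform estimate, and once every pair of distinct cosets is seen to have distinct $\fa$, the converse --- and with it the theorem --- follows.
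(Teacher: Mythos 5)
Your proposal is correct and follows essentially the same route as the paper: the forward implication via Corollary~\ref{invarfunc}, and the converse by combining Lemma~\ref{examg34} for the continuum $\g_{3,4}(a)$ (including the isomorphism $\g_{3,4}(a)\cong\g_{3,4}(1/a)$) with a finite comparison of the tables of $\fa$ in Appendix~\ref{IFLJL}. You merely spell out explicitly the cross-comparisons and boundary cases that the paper's terse proof leaves to the reader's inspection of the tables.
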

\begin{proof}
$\Rightarrow:$ See Corollary~\ref{invarfunc}.

$\Leftarrow:$ According to Lemma~\ref{examg34} and observing the
tables in Appendix~\ref{IFLJL}, we conclude that all tables of the
invariant function~$\fa$ of non--isomorphic three-dimensional
complex Lie algebras are mutually different.
\end{proof}

\begin{example}\label{examg422}
In Example \ref{4dimder} explicit matrices of
$(\alpha,\beta,\gamma)$--derivations of four--dimensional
one--parametric Lie algebra $\g_{4,2}(a)$, $a\neq 0,\pm 1,-2$,
were presented. Thus we see that the invariant functions $\fa$ and
$\fa^0$ have the following form
\begin{center}\vspace{-12pt}
\begin{tabular}[t]{|l||c|c|c|c|}
\hline \parbox[l][20pt][c]{0pt}{}   $\alpha$ & 1 & $a$&
$\frac{1}{a}$& \\ \hline
\parbox[l][20pt][c]{0pt}{} $\fa\g_{4,2}(a)(\alpha)$ & 6 & 5 & 5 & 4  \\ \hline
\end{tabular}\qquad
\begin{tabular}[t]{|l||c|c|c|}
\hline  \parbox[l][20pt][c]{0pt}{}  $\alpha$ & 1  &
\\ \hline
\parbox[l][20pt][c]{0pt}{} $\fa^0\g_{4,2}(a)(\alpha)$ & 1 & 0\\
\hline
\end{tabular}\qquad
\end{center}
Let us consider some $\g_{4,2}(a')$, $a'\neq 0,\pm 1,-2$. If
$\fa\g_{4,2}(a)=\fa\g_{4,2}(a')$ then we need $a=a'$ or $a=1/a'$
-- otherwise $\fa\g_{4,2}(a)\neq\fa\g_{4,2}(a')$. However an
isomorphism relation similar to (\ref{izo3a}) does not hold. The
invariant function $\fa$ thus does not distinguish between pairs
$\g_{4,2}(a)$ and $\g_{4,2}(1/a)$ and consequently does not
provide a complete classification of Lie algebras in the continuum
$\g_{4,2}(a)$. In the next Chapter, we use additional invariant
functions which, together with $\fa$, allow the classification of
$\g_{4,2}(a)$ as well as all four--dimensional Lie algebras.
\end{example}


\chapter{Twisted Cocycles of Lie Algebras}\label{CHtwi}
\section{Two--dimensional Twisted Cocycles of the Adjoint Representation}
Recall that an $f$--cocycle of a Lie algebra $\el$ is defined as a
$q-$linear operator $z$ satisfying the relation $dz=0$, where the
map $d$ is defined by (\ref{koho1}). We generalize this definition
analogously to $(\alpha,\beta,\gamma)$--derivations. The content
of this chapter has not been previously published.

Let $\el$ be an arbitrary complex Lie algebra, $(V,f)$ its
representation and $\kappa=(\kappa_{ij})$ a $(q+1)\times(q+1)$
complex symmetric matrix. We call $c\in C^q(\el,V)$, $q\in\N$ for
which
\begin{align}\label{koho1t}
\nonumber 0 &= \sum_{i=1}^{q+1}
(-1)^{i+1}\kappa_{ii}f(x_i)c(x_1,\dots,\hat{x_i},\dots,x_{q+1})+\\
 &+ \sum_{\begin{smallmatrix}
  i,j=1 \\
  i<j
\end{smallmatrix}}^{q+1} (-1)^{i+j}\kappa_{ij} c([x_i,x_j],x_1,\dots,\hat{x_i},\dots,\hat{x_j},\dots,x_{q+1})
\end{align} a {\bf
$\kappa-$twisted cocycle} or shortly $\kappa-$cocycle of dimension
$q$ corresponding to $(V,f)$; the set of all $\kappa-$cocycles of
dimension $q$ is denoted by $Z^{q} (\el,f,\kappa)$. It is clear
that $Z^{q} (\el,f,\kappa)$ is a linear subspace of $C^q(\el,V)$.
Similarly to (\ref{vla1}) we observe that for any $\varepsilon \in
\Com \backslash\{0\}$ and $q\in \N$ it holds:
\begin{equation}\label{vlakoho1}
Z^{q} (\el,f,\kappa)=Z^{q} (\el,f,\varepsilon\kappa).
\end{equation}

In Chapter \ref{CHgen} we analyzed in detail
$(\alpha,\beta,\gamma)$--derivations; they are now included in the
definition of twisted cocycles. Considering the adjoint
representation, we immediately have the generalization of
Proposition \ref{kohoder}:
\begin{equation}\label{SPEC}
Z^{1} \left(\el,\ad_\el,\left(\begin{smallmatrix}
  \beta &  \alpha \\
   \alpha & \gamma
\end{smallmatrix}\right) \right)=\gd{\alpha,\beta,\gamma}\el.
\end{equation}
The most logical next step is to set $q=2$ and investigate in
detail the space $Z^{2} (\el,\ad_\el,\kappa)$ -- we devote this
chapter to this goal. For this purpose it may be more convenient
to use different notation, analogous to that of derivations,
defined by
\begin{equation}\label{kohonot}
\gc{\alpha_1,\alpha_2,\alpha_3,\beta_1,\beta_2,\beta_3}\el=Z^{2}
\left(\el,\ad_\el,\left(\begin{smallmatrix}
  \beta_1 &   \alpha_2 &  \alpha_3 \\
   \alpha_2 & \beta_3 &  \alpha_1 \\
   \alpha_3 &  \alpha_1 & \beta_2
\end{smallmatrix}\right) \right),
\end{equation}
i.~e. in the space
$\gc{\alpha_1,\alpha_2,\alpha_3,\beta_1,\beta_2,\beta_3}\el$ are
such $B \in C^2(\el,\el)$ which for all $x,y,z \in\el$ satisfy
\begin{align}\label{defb}
0&=\alpha_1 B(x, [y,z])+\alpha_2 B(z, [x,y])+\alpha_3 B(y, [z,x])
\nonumber\\ &+ \beta_1 [x, B ( y,z)]+ \beta_2 [z, B (
x,y)]+\beta_3 [y, B ( z,x)].
\end{align}
Six permutations of the variables $x,y,z \in\el$ in the defining
equation (\ref{defb}) give
\begin{lemma}
 Let $\el$ be a complex Lie algebra. Then for any $\alpha_1, \alpha_2,\alpha_3,\beta_1,\beta_2,\beta_3 \in \Com$
are all the following six spaces equal:
\begin{enumerate}
\item $\gc{\alpha_1,\alpha_2,\alpha_3,\beta_1,\beta_2,\beta_3}\el$
\item $\gc{\alpha_3,\alpha_1,\alpha_2,\beta_3,\beta_1,\beta_2}\el$
\item $\gc{\alpha_2,\alpha_3,\alpha_1,\beta_2,\beta_3,\beta_1}\el$
\item $\gc{\alpha_1,\alpha_3,\alpha_2,\beta_1,\beta_3,\beta_2}\el$
\item $\gc{\alpha_2,\alpha_1,\alpha_3,\beta_2,\beta_1,\beta_3}\el$
\item $\gc{\alpha_3,\alpha_2,\alpha_1,\beta_3,\beta_2,\beta_1}\el$
\end{enumerate}
\end{lemma}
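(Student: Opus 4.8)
The plan is to exploit the fact that the defining relation~(\ref{defb}) is required to hold for \emph{all} $x,y,z\in\el$. Consequently, for any permutation $\sigma\in S_3$ of the three universally quantified variables, relabelling $x,y,z$ according to $\sigma$ does not change the solution set: the substituted relation is logically equivalent to the original one. The whole argument then reduces to checking that, after each such substitution, the resulting relation is exactly the defining relation of one of the six listed parameter tuples. Concretely, I would fix the ordered list of the six monomials occurring in~(\ref{defb}), namely $B(x,[y,z])$, $B(z,[x,y])$, $B(y,[z,x])$, $[x,B(y,z)]$, $[z,B(x,y)]$, $[y,B(z,x)]$, and track how each substitution acts on this list and on the attached coefficients.

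For the three even permutations (the identity and the two $3$--cycles), I would observe that a cyclic relabelling of $(x,y,z)$ simply permutes the six monomials above \emph{among themselves}, with no change of sign, since neither the Lie bracket nor the cochain $B$ has any pair of arguments interchanged. Reading off the coefficients after the cyclic substitution $x\mapsto z,\,y\mapsto x,\,z\mapsto y$ then yields precisely $\gc{\alpha_3,\alpha_1,\alpha_2,\beta_3,\beta_1,\beta_2}{\el}$ (case~2), the remaining $3$--cycle gives case~3, and the identity is case~1.

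For the three odd permutations (the transpositions), the substitution interchanges two arguments inside every Lie bracket and inside every value of $B$. Here I would invoke anti--commutativity of the Lie bracket, $[b,a]=-[a,b]$, together with the anti--symmetry of the two--dimensional cochain $B$, which is exactly the cochain condition $B(a,b)=-B(b,a)$. Each of the six monomials thereby acquires a factor $-1$, so the substituted relation is $(-1)$ times a relation already in canonical form; cancelling this overall sign produces, for the transposition $y\leftrightarrow z$, the tuple $\gc{\alpha_1,\alpha_3,\alpha_2,\beta_1,\beta_3,\beta_2}{\el}$ (case~4), while the other two transpositions give cases~5 and~6. Since in every case the substituted relation is equivalent to~(\ref{defb}), all six defining conditions cut out the same subspace of $C^2(\el,\el)$, which is the assertion.

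The step requiring the most care---though it is bookkeeping rather than a genuine obstacle---is the sign accounting in the odd case: one must apply anti--symmetry twice on each monomial, once to the bracket argument and once to the value of $B$, and verify that these sign contributions combine into a single global factor $-1$ on every monomial, so that $-1$ factors cleanly out of the entire relation rather than mixing monomials of opposite sign.
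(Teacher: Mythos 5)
Your proposal is correct and follows essentially the same route as the paper: the paper offers no written proof beyond the remark that the lemma results from the six permutations of the variables $x,y,z$ in the defining equation~(\ref{defb}), and your substitution-plus-antisymmetry argument, with the correct read-off of the parameter tuples for the $3$--cycles (no signs) and for the transpositions (a uniform sign), fills this in faithfully. One slip in your closing remark, however: antisymmetry is applied exactly \emph{once} per monomial, not twice --- bracket antisymmetry $[b,a]=-[a,b]$ handles the three terms of the form $B(\,\cdot\,,[\,\cdot\,,\,\cdot\,])$, where only the inner bracket has its arguments interchanged, while the cochain condition $B(a,b)=-B(b,a)$ handles the three terms of the form $[\,\cdot\,,B(\,\cdot\,,\,\cdot\,)]$, where only the arguments of $B$ are interchanged; applying both to a single monomial, as you describe, would produce $(-1)^2=+1$ rather than $-1$. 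This does not damage your argument, since each monomial in fact flips sign exactly once, yielding the uniform global factor $-1$ which, the relation being homogeneous, factors out harmlessly --- and indeed even a uniform $+1$ would leave the solution set unchanged; what matters, and what holds, is that the sign is the same on all six monomials, so the read-off tuples are exactly the six listed in the lemma.
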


\begin{lemma}\label{vlakoho} Let $\el$ be a complex Lie algebra. Then for any $\alpha_1, \alpha_2,\alpha_3,\beta_1,\beta_2,\beta_3 \in \Com$
is the space $
\gc{\alpha_1,\alpha_2,\alpha_3,\beta_1,\beta_2,\beta_3}\el$ equal
to all of the following:
\begin{enumerate}
\item$\gc{\alpha_1+\alpha_3,\alpha_2+\alpha_1,\alpha_3+\alpha_2,\beta_1+\beta_3,\beta_2+\beta_1,\beta_3+\beta_2}\el\,
\cap\,\gc{\alpha_1-\alpha_3,\alpha_2-\alpha_1,\alpha_3-\alpha_2,\beta_1-\beta_3,\beta_2-\beta_1,\beta_3-\beta_2}\el$
\item$\gc{0,\alpha_2-\alpha_3,\alpha_3-\alpha_2,0,\beta_2-\beta_3,\beta_3-\beta_2}\el\,
\cap\,\gc{2\alpha_1,\alpha_2+\alpha_3,\alpha_3+\alpha_2,2\beta_1,\beta_2+\beta_3,\beta_3+\beta_2}\el$
\item$\gc{0,\alpha_1-\alpha_2,\alpha_2-\alpha_1,0,\beta_1-\beta_2,\beta_2-\beta_1}\el\,
\cap\,\gc{2\alpha_3,\alpha_1+\alpha_2,\alpha_2+\alpha_1,2\beta_3,\beta_1+\beta_2,\beta_2+\beta_1}\el$
\item$\gc{0,\alpha_3-\alpha_1,\alpha_1-\alpha_3,0,\beta_3-\beta_1,\beta_1-\beta_3}\el\,
\cap\,\gc{2\alpha_2,\alpha_3+\alpha_1,\alpha_1+\alpha_3,2\beta_2,\beta_3+\beta_1,\beta_1+\beta_3}\el$

\end{enumerate}
\end{lemma}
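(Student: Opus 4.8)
The plan is to imitate the decomposition~(\ref{vla2}) that was established for $(\alpha,\beta,\gamma)$--derivations, now exploiting that the right--hand side of the defining relation~(\ref{defb}) is, for a fixed cochain $B\in C^2(\el,\el)$ and fixed $x,y,z\in\el$, a \emph{linear} form in the six parameters. Write $\vec p=(\alpha_1,\alpha_2,\alpha_3,\beta_1,\beta_2,\beta_3)$ and let $E_{\vec p}(x,y,z)$ denote that right--hand side, so that $B\in\gc{\vec p}\el$ exactly when $E_{\vec p}(x,y,z)=0$ for all $x,y,z\in\el$; linearity gives $E_{\vec p+\vec q}=E_{\vec p}+E_{\vec q}$ and $E_{\vec p-\vec q}=E_{\vec p}-E_{\vec q}$ for any tuples $\vec p,\vec q$.

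The core of the argument is an add/subtract step that works for \emph{any} one of the six parameter permutations $\sigma$ furnished by the preceding Lemma, for which $\gc{\vec p}\el=\gc{\sigma\vec p}\el$ as sets. First I would prove $\gc{\vec p}\el\subset\gc{\vec p+\sigma\vec p}\el\cap\gc{\vec p-\sigma\vec p}\el$: if $B\in\gc{\vec p}\el$ then by the preceding Lemma also $B\in\gc{\sigma\vec p}\el$, so both $E_{\vec p}$ and $E_{\sigma\vec p}$ vanish identically, whence $E_{\vec p\pm\sigma\vec p}=E_{\vec p}\pm E_{\sigma\vec p}=0$. For the converse, if $B$ lies in the intersection then $E_{\vec p+\sigma\vec p}=0$ and $E_{\vec p-\sigma\vec p}=0$; their sum equals $2E_{\vec p}$ by linearity, so $E_{\vec p}=0$ and $B\in\gc{\vec p}\el$. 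Note that this converse uses only linearity, while the forward inclusion is exactly where the preceding Lemma enters.

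It then remains to match each of the four displayed intersections with a concrete $\sigma$. For item~1 I would take $\sigma$ to be the cyclic permutation (the second tuple of the preceding Lemma); then $\vec p+\sigma\vec p$ and $\vec p-\sigma\vec p$ reproduce the two tuples of item~1 verbatim. For item~2, taking $\sigma$ to be the transposition that fixes the first argument (the fourth tuple of the preceding Lemma) gives exactly the two tuples of item~2, with the zero and the doubled entries $2\alpha_1,2\beta_1$ in the first slot. For items~3 and~4, the remaining two transpositions (the fifth and sixth tuples) yield $\vec p\pm\sigma\vec p$ with the zero and doubled entries sitting in a position other than the first; applying one further cyclic relabelling — legitimate since it is again one of the six equalities of the preceding Lemma and hence leaves the space unchanged — rotates them into the order displayed in items~3 and~4.

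The hard part is purely the bookkeeping of this last step: verifying that each of the eight listed six--tuples is genuinely $\vec p\pm\sigma\vec p$, possibly after a cyclic shift, for the indicated $\sigma$. Everything conceptual reduces to the two--line add/subtract computation above, which is valid for every admissible $\sigma$ precisely because~(\ref{defb}) is linear in its parameters and the preceding Lemma guarantees $\gc{\vec p}\el=\gc{\sigma\vec p}\el$.
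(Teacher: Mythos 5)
Your proposal is correct and takes essentially the same route as the paper: the paper's proof pairs the defining identity (\ref{defb}) with its variable--permuted counterpart (equations (\ref{trik3})--(\ref{trik4})), adds and subtracts to get (\ref{trik5})--(\ref{trik6}), and recovers the original pair by adding and subtracting again, doing case 1 explicitly and declaring the other cases analogous. Your parameter--linearity formalism with $E_{\vec p}$, the explicit choice of permutation for each item, and the final cyclic relabelling for items 3 and 4 simply make uniform and explicit the bookkeeping the paper leaves under ``analogous''.
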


\begin{proof} Suppose $\alpha_1,\alpha_2,\alpha_3,\beta_1,\beta_2,\beta_3 \in \Com$ are given. We demonstrate the proof on the case
1. -- the proof of the other cases is analogous. Let $B \in
\gc{\alpha_1,\alpha_2,\alpha_3,\beta_1,\beta_2,\beta_3}\el$; then
for arbitrary $x,y,z\in \el$ we have
    \begin{eqnarray}
    0&=\alpha_1 B(x, [y,z])+\alpha_2 B(z, [x,y])+\alpha_3 B(y, [z,x])
\nonumber\\ &+ \beta_1 [x, B ( y,z)]+ \beta_2 [z, B (
x,y)]+\beta_3 [y, B ( z,x)]\label{trik3} \\
    0&=\alpha_1 B(z, [x,y])+\alpha_2 B(y, [z,x])+\alpha_3 B(x, [y,z])
\nonumber\\ &+ \beta_1 [z, B ( x,y)]+ \beta_2 [y, B (
z,x)]+\beta_3 [x, B ( y,z)]. \label{trik4}
    \end{eqnarray}
    By adding and subtracting equations (\ref{trik3}) and (\ref{trik4}) we obtain
     \begin{eqnarray}
    0&=(\alpha_1+\alpha_3) B(x, [y,z])+(\alpha_2+\alpha_1) B(z, [x,y])+(\alpha_3 +\alpha_2 )B(y, [z,x])
\nonumber\\ &+ (\beta_1+\beta_3) [x, B ( y,z)]+ (\beta_2+\beta_1)
[z, B ( x,y)]+(\beta_3+\beta_2) [y, B ( z,x)]\label{trik5} \\
    0&=(\alpha_1-\alpha_3) B(x, [y,z])+(\alpha_2-\alpha_1) B(z, [x,y])+(\alpha_3 -\alpha_2 )B(y, [z,x])
\nonumber\\ &+ (\beta_1-\beta_3) [x, B ( y,z)]+ (\beta_2-\beta_1)
[z, B ( x,y)]+(\beta_3-\beta_2) [y, B ( z,x)]\label{trik6}
    \end{eqnarray}
    and thus $
\gc{\alpha_1,\alpha_2,\alpha_3,\beta_1,\beta_2,\beta_3}\el$ is the
subset of the intersection 1. Similarly, starting with equations
(\ref{trik5}), (\ref{trik6}) we obtain equations (\ref{trik3}),
(\ref{trik4}) and the remaining inclusion is proven.
\end{proof}
Further, we proceed to formulate an analogous theorem to
Theorem~\ref{klass}; the six original parameters are reduced to
four.
\begin{thm}\label{klass2} Let $\el$ be a Lie algebra. Then for any $\alpha_1,\alpha_2,\alpha_3,\beta_1,\beta_2,\beta_3
 \in\Com$ there exist $\alpha,\beta,\gamma,\delta\in \Com$ such that the subspace
$\gc{\alpha_1,\alpha_2,\alpha_3,\beta_1,\beta_2,\beta_3}\el
\subset C^2(\el,\el)$ is equal to some of the following sixteen
subspaces:
\begin{enumerate}
 \item $\gc{\alpha,0,0,\beta,0,0}{\el}$;
  $\gc{\alpha,0,0,\beta,1,-1}{\el}$;
  $\gc{\alpha,1,-1,\beta,0,0}{\el}$;
  $\gc{\alpha,\beta,-\beta,\gamma,1,-1}{\el}$
 \item $\gc{\alpha,0,0,\beta,1,0}{\el}$;
  $\gc{\alpha,0,0,\beta,1,1}{\el}$;
  $\gc{\alpha,\beta,-\beta,\gamma,1,0}{\el}$;
  $\gc{\alpha,1,-1,\beta,1,1}{\el}$
 \item $\gc{\alpha,1,0,\beta,0,0}{\el}$;
  $\gc{\alpha,1,1,\beta,0,0}{\el}$;
  $\gc{\alpha,1,0,\beta,\gamma,-\gamma}{\el}$;
  $\gc{\alpha,1,1,\beta,1,-1}{\el}$
 \item $\gc{\alpha,\beta,\gamma,\delta,1,0}{\el}$;
  $\gc{\alpha,\beta+1,\beta-1,\gamma,1,1}{\el}$;
  $\gc{\alpha,1,1,\beta,\gamma+1,\gamma-1}{\el}$;
  $\gc{\alpha,\beta,\beta,\gamma,1,1}{\el}$
\end{enumerate}
\end{thm}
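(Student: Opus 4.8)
The plan is to imitate the proof of Theorem~\ref{klass} line for line, now using the scaling relation~(\ref{vlakoho1}), the six-fold permutation symmetry, and the four decompositions of Lemma~\ref{vlakoho} in place of~(\ref{vla1}) and~(\ref{vla2}). First I would invoke~(\ref{vlakoho1}) to normalize one nonzero parameter to $1$, and use the permutation lemma to move it into a convenient slot; this disposes once and for all of both the continuous scaling freedom and the discrete $S_3$ freedom, and it is the source of the single "$1$" appearing in each of the sixteen canonical forms.

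The core of the argument treats the two triples $(\alpha_1,\alpha_2,\alpha_3)$ and $(\beta_1,\beta_2,\beta_3)$ by analogues of the $\beta+\gamma$ versus $\beta-\gamma$ dichotomy that drove Theorem~\ref{klass}. Decompositions 2--4 of Lemma~\ref{vlakoho} each split $\gc{\alpha_1,\alpha_2,\alpha_3,\beta_1,\beta_2,\beta_3}{\el}$ into an intersection of a transposition-antisymmetric factor, whose first $\alpha$- and $\beta$-entries vanish and which has the shape $\gc{0,a,-a,0,b,-b}{\el}$, and a transposition-symmetric factor of the shape $\gc{c,d,d,e,f,f}{\el}$. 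I would then branch on which of the pairwise differences $\alpha_i-\alpha_j$ and sums $\alpha_i+\alpha_j$ (and likewise for the $\beta$'s) vanish. This is precisely where the four-by-four pattern of the target spaces comes from: the "type" of the $\alpha$-triple (off-diagonal part zero; antisymmetric; generic; symmetric) combines with the analogous type of the $\beta$-triple, modulo the extra coupling visible in the fourth row and fourth column.

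The decisive technical device, borrowed from cases 1(b) and 2(a) of Theorem~\ref{klass}, is to exhibit a general space and a candidate canonical space as literally the \emph{same} intersection, and thereby conclude that they coincide as single $\gc{\,\cdot\,}{\el}$ spaces rather than merely as intersections. Concretely, after applying a decomposition I would feed one factor back through Lemma~\ref{vlakoho} --- exactly as in the chain $\gd{\alpha,\beta,\gamma}{\el}=\gd{0,1,-1}{\el}\cap\gd{\frac{2\alpha}{\beta+\gamma},1,1}{\el}=\gd{\frac{\alpha}{\beta+\gamma},1,0}{\el}$ --- so as to collapse the intersection back to one canonical form while simultaneously fixing which of $\alpha,\beta,\gamma,\delta$ survive. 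The scaling step accounts for one of the two degrees of freedom lost in passing from six parameters to four, and this collapse step for the other.

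The main obstacle is bookkeeping rather than any isolated hard idea: one must verify exhaustiveness, namely that every assignment of a vanishing pattern to the two triples lands in one of the sixteen listed forms, and must check the collapse-of-intersection coincidences in each branch, of which there are far more than in the three-parameter case because the two triples are coupled through the single shared symmetric matrix $\kappa$ and through the cyclic decomposition~1 of Lemma~\ref{vlakoho}. I expect the genuinely delicate branches to be the fully generic ones, the row-four and column-four spaces such as $\gc{\alpha,\beta,\gamma,\delta,1,0}{\el}$ and $\gc{\alpha,\beta+1,\beta-1,\gamma,1,1}{\el}$, where no pairwise sum or difference vanishes and one must use the permutation lemma together with decomposition~1 to normalize the four remaining free parameters without over-collapsing them.
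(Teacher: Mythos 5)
Your plan coincides with the paper's proof, which is precisely the line-for-line generalization of Theorem~\ref{klass} you describe: it branches on the vanishing of $\alpha_2\pm\alpha_3$ and $\beta_2\pm\beta_3$, splits the space via decomposition~2 of Lemma~\ref{vlakoho}, rescales each factor using~(\ref{vlakoho1}), and collapses the intersection back to a single canonical space by reading the lemma in reverse --- your decompose--rescale--collapse scheme. Two cosmetic corrections: the paper never needs the permutation lemma or decompositions 1, 3, 4 (the case $\alpha_2+\alpha_3\neq 0$, $\beta_2+\beta_3=0$ is handled as the exact mirror of the preceding one), and there is no up-front global normalization --- scaling is applied per factor inside each branch --- so your claim that a ``1'' appears in all sixteen canonical forms fails, e.g.\ for $\gc{\alpha,0,0,\beta,0,0}{\el}$.
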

\begin{proof}
\begin{enumerate}
    \item Suppose $\alpha_2 + \alpha_3 = 0 $ and $\beta_2 + \beta_3 = 0 $. Then the following four
    cases are possible:
    \begin{enumerate}
    \item $\alpha_2=-\alpha_3=0$ and $\beta_2=-\beta_3=0$. In this
    case we have
    $$\gc{\alpha_1,\alpha_2,\alpha_3,\beta_1,\beta_2,\beta_3}\el  =
    \gc{\alpha_1,0,0,\beta_1,0,0}\el.$$
    \item $\alpha_2=-\alpha_3=0$ and $\beta_2=-\beta_3\neq 0$. In this
    case we have:
\begin{align*}
\gc{\alpha_1,\alpha_2,\alpha_3,\beta_1,\beta_2,\beta_3}\el=&\gc{\alpha_1,0,0,\beta_1,\beta_2,\beta_3}\el\\
=&
\gc{0,0,0,0,\beta_2-\beta_3,\beta_3-\beta_2}\el\cap\gc{2\alpha_1,0,0,2\beta_1,0,0}\el\\
=&\gc{0,0,0,0,2,-2}\el\cap\gc{2\alpha_1,0,0,2\beta_1,0,0}\el\\
=&\gc{\alpha_1,0,0,\beta_1,1,-1}\el
\end{align*}
\item $\alpha_2=-\alpha_3\neq0$ and $\beta_2=-\beta_3= 0$. In this
    case we have:
\begin{align*}
\gc{\alpha_1,\alpha_2,\alpha_3,\beta_1,\beta_2,\beta_3}\el=&\gc{\alpha_1,\alpha_2,\alpha_3,\beta_1,0,0}\el\\
=&
\gc{0,\alpha_2-\alpha_3,\alpha_3-\alpha_2,0,0,0}\el\cap\gc{2\alpha_1,0,0,2\beta_1,0,0}\el\\
=&\gc{0,2,-2,0,0,0}\el\cap\gc{2\alpha_1,0,0,2\beta_1,0,0}\el\\
=&\gc{\alpha_1,1,-1,\beta_1,0,0}\el
\end{align*}
\item $\alpha_2=-\alpha_3\neq0$ and $\beta_2=-\beta_3\neq 0$. In this
    case we have:
\begin{align*}
\gc{\alpha_1,\alpha_2,\alpha_3,\beta_1,\beta_2,\beta_3}\el=&
\gc{0,\alpha_2-\alpha_3,\alpha_3-\alpha_2,0,\beta_2-\beta_3,\beta_3-\beta_2}\el\cap\gc{2\alpha_1,0,0,2\beta_1,0,0}\el\\
=&\gc{0,2\frac{\alpha_2-\alpha_3}{\beta_2-\beta_3},-2\frac{\alpha_2-\alpha_3}{\beta_2-\beta_3},0,2,-2}\el\cap\gc{2\alpha_1,0,0,2\beta_1,0,0}\el\\
=&\gc{\alpha_1,\frac{\alpha_2-\alpha_3}{\beta_2-\beta_3},-\frac{\alpha_2-\alpha_3}{\beta_2-\beta_3},\beta_1,1,-1}\el
\end{align*}
    \end{enumerate}
    \item Suppose $\alpha_2 + \alpha_3 = 0 $ and $\beta_2 + \beta_3 \neq 0 $. Then the following four
    cases are possible:
    \begin{enumerate}
        \item $\alpha_2=-\alpha_3=0$ and $\beta_2-\beta_3\neq 0$. In this
    case we have:
\begin{align*}
\gc{\alpha_1,\alpha_2,\alpha_3,\beta_1,\beta_2,\beta_3}\el=&\gc{\alpha_1,0,0,\beta_1,\beta_2,\beta_3}\el\\
=&
\gc{0,0,0,0,\beta_2-\beta_3,\beta_3-\beta_2}\el\cap\gc{2\alpha_1,0,0,2\beta_1,\beta_2+\beta_3,\beta_3+\beta_2}\el\\
=&\gc{0,0,0,0,1,-1}\el\cap\gc{2\frac{\alpha_1}{\beta_2+\beta_3},0,0,2\frac{\beta_1}{\beta_2+\beta_3},1,1}\el\\
=&\gc{\frac{\alpha_1}{\beta_2+\beta_3},0,0,\frac{\beta_1}{\beta_2+\beta_3},1,0}\el
\end{align*}
        \item $\alpha_2=-\alpha_3=0$ and $\beta_2=\beta_3\neq 0$. In this
    case we have
    $$\gc{\alpha_1,\alpha_2,\alpha_3,\beta_1,\beta_2,\beta_3}\el  =
    \gc{\alpha_1,0,0,\beta_1,\beta_2,\beta_2}\el=\gc{\frac{\alpha_1}{\beta_2},0,0,\frac{\beta_1}{\beta_2},1,1}\el.$$
\item $\alpha_2=-\alpha_3\neq0$ and $\beta_2-\beta_3\neq 0$. In this
    case we have:
\begin{align*}
\gc{\alpha_1,\alpha_2,\alpha_3,\beta_1,\beta_2,\beta_3}\el=&\gc{\alpha_1,\alpha_2,-\alpha_2,\beta_1,\beta_2,\beta_3}\el\\
=&
\gc{0,2\alpha_2,-2\alpha_2,0,\beta_2-\beta_3,\beta_3-\beta_2}\el\cap\gc{2\alpha_1,0,0,2\beta_1,\beta_2+\beta_3,\beta_3+\beta_2}\el\\
=&\gc{0,2\frac{\alpha_2}{\beta_2-\beta_3},-2\frac{\alpha_2}{\beta_2-\beta_3},0,1,-1}\el\cap\gc{2\frac{\alpha_1}{\beta_2+\beta_3},0,0,2\frac{\beta_1}{\beta_2+\beta_3},1,1}\el\\
=&\gc{\frac{\alpha_1}{\beta_2+\beta_3},\frac{\alpha_2}{\beta_2-\beta_3},-\frac{\alpha_2}{\beta_2-\beta_3},\frac{\beta_1}{\beta_2+\beta_3},1,0}\el
\end{align*}
\item $\alpha_2=-\alpha_3\neq0$ and $\beta_2=\beta_3\neq 0$. In this
    case we have:
\begin{align*}
\gc{\alpha_1,\alpha_2,\alpha_3,\beta_1,\beta_2,\beta_3}\el=&\gc{\alpha_1,\alpha_2,-\alpha_2,\beta_1,\beta_2,\beta_2}\el\\
=&
\gc{0,2\alpha_2,-2\alpha_2,0,0,0}\el\cap\gc{2\alpha_1,0,0,2\beta_1,2\beta_2,2\beta_2}\el\\
=&\gc{0,2,-2,0,0,0}\el\cap\gc{2\frac{\alpha_1}{\beta_2},0,0,2\frac{\beta_1}{\beta_2},2,2}\el\\
=&\gc{\frac{\alpha_1}{\beta_2},1,-1,\frac{\beta_1}{\beta_2},1,1}\el
\end{align*}
    \end{enumerate}
\item Suppose $\alpha_2 + \alpha_3 \neq 0 $ and $\beta_2 + \beta_3 = 0 $.
This case is a complete analogue of the previous one.
\item  Suppose $\alpha_2 + \alpha_3 \neq 0 $ and $\beta_2 + \beta_3 \neq 0 $. Then the following four
    cases are possible:
\begin{enumerate}
\item $\alpha_2-\alpha_3\neq0$ and $\beta_2-\beta_3\neq 0$. The
space $\gc{\alpha_1,\alpha_2,\alpha_3,\beta_1,\beta_2,\beta_3}\el$
is in this case equal to:
\begin{align*}
&
\gc{0,\alpha_2-\alpha_3,\alpha_3-\alpha_2,0,\beta_2-\beta_3,\beta_3-\beta_2}\el\cap\gc{2\alpha_1,\alpha_2+\alpha_3,\alpha_3+\alpha_2,2\beta_1,\beta_2+\beta_3,\beta_3+\beta_2}\el\\
=&\gc{0,\frac{\alpha_2-\alpha_3}{\beta_2-\beta_3},-\frac{\alpha_2-\alpha_3}{\beta_2-\beta_3},0,1,-1}\el\cap\gc{2\frac{\alpha_1}{\beta_2+\beta_3},\frac{\alpha_2+\alpha_3}{\beta_2+\beta_3},\frac{\alpha_2+\alpha_3}{\beta_2+\beta_3},2\frac{\beta_1}{\beta_2+\beta_3},1,1}\el\\
=&\gc{\frac{\alpha_1}{\beta_2+\beta_3},\frac{\alpha_2+\alpha_3}{2(\beta_2+\beta_3)}+\frac{\alpha_2-\alpha_3}{2(\beta_2-\beta_3)},\frac{\alpha_2+\alpha_3}{2(\beta_2+\beta_3)}-\frac{\alpha_2-\alpha_3}{2(\beta_2-\beta_3)},\frac{\beta_1}{\beta_2+\beta_3},1,0}\el
\end{align*}
\item $\alpha_2-\alpha_3\neq0$ and $\beta_2=\beta_3\neq 0$. The
space $\gc{\alpha_1,\alpha_2,\alpha_3,\beta_1,\beta_2,\beta_3}\el$
is in this case equal to:
\begin{align*}
&
\gc{0,\alpha_2-\alpha_3,\alpha_3-\alpha_2,0,0,0}\el\cap\gc{2\alpha_1,\alpha_2+\alpha_3,\alpha_3+\alpha_2,2\beta_1,2\beta_2,2\beta_2}\el\\
=&\gc{0,2,-2,0,0,0}\el\cap\gc{2\frac{\alpha_1}{\beta_2},\frac{\alpha_2+\alpha_3}{\beta_2},\frac{\alpha_2+\alpha_3}{\beta_2},2\frac{\beta_1}{\beta_2},2,2}\el\\
=&\gc{\frac{\alpha_1}{\beta_2},\frac{\alpha_2+\alpha_3}{2\beta_2}+1,\frac{\alpha_2+\alpha_3}{2\beta_2}-1,\frac{\beta_1}{\beta_2},1,1}\el
\end{align*}
\item $\alpha_2=\alpha_3\neq0$ and $\beta_2-\beta_3\neq 0$. The
space $\gc{\alpha_1,\alpha_2,\alpha_3,\beta_1,\beta_2,\beta_3}\el$
is in this case equal to:
\begin{align*}
&
\gc{0,0,0,0,\beta_2-\beta_3,\beta_3-\beta_2}\el\cap\gc{2\alpha_1,2\alpha_2,2\alpha_2,2\beta_1,\beta_2+\beta_3,\beta_3+\beta_2}\el\\
=&\gc{0,0,0,0,2,-2}\el\cap\gc{2\frac{\alpha_1}{\alpha_2},2,2,2\frac{\beta_1}{\alpha_2},\frac{\beta_2+\beta_3}{\alpha_2},\frac{\beta_2+\beta_3}{\alpha_2}}\el\\
=&\gc{\frac{\alpha_1}{\alpha_2},1,1,\frac{\beta_1}{\alpha_2},\frac{\beta_2+\beta_3}{2\alpha_2}+1,\frac{\beta_2+\beta_3}{2\alpha_2}-1}\el
\end{align*}
\item $\alpha_2=\alpha_3\neq 0$ and $\beta_2=\beta_3\neq 0$. In this
    case we have
    $$\gc{\alpha_1,\alpha_2,\alpha_3,\beta_1,\beta_2,\beta_3}\el  =
    \gc{\frac{\alpha_1}{\beta_2},\frac{\alpha_2}{\beta_2},\frac{\alpha_2}{\beta_2},\frac{\beta_1}{\beta_2},1,1}\el.$$
\end{enumerate}
\end{enumerate}
\end{proof}
\subsection{Twisted Cocycles of Low--dimensional Lie Algebras}
It may be more convenient, sometimes, to use different
distribution of the cocycle spaces
$\gc{\alpha_1,\alpha_2,\alpha_3,\beta_1,\beta_2,\beta_3}\el$ than
in Theorem~\ref{klass2}. Henceforth, we investigate mainly the
space $\gc{1,1,1,\la,\la,\la}\el$ which for $\la\neq 0$ fits in
the class $ \gc{\alpha,\beta,\beta,\gamma,1,1}\el$, with
$\alpha=\beta=1/\la,\,\gamma=1$. For $\la=0$, the space
$\gc{1,1,1,0,0,0}\el$ is a special case of the space
$\gc{\al,1,1,\beta,0,0}\el$, with $\al=1,\beta=0$. We also put
$\al=0,\,\beta=1,\,\gamma=\la$ into
$\gc{\alpha,\beta,\beta,\gamma,1,1}\el$ and investigate the space
$\gc{0,1,1,\la,1,1}\el$.
\begin{example}\label{examg342}
In Lemma~\ref{examg34} the Lie algebra $\g_{3,4}(a)$, $a\neq 0,\pm
1$ was introduced. We present the explicit form of the spaces
$\gc{1,1,1,\la,\la,\la}\g_{3,4}(a)$ and
$\gc{0,1,1,\la,1,1}\g_{3,4}(a)$. Let us define six
$\g_{3,4}(a)$--cochains
$t_1,\dots,t_4,t^\la_5,t^\la_6,\,\la\in\Com$ by listing their
non--zero commutation relations:
\begin{enumerate}[$t_1:$]
\item $t_1(e_1,e_3)=e_1$
\item $t_2(e_1,e_3)=e_2$
\item $t_3(e_2,e_3)=e_1$
\item $t_4(e_2,e_3)=e_2$
\item[$t^\la_5:$] $t^\la_5(e_1,e_2)=\la a e_2$, $t_5^\la(e_1,e_3)=(-\la a+a+1 )e_3$
\item[$t^\la_6:$] $t^\la_6(e_1,e_2)=-\la e_1$, $t_6^\la(e_2,e_3)=(-\la+ a +1) e_3$.
\end{enumerate}
Then we have:
\begin{itemize}
\item[$\cdot$]
$\gc{1,1,1,\la,\la,\la}\g_{3,4}(a)=
\Span_{\Com}\{t_1,\dots,t_4,t^\la_5,t^\la_6\}$
\item[$\cdot$]
$\gc{0,1,1,\la,1,1}\g_{3,4}(a)= \{0\}_{\la\neq
2,1+a,1+\frac{1}{a}}$
\item[$\cdot$]
$\gc{0,1,1,2,1,1}\g_{3,4}(a)= \Span_{\Com}\{t_1,t_4\}$
\item[$\cdot$]
$\gc{0,1,1,1+a,1,1}\g_{3,4}(a)= \Span_{\Com}\{t_3\}$
\item[$\cdot$]
$\gc{0,1,1,1+\frac{1}{a},1,1}\g_{3,4}(a)= \Span_{\Com}\{t_2\}$.
\end{itemize}
\end{example}

\begin{example}\label{excoc}
In Examples \ref{4dimder} and \ref{examg422} the explicit matrices
of $(\alpha,\beta,\gamma)$--derivations and the corresponding
invariant functions of $\g_{4,2}(a)$, $a\neq 0,\pm 1,-2$ were
presented. In this example, we calculate the explicit form of the
spaces $\gc{1,1,1,\la,\la,\la}\g_{4,2}(a)$ and
$\gc{0,1,1,\la,1,1}\g_{4,2}(a)$. We first define the following
nine $\g_{4,2}(a)$--cochains $b_1,\dots,b_9$ by listing their
non--zero commutation relations:
\begin{enumerate}[$b_1:$]
\item $b_1(e_1,e_4)=e_1$
\item $b_2(e_1,e_4)=e_2$
\item $b_3(e_1,e_4)=e_3$
\item $b_4(e_2,e_4)=e_1$
\item $b_5(e_2,e_4)=e_2$
\item $b_6(e_2,e_4)=e_3$
\item $b_7(e_3,e_4)=e_1$
\item $b_8(e_3,e_4)=e_2$
\item $b_9(e_3,e_4)=e_3$,
\end{enumerate}
and second three more $\g_{4,2}(a)$--cochains $b'_1,\dots,b'_3$:
\begin{enumerate}[$b'_1:$]
\item $b'_1(e_1,e_3)=2a e_1$, $b'_1(e_2,e_3)=(-1+a) e_3$, $b'_1(e_3,e_4)=(-1+a) e_4 $
\item $b'_2(e_1,e_2)=(1+a) e_2$, $b'_2(e_1,e_3)= e_3$
\item $b'_3(e_2,e_3)=e_1$.
\end{enumerate}
Moreover, for $\la\in\Com$, we define $\la$--parametric sets of
$\g_{4,2}(a)$--cochains:
\begin{enumerate}[$f^{\la}_1:$]
\item $f^\la_1(e_1,e_2)=\la a(\la a -a -1)(\la-2)^2 e_1,\ f^\la_1(e_1,e_3)=\la a(\la-2)^2 e_1$,\newline $f^\la_1(e_2,e_3)=2\la(\la a -a
-1)^2 e_2$, \newline  $f^\la_1(e_2,e_3)=\la (\la-2)(\la a -a
-1)^2e_3$,\newline$ f^\la_1(e_2,e_4)=(\la-2)^2 (\la a -a -1)^2 e_3
$
\item $f^\la_2(e_1,e_2)=\la (-\la +a +1)e_2,\ f^\la_2(e_1,e_3)=\la a e_2,
$\newline $f^\la_2(e_1,e_3)=\la (-\la +a +1)e_3,\
f^\la_2(e_1,e_4)=(-\la +a +1)^2e_4$
\item $f^\la_3(e_1,e_2)=-\la a(\la-2)^2 e_1,\ f^\la_3(e_2,e_3)=\la (2\la a
-2a-\la)e_2$,\newline $f^\la_3(e_2,e_3)=\la (\la-2)(-\la
a+a+1)e_3,\ f^\la_3(e_2,e_4)=\la (\la-2)^2(-\la
a+a+1)e_4$,\newline$ f^\la_3(e_2,e_3)= (\la-2)^2e_4
$
\end{enumerate}
Then it holds:
\begin{itemize}
\item[$\cdot$]
$\gc{1,1,1,\la,\la,\la}\g_{4,2}(a)=
\Span_{\Com}\{b_1,\dots,b_9,f^\la_1,f^\la_2,f^\la_3\}_{\la\neq
2,a+1,2/a }$
\item[$\cdot$]
$\gc{1,1,1,2,2,2}\g_{4,2}(a)=
\Span_{\Com}\{b_1,\dots,b_9,f^2_1,f^2_2,b'_1\}$
\item[$\cdot$]
$\gc{1,1,1,a+1,a+1,a+1}\g_{4,2}(a)=
\Span_{\Com}\{b_1,\dots,b_9,f^{a+1}_1,f^{a+1}_2,f^{a+1}_3,b'_2\}$
\item[$\cdot$]
$\gc{1,1,1,\frac{2}{a},\frac{2}{a},\frac{2}{a}}\g_{4,2}(a)=
\Span_{\Com}\{b_1,\dots,b_9,f^{2/a}_1,f^{2/a}_2,f^{2/a}_3,b'_3\}$.
\item[$\cdot$]
$\gc{0,1,1,\la,1,1}\g_{4,2}(a)= \{0\}_{\la\neq
2,1+a,1+\frac{1}{a}}$
\item[$\cdot$]
$\gc{0,1,1,2,1,1}\g_{4,2}(a)= \Span_{\Com}\{b_1,b_8,b_5+b_9\}$
\item[$\cdot$]
$\gc{0,1,1,1+a,1,1}\g_{4,2}(a)= \Span_{\Com}\{b_2\}$
\item[$\cdot$]
$\gc{0,1,1,1+\frac{1}{a},1,1}\g_{4,2}(a)= \Span_{\Com}\{b_7\}$

\end{itemize}
\end{example}
\section{Invariant Functions}
We directly generalize Theorem~\ref{tvr1}:
\begin{thm}\label{tvr1koho}
Let $g:\el \rightarrow \widetilde{\el}$ be an isomorphism of Lie
algebras $\el$ and $\widetilde{\el}$. Then the mapping $\rho
:C^q(\el,\el)\rightarrow C^q(\wt\el,\wt\el),\,q\in \N$ defined for
all $c\in C^q(\el,\el)$ and all $x_1,\dots,x_q\in \wt\el$ by $$
(\rho c)(x_1,\dots,x_q) = g c(g^{-1}x_1,\dots,g^{-1}x_q) $$ is an
isomorphism of vector spaces $C^q(\el,\el)$ and $
C^q(\wt\el,\wt\el)$. For any complex symmetric $(q+1)$--square
matrix $\kappa$
 $$\rho(Z^{q} (\el,\ad_\el,\kappa)) =
Z^{q} (\wt\el,\ad_{\wt\el},\kappa).$$
\end{thm}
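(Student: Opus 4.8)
The plan is to follow the pattern already established in the proof of Theorem~\ref{tvr1}, now at the level of $q$--cochains rather than single operators. First I would check that $\rho$ is a well--defined linear map. Since $g$ and $g^{-1}$ are linear and $c$ is $q$--linear, the composite $(\rho c)(x_1,\dots,x_q)=g\,c(g^{-1}x_1,\dots,g^{-1}x_q)$ is again $q$--linear; interchanging two arguments $x_i,x_j$ of $\rho c$ amounts to interchanging $g^{-1}x_i,g^{-1}x_j$ inside $c$, so the antisymmetry (cochain) condition is inherited and $\rho c\in C^q(\wt\el,\wt\el)$. Linearity of $\rho$ itself is immediate from the linearity of $g$.

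For bijectivity I would exhibit the inverse explicitly. Because $g^{-1}:\wt\el\map\el$ is again a Lie algebra isomorphism, the same recipe defines a map $\sigma:C^q(\wt\el,\wt\el)\map C^q(\el,\el)$ by $(\sigma \wt c)(x_1,\dots,x_q)=g^{-1}\wt c(gx_1,\dots,gx_q)$, and a direct substitution shows $\sigma\rho=\mathrm{id}$ and $\rho\sigma=\mathrm{id}$. Hence $\rho$ is an isomorphism of the vector spaces $C^q(\el,\el)$ and $C^q(\wt\el,\wt\el)$.

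The heart of the argument is the invariance of the twisted cocycle condition. Given $\wt c=\rho c$ and arbitrary $y_1,\dots,y_{q+1}\in\wt\el$, I would set $x_i=g^{-1}y_i$ and evaluate the defining equation~(\ref{koho1t}) for $\wt c$ term by term. Using the homomorphism property in both of its forms, $[y_i,y_j]=g[x_i,x_j]$ and $g^{-1}[y_i,y_j]=[x_i,x_j]$, each term collapses: the diagonal contribution becomes $[y_i,\wt c(y_1,\dots,\hat y_i,\dots,y_{q+1})]=g[x_i,c(x_1,\dots,\hat x_i,\dots,x_{q+1})]$, while the off--diagonal contribution becomes $\wt c([y_i,y_j],\dots)=g\,c([x_i,x_j],x_1,\dots,\hat x_i,\dots,\hat x_j,\dots,x_{q+1})$. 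Because the same entries $\kappa_{ii},\kappa_{ij}$ of the symmetric matrix $\kappa$ multiply the corresponding terms, the whole expression factors as $g$ applied to the defining combination~(\ref{koho1t}) for $c$ evaluated at $x_1,\dots,x_{q+1}$.

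Finally I would invoke injectivity of $g$: the transformed equation vanishes for all $y_i$ if and only if the inner combination vanishes for all $x_i=g^{-1}y_i$, and these range over all of $\el$. This yields the equivalence $c\in Z^q(\el,\ad_\el,\kappa)\iff\rho c\in Z^q(\wt\el,\ad_{\wt\el},\kappa)$, whence $\rho(Z^q(\el,\ad_\el,\kappa))=Z^q(\wt\el,\ad_{\wt\el},\kappa)$. The only genuine subtlety --- the step deserving care rather than the routine bookkeeping of indices, hats and signs --- is the \emph{simultaneous} use of the homomorphism identity in its two guises, so that $g$ can be pulled out of both the bracket terms and the cochain arguments uniformly; once this is observed, the symmetry of $\kappa$ guarantees that precisely the same matrix reappears on the $\wt\el$ side.
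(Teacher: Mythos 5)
Your proposal is correct and follows essentially the same route as the paper's proof: substitute $g^{-1}$ into the arguments of the defining equation~(\ref{koho1t}), pull $g$ through the bracket and the cochain using the homomorphism identity in both directions, and let the scalars $\kappa_{ii},\kappa_{ij}$ pass through by linearity. You merely make explicit a few steps the paper compresses --- the inherited antisymmetry of $\rho c$, the inverse map $\sigma$ built from $g^{-1}$, and the final biconditional via injectivity of $g$ --- which the paper dismisses with ``it is clear that $\rho$ is linear and bijective'' and an implicit symmetry argument.
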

\begin{proof}
Suppose we have $g:\el \rightarrow \wt\el$ such that for all
$x,y\in \widetilde{\el}$ $$[x,y]_{ \widetilde{\el}}=g
[g^{-1}x,g^{-1}y ]_{\el} $$ holds. It is clear that the map $\rho
:C^q(\el,\el)\rightarrow C^q(\wt\el,\wt\el),\,q\in \N$ is linear
and bijective, i.~e. it is an isomorphism of these vector spaces.
By putting $f=\ad_\el$ and rewriting definition (\ref{koho1t})  we
have for $c\in Z^{q} (\el,\ad_\el,\kappa)$ and all
$x_1,\dots,x_q\in \wt\el$
\begin{align}
\nonumber 0 &= \sum_{i=1}^{q+1}
(-1)^{i+1}\kappa_{ii}[g^{-1}x_i,c(g^{-1}x_1,\dots,\widehat{g^{-1}x_i},\dots,g^{-1}x_{q+1})]_{\el}+\\
\nonumber &+ \sum_{\begin{smallmatrix}
  i,j=1 \\
  i<j
\end{smallmatrix}}^{q+1} (-1)^{i+j}\kappa_{ij} c([g^{-1}x_i,g^{-1}x_j]_\el,g^{-1}x_1,\dots,\widehat{g^{-1}x_i},\dots,\widehat{g^{-1}x_j},\dots,g^{-1}x_{q+1})
\end{align}
Applying the mapping $g$ on this equation and taking into account
that $\kappa_{ij}\in \Com$ one has
\begin{align}\label{koho1adwt}
\nonumber 0 &= \sum_{i=1}^{q+1} (-1)^{i+1}\kappa_{ii}[x_i,(\rho
c)(x_1,\dots,\hat{x_i},\dots,x_{q+1})]_{\wt\el}+\\ \nonumber &+
\sum_{\begin{smallmatrix}
  i,j=1 \\
  i<j
\end{smallmatrix}}^{q+1} (-1)^{i+j}\kappa_{ij} (\rho c)([x_i,x_j]_{\wt\el},x_1,\dots,\hat{x_i},\dots,\hat{x_j},\dots,x_{q+1})
\end{align}
i.~e. $\rho c\in Z^{q} (\wt\el,\ad_{\wt\el},\kappa)$.
\end{proof}
\begin{cor}\label{invdim}
For any $q\in \N$ and any complex symmetric $(q+1)$--square matrix
$\kappa$ is the dimension of the vector space
$Z^{q}(\el,\ad_\el,\kappa)$ an invariant characteristic of Lie
algebras.
\end{cor}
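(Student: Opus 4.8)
The plan is to derive the statement directly from Theorem~\ref{tvr1koho}, exactly as Corollary~\ref{corgender} was obtained from Theorem~\ref{tvr1}. Recall from~(\ref{invvv}) that to exhibit $\dim Z^{q}(\el,\ad_\el,\kappa)$ as an invariant characteristic, it suffices to check that the assignment $[\el]\mapsto\dim Z^{q}(\el,\ad_\el,\kappa)$ is well defined on isomorphism classes, i.e. that $\el\cong\wt\el$ forces $\dim Z^{q}(\el,\ad_\el,\kappa)=\dim Z^{q}(\wt\el,\ad_{\wt\el},\kappa)$; the target set is then $M=\N_0$, so the resulting invariant is even numerical.

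First I would fix an isomorphism $g:\el\map\wt\el$ and pass to the induced map $\rho:C^q(\el,\el)\map C^q(\wt\el,\wt\el)$ of Theorem~\ref{tvr1koho}. That theorem supplies two facts simultaneously: $\rho$ is an isomorphism of the ambient cochain spaces, and it satisfies $\rho(Z^{q}(\el,\ad_\el,\kappa))=Z^{q}(\wt\el,\ad_{\wt\el},\kappa)$. The restriction of $\rho$ to the subspace $Z^{q}(\el,\ad_\el,\kappa)$ is therefore injective (being the restriction of an injective linear map) and, by the displayed equality, surjective onto $Z^{q}(\wt\el,\ad_{\wt\el},\kappa)$. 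Hence this restriction is a linear isomorphism between the two cocycle spaces.

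Since linear isomorphisms preserve dimension, I would conclude $\dim Z^{q}(\el,\ad_\el,\kappa)=\dim Z^{q}(\wt\el,\ad_{\wt\el},\kappa)$, which is precisely the required well-definedness. There is essentially no obstacle to overcome at this level: all of the genuine work---constructing $\rho$ and verifying that it intertwines the two twisted-cocycle conditions through the scalar factors $\kappa_{ij}\in\Com$---has already been carried out in the proof of Theorem~\ref{tvr1koho}. The only point deserving a word is that one must apply the theorem for the \emph{same} matrix $\kappa$ on both sides, which is legitimate because $\kappa$ is a fixed datum independent of the choice of the Lie algebra.
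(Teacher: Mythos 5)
Your proposal is correct and follows exactly the route the paper intends: the corollary is stated as an immediate consequence of Theorem~\ref{tvr1koho}, whose equality $\rho(Z^{q}(\el,\ad_\el,\kappa))=Z^{q}(\wt\el,\ad_{\wt\el},\kappa)$ together with the fact that $\rho$ is a linear isomorphism of the cochain spaces yields the equality of dimensions, hence well-definedness on isomorphism classes. Your added remark that the same fixed $\kappa$ must be used on both sides is a fair point of care but does not change the argument.
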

Sixteen parametric spaces in Theorem~\ref{klass2} allow us to
define sixteen invariant functions of up to four variables.
However, a complete analysis of possible outcome is beyond the
scope of this work. Rather empirically, following calculations in
dimension four and eight, we pick up two one--parametric sets of
vector spaces to define two new invariant functions of a
$n$--dimensional Lie algebra $\el$. We call functions
$\fb\el,\fc\el:\Com \rightarrow \{0,1,\dots,n^2(n-1)/2\}$ defined
by the formulas
\begin{align}
(\fb\el)(\alpha) =& \dim\gc{1,1,1,\alpha,\alpha,\alpha}{\el}\\
(\fc\el)(\alpha) =& \dim\gc{0,1,1\alpha,1,1}{\el}
\end{align}
the {\bf invariant functions} corresponding to two--dimensional
twisted cocycles of the adjoint representation of a Lie algebra
$\el$.

From Theorem~\ref{tvr1koho} follows immediately:
\begin{cor}\label{invarfunc2} If two complex Lie algebras $\el,\wt\el$ are
isomorphic, $\el \cong \widetilde{\el}$, then it holds:
\begin{enumerate}
\item $\fb\el = \fb\wt\el$,
\item $\fc\el = \fc\wt\el$.
\end{enumerate}
\end{cor}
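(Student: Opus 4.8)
The plan is to obtain this as an immediate specialization of Theorem~\ref{tvr1koho}, exactly parallel to the way Corollary~\ref{invarfunc} follows from Theorem~\ref{tvr1}. First I would recall that, by the very definition of the invariant functions together with the notation introduced in~(\ref{kohonot}), the values $(\fb\el)(\alpha)=\dim\gc{1,1,1,\alpha,\alpha,\alpha}{\el}$ and $(\fc\el)(\alpha)=\dim\gc{0,1,1,\alpha,1,1}{\el}$ are nothing but the dimensions of the two--dimensional twisted cocycle spaces $Z^2(\el,\ad_\el,\kappa)$ for the two specific complex symmetric $3$--square matrices $\kappa$ read off from~(\ref{kohonot}).

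Next, since $\el\cong\wt\el$, I would fix an isomorphism $g:\el\map\wt\el$ and invoke Theorem~\ref{tvr1koho} with $q=2$: this supplies a vector--space isomorphism $\rho:C^2(\el,\el)\map C^2(\wt\el,\wt\el)$ such that, for every complex symmetric $3$--square matrix $\kappa$,
$$\rho\big(Z^2(\el,\ad_\el,\kappa)\big)=Z^2(\wt\el,\ad_{\wt\el},\kappa).$$
Because $\rho$ is a linear bijection, it restricts to an isomorphism between these two cocycle spaces, and hence they have equal dimension.

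Finally, I would apply this with the two families of matrices corresponding to the parameter strings $(1,1,1,\alpha,\alpha,\alpha)$ and $(0,1,1,\alpha,1,1)$, letting $\alpha$ range over $\Com$. This yields $(\fb\el)(\alpha)=(\fb\wt\el)(\alpha)$ and $(\fc\el)(\alpha)=(\fc\wt\el)(\alpha)$ for every $\alpha\in\Com$, i.e. $\fb\el=\fb\wt\el$ and $\fc\el=\fc\wt\el$ as functions on $\Com$.

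I expect no genuine obstacle here, since the statement is a direct corollary of the preceding theorem. The only points deserving a word of care are to read off correctly the matrices $\kappa$ from the bracket notation~(\ref{kohonot}), and to note explicitly that $\rho$ is an \emph{isomorphism} of vector spaces and not merely a linear injection, so that equality of dimensions --- and not just an inequality --- follows.
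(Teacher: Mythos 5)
Your proposal is correct and is essentially the paper's own argument: the paper derives Corollary~\ref{invarfunc2} immediately from Theorem~\ref{tvr1koho} (via the dimension invariance recorded in Corollary~\ref{invdim}), specializing to $q=2$ and the symmetric matrices $\kappa$ encoded by the parameter strings $(1,1,1,\alpha,\alpha,\alpha)$ and $(0,1,1,\alpha,1,1)$ in the notation~(\ref{kohonot}), exactly as you do. Your explicit remarks --- that $\rho$ is a genuine vector--space isomorphism (so dimensions are equal, not merely comparable) and that the $\kappa$'s must be read off correctly from~(\ref{kohonot}) --- only spell out what the paper leaves implicit.
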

\subsection{Invariant Functions $\fb$ and $\fc$ of Low--dimensional Lie Algebras}
We now investigate the behaviour of the functions $\fb,\,\fc$ in
dimensions three and four.
\begin{example}\label{}
In Example \ref{examg342} we found the explicit form of the spaces
$\gc{1,1,1,\la,\la,\la}\g_{3,4}(a)$ and
$\gc{0,1,1,\la,1,1}\g_{3,4}(a)$, where $a\neq 0,\pm 1$. Observing
these spaces we immediately have: \vspace{0pt}
\begin{equation}\label{phi03}
\begin{tabular}[c]{|l||c|c|c|}
\hline  \parbox[l][20pt][c]{0pt}{}  $\alpha$ &   \\ \hline
\parbox[l][20pt][c]{0pt}{} $\fb\g_{3,4}(a)(\alpha)$ &6\\
\hline
\end{tabular}\qquad
\begin{tabular}[c]{|l||c|c|c|c|}
\hline  \parbox[l][20pt][c]{0pt}{}  $\alpha$ & 2&$1+a$ &
$1+\frac{1}{a}$&
\\ \hline
\parbox[l][20pt][c]{0pt}{} $\fc\g_{3,4}(a)(\alpha)$ &2&1&1&0\\
\hline
\end{tabular}
\end{equation}
\end{example}
Note that the function $\fc$ behaves like the function $\fa$ on
$\g_{3,4}(a)$. This fact allows us to derive a quite interesting
fact -- the function $\fc$ also classifies three--dimensional Lie
algebras:
\begin{lemma}\label{examg34phi}
For Lie algebra $\g_{3,4}(a),\,a\neq 0,\pm 1$ from
Lemma~\ref{examg34} it holds: if $\fc\g_{3,4}(a)=\fc\g_{3,4}(a')$
then $\g_{3,4}(a)\cong \g_{3,4}(a')$.
\end{lemma}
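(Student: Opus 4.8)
The plan is to mimic the proof of Lemma~\ref{examg34}, reading off the discriminating data directly from the explicit table of $\fc$ established in Example~\ref{examg342}. First I would recall from~(\ref{phi03}) that the function $\fc\g_{3,4}(a)$ takes the value $2$ at $\alpha=2$, the value $1$ at the two points $\alpha=1+a$ and $\alpha=1+\frac{1}{a}$, and vanishes for every other $\alpha\in\Com$.

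The key observation is that, under the standing hypothesis $a\neq 0,\pm 1$, the three complex numbers $2,\,1+a,\,1+\frac{1}{a}$ are pairwise distinct: indeed $1+a=1+\frac{1}{a}$ would force $a^2=1$, while $1+a=2$ or $1+\frac{1}{a}=2$ would force $a=1$, all of which are excluded. Hence the locus $\{\alpha\in\Com\mid \fc\g_{3,4}(a)(\alpha)=1\}$ equals exactly $\{1+a,\,1+\frac{1}{a}\}$, and the function $\fc\g_{3,4}(a)$ genuinely records the unordered pair $\{a,\frac{1}{a}\}$.

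Next I would take any $a'\neq 0,\pm 1$ with $\fc\g_{3,4}(a)=\fc\g_{3,4}(a')$. Comparing the value-$1$ loci of the two functions gives $\{1+a,1+\frac{1}{a}\}=\{1+a',1+\frac{1}{a'}\}$, hence $\{a,\frac{1}{a}\}=\{a',\frac{1}{a'}\}$, so either $a=a'$ or $a=\frac{1}{a'}$. In the first case the claim is immediate; in the second, the already verified isomorphism~(\ref{izo3a}), $\g_{3,4}(a)\cong\g_{3,4}(1/a)$, yields $\g_{3,4}(a)\cong\g_{3,4}(a')$.

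There is essentially no hard step here: once the table in Example~\ref{examg342} is granted, the argument is the same bookkeeping as in Lemma~\ref{examg34}, with $\fc$ playing the role of $\fa$. The only point requiring a moment's care --- the one I would make explicit --- is the distinctness of $2,\,1+a,\,1+\frac{1}{a}$, which guarantees that the two value-$1$ points are not accidentally merged with each other nor absorbed into the value-$2$ point, so that $\fc$ faithfully encodes $\{a,\frac{1}{a}\}$.
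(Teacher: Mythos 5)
Your proposal is correct and follows essentially the same route as the paper: the paper's own proof consists of the single remark that, observing the table~(\ref{phi03}), the argument is completely analogous to that of Lemma~\ref{examg34}, which is exactly the bookkeeping you carry out (read off $\{1+a,1+\tfrac{1}{a}\}$ from the value-$1$ locus, conclude $a'=a$ or $a'=\tfrac{1}{a}$, and invoke the isomorphism~(\ref{izo3a})). Your explicit check that $2$, $1+a$, $1+\tfrac{1}{a}$ are pairwise distinct for $a\neq 0,\pm 1$ is a detail the paper leaves implicit, and it is verified correctly.
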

\begin{proof}
Observing (\ref{phi03}), we conclude that the proof is completely
analogous to the proof of Lemma \ref{examg34}.
\end{proof}

\begin{thm}[Classification of three--dimensional complex Lie
algebras II]\label{class3dim2}$\,$\newline
  Two three--dimensional complex Lie
  algebras $\el$ and $\wt\el$ are isomorphic if and only if
 \newline $\fc\el=\fc\wt\el$.
\end{thm}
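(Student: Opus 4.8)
The plan is to follow the proof of Theorem~\ref{class3dim} verbatim, replacing the invariant function $\fa$ by $\fc$ at every step. The forward implication is immediate: if $\el\cong\wt\el$, then Corollary~\ref{invarfunc2} gives $\fc\el=\fc\wt\el$. So the entire content lies in the converse, and there the strategy is to separate the single parametric continuum from the finitely many remaining isomorphism classes.

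For the converse I would first dispose of the continuum. Among three--dimensional complex Lie algebras the only infinite family is $\g_{3,4}(a)$, $a\neq 0,\pm1$, and Lemma~\ref{examg34phi} already establishes that $\fc\g_{3,4}(a)=\fc\g_{3,4}(a')$ forces $\g_{3,4}(a)\cong\g_{3,4}(a')$ (using the isomorphism $\g_{3,4}(a)\cong\g_{3,4}(1/a)$, exactly as in Lemma~\ref{examg34}). This settles every pair of algebras lying inside the continuum. It then remains to treat the finitely many classes outside it --- the abelian algebra $3\g_1$, the algebras $\g_{3,1},\g_{3,2},\g_{3,3}$ and $\slp(2,\Com)$ --- and to separate each of these from $\g_{3,4}(a)$. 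For this I would compute the table of $\fc$ for each algebra from its explicit twisted--cocycle spaces $\gc{0,1,1,\la,1,1}{\el}$, in the same manner as was done for $\g_{3,4}(a)$ in Example~\ref{examg342}, and collect the results in Appendix~\ref{IFLJL}. Inspecting these finitely many tables, one verifies that they are pairwise distinct and that none coincides with any $\fc\g_{3,4}(a)$; here the characteristic shape of the continuum table (value $2$ at the single point $\alpha=2$ and value $1$ at the two distinct further points $\alpha=1+a$ and $\alpha=1+1/a$) makes membership in the continuum itself recognisable from the table. Combining the continuum case with this finite check yields that equality of $\fc$ implies isomorphism.

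The main obstacle I anticipate is precisely this final bookkeeping. Unlike $\fa$, whose completeness on three--dimensional algebras was already secured in Theorem~\ref{class3dim}, the function $\fc$ must be shown \emph{independently} never to collide between two non--isomorphic algebras: one has to confirm case by case, from the explicit cocycle spaces, that the non--parametric algebras are mutually separated and are not confused with any member of $\g_{3,4}(a)$. The analogy observed after Example~\ref{examg342} --- that on $\g_{3,4}(a)$ the function $\fc$ behaves like $\fa$ --- makes it very plausible that the tables separate, but this is a genuine finite computation rather than a formal consequence, and it is the part of the argument that actually requires the explicit data tabulated in Appendix~\ref{IFLJL}.
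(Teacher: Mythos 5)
Your proposal is correct and takes essentially the same route as the paper's own proof: the forward implication via Corollary~\ref{invarfunc2}, and the converse by combining Lemma~\ref{examg34phi} for the continuum $\g_{3,4}(a)$ with the observation that the finitely many tables of $\fc$ in Appendix~\ref{IFLJL} are mutually different. Your extra discussion of the characteristic shape of the continuum's table (value $2$ at $\alpha=2$, value $1$ at the distinct points $1+a$ and $1+1/a$) merely makes explicit the finite verification the paper compresses into ``observing the tables.''
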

\begin{proof}
$\Rightarrow:$ See Corollary~\ref{invarfunc2}.

$\Leftarrow:$ According to Lemma~\ref{examg34phi} and observing
the tables of $\fc$ in Appendix~\ref{IFLJL}, we conclude that all
tables of the invariant function~$\fc$ of non--isomorphic
three-dimensional complex Lie algebras are mutually different.
\end{proof}

\begin{example}\label{examg4223}
In Example \ref{excoc} we found the explicit form of the spaces
$\gc{1,1,1,\la,\la,\la}\g_{4,2}(a)$ and
$\gc{0,1,1,\la,1,1}\g_{4,2}(a)$, where $a\neq 0\pm1,-2$. Thus, the
invariant function $\fb$ of the algebra $\g_{4,2}(a)$ has the
following form:
\begin{equation*}
\begin{tabular}[c]{|l||c|c|c|}
\hline  \parbox[l][20pt][c]{0pt}{}  $\alpha$ & $1+a$ &
$\frac{2}{a}$ &
\\ \hline
\parbox[l][20pt][c]{0pt}{} $\fb\g_{4,2}(a)(\alpha)$ & 13  & 13 & 12\\
\hline
\end{tabular}
\end{equation*}
and the function $\fc$ has the form:
\begin{center}\vspace{-12pt}
\begin{tabular}[t]{|l||c|c|c|c|}
\hline \parbox[l][20pt][c]{0pt}{}   $\alpha$ & 2 & $1+a$ &
$1+\frac{1}{a}$ &
\\ \hline
\parbox[l][20pt][c]{0pt}{} $\fc\g_{4,2}(a)(\alpha)$ & 3 & 1 & 1 & 0 \\ \hline
\end{tabular}
\end{center}
\end{example}
Similar calculations as in the above examples lead us to the
tables of the functions $\fb$ and $\fc$ of all three and
four--dimensional Lie algebras. These tables are placed in
Appendix~\ref{IFLJL}.

\subsection{Classification of Four--dimensional Complex Lie
Algebras}\label{classfourdim} Theorem~\ref{class3dim} (or
\ref{class3dim2}) provided complete classification of
three--dimensional complex Lie algebras. We show in this section
that combined power of the functions $\fa$ and $\fb$ distinguishes
among all complex four--dimensional Lie algebras. The most
challenging problem in the identification process in any dimension
is to describe the parametric continua. The first natural question
is, whether the table of an invariant function of some parametric
continuum cannot 'degenerate' for some special value of the
parameter. For example, examining the table of $\fa
\g_{4,5}(a,-1-a)$ labeled by (g-\ref{g45am1ma}) in
Appendix~\ref{IFLJL}, we see that generally the value $5$ appears
six times there. Is it possible that for some special value of
parameter $a$ this table has different shape? We devote to this
problem the following part of this section. In order to be more
precise, let us firstly define the {\bf number of occurrences} of
$j\in\Com$ in a complex function~$f$. Let $j$ be in the range of
values of $f$. If there exist only finitely many mutually distinct
numbers $x_1,\dots,x_m\in \Com$ for which $f(x_1)=\dots=f(x_m)=j$
holds then we write $$f:j_m$$ and say that $j$ {\bf occurs} in $f$
$m$--times; otherwise we write $f:j$.
\begin{example}Consider the continuum
$ \g_{4,5}(a,-1-a)$, where $a\neq 0,\pm 1,-2
,-\frac{1}{2},-\frac{1}{2}\pm i\frac{\sqrt{3}}{2}  $, labeled by
(g-\ref{g45am1ma}) in Appendix~\ref{IFLJL}. In order to verify
that
\begin{align*} \fa\g_{4,5}(a,-1-a):\, & 6_1,5_6,4\\
\fb\g_{4,5}(a,-1-a):\, & 15_1,12
\end{align*}
we have to check for solutions each of 15 possible equalities
\begin{align*}
a=&\frac{1}{a}\\ a=&-1-a\\ a=&-\frac{a}{a+1}\\ &\vdots \\
-\frac{a}{1+a}=&-\frac{a+1}{a}.
\end{align*}
These equations have all solutions in the set $$\left\{0,\pm 1,-2
,-\frac{1}{2},-\frac{1}{2}\pm i\frac{\sqrt{3}}{2}\right\}$$ --
these values we excluded from the beginning.
\end{example}
Analogous calculations allow us to conclude:

\begin{lemma}\label{lemg441}
For the following complex four--dimensional Lie algebras defined
in Appendix \ref{IFLJL} it holds:
\begin{itemize}\item[]
\begin{itemize}
\item[(g-\ref{g34})] $\g_{3,4}(a)\oplus\g_1,\, a\neq 0,\pm 1$
\begin{align*} \fa\g_{3,4}(a)\oplus\g_1:\, & 7_1,6_3,5\\
\fb\g_{3,4}(a)\oplus\g_1:\, & 13_3,12
\end{align*}
\item[(g-\ref{g42})] $\g_{4,2}(a),\, a\neq 0,\pm 1,-2$
\begin{align*}\fa\g_{4,2}(a):\,& 6_1,5_2,4\\
\fb\g_{4,2}(a):\,&13_2,12
 \end{align*}
\item[(g-\ref{g45ab})] $\g_{4,5}(a,b),\, a\neq 0,\pm 1,\pm
b,\frac{1}{b},b^2,-1-b, $   $\ b\neq 0,\pm 1,\pm
a,\frac{1}{a},a^2,-1-a$
\begin{align*}\fa\g_{4,5}(a,b):\,& 6_1,5_6,4\\
\fb\g_{4,5}(a,b):\,&13_3,12
 \end{align*}
\item[(g-\ref{g45am1ma})] $\g_{4,5}(a,-1-a),\, a\neq 0,\pm 1,-2,-\frac{1}{2},-\frac{1}{2}\pm \frac{\sqrt{3}}{2}i$
\begin{align*} \fa\g_{4,5}(a,-1-a):\, & 6_1,5_6,4\\
\fb\g_{4,5}(a,-1-a):\, & 15_1,12
\end{align*}
\item[(g-\ref{g45aas})] $\g_{4,5}(a,a^2),\, a\neq 0,\pm 1,\pm i,-\frac{1}{2}\pm \frac{\sqrt{3}}{2}i$
\begin{align*} \fa\g_{4,5}(a,a^2):\, & 6_3,5_2,4\\
\fb\g_{4,5}(a,a^2):\, & 13_3,12
\end{align*}
\item[(g-\ref{g45a1})] $\g_{4,5}(a,1),\, a\neq 0,\pm 1,-2$
\begin{align*} \fa\g_{4,5}(a,1):\, & 8_1,6_2,4\\
\fb\g_{4,5}(a,1):\, & 15_1,13_1,12
\end{align*}
\item[(g-\ref{g45am1})] $\g_{4,5}(a,-1),\, a\neq 0,\pm 1,\pm i$
\begin{align*} \fa\g_{4,5}(a,-1):\, & 6_2,5_4,4\\
\fb\g_{4,5}(a,-1):\, & 16_1,13_2,12
\end{align*}
\item[(g-\ref{g48})] $\g_{4,8}(a),\, a\neq 0,\pm 1,\pm 2,\pm \frac{1}{2},-\frac{1}{2}\pm \frac{\sqrt{3}}{2}i$
\begin{align*} \fa\g_{4,8}(a):\, & 5_1,4_3,3\\
\fb\g_{4,8}(a):\, & 12_5,11
\end{align*}
\end{itemize}
\end{itemize}
\end{lemma}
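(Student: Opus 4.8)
The plan is to treat all eight families by a single uniform reduction to a finite distinctness check, exactly patterned on the example preceding the statement. For each algebra I would first read off from the explicit tables of $\fa$ and $\fb$ in Appendix~\ref{IFLJL} the finite list of \emph{special arguments} --- rational expressions in the parameter $a$ (or in $a,b$) --- at which $\dim\gd{\alpha,1,1}{\el}$, respectively $\dim\gc{1,1,1,\alpha,\alpha,\alpha}{\el}$, exceeds its generic value, together with the non-generic value attained at each. These special arguments are precisely the quantities $1,a,1/a,b,1/b,a/b,-1-a,\dots$ appearing in the tables.

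By the definition of the number of occurrences, the assertion $f:j_m$ is the claim that the value $j$ is attained at exactly $m$ mutually distinct arguments, while an unsubscripted $j$ records the generic value, attained on a cofinite set. Thus, for each family and each non-generic value, I would list the special arguments producing that value and verify that they are pairwise distinct for every admissible value of the parameter(s). Since each pair of special arguments is a pair of rational functions of $a$ (or $a,b$), equating two of them yields a polynomial equation; the required distinctness amounts to showing that every solution of every such equation lies in the finite exceptional set already excluded by the hypotheses. That the generic value is attained infinitely often, and so carries no subscript, is immediate once the finitely many special arguments have been enumerated.

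The substance of the proof --- and its only real obstacle --- is this case-by-case elimination of coincidences. For the two-parameter family $\g_{4,5}(a,b)$ one runs through all equalities among the special arguments (fifteen of them, as in the worked example) and checks that each forces $a$ or $b$ into the listed exceptional set. The four specialized subfamilies $\g_{4,5}(a,a^2)$, $\g_{4,5}(a,1)$, $\g_{4,5}(a,-1)$ and $\g_{4,5}(a,-1-a)$ require the most care: here the substitution collapses arguments that were distinct in the generic case, and the resulting coincidences are exactly what alters the occurrence multiplicities --- for instance, the three arguments yielding $13$ under $\fb$ in the generic case merge, under $b=-1-a$, into a single argument where the space is strictly larger, producing the pattern $15_1$ in place of $13_3$. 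I would track each such collapse explicitly. The remaining families $\g_{3,4}(a)\oplus\g_1$, $\g_{4,2}(a)$ and $\g_{4,8}(a)$ involve only the single parameter $a$ and short lists such as $1,a,1/a$, so their distinctness checks are immediate and essentially identical to the computations already carried out in the earlier examples.

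Once distinctness (and the appropriate mergers) have been verified in every case, the occurrence counts can be read directly from the tables and the lemma follows. No structural input beyond Appendix~\ref{IFLJL} and the invariance Corollaries~\ref{invarfunc} and~\ref{invarfunc2} is needed; the entire content is the bookkeeping of the roots of the explicit rational equations.
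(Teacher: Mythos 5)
Your proposal is correct and takes essentially the same route as the paper: the paper's proof consists of the single phrase ``analogous calculations'' following the worked example for $\g_{4,5}(a,-1-a)$, in which exactly your reduction is carried out --- the special arguments are read from the tables in Appendix~\ref{IFLJL}, all pairwise equalities among them (fifteen in that case) are solved, and every solution is checked to lie in the excluded exceptional set, with the mergers in the specialized subfamilies already recorded in their separate Appendix tables. Nothing in your plan deviates from or falls short of this.
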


\begin{example}\label{continuum2}
Suppose we have two--parametric family of Lie algebras
$\g_{4,5}(a,b)$ and the conditions $a\neq 0,\pm 1,\pm
b,\frac{1}{b},b^2,-1-b, $ $\ b\neq 0,\pm 1,\pm
a,\frac{1}{a},a^2,-1-a$ are satisfied. Then the function $\fb$ has
the form: \begin{equation}\label{tablegen}
\begin{tabular}[c]{|l||c|c|c|c|}
\hline \parbox[l][20pt][c]{0pt}{}   $\alpha$ &  $a+b$ &
$\frac{1+a}{b}$ & $\frac{1+b}{a}$ &
\\ \hline
\parbox[l][20pt][c]{0pt}{} $\fb\g_{4,5}(a,b)(\alpha)$ & 13 & 13 & 13& 12  \\ \hline
\end{tabular}
\end{equation}
We discuss now the following question: Is it possible to recover
the exact values of parameters $a,b$ if only the function $\fb$ of
some algebra in $\g_{4,5}(a,b)$ is known? If we have some member
of the family $\g_{4,5}(a',b')$, its function $\fb$ is of the
general form
\begin{equation}\label{tablez}
\begin{tabular}[c]{|l||c|c|c|c|}
\hline \parbox[l][20pt][c]{0pt}{}   $\alpha$ &  $z_1$ & $z_2$ &
$z_3$ &
\\ \hline
\parbox[l][20pt][c]{0pt}{} $\fb\g_{4,5}(a',b')(\alpha)$ & 13 & 13 & 13& 12  \\ \hline
\end{tabular}
\end{equation}
where $z_k\in\Com$ and $z_k\neq -1,\, k=1,2,3$. There are $3!=6$
possibilities of correspondence (permutations) between
(\ref{tablegen}) and (\ref{tablez}). One of them imply the system
of equations
\begin{equation}\label{sysss1}
\frac{1+a'}{b'}=z_1, \q  \frac{1+b'}{a'}=z_2, \q a'+b'=z_3 .
\end{equation}
with the solution
\begin{equation}\label{solsyss1}
a'=\frac{z_3+1}{z_2+1},\q b'=\frac{z_2 z_3 -1}{z_2 +1 }.
\end{equation}
We prove later that the other five permutations lead, in fact, to
the isomorphic realizations of the algebra $\g_{4,5}(a',b')$.
\end{example}
\begin{lemma}\label{lemg44}
For the four--dimensional Lie algebras from Lemma \ref{lemg441} it
holds:
\begin{itemize}\item[]
\begin{itemize}
\item[(g-\ref{g34})] If $\fa\g_{3,4}(a)\oplus \g_1=\fa\g_{3,4}(a')\oplus \g_1$ then $a'=a,\frac{1}{a}$.
\item[(g-\ref{g42})] If $\fb\g_{4,2}(a)=\fb\g_{4,2}(a')$ then $a'=a$.
\item[(g-\ref{g45ab})] If $\fb\g_{4,5}(a,b)=\fb\g_{4,5}(a',b')$ then $$(a',b')=(a,b),(b,a),\left(\frac{1}{a},\frac{b}{a}\right),\left(\frac{b}{a},\frac{1}{a}\right),
\left(\frac{1}{b},\frac{a}{b}\right),\left(\frac{a}{b},\frac{1}{b}\right)
.$$
\item[(g-\ref{g45am1ma})] If $\fa\g_{4,5}(a,-1-a)=\fa\g_{4,5}(a',-1-a')$ then
$$a'=a,\frac{1}{a},-\frac{a}{1+a},-1-\frac{1}{a},-1-a,-\frac{1}{1+a}.$$
\item[(g-\ref{g45aas})] If $\fa\g_{4,5}(a,a^2)=\fa\g_{4,5}(a',{a'}^2) $ then
$a'=a,\frac{1}{a}$.
\item[(g-\ref{g45a1})] If $\fb\g_{4,5}(a,1)=\fb\g_{4,5}(a',1)$ then $a'=a$.
\item[(g-\ref{g45am1})] If $\fb\g_{4,5}(a,-1)=\fb\g_{4,5}(a',-1)$ then $a'=a,-a$.
\item[(g-\ref{g48})] If $\fa\g_{4,8}(a)=\fa\g_{4,8}(a')$ then $a'=a,\frac{1}{a}$.
\end{itemize}
\end{itemize}
\end{lemma}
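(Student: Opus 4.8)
The plan is to exploit the fact that, by the explicit computations in Examples~\ref{excoc} and~\ref{examg4223} and the tables collected in Appendix~\ref{IFLJL}, each of the invariant functions $\fa,\fb$ appearing in Lemma~\ref{lemg441} equals a fixed generic value at all but finitely many arguments, the exceptional arguments being explicit rational functions of the parameters. Lemma~\ref{lemg441} records, for each family, the non-generic values of the invariant function together with their numbers of occurrences; e.g. $\fb\g_{4,2}(a):13_2,12$ means $\fb$ takes the value $13$ at exactly two arguments and $12$ elsewhere. If two members of a family have equal invariant functions, then for every non-generic value $j$ the finite set of arguments at which the function equals $j$ must coincide. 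I would turn each such coincidence of finite sets into a system of equations in the primed parameter(s) and solve it; the solution set should be exactly the list claimed in the lemma, every other apparent solution lying in the excluded parameter values and hence being discarded.

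I would first dispose of the cases in which the matching is forced or nearly forced. For (g-\ref{g45a1}) the non-generic values $15$ and $13$ each occur at a single argument, so the matching of exceptional arguments is unambiguous and yields $a'=a$ at once. For (g-\ref{g42}) the value $13$ occurs at the two arguments $1+a$ and $2/a$ (Example~\ref{examg4223}); equating $\{1+a,2/a\}=\{1+a',2/a'\}$ splits into two cases, one giving $a'=a$ and the other forcing $a^2+a-2=0$, i.e. an excluded value $a\in\{1,-2\}$. For (g-\ref{g34}), (g-\ref{g45aas}) and (g-\ref{g48}) the analogous two-element matching produces $a'=a$ or $a'=1/a$, the second alternative appearing precisely because the relevant pair of exceptional arguments is invariant under $a\mapsto 1/a$.

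The substance of the lemma lies in the families with several exceptional arguments of equal value, which must be matched by an arbitrary bijection. For (g-\ref{g45ab}) the three arguments at which $\fb\g_{4,5}(a,b)=13$ are $a+b,\ (1+a)/b,\ (1+b)/a$ (Example~\ref{continuum2}); matching them against the triple for $(a',b')$ gives $3!=6$ systems of two equations in two unknowns, one of which is already solved in Example~\ref{continuum2}. I would solve the remaining five and observe that the six resulting pairs are exactly the orbit of $(a,b)$ under the order-six group generated by $(a,b)\mapsto(b,a)$ and $(a,b)\mapsto(1/a,b/a)$: these two transformations merely permute the three exceptional arguments, so the unordered triple determines $(a,b)$ up to this action, and the orbit is precisely the list in the lemma. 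For (g-\ref{g45am1ma}) and (g-\ref{g45am1}) I would proceed the same way with the six arguments at which $\fa=5$, respectively the exceptional arguments of $\fb$ (the value $16$ pinning one and the value $13$ a matched pair), and recognise the candidate values of $a'$ as the orbit of $a$ under the anharmonic group generated by $a\mapsto 1/a$ and $a\mapsto -1-a$, which produces exactly $a,1/a,-1-a,-1-1/a,-a/(1+a),-1/(1+a)$ (respectively $a,-a$).

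The main obstacle is the combinatorial bookkeeping in these multi-point cases: one must verify that each bijection of exceptional arguments either reproduces a listed relation or forces an already-excluded parameter value, and that no further solutions slip through. The efficient organising principle is the group-orbit description above; the real work is to check that the stated generating transformations genuinely permute the exceptional arguments and that, modulo the excluded values, they account for every admissible bijection, thereby reducing the solution of six or more systems to verifying a finite group action.
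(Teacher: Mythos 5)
Your proposal is correct and takes essentially the same approach as the paper's proof: matching the finite sets of exceptional arguments recorded in Lemma~\ref{lemg441}, reducing each case to a short list of permutation systems (the six systems built on Example~\ref{continuum2} for $\g_{4,5}(a,b)$, the two-case matching for $\g_{4,2}(a)$ with the spurious branch forcing the excluded values $a=1,-2$, etc.), and discarding solutions that land in the excluded parameter ranges. The only cosmetic difference is in case (g-\ref{g45am1ma}), where your anharmonic-group orbit description replaces the paper's device of arranging the six exceptional arguments into three pairs and checking $6\cdot 2^3=48$ permutations; both yield the same list.
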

\begin{proof}
{\it Cases} (g-\ref{g34}), (g-\ref{g45aas}), (g-\ref{g45a1}),
(g-\ref{g45am1})$\,$ and$\,$ (g-\ref{g48}). In these cases is the
proof completely analogous to the proof of Lemma \ref{examg34}.

{\it Case} (g-\ref{g42}). The function $\fb$ of $\g_{4,2}(a')$ has
the form
\begin{equation}
\begin{tabular}[c]{|l||c|c|c|}
\hline  \parbox[l][20pt][c]{0pt}{}  $\alpha$ & $1+a'$ &
$\frac{2}{a'}$ &
\\ \hline
\parbox[l][20pt][c]{0pt}{} $\fb\g_{4,2}(a')(\alpha)$ & 13  & 13 & 12\\
\hline
\end{tabular}
\end{equation}
and there are two possibilities:
\begin{itemize}\item[]
\begin{itemize}
\item[(12)] If $a+1=a'+1, \, \frac{2}{a}=\frac{2}{a'}$ then
$a'=a$.
\item[(21)] If $a+1=\frac{2}{a'},\, a'+1=\frac{2}{a}$ then
$a=a'=1,-2$, which is not possible.
\end{itemize}
\end{itemize}

  {\it Case} (g-\ref{g45ab}). The function $\fb$ of
$\g_{4,5}(a',b')$ has the form
\begin{equation}\label{tablegen2}
\begin{tabular}[c]{|l||c|c|c|c|}
\hline \parbox[l][20pt][c]{0pt}{}   $\alpha$ &  $a'+b'$ &
$\frac{1+a'}{b'}$ & $\frac{1+b'}{a'}$ &
\\ \hline
\parbox[l][20pt][c]{0pt}{} $\fb\g_{4,5}(a',b')(\alpha)$ & 13 & 13 & 13& 12  \\ \hline
\end{tabular}
\end{equation}
and there are six possible correspondences between this table and
(\ref{tablegen}). Thus, substituting corresponding $z_k$'s into
(\ref{solsyss1}) we obtain:\begin{itemize}\item[]\begin{itemize}
\item[(123)] If $z_2=\frac{1+b}{a},\, z_3=a+b$ then $a'=a,b'=b$.
\item[(213)] If $z_2=\frac{1+a}{b},\, z_3=a+b$ then $a'=b,b'=a$.
\item[(132)] If $z_2=a+b,\, z_3=\frac{1+b}{a}$ then $a'=\frac{1}{a},b'=\frac{b}{a}$.
\item[(312)] If $z_2=\frac{1+a}{b},\, z_3=\frac{1+b}{a}$ then $a'=\frac{b}{a},b'=\frac{1}{a}$.
\item[(231)] If $z_2=a+1,\, z_3=\frac{1+a}{b}$ then $a'=\frac{1}{b},b'=\frac{a}{b}$.
\item[(312)] If $z_2=\frac{1+b}{a},\, z_3=\frac{1+a}{b}$ then $a'=\frac{a}{b},b'=\frac{1}{b}$.
\end{itemize}
\end{itemize}

{\it Case} 4. It is convenient to note that six values in the
table (g-18) can be arranged in the triple of pairs
$\{a,\frac{1}{a}\}$, $\{-1-a, -\frac{1}{1+a} \}$ and
$\{-\frac{1+a}{a}, -\frac{a}{1+a} \}$. Then one checks directly
only $6\cdot 2^3=48$ permutations and obtains the solutions like
in the previous case.
\end{proof}

\begin{cor}\label{corr44} For the four--dimensional Lie algebras from Lemma \ref{lemg441} it
holds:\begin{itemize}\item[]
\begin{itemize}
\item[(g-\ref{g34})] If $\fa\g_{3,4}(a)\oplus \g_1=\fa\g_{3,4}(a')\oplus \g_1$ then $\g_{3,4}(a)\oplus\g_1\cong\g_{3,4}(a')\oplus \g_1$.
\item[(g-\ref{g45ab})] If $\fb\g_{4,5}(a,b)=\fb\g_{4,5}(a',b')$ then
$\g_{4,5}(a,b)\cong\g_{4,5}(a',b')$.
\item[(g-\ref{g45am1ma})] If $\fa\g_{4,5}(a,-1-a)=\fa\g_{4,5}(a',-1-a')$ then
$\g_{4,5}(a,-1-a)\cong\g_{4,5}(a',-1-a')$.
\item[(g-\ref{g45aas})] If $\fa\g_{4,5}(a,a^2)=\fa\g_{4,5}(a',{a'}^2) $ then
$\g_{4,5}(a,a^2)\cong\g_{4,5}(a',{a'}^2) $.
\item[(g-\ref{g45am1})] If $\fb\g_{4,5}(a,-1)=\fb\g_{4,5}(a',-1)$ then $\g_{4,5}(a,-1)\cong\g_{4,5}(a',-1)$.
\item[(g-\ref{g48})] If $\fa\g_{4,8}(a)=\fa\g_{4,8}(a')$ then $\g_{4,8}(a)\cong\g_{4,8}(a')$.
\end{itemize}
\end{itemize}
\end{cor}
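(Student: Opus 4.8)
The plan is to read the corollary off as an immediate consequence of Lemma~\ref{lemg44} together with a short collection of explicit isomorphism relations among the listed algebras. In each of the six cases Lemma~\ref{lemg44} has already reduced the equality of the invariant functions to a \emph{finite} list of admissible values of the parameter (or of the pair of parameters). Hence it remains only to exhibit, for every entry of that list, an explicit isomorphism onto the original algebra; the stated isomorphisms then follow by transitivity of $\cong$.

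First I would dispose of the two cases that need only a single direct computation. For (g-\ref{g34}) the sole nontrivial value is $a'=1/a$, and the required isomorphism $\g_{3,4}(a)\cong\g_{3,4}(1/a)$ is precisely the relation~(\ref{izo3a}); taking its direct sum with the identity on $\g_1$ gives $\g_{3,4}(a)\oplus\g_1\cong\g_{3,4}(1/a)\oplus\g_1$. For (g-\ref{g48}) the only nontrivial value is again $a'=1/a$, and one checks $\g_{4,8}(a)\cong\g_{4,8}(1/a)$ directly, by writing down a regular linear map intertwining the two sets of structure constants.

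The conceptual core is the family $\g_{4,5}(a,b)$, whose nonzero brackets are $[e_1,e_4]=e_1,\ [e_2,e_4]=ae_2,\ [e_3,e_4]=be_3$. Under the genericity conditions of Lemma~\ref{lemg441} the three eigendirections $e_1,e_2,e_3$ of $\ad e_4$ are distinct one--dimensional eigenspaces, so the algebra depends only on the unordered triple of eigenvalues $\{1,a,b\}$ taken up to a common nonzero rescaling: permuting $e_1,e_2,e_3$ realizes the action of $S_3$ on $\{1,a,b\}$, while replacing $e_4$ by $\lambda e_4$ multiplies all eigenvalues by $\lambda$. Normalising one eigenvalue back to $1$ after each permutation produces exactly the six parameter pairs
\begin{equation*}
(a,b),\ (b,a),\ \Bigl(\tfrac1a,\tfrac ba\Bigr),\ \Bigl(\tfrac ba,\tfrac1a\Bigr),\ \Bigl(\tfrac1b,\tfrac ab\Bigr),\ \Bigl(\tfrac ab,\tfrac1b\Bigr)
\end{equation*}
listed in Lemma~\ref{lemg44}, so each yields an algebra isomorphic to $\g_{4,5}(a,b)$ and case (g-\ref{g45ab}) is settled. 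The remaining $\g_{4,5}$ cases are obtained by restricting this symmetry to a subfamily: the relation $1+a+b=0$ defining $\g_{4,5}(a,-1-a)$ is invariant under permutation and common scaling of $\{1,a,b\}$, the geometric constraint $b=a^2$ of (g-\ref{g45aas}) and the value $b=-1$ of (g-\ref{g45am1}) are likewise preserved by the relevant permutations, matching the restricted lists $a'=a,1/a$ and $a'=a,-a$ respectively.

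The step I expect to demand the most care is the bookkeeping in case (g-\ref{g45am1ma}): one must confirm that each of the six first coordinates $a,\,1/a,\,-a/(1+a),\,-1-1/a,\,-1-a,\,-1/(1+a)$ is indeed the image of $a$ under the corresponding element of $S_3$ followed by renormalisation, and that in every instance the second coordinate is forced back to $-1-a'$ so that the pair stays inside the continuum $\g_{4,5}(a',-1-a')$. This verification is purely mechanical but error--prone; once it is in place, transitivity of $\cong$ completes the proof in all six cases.
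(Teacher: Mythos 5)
Your proposal is correct and takes essentially the same route as the paper: Lemma~\ref{lemg44} reduces each equality of invariant functions to a finite list of parameter values, and the corollary then follows by transitivity from the isomorphisms $\g_{3,4}(a)\oplus\g_1\cong\g_{3,4}(1/a)\oplus\g_1$, the six-fold symmetry $\g_{4,5}(a,b)\cong\g_{4,5}(b,a)\cong\g_{4,5}\left(\frac{1}{a},\frac{b}{a}\right)\cong\g_{4,5}\left(\frac{b}{a},\frac{1}{a}\right)\cong\g_{4,5}\left(\frac{1}{b},\frac{a}{b}\right)\cong\g_{4,5}\left(\frac{a}{b},\frac{1}{b}\right)$, and $\g_{4,8}(a)\cong\g_{4,8}(1/a)$, which the paper merely asserts ``can be directly verified.'' Your eigenvalue argument --- the $S_3$ action on the spectrum of $\ad e_4$ followed by renormalisation, together with the checks that the constraints $1+a+b=0$, $b=a^2$ and $b=-1$ are preserved --- is precisely that direct verification made explicit, and it is sound.
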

\begin{proof}
The statement follows from Lemma \ref{lemg44} and from the
relations
\begin{align}
\g_{3,4}(a)\oplus\g_1& \cong  \g_{3,4}(1/a)\oplus \g_1 \\
\g_{4,5}(a,b)\cong\g_{4,5}(b,a)
\cong\g_{4,5}\left(\frac{1}{a},\frac{b}{a}\right) &
\cong\g_{4,5}\left(\frac{b}{a},\frac{1}{a}\right)\cong\g_{4,5}
\left(\frac{1}{b},\frac{a}{b}\right)\cong\g_{4,5}\left(\frac{a}{b},\frac{1}{b}\right)\\
\g_{4,8}(a)&\cong\g_{4,8}(1/a),
\end{align}
which hold for all $a,b\neq 0$ and can be directly verified.
\end{proof}
\begin{thm}[Classification of four--dimensional complex Lie
algebras]\label{class4dim}  Two \newline four--dimensional complex
Lie algebras $\el$ and $\wt\el$ are isomorphic if and only if
$\fa\el=\fa\wt\el$ and $\fb\el=\fb\wt\el$.
\end{thm}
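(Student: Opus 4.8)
The plan is to prove both implications, with the forward direction immediate and the converse reducing to a finite case analysis over the known classification list of four--dimensional complex Lie algebras. For the implication $\el\cong\wt\el \Rightarrow \fa\el=\fa\wt\el,\ \fb\el=\fb\wt\el$, I would simply invoke Corollary~\ref{invarfunc} and Corollary~\ref{invarfunc2}, which already establish that $\fa$ and $\fb$ are preserved under isomorphism.

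For the converse I would start from the complete list of isomorphism classes of four--dimensional complex Lie algebras, whose invariant function tables are tabulated in Appendix~\ref{IFLJL}. The list consists of finitely many rigid (non--parametric) algebras together with the parametric continua $\g_{3,4}(a)\oplus\g_1$, $\g_{4,2}(a)$, $\g_{4,5}(a,b)$, $\g_{4,5}(a,-1-a)$, $\g_{4,5}(a,a^2)$, $\g_{4,5}(a,1)$, $\g_{4,5}(a,-1)$ and $\g_{4,8}(a)$. The entire task then amounts to showing that the pair $(\fa,\fb)$ separates any two non--isomorphic members of this list, which I would organize into three tiers. First, for the non--parametric algebras I would observe by direct inspection of the tables that their $(\fa,\fb)$ pairs are mutually distinct. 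Second, for separation \emph{within} each continuum I would appeal to Lemma~\ref{lemg44} and Corollary~\ref{corr44}: in the cases where Lemma~\ref{lemg44} forces $a'=a$ the two members are literally the same algebra, while in the cases where the parameters are related by a nontrivial transformation, Corollary~\ref{corr44} supplies the explicit isomorphism, so equal invariant functions force isomorphism throughout each family.

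Third --- and this is where the combined strength of $\fa$ and $\fb$ is essential --- I would separate the continua from one another, and from the rigid algebras, using the number--of--occurrences signatures collected in Lemma~\ref{lemg441}. For instance $\g_{4,5}(a,b)$ and $\g_{4,5}(a,a^2)$ share the $\fb$--signature $13_3,12$ but are separated by their $\fa$--signatures $6_1,5_6,4$ versus $6_3,5_2,4$, whereas $\g_{4,5}(a,b)$ and $\g_{4,5}(a,-1-a)$ share the $\fa$--signature $6_1,5_6,4$ and are separated only by $\fb$ ($13_3,12$ versus $15_1,12$); neither function alone suffices, but the pair does. Running through all such comparisons shows that no member of one continuum can share both invariant functions with a member of a different continuum or with a rigid algebra.

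The main obstacle will be the bookkeeping around the finitely many special parameter values excluded from the generic tables (the exclusions listed in Lemma~\ref{lemg441}), where a continuum's table may degenerate and could a priori coincide with that of some other class. For each such boundary value I would verify that the corresponding algebra is in fact isomorphic to one already present on the list --- either a rigid class or a member of another continuum under one of the isomorphisms gathered in Corollary~\ref{corr44} --- so that no genuine collision between distinct isomorphism classes is introduced. Exhaustively clearing these boundary cases, rather than the generic separation, is the delicate part; once the occurrence signatures are in hand, the generic analysis is routine.
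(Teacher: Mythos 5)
Your proposal is correct and follows essentially the same route as the paper: the forward direction via Corollaries~\ref{invarfunc} and~\ref{invarfunc2}, and the converse via Lemmas~\ref{lemg441}, \ref{lemg44}, Corollary~\ref{corr44} and inspection of the tables in Appendix~\ref{IFLJL} --- precisely the ingredients the paper's own (terse) proof cites. The boundary cases you flag as delicate are already dispatched in the paper by the Appendix itself, which lists each excluded parameter value (e.g.\ $\g_{4,2}(\pm 1)$, $\g_{4,2}(-2)$, $\g_{4,5}(a,1)$ at $a=\pm 1,-2$) as a separate entry with its own tables, so no new collisions can arise there.
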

\begin{proof}
$\Rightarrow:$ See Corollaries~\ref{invarfunc} and
\ref{invarfunc2}.

$\Leftarrow:$ According to Lemmas~\ref{lemg441}, \ref{lemg44},
Corollary~\ref{corr44} and observing the tables in
Appendix~\ref{IFLJL}, we conclude that all non--isomorphic
four-dimensional complex Lie algebras differ at least in one of
the functions $\fa$ or $\fb$.
\end{proof}
\subsection{Identification of Four--dimensional Complex Lie Algebras}
An efficient algorithm for the identification of four--dimensional
Lie algebras was quite recently published in \cite{AY}. Using
Lemmas~\ref{lemg441}, \ref{lemg44}, Corollary~\ref{corr44} and
Theorem~\ref{class4dim}, we may now formulate an alternative {\it
algorithm}: take a four--dimensional complex Lie algebra~$\el$ and
\begin{enumerate}
\item Calculate $\fa \el$ and $\fb\el$.
\item The range of values of the functions $\fa$ and $\fb$ and the
number of their occurrences determines the label (g-$k$),
$k=1,\dots,34$ in Appendix~\ref{IFLJL}.
\item The algebra is now identified up to the exact value of
parameter(s) of the parametric continuum. These parameters are
determined in the following cases:
\begin{itemize}
\item[(g-\ref{g34})]
Pick any of the two values $z\in\Com, z\neq 1$, which satisfy
$\fa\el(z)=6$, and put $a=z$. Then $\el\cong
\g_{3,4}(a)\oplus\g_1$ holds.
\item[(g-\ref{g42})]
 There are two different complex numbers $z_1,z_2\neq 0$ which satisfy $\fb\el(z_1)=\fb\el(z_2)=13$.
If $z_1-1=2/z_2$ holds then put $a=z_1-1$, otherwise put
$a=z_2-1$. Then $\el\cong \g_{4,2}(a)$ holds.
\item[(g-\ref{g45ab})]
There are three mutually different complex numbers
$z_1,z_2,z_3\neq 0,-1$ which satisfy
$\fb\el(z_1)=\fb\el(z_2)=\fb\el(z_3)=13$. Put
$a=\frac{z_3+1}{z_2+1}$, $b=\frac{z_2z_3-1}{z_2+1}$. Then
$\el\cong \g_{4,5}(a,b)$ holds.
\item[(g-\ref{g45am1ma})]
Pick any of the six values $z\in\Com$, which satisfy
$\fa\el(z)=5$, and put $a=z$. Then $\el\cong \g_{4,5}(a,-1-a)$
holds.
\item[(g-\ref{g45aas})]
Pick any of the two values $z\in\Com, z\neq 1$, which satisfy
$\fa\el(z)=6$, and put $a=z$. Then $\el\cong \g_{4,5}(a,a^2)$
holds.
\item[(g-\ref{g45a1})]
Take the value $z\in\Com$, which satisfies $\fb\el(z)=15$, and put
$a=z-1$. Then $\el\cong \g_{4,5}(a,1)$ holds.
\item[(g-\ref{g45am1})] Pick any of the two values $z\in\Com$, which satisfy
$\fb\el(z)=13$, and put $a=z+1$. Then $\el\cong \g_{4,5}(a,-1)$
holds.
\item[(g-\ref{g48})] Pick any of the two values $z\in\Com$, $z\neq 2$ which satisfy
$\fa\el(z)=4$, and put $a=z$. Then $\el\cong \g_{4,8}(a)$ holds.
\end{itemize}
\end{enumerate}

We demonstrate the above algorithm of identification on the
following two examples.
\begin{example}\label{AYE1}
In \cite{AY}, a four--dimensional algebra $\el_1$ was introduced:
\begin{center}
\begin{tabular}{clll}
  $\el_1:$ & $[e_1,e_2]=-e_1-e_2+e_3,$ & $[e_1,e_3]=-6e_2+4e_3,$& $[e_1,e_4]=2e_1-e_2+e_4,$ \\
          & $[e_2,e_3]=3e_1-9e_2+5e_3,$  & $[e_2,e_4]=4e_1-2e_2+2e_4,$  & $[e_3,e_4]=6e_1-3e_2+3e_4.$\\
\end{tabular}
\end{center}
\begin{enumerate}\item
Computing the functions $\fa\el_1$ and $\fb\el_1$ one obtains:
\begin{center}\vspace{-12pt}
\begin{tabular}[t]{|l||c|c|c|c|}
\hline \parbox[l][20pt][c]{0pt}{}   $\alpha$ & 1 & 2&
$\frac{1}{2}$& \\ \hline
\parbox[l][20pt][c]{0pt}{} $\fa\el_1(\alpha)$ & 6 & 5 & 5 & 4  \\ \hline
\end{tabular}\qquad
\begin{tabular}[t]{|l||c|c|c|}
\hline  \parbox[l][20pt][c]{0pt}{}  $\alpha$ & 3 & 1 &
\\ \hline
\parbox[l][20pt][c]{0pt}{} $\fb\el_1(\alpha)$ & 13  & 13 & 12\\
\hline
\end{tabular}
\end{center}
\item The combination of occurrences $\fa\el_1: 6_1,5_2,4$ and $\fb\el_1:
13_2,12$ is unique for the case~(g-\ref{g42}).
\item Since for $z_1=3$, $z_2=1$ the equality $z_1-1=2/z_2$ holds,
one has $a=z_1-1=2$ and $\el_1\cong\g_{4,2}(2)$.
\end{enumerate}
\end{example}
\begin{example}\label{AYE2}
In \cite{AY}, a four--dimensional algebra $\el_2$ was also
introduced:
\begin{center}
\begin{tabular}{clll}
  $\el_2:$ & $[e_1,e_2]=4e_1+3e_2-6e_3+2e_4,$ & $[e_1,e_3]=15e_1+5e_2-15e_3+5e_4,$\\
  &$[e_1,e_4]=50e_1+15e_2-48e_3+16e_4,$ & $[e_2,e_3]=21e_1+2e_2-15e_3+5e_4,$\\
  & $[e_2,e_4]=93e_1+21e_2-81e_3+27e_4,$  & $[e_3,e_4]=90e_1+25e_2-84e_3+28e_4.$\\
\end{tabular}
\end{center}
\begin{enumerate}\item
Computing the functions $\fa\el_2$ and $\fb\el_2$ one obtains:
\begin{center}\vspace{-12pt}
\begin{tabular}[t]{|l||c|c|c|c|}
\hline \parbox[l][20pt][c]{0pt}{}   $\alpha$ &  $0$&
\\ \hline
\parbox[l][20pt][c]{0pt}{} $\fa\el_2(\alpha)$ & 6 & 4  \\ \hline
\end{tabular}\qquad
\begin{tabular}[t]{|l||c|c|c|}
\hline  \parbox[l][20pt][c]{0pt}{}  $\alpha$ & $0$  & 1&
\\ \hline
\parbox[l][20pt][c]{0pt}{} $\fb\el_2(\alpha)$ & 12  & 12 &10 \\
\hline
\end{tabular}\qquad
\end{center}

\item The combination of occurrences $\fa\el_2: 6_1,4$ and $\fb\el_2:
12_2,10$ is unique for the case~(g-\ref{g21og21}) and one has
$\el_2\cong\g_{2,1}\oplus\g_{2,1}$.
\end{enumerate}
\end{example}


\chapter{Contractions of Algebras}\label{CHcon}
\section{Continuous Contractions of Algebras}
Except where explicitly stated, Section 4.1 of this chapter
contains original and unpublished results. The content of Section
4.2 is based on \cite{HN6,HN2,HNx}.

Suppose we have an arbitrary algebra $\A=(V,\,\cdot\,)$ and a
continuous mapping $U: (0,1\rangle \map GL(V)$, i.~e. $U(\ep)\in
GL(V), \: 0<\ep\leq 1$.  If the limit
\begin{equation}\label{contr}
x\cdot_0  y=\lim_{\ep \map 0+}U(\ep )^{-1}(U(\ep )x\cdot U(\ep )y)
\end{equation}
exits for all $x,y\in V$ then we call the algebra
$\A_0=(V,\cdot_0)$ a {\bf one--parametric continuous contraction}
(or simply a {\bf contraction}) of the algebra $\A$ and write
$\A\map \A_0$. We call the contraction $\A\map \A_0$ {\bf proper}
if $\A\ncong \A_0$. Contraction to the Abelian algebra $\A_0$,
$$x\cdot_0 y=0, \,\forall x,y\in \A_0 $$ is always possible via
$U(\ep)= \ep\, 1$. We call all improper contractions and
contractions to the Abelian algebra {\bf trivial}.

It is well known that if $\el\map \el_0$ is any one--parametric
continuous contraction of a Lie algebra $\el$ then $\el_0$ is also
a Lie algebra. Invariant characteristics of Lie algebras change
after a contraction. The relation among these characteristics
before and after a contraction form useful necessary contraction
criteria. For example, such a set of these criteria, which
provided the complete classification of contractions of three and
four--dimensional Lie algebras, has been found in~\cite{Nes}. Our
aim is to state new necessary contraction criteria using
$(\alpha,\beta,\gamma)$--derivations and twisted cocycles. We
first give a criterion from \cite{Bur2,Campocoh}:
\begin{thm}
Let $\el $ be a complex Lie algebra and $\el\map \el_0$, and $q\in
\N$. Then it holds:
\begin{equation}\label{dimdercon}
  \dim Z^{q} (\el,\ad_\el) \leq \dim  Z^{q} (\el_0,\ad_{\el_0}).
\end{equation}
\end{thm}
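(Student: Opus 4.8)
The plan is to realise the whole contraction as a family of algebras isomorphic to $\el$, and then to combine the isomorphism--invariance of $\dim Z^{q}$ with the lower semicontinuity of the rank of a matrix that depends continuously on the structure constants. For each $\ep\in(0,1\rangle$ introduce the bracket $[x,y]_\ep = U(\ep)^{-1}[U(\ep)x,U(\ep)y]$ and let $\el_\ep=(V,[\,,\,]_\ep)$ be the corresponding Lie algebra. By construction $U(\ep)$ is an isomorphism $\el_\ep\cong\el$, so Theorem~\ref{tvr1koho} (equivalently Corollary~\ref{invdim}) gives that $\dim Z^{q}(\el_\ep,\ad_{\el_\ep})$ does not depend on $\ep$ and equals $\dim Z^{q}(\el,\ad_\el)$ for every $q\in\N$.

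First I would record that the structure constants $c^{k}_{ij}(\ep)$ of $\el_\ep$ depend continuously on $\ep$ and, by the very definition (\ref{contr}) of the contraction, converge as $\ep\map 0+$ to the structure constants of $\el_0$, since $[x_i,x_j]_\ep=\sum_k c^{k}_{ij}(\ep)x_k\map[x_i,x_j]_0$. Fixing bases of the cochain spaces $C^{q}(\el,\el)$ and $C^{q+1}(\el,\el)$, I would then write the adjoint coboundary operator $d_\ep$ from (\ref{koho1}) as a matrix whose entries are linear in the $c^{k}_{ij}(\ep)$; indeed the bracket enters $d$ only through the terms $[x_i,c(\dots)]$ and $c([x_i,x_j],\dots)$. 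Hence $\ep\mapsto d_\ep$ is continuous up to and including $\ep=0$, and $d_\ep\map d_0$, the coboundary of $\el_0$.

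The conclusion rests on the lower semicontinuity of the rank. Since $Z^{q}(\el_\ep,\ad_{\el_\ep})=\ker d_\ep$, rank--nullity gives $\dim Z^{q}(\el_\ep,\ad_{\el_\ep})=\dim C^{q}(\el,\el)-\operatorname{rank} d_\ep$. If $\operatorname{rank} d_0=r$, then $d_0$ has an $r\times r$ submatrix of nonzero determinant; by continuity of the determinant this submatrix stays nonsingular for all sufficiently small $\ep$, so $\operatorname{rank} d_\ep\geq r=\operatorname{rank} d_0$. Therefore
$$\dim Z^{q}(\el_0,\ad_{\el_0})=\dim C^{q}(\el,\el)-\operatorname{rank} d_0\geq \dim C^{q}(\el,\el)-\operatorname{rank} d_\ep=\dim Z^{q}(\el,\ad_\el),$$
which is exactly (\ref{dimdercon}).

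The bookkeeping of the matrix of $d_\ep$ is routine; the only point carrying real content is the semicontinuity step, i.e.\ that the kernel can only \emph{grow} in the limit. I expect the main obstacle to be stating this cleanly: one must verify that $d_\ep$ is genuinely continuous at the limit point $\ep=0$ — which follows from continuity of $U$ together with the \emph{existence} of the limit (\ref{contr}), guaranteeing convergence of the structure constants — so that the nonvanishing--minor argument may be applied at $\ep=0$ itself.
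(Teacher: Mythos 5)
Your proposal is correct and coincides with the paper's own argument (given there for the more general $\kappa$-twisted Theorem~\ref{dimderconthm}, of which this statement is the special case $\kappa_{ij}=1$): pass to the isomorphic family $\el_\ep$, use Corollary~\ref{invdim} to fix $\dim Z^{q}$ along it, express $\dim Z^{q}$ via rank--nullity for the coboundary matrix whose entries are linear in the converging structure constants, and conclude by lower semicontinuity of the rank. The only cosmetic difference is that you exhibit a nonvanishing $r\times r$ minor of $d_0$ that persists for small $\ep$, while the paper argues the mirror image --- minors of order greater than $r$ vanish for all $\ep>0$ and hence, as limits of zero functions, vanish at $\ep=0$ --- which is the same semicontinuity fact.
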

There is a straightforward generalization of the above theorem:
\begin{thm}\label{dimderconthm}
Let $\el $ be a complex Lie algebra, $\el\map \el_0$ and $q\in
\N$. Then for any $(q+1)\times(q+1)$ complex symmetric matrix
$\kappa$
\begin{equation}\label{dimdercon2}
  \dim Z^{q} (\el,\ad_\el,\kappa) \leq \dim  Z^{q} (\el_0,\ad_{\el_0},\kappa)
\end{equation}
holds.
\end{thm}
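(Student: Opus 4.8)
The plan is to realise $\el_0$ as the limit of a one--parameter family of Lie algebras, each of which is isomorphic to $\el$, and then to observe that the twisted cocycle space is the kernel of a linear map whose matrix depends continuously on the structure constants. For $\ep\in(0,1\rangle$ define on the underlying vector space $V$ the bracket
\[
[x,y]_\ep = U(\ep)^{-1}[U(\ep)x,\,U(\ep)y],
\]
and let $\el_\ep$ denote the resulting Lie algebra. Directly from this formula $U(\ep)$ is an isomorphism of $\el_\ep$ onto $\el$, so $\el_\ep\cong\el$ for every $\ep>0$; moreover, by the hypothesis that the contraction limit (\ref{contr}) exists, the structure constants of $\el_\ep$ converge as $\ep\to 0+$ to the structure constants of $\el_0$. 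First I would apply Theorem~\ref{tvr1koho} (equivalently Corollary~\ref{invdim}) to the isomorphism $U(\ep)$ to conclude that
\[
\dim Z^{q}(\el_\ep,\ad_{\el_\ep},\kappa) = \dim Z^{q}(\el,\ad_\el,\kappa), \qquad 0<\ep\leq 1,
\]
so this dimension equals a constant $D$ independent of $\ep$ on $(0,1\rangle$.

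Next I would rewrite the defining system (\ref{koho1t}), for the fixed $q$ and the fixed symmetric matrix $\kappa$, as a homogeneous linear system for the unknown cochain $c$. The crucial point is that the ambient space $C^q(\el,\el)$ of $q$--linear maps $V\times\dots\times V\map V$ does not depend on $\ep$, so only the coefficient matrix $M(\ep)$ of the system varies. Each term in (\ref{koho1t}) -- whether of the form $\kappa_{ii}[x_i,c(\dots)]_\ep$ or $\kappa_{ij}c([x_i,x_j]_\ep,\dots)$ -- involves the bracket of $\el_\ep$ exactly once, so the entries of $M(\ep)$ are \emph{linear} in the structure constants of $\el_\ep$. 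Consequently $\ep\mapsto M(\ep)$ is continuous on $(0,1\rangle$ and, by the convergence of the structure constants, extends continuously to $\ep=0$, with $M(0)$ the coefficient matrix of the corresponding system for $\el_0$. By construction $\ker M(\ep)=Z^{q}(\el_\ep,\ad_{\el_\ep},\kappa)$ for $\ep>0$ and $\ker M(0)=Z^{q}(\el_0,\ad_{\el_0},\kappa)$.

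Finally I would invoke lower semicontinuity of the matrix rank: if $\operatorname{rank} M(0)=r$, then some $r\times r$ minor of $M(0)$ is nonzero and, by continuity, stays nonzero for all sufficiently small $\ep>0$, whence $\operatorname{rank} M(\ep)\geq r=\operatorname{rank} M(0)$ near $0$. Passing to nullities (the domain dimension being fixed) yields $\dim\ker M(0)\geq\dim\ker M(\ep)=D$, which is exactly the inequality (\ref{dimdercon2}). I expect the main obstacle to be this last step together with the care needed to set it up correctly: one must keep the domain $C^q(\el,\el)$ fixed while only the linear map $M(\ep)$ moves, must verify that the structure constants of $\el_\ep$ really do converge to those of $\el_0$ (this is guaranteed by the existence of the limit defining the contraction), and must pin down the \emph{direction} of the semicontinuity -- rank can only drop, never jump, in the limit, so the nullity, and hence $\dim Z^q$, can only grow -- which is precisely what produces the inequality in the stated direction.
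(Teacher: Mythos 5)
Your proposal is correct and takes essentially the same route as the paper's proof: both realise $\el_0$ as the limit of the isomorphic family $\el_\ep$, invoke Corollary~\ref{invdim} to fix $\dim Z^{q}(\el_\ep,\ad_{\el_\ep},\kappa)$ for $\ep>0$, and conclude via semicontinuity of the rank of the coefficient matrix of system (\ref{koho1t}) as the structure constants converge. The only cosmetic difference is that you argue through a nonzero maximal minor of the limit matrix persisting for small $\ep$, while the paper argues that all minors of order above the constant rank vanish in the limit --- the same semicontinuity fact stated from the other side.
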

\begin{proof}
Suppose that the contraction $\el\map \el_0$ is performed by the
mapping $U$, i.~e. $[x,y]_0=\lim_{\ep\map 0+}[x,y]_\ep ,$ where
$$[x,y]_\ep=U(\ep)^{-1}[U(\ep)x,U(\ep)y],\q \forall x,y\in\el.$$
Suppose $\el=(V,[\,,\,])$ and let us fix a basis
$\{x_1,\dots,x_n\}$ of $V$. We denote the structural constants of
the algebra $\el$ by~$c_{ij}^k$ and the structural constants of
the algebras $\el_\ep=(V,[\,,\,]_\ep)$ by~$c_{ij}^k(\ep)$. Then it
holds
\begin{equation}\label{limitstruc}
c_{ij}^k(0)=\lim_{\ep\map 0+}c_{ij}^k(\ep),
\end{equation}
where $c_{ij}^k(0)$ are the structural constants of $\el_0$.  The
dimension of the space $Z^{q} (\el,\ad_\el,\kappa)$ is determined
via the relation
\begin{equation}\label{rank}
  \dim Z^{q} (\el,\ad_\el,\kappa)= \dim
C^q(\el,\el)-\operatorname{rank} S^q(\el,\kappa),
\end{equation}
where $S^q(\el,\kappa)$ is the matrix corresponding to the linear
system of equations generated from (\ref{koho1t}). We write the
explicit form of this system for $q=1$. Then we obtain from
(\ref{koho1t}) that $D=(D_{ij})\in Z^{1}
\left(\el,\ad_\el,\left(\begin{smallmatrix}
  \beta &  \alpha \\
   \alpha & \gamma
\end{smallmatrix}\right) \right)$ if and only if the
linear system with the matrix $ S^{1}
\left(\el,\left(\begin{smallmatrix}
  \beta &  \alpha \\
   \alpha & \gamma
\end{smallmatrix}\right) \right) $ is satisfied
\begin{equation}\label{sys11}
 S^{1}
\left(\el,\left(\begin{smallmatrix}
  \beta &  \alpha \\
   \alpha & \gamma
\end{smallmatrix}\right) \right):\q \sum_{r=1}^n-\alpha c^r_{ij}D_{sr}+\beta
c^s_{rj}D_{ri}+\gamma c^s_{ir}D_{rj}=0, \q \forall i,j,s\in
\{1,\dots,n\},
\end{equation}
and similarly for $q>1$. Since $\el_\ep\cong\el$ holds for all
$0<\ep \leq 1 $, we see from Corollary~\ref{invdim} that
\begin{equation}\label{invzet}
\dim Z^{q} (\el,\ad_\el,\kappa)=\dim Z^{q}
(\el_\ep,\ad_{\el_\ep},\kappa), \q 0<\ep \leq 1,\,q\in\N.
\end{equation}
Since the relation
\begin{equation}\label{cql}
\dim C^q(\el,\el)= \dim C^q(\el_\ep,\el_\ep)= \dim
C^q(\el_0,\el_0),\q 0<\ep \leq 1,\,q\in\N
\end{equation}
holds, the relations~(\ref{rank}), (\ref{invzet}) then imply that
\begin{equation}\label{rank2}
\operatorname{rank} S^q(\el,\kappa)= \operatorname{rank}
S^q(\el_\ep,\kappa), \q 0<\ep \leq 1,\,q\in\N.
\end{equation}
The rank of the matrix $S^q(\el,\kappa)$ is equal to $r$ if and
only if there exists a non-zero minor of the order $r$ and every
minor of order higher than $r$ is zero. It follows from
(\ref{rank2}) that all minors of the orders higher than $r$ of the
matrices $S^q(\el_\ep,\kappa)$ are zeros. Since the equality
(\ref{limitstruc}) holds, all minors of the matrices
$S^q(\el_\ep,\kappa)$ converge to the minors of the matrix
$S^q(\el_0,\kappa)$. Thus, as the limits of zero functions, all
minors of order higher than $r$ of the matrix $S^q(\el_0,\kappa)$
are also zero. Therefore $\operatorname{rank}
S^q(\el_0,\kappa)\leq r$ and the statement of the theorem follows
from~(\ref{rank}) and~(\ref{cql}).
\end{proof}
There exist other necessary contraction criteria, similar to
(\ref{dimdercon}), (\ref{dimdercon2}) -- certain inequalities
between invariants. However, one very powerful criterion is quite
unique. This highly non-trivial theorem, very useful in
\cite{Bur2,Bur1,Nes}, was originally proved in \cite{Bor}.
\begin{thm}
If $\el_0$ is a proper contraction of a complex Lie algebra $\el$
then it holds:
\begin{equation}\label{dimdercon3}
  \dim \der \el < \dim \der \el_0.
\end{equation}
\end{thm}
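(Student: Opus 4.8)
The plan is to note first that the non-strict inequality is already available and then to promote it to a strict one by an orbit-closure argument. By Proposition~\ref{kohoder} we have $\der\el=Z^{1}(\el,\ad_\el)$, so specialising~(\ref{dimdercon}) to $q=1$ yields immediately $\dim\der\el\leq\dim\der\el_0$. The whole content of the theorem is thus the \emph{strictness} of the inequality under the hypothesis $\el\ncong\el_0$. To obtain it I would work in the variety of Lie-algebra structures: fixing the underlying space $V$ with $\dim V=n$, let $L_n\subset\Com^{n^3}$ be the affine variety of those systems of structural constants that satisfy the polynomial identities (1') and (2'). The group $GL(V)$ acts algebraically on $L_n$ by transport of the bracket,
\begin{equation*}
(g\cdot\mu)(x,y)=g\,\mu(g^{-1}x,g^{-1}y),\q g\in GL(V),\ \mu\in L_n,
\end{equation*}
two points lie in the same orbit precisely when the associated Lie algebras are isomorphic, and the stabiliser of the bracket $\mu$ of $\el$ is exactly $\aut\el$. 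Writing $\mathcal{O}_\el$ for this orbit, Theorem~\ref{groupI} and the orbit--stabiliser relation give
\begin{equation*}
\dim\mathcal{O}_\el=\dim GL(V)-\dim\aut\el=n^2-\dim\der\el,
\end{equation*}
and the same identity for $\el_0$.

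Next I would reinterpret the contraction as a limit inside $\mathcal{O}_\el$. The bracket $[\,,\,]_\ep=U(\ep)^{-1}[U(\ep)\,\cdot\,,U(\ep)\,\cdot\,]$ coincides with $U(\ep)^{-1}\cdot\mu$, so each $\el_\ep$ with $0<\ep\leq1$ is isomorphic to $\el$ and its bracket lies in $\mathcal{O}_\el$. By the definition~(\ref{contr}) of contraction the structural constants of $\el_0$ are the Euclidean limit of those of $\el_\ep$ as $\ep\to0+$, hence the bracket $\mu_0$ of $\el_0$ lies in the closure of $\mathcal{O}_\el$. Because for an orbit of an algebraic group over $\Com$ the analytic and the Zariski closures coincide, $\mu_0$ belongs to the Zariski-closed set $\overline{\mathcal{O}_\el}$. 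A proper contraction forces $\el_0\ncong\el$, i.e. $\mu_0\notin\mathcal{O}_\el$, so $\mu_0$ lies in the boundary $\overline{\mathcal{O}_\el}\setminus\mathcal{O}_\el$.

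Finally I would invoke the orbit-dimension principle. An orbit of an algebraic group action is locally closed, hence open in its irreducible closure $\overline{\mathcal{O}_\el}$; therefore the boundary $\overline{\mathcal{O}_\el}\setminus\mathcal{O}_\el$ is a \emph{proper} closed subvariety of the irreducible variety $\overline{\mathcal{O}_\el}$ and so has strictly smaller dimension. As $\mathcal{O}_{\el_0}$ is contained in that boundary, $\dim\mathcal{O}_{\el_0}<\dim\mathcal{O}_\el$, which on substituting the dimension formula becomes $n^2-\dim\der\el_0<n^2-\dim\der\el$, that is $\dim\der\el<\dim\der\el_0$. The genuinely non-elementary ingredients --- the ones accounting for the theorem being attributed to~\cite{Bor} --- are the orbit-closure dimension principle and the identification of the analytic contraction limit with a point of the Zariski orbit closure; everything else is bookkeeping with $\dim\mathcal{O}_\el=n^2-\dim\der\el$.
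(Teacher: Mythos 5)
Your argument is correct, and the first thing to say is that the paper itself contains no proof of this statement: the theorem is stated with attribution to~\cite{Bor} (``This highly non-trivial theorem\dots was originally proved in [Bor]''), so there is no in-paper argument to compare against. What you have written is precisely the standard degeneration-theoretic proof that the citation points to, and that is used downstream in~\cite{Bur1} and~\cite{Nes}: parametrize brackets by the variety $L_n$, act by $GL(V)$ via transport of structure, identify the stabiliser of the bracket of $\el$ with $\aut\el$, combine Theorem~\ref{groupI} with smoothness of algebraic groups in characteristic zero to get $\dim\aut\el=\dim\der\el$ and hence $\dim\mathcal{O}_\el=n^2-\dim\der\el$, note that the contraction limit places the bracket $\mu_0$ of $\el_0$ in the closure of $\mathcal{O}_\el$ but, by properness, outside $\mathcal{O}_\el$, and conclude from local closedness of orbits that the boundary has strictly smaller dimension. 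Three small points deserve a line each in a write-up. First, you do not need the nontrivial coincidence of analytic and Zariski closures: the Zariski closure is Euclidean closed (polynomials are continuous), so it already contains the limit $\mu_0$ of the curve $\ep\mapsto U(\ep)^{-1}\cdot\mu$; the easy inclusion suffices. Second, the boundary-dimension step uses irreducibility of $\overline{\mathcal{O}_\el}$, which should be traced to the irreducibility of $GL(n,\Com)$ as an open subset of affine space. Third, bounding $\dim\mathcal{O}_{\el_0}$ requires the \emph{whole} orbit of $\mu_0$, not just the point, to lie in $\overline{\mathcal{O}_\el}\setminus\mathcal{O}_\el$; this follows because the Zariski closure of an orbit is $GL(V)$-stable and distinct orbits are disjoint, a step you elide with ``is contained in that boundary.'' None of these is a gap in substance --- they are exactly the standard facts from~\cite{Bor} that the paper's citation delegates --- so your reconstruction is a faithful and complete version of the intended proof.
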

\begin{cor}\label{concritmain}
If $\el_0$ is a proper contraction of a complex Lie algebra $\el$
then it holds: \begin{enumerate}\item $\fa\el\leq\fa\el_0 $ \item
$\fa\el(1)<\fa\el_0(1)$.
\end{enumerate}
\end{cor}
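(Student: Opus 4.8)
The plan is to derive both inequalities directly from the two contraction theorems already at our disposal, together with the identification~(\ref{SPEC}) of two--dimensional twisted cocycles with $(\alpha,\beta,\gamma)$--derivations. Both assertions are corollaries in the strict sense, so the work consists in matching the hypotheses of the ambient theorems rather than in any new estimate.

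For the first inequality I would fix an arbitrary $\alpha\in\Com$ and apply Theorem~\ref{dimderconthm} in the case $q=1$ with the complex symmetric matrix $\kappa=\left(\begin{smallmatrix} 1 & \alpha \\ \alpha & 1 \end{smallmatrix}\right)$. By~(\ref{SPEC}) we have $Z^{1}(\el,\ad_\el,\kappa)=\gd{\alpha,1,1}\el$ and likewise $Z^{1}(\el_0,\ad_{\el_0},\kappa)=\gd{\alpha,1,1}{\el_0}$, so the conclusion $\dim Z^{1}(\el,\ad_\el,\kappa)\leq \dim Z^{1}(\el_0,\ad_{\el_0},\kappa)$ of Theorem~\ref{dimderconthm} reads exactly $\dim\gd{\alpha,1,1}\el\leq \dim\gd{\alpha,1,1}{\el_0}$, i.e.\ $\fa\el(\alpha)\leq \fa\el_0(\alpha)$. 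As $\alpha$ was arbitrary, this gives $\fa\el\leq\fa\el_0$, which is part~1. For the second inequality I would specialize to $\alpha=1$. By case~1 of Theorem~\ref{Lielist} (equivalently, the analysis in the proof of Theorem~\ref{klass}) the space $\gd{1,1,1}\el$ coincides with the derivation algebra $\der\el$, whence $\fa\el(1)=\dim\gd{1,1,1}\el=\dim\der\el$ and similarly $\fa\el_0(1)=\dim\der\el_0$. The hypothesis that $\el_0$ is a \emph{proper} contraction then allows me to invoke the inequality~(\ref{dimdercon3}), giving $\dim\der\el<\dim\der\el_0$, i.e.\ $\fa\el(1)<\fa\el_0(1)$.

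There is no genuine obstacle here; the only points requiring care are the bookkeeping identifications, namely recognising that the matrix $\left(\begin{smallmatrix} 1 & \alpha \\ \alpha & 1 \end{smallmatrix}\right)$ is precisely the one attached to the $(\alpha,1,1)$--derivations under~(\ref{SPEC}), and that the value of $\fa$ at $\alpha=1$ is nothing but $\dim\der\el$. I would emphasize that part~2 is a strict sharpening of part~1 at the single point $\alpha=1$: it cannot be extracted from Theorem~\ref{dimderconthm} alone, which yields only the non--strict bound, and genuinely relies on the deeper result~(\ref{dimdercon3}) concerning derivations under proper contractions.
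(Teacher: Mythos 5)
Your proof is correct and follows exactly the paper's own route: the paper likewise obtains part~1 by reading $\fa\el(\alpha)$ as $\dim Z^{1}\left(\el,\ad_\el,\left(\begin{smallmatrix} 1 & \alpha \\ \alpha & 1 \end{smallmatrix}\right)\right)$ and applying Theorem~\ref{dimderconthm} (inequality~(\ref{dimdercon2})), and part~2 from $\fa\el(1)=\dim\der\el$ together with the strict inequality~(\ref{dimdercon3}). Your write-up merely makes explicit the bookkeeping via~(\ref{SPEC}) that the paper's one-line proof leaves implicit.
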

\begin{proof}
Since $\fa\el (\alpha)=\dim Z^{1}
\left(\el,\ad_\el,\left(\begin{smallmatrix}
  1 &  \alpha \\
   \alpha & 1
\end{smallmatrix}\right) \right)$ the first inequality follows from
(\ref{dimdercon2}) and the second from $\fa\el (1)= \dim \der \el$
and (\ref{dimdercon3}).
\end{proof}
\begin{cor}\label{concrit2}
If $\el_0$ is a contraction of a complex Lie algebra $\el$ then it
holds: \begin{enumerate}\item $\fb\el\leq\fb\el_0 $ \item
$\fc\el\leq\fc\el_0$.
\end{enumerate}
\end{cor}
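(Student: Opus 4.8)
The plan is to recognize that both $\fb$ and $\fc$ are, by definition, dimensions of spaces of two--dimensional twisted cocycles, and then to invoke Theorem~\ref{dimderconthm} with $q=2$. This mirrors exactly the way Corollary~\ref{concritmain} is deduced from Theorem~\ref{dimderconthm} in the case $q=1$.

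First I would translate the two invariant functions into the $Z^{2}(\el,\ad_\el,\kappa)$--notation via the dictionary~(\ref{kohonot}). The choice $\alpha_1=\alpha_2=\alpha_3=1$, $\beta_1=\beta_2=\beta_3=\alpha$ gives
\begin{equation*}
\fb\el(\alpha)=\dim Z^{2}\left(\el,\ad_\el,\left(\begin{smallmatrix} \alpha & 1 & 1 \\ 1 & \alpha & 1 \\ 1 & 1 & \alpha \end{smallmatrix}\right)\right),
\end{equation*}
while the choice $\alpha_1=0$, $\alpha_2=\alpha_3=1$, $\beta_1=\alpha$, $\beta_2=\beta_3=1$ gives
\begin{equation*}
\fc\el(\alpha)=\dim Z^{2}\left(\el,\ad_\el,\left(\begin{smallmatrix} \alpha & 1 & 1 \\ 1 & 1 & 0 \\ 1 & 0 & 1 \end{smallmatrix}\right)\right).
\end{equation*}
Both matrices are complex symmetric of size $3\times3=(q+1)\times(q+1)$ with $q=2$, hence admissible inputs for Theorem~\ref{dimderconthm}.

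Next, for each fixed $\alpha\in\Com$ I would apply the inequality~(\ref{dimdercon2}) with $q=2$ and the respective matrix $\kappa$, obtaining $\fb\el(\alpha)\leq\fb\el_0(\alpha)$ and $\fc\el(\alpha)\leq\fc\el_0(\alpha)$. Letting $\alpha$ range over $\Com$, these pointwise estimates are exactly the two asserted inequalities of functions.

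Since the whole argument is a substitution into an already established theorem, I anticipate no genuine difficulty; in contrast to Corollary~\ref{concritmain}, there is here no second, strict inequality to derive, so the mere hypothesis of a contraction (not necessarily proper) suffices. The only step that warrants attention is the bookkeeping in~(\ref{kohonot}): verifying that the two six--parameter labels genuinely reproduce the two displayed symmetric matrices, so that Theorem~\ref{dimderconthm} may be applied verbatim.
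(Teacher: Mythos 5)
Your proposal is correct and coincides with the paper's own argument: the paper likewise rewrites $\fb\el(\alpha)$ as $\dim Z^{2}\left(\el,\ad_\el,\left(\begin{smallmatrix}\alpha & 1 & 1\\ 1 & \alpha & 1\\ 1 & 1 & \alpha\end{smallmatrix}\right)\right)$, applies inequality~(\ref{dimdercon2}) of Theorem~\ref{dimderconthm} with $q=2$, and treats $\fc$ analogously. Your explicit verification of the $\kappa$--matrix for $\fc$ via the dictionary~(\ref{kohonot}), including the matrix $\left(\begin{smallmatrix}\alpha & 1 & 1\\ 1 & 1 & 0\\ 1 & 0 & 1\end{smallmatrix}\right)$, is accurate and in fact slightly more detailed than the paper's one-line proof.
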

\begin{proof}
Since $\fb\el (\alpha)=\dim Z^{2}
\left(\el,\ad_\el,\left(\begin{smallmatrix}
  \alpha &  1 & 1 \\
   1 & \alpha & 1 \\
   1 & 1 &\alpha
\end{smallmatrix}\right) \right)$ the first inequality follows from
(\ref{dimdercon2}); the proof of the second condition is
analogous.
\end{proof}

\begin{example}
Consider the contraction $\g_{3,2}\map \g_{3,3}$. In Table
\ref{tab1} the dimensions~(\ref{defdim}) of the associated Lie and
Jordan algebras of $\g_{3,2}$ and $\g_{3,3}$ are listed.
\begin{table}[!ht]
\centering\caption{\it Dimensions of the associated Lie and Jordan
algebras of~$\g_{3,2}$ and $\g_{3,3}$\hspace{55pt}
}\label{tab1}\vspace{-4pt}
\begin{tabular}[t]{|c||c|c|c|c|c|c|c|c|c|c|c}\hline
\hspace{80pt}\parbox[l][22pt][c]{0pt}{} & $d_{(1,1,1)}$ &
$d_{(0,1,1)}$ & $d_{(1,1,0)}$ &  $d_{(1,1,1)(0,1,1)}$&
$d_{(1,1,-1)}$ & $d_{(0,1,-1)}$\\ \hline\hline
\parbox[l][22pt][c]{0pt}{} $\g_{3,2}$& 4&3&1&2&0&1\\ \hline
\parbox[l][22pt][c]{0pt}{} $\g_{3,3}$& 6&3&1&2&0&1\\ \hline
\end{tabular}
\end{table}
Note that except for the dimensions of the algebras of derivations
$$4=\dim\der\g_{3,2}<\dim\der\g_{3,3}=6$$ none of these dimensions
grows after the contraction.
\end{example}

\subsection{Contractions of Low--dimensional Lie Algebras}

In Section~\ref{invfoflow} we have used the invariant function
$\fa$ to classify all three--dimensional Lie algebras. We now
employ the necessary contraction criterion of
Corollary~\ref{concritmain} to describe all possible contractions
among these algebras. The behaviour of the function $\fa$
determines the classification and contractions of
three--dimensional Lie algebras. Contractions of
three--dimensional algebras were the most recently classified in
\cite{Nes}:
\begin{thm}\label{contract3dim}
Only the following non--trivial contractions among
three--dimensional Lie algebras exist:
\begin{enumerate}
\item $\g_{3,4}(-1)$ is a contraction of $\slp(2,\Com)$,
\item $\g_{3,3}$ is a contraction of $\g_{3,2}$,
\item $\g_{3,1}$ is a contraction of $\g_{3,2}$, $\g_{3,4}(a)$,
$\g_{2,1}\oplus\g_1$ and $\slp(2,\Com)$.
\end{enumerate}
\end{thm}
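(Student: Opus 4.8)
The plan is to prove the two inclusions separately: that every contraction in the list genuinely exists, and that no other proper, non--abelian contraction is possible. An improper contraction is excluded by definition, and the contraction onto the abelian algebra via $U(\ep)=\ep\,1$ is the trivial one; so it suffices to run through the proper contractions $\el\map\el_0$ between non--isomorphic, non--abelian three--dimensional Lie algebras and to decide each of them by the necessary criterion of Corollary~\ref{concritmain}.

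For the completeness half I would first sort the three--dimensional algebras by $\fa\el(1)=\dim\der\el$. A direct computation (or the tables of Appendix~\ref{IFLJL}) gives $\dim\der\slp(2,\Com)=3$, then $\dim\der\g_{3,2}=\dim\der\g_{3,4}(a)=\dim\der(\g_{2,1}\oplus\g_1)=4$, and $\dim\der\g_{3,1}=\dim\der\g_{3,3}=6$. By Corollary~\ref{concritmain} a proper contraction must strictly raise this number, which already confines the candidates to $\slp(2,\Com)\map\{\g_{3,2},\g_{3,4}(a),\g_{2,1}\oplus\g_1,\g_{3,1},\g_{3,3}\}$ and $\{\g_{3,2},\g_{3,4}(a),\g_{2,1}\oplus\g_1\}\map\{\g_{3,1},\g_{3,3}\}$. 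To the survivors I would apply the pointwise inequality $\fa\el\leq\fa\el_0$, reading the tables of $\fa$ off Appendix~\ref{IFLJL} (by Theorem~\ref{class3dim} these tables already separate all three--dimensional algebras). The decisive entry is $[\fa\slp(2,\Com)](-1)=5$: among the candidate targets the inequality $[\fa\slp(2,\Com)](-1)\le[\fa\el_0](-1)$ is met only by $\g_{3,4}(-1)$, whose table also reaches $5$ at $\alpha=-1$, and by $\g_{3,1}$, for which $\fa\equiv 6$; hence $\g_{3,2}$, $\g_{2,1}\oplus\g_1$ and $\g_{3,3}$ are excluded as targets of $\slp(2,\Com)$ and every contraction of $\slp(2,\Com)$ onto the $\g_{3,4}$--family must land on $\g_{3,4}(-1)$. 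The remaining eliminations come from $[\fa\g_{3,4}(a)](a)=4>3=[\fa\g_{3,3}](a)$ and $[\fa(\g_{2,1}\oplus\g_1)](0)=6>3=[\fa\g_{3,3}](0)$, which rule out $\g_{3,3}$ as a target of $\g_{3,4}(a)$ and of $\g_{2,1}\oplus\g_1$. Exactly the six pairs of the statement remain.

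For the existence half I would display a contraction matrix $U(\ep)$ for each of the six pairs and verify the limit~(\ref{contr}). Four of them are ordinary (In\"on\"u--Wigner type) diagonal scalings: in the standard basis of $\slp(2,\Com)$ the choice $U(\ep)=\operatorname{diag}(\ep,\ep,1)$ yields $\g_{3,4}(-1)$ and $U(\ep)=\operatorname{diag}(\ep,\ep,\ep^2)$ yields $\g_{3,1}$, while $\operatorname{diag}(1,\ep,1)$ and $\operatorname{diag}(\ep,1,\ep)$ carry $\g_{3,2}$ onto $\g_{3,3}$ and $\g_{3,1}$ respectively. The two contractions onto the Heisenberg algebra from $\g_{3,4}(a)$ and from $\g_{2,1}\oplus\g_1$ are the delicate ones: no diagonal scaling can produce them, since diagonal scaling can only suppress existing brackets and never manufacture the single central bracket Heisenberg needs. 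Here I would use a genuinely non--diagonal family $U(\ep)$ whose inverse blows up as $\ep\to 0^+$ --- for $\g_{2,1}\oplus\g_1$ one scales $e_1,e_2$ by $\ep$ and sets $U(\ep)e_3=e_3-e_1$, producing $[e_2,e_3]_0=e_1$ with $e_1$ central; for $\g_{3,4}(a)$ one scales $e_3$ by $\ep$ and adds $e_3$--components of order $\sqrt{\ep}$ to $e_1$ and $e_2$, producing $[e_1,e_2]_0=e_3$ in the limit.

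I expect the main difficulty to lie in the existence half rather than in the exclusions. The exclusions are conceptually clean once the $\fa$--tables are in hand, and their real content is that the necessary criterion of Corollary~\ref{concritmain} turns out to be \emph{sharp} in dimension three --- every pair that passes the $\dim\der$ test and the pointwise test is actually realised --- a fact that is only confirmed by producing the contractions. The genuinely awkward step is therefore guessing the two non--diagonal families $U(\ep)$ above and verifying, by direct evaluation of $U(\ep)^{-1}[U(\ep)x,U(\ep)y]$, that the limit exists, is finite, and gives the Heisenberg algebra rather than degenerating to the abelian one; the finer criteria of Corollary~\ref{concrit2} involving $\fb$ and $\fc$ are not needed for any part of the argument.
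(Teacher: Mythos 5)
Your proposal is correct in substance, but it takes a genuinely different route from the paper for a simple reason: the paper does not prove Theorem~\ref{contract3dim} at all --- it imports the classification from \cite{Nes}, and only afterwards combines it with Corollary~\ref{concritmain} to obtain the equivalence of Theorem~\ref{conthmmain}. What you do instead is reconstruct the theorem from the paper's own machinery: the exclusion half via the strict inequality $\fa\el(1)<\fa\el_0(1)$ (Borel's theorem, item~2 of Corollary~\ref{concritmain}) together with the pointwise bound $\fa\el\leq\fa\el_0$, and the existence half via explicit families $U(\ep)$. Checking against the tables of Appendix~\ref{IFLJL}, your stratification ($\dim\der=3$ for $\slp(2,\Com)$; $4$ for $\g_{3,2}$, $\g_{3,4}(a)$, $\g_{2,1}\oplus\g_1$; $6$ for $\g_{3,1}$, $\g_{3,3}$) and your three decisive entries ($\fa\slp(2,\Com)(-1)=5$, $[\fa\g_{3,4}(a)](a)\geq 4>3=[\fa\g_{3,3}](a)$, $[\fa(\g_{2,1}\oplus\g_1)](0)=6>3$) do cut the candidate list down to exactly the six pairs of the statement, and there is no circularity, since Corollary~\ref{concritmain} is proved independently of Theorem~\ref{contract3dim}. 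Your route buys a self-contained proof, making Theorem~\ref{conthmmain} independent of \cite{Nes}, and it confirms your observation that the $\fa$-criterion is sharp in dimension three and that $\fb,\fc$ are never needed; the paper's citation buys brevity and defers precisely the existence constructions you correctly identify as the delicate part.

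Two convention caveats on that existence half, where your argument is fragile. Your $U(\ep)=\operatorname{diag}(\ep,\ep,1)$ and $\operatorname{diag}(\ep,\ep,\ep^2)$ are correct only in the basis $(e,f,h)$ with $[e,f]=h$, $[h,e]=2e$, $[h,f]=-2f$; in the paper's basis $[e_1,e_2]=e_1$, $[e_2,e_3]=e_3$, $[e_1,e_3]=2e_2$, where $e_2$ is the grading element, $\operatorname{diag}(\ep,\ep,1)$ yields $\g_{3,1}$ rather than $\g_{3,4}(-1)$, so the intended scaling there is $\operatorname{diag}(\ep,1,\ep)$. Similarly, your $U(\ep)e_3=e_3-e_1$ for $\g_{2,1}\oplus\g_1$ works under the convention $[e_1,e_2]=e_1$ (then indeed $[e_2,e_3]_\ep=U(\ep)^{-1}[\ep e_2,e_3-e_1]=e_1$ and $[e_1,e_3]_\ep=0$, giving $\g_{3,1}$), but under the appendix's three-dimensional listing $[e_1,e_2]=e_2$ the same matrix produces $[e_2,e_3]_0=e_2$, i.e.\ an improper contraction; the correction term must lie along the derived algebra, $U(\ep)e_3=e_3-e_2$. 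Your $\sqrt\ep$-family for $\g_{3,4}(a)$ is fine as stated: the limit bracket is $[e_1,e_2]_0=(a-1)e_3$ with all other brackets tending to zero, nonzero for every admissible $a$ including $a=-1$. Finally, your claim that no diagonal scaling reaches the Heisenberg targets is true only relative to the fixed canonical bases (diagonal $U$ rescales each structure constant by $u_iu_j/u_k$, so vanishing constants stay zero); after the change of basis $f_1=e_1+e_2$, $f_2=e_1+ae_2$, a diagonal generalized In\"on\"u--Wigner scaling does realize $\g_{3,4}(a)\map\g_{3,1}$ --- but as motivation for your non-diagonal choices the remark is harmless.
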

We present two examples which show that both items in
Corollary~\ref{concritmain} are effective for the description of
contractions among three--dimensional Lie algebras.
\begin{example}\label{ExcL1}
Observing the corresponding tables of the functions $\fa$, we see
that the contraction of $\g_{3,2}$ to $\g_{2,1}\oplus\g_1$ is not
possible due to $\fa\g_{3,2}(1)=\fa\g_{2,1}\oplus\g_1(1)=4$ -- a
contradiction to the item 2. of the Corollary~\ref{concritmain}.
The necessary condition $\fa\g_{3,2}\leq\fa\g_{2,1}\oplus\g_1$ is
satisfied and, thus, does not exclude the existence of a
contraction.
\end{example}
\begin{example}\label{ExcL2}
On the contrary to the previous example, observing the
corresponding tables of the functions $\fa$, we see that the
contraction of $\slp(2,\Com)$ to $\g_{2,1}\oplus\g_1$ is not
possible due to
$5=\fa\slp(2,\Com)(-1)>\fa\g_{2,1}\oplus\g_1(-1)=4$ -- a
contradiction to the item 1. of the Corollary~\ref{concritmain}.
The necessary condition
$3=\fa\slp(2,\Com)(1)<\fa\g_{2,1}\oplus\g_1(1)=4$ is satisfied
and, thus, does not exclude the existence of a contraction.
\end{example}
A similar analysis of all possible pairs of three--dimensional Lie
algebras leads us to the following theorem.
\begin{thm}[Contractions of three--dimensional complex Lie
algebras]\label{conthmmain}$\,$\newline
  Let $\el$, $\el_0$ be two three--dimensional complex Lie
  algebras. Then there exists a proper one--parametric continuous
  contraction $\el\map\el_0$ if and only if
  \begin{equation*}
  \fa\el\leq\fa\el_0 \quad \text{and}\quad \fa\el(1)<\fa\el_0(1).
\end{equation*}
\end{thm}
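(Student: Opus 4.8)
The plan is to prove the two implications separately, drawing on the complete classification of three-dimensional complex Lie algebras of \cite{ccc}, on the known list of their contractions (Theorem~\ref{contract3dim}), and on the explicit tables of $\fa$ collected in Appendix~\ref{IFLJL}.

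The necessity ($\Rightarrow$) requires no new work: a proper one-parametric continuous contraction $\el\map\el_0$ yields both $\fa\el\leq\fa\el_0$ and $\fa\el(1)<\fa\el_0(1)$ directly by Corollary~\ref{concritmain}, so I would simply cite that corollary.

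For the sufficiency ($\Leftarrow$) the strategy is to reduce the claim to a finite verification. Every proper contraction among three-dimensional Lie algebras is either nontrivial --- and then appears in the list of Theorem~\ref{contract3dim} --- or is a (trivial) contraction of a non-abelian algebra onto the abelian algebra $3\g_1$, which is always available. Hence the set of ordered pairs $(\el,\el_0)$ admitting a proper contraction is explicitly known. Since the forward direction already shows that each such pair satisfies the two inequalities, it remains to prove the reverse inclusion by contraposition: for every ordered pair of non-isomorphic three-dimensional Lie algebras that is \emph{not} a contraction, I would read off from the $\fa$-tables that at least one of the two inequalities is violated. Examples~\ref{ExcL1} and~\ref{ExcL2} exhibit the two prototypical failures --- either $\fa\el(\alpha)>\fa\el_0(\alpha)$ for some $\alpha$, so that $\fa\el\leq\fa\el_0$ fails, or $\fa\el(1)=\fa\el_0(1)$, so that the strict inequality at $\alpha=1$ fails --- and the remaining forbidden pairs are dispatched the same way.

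The main obstacle is the exhaustive case analysis, above all the treatment of the one-parameter continuum $\g_{3,4}(a)$. The decisive simplification is that $\fa\g_{3,4}(a)(1)=4$ is independent of $a$ (Lemma~\ref{examg34}); consequently the strict inequality $\fa\el(1)<\fa\el_0(1)$ fails for any two distinct members of the continuum, and the criterion correctly forbids all contractions inside the family $\g_{3,4}(a)$, in agreement with Theorem~\ref{contract3dim}. The only contraction reaching this family is $\slp(2,\Com)\map\g_{3,4}(-1)$; here I would use the value $\alpha=-1$, where $\fa\slp(2,\Com)(-1)=5$ exceeds the generic value $\fa\g_{3,4}(a)(-1)=3$ valid throughout the continuum $a\neq 0,\pm 1$, to confirm that the criterion excludes all generic targets and admits only the boundary algebra $\g_{3,4}(-1)$, whose separate (enlarged) $\fa$-table --- listed apart in Appendix~\ref{IFLJL} --- accommodates the value $5$. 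Once the continuum is handled, the comparisons among the finitely many remaining algebras reduce to direct inspection of their $\fa$-tables, completing the verification.
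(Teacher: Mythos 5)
Your proposal is correct and follows essentially the same route as the paper: necessity is exactly Corollary~\ref{concritmain}, and sufficiency is established by combining the known classification of contractions (Theorem~\ref{contract3dim}) with a direct comparison of the $\fa$--tables in Appendix~\ref{IFLJL}. Your additional remarks — organizing the finite check by contraposition, and observing that the constancy of $\fa\g_{3,4}(a)(1)=4$ together with the value $\fa\slp(2,\Com)(-1)=5$ handles the continuum $\g_{3,4}(a)$ — merely make explicit the "direct comparison" the paper leaves to the reader, and they check out against the tables.
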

\begin{proof}
$\Rightarrow$ : This implication is, in fact,
Corollary~\ref{concritmain}.

$ \Leftarrow$: This implication follows from a direct comparison
of the tables of the invariant functions $\fa$ of
three--dimensional Lie algebras in Appendix~\ref{IFLJL} and
Theorem~\ref{contract3dim}.
\end{proof}

In Section \ref{classfourdim} we used the invariant functions
$\fa$ and $\fb$ to classify all four--dimensional Lie algebras. In
order to obtain stronger contraction criteria, we also defined the
function $\fc$ -- a supplement to the functions $\fa$ and $\fb$
(see Example~\ref{fbcon}). The combined forces of the
Corollaries~\ref{concritmain} and \ref{concrit2}, though strong,
do not provide us with a complete classification of contractions
of four--dimensional Lie algebras. So far, we were unable to find
more suitable definitions of invariant functions, which offers the
concept of two--dimensional twisted cocycles, allowing such
classification. The complete description of the spaces of
two--dimensional twisted cocycles for four--dimensional Lie
algebras, similar to the classification of
$(\alpha,\beta,\gamma)$--derivations in Appendix \ref{APAL}, would
solve the existence of such functions explicitly. However, such a
complete description seems, at the moment, out of reach. We
discuss the application of the criteria of the
Corollaries~\ref{concritmain} and \ref{concrit2} to the
four--dimensional Lie algebras in the following examples.
\begin{example}
To demonstrate behaviour of the functions $\fa,\fb$ and $\fc$ in
dimension four, we consider the following sequence of contractions
\cite{Bur1,Nes} : $$\slp(2,\Com) \oplus\g_1\ \map\ \g_{4,8}(-1)\
\map\ \g_{3,4}(-1)\oplus\g_1\ \map\ \g_{4,1}\ \map\
\g_{3,1}\oplus\g_1\ \map\ 4\g_1. $$ Note in Table~\ref{tab2}, how
the value of each invariant function is greater or equal than the
value in the previous row. As expected, the strict inequality for
the values~$\fa(1)$ holds -- in this case the sequence of
dimensions: $4,\,5\,,6\,,7\,,10\,,16$. The strict increase of
values is also identified in the following cases: $\fa(2)$,
'generic values' of $\fa$, $\fb(1/2)$ and 'generic values'
of~$\fb$. These conjectures of strict inequalities are, however,
not valid for the general case of a contraction in dimension four.
\begin{table}[!ht]
\begin{center}
\centering \caption[l]{\it Invariant functions $\fa,\, \fb$ and
$\fc$ of the contraction sequence: $\slp(2,\Com) \oplus\g_1\ \map\
\g_{4,8}(-1)\ \map\ \g_{3,4}(-1)\oplus\g_1\ \map\ \g_{4,1}\ \map\
\g_{3,1}\oplus\g_1\ \map\ 4\g_1. $} \label{tab2}\vspace{-6pt}
\begin{tabular}[t]{|c||c|c|c|c|c||c|c|c|c|c||c|c|c|c|}
\cline{1-15} \parbox[l][22pt][c]{0pt}{} \hspace{108pt}
&\multicolumn{5}{|c||}{$\fa (\al)$}& \multicolumn{5}{|c||}{$\fb
(\al)$} & \multicolumn{4}{c|}{$\fc (\al)$} \\

\hline
\parbox[l][22pt][c]{0pt}{}  $\al$ & -1& 0& 1 &2&  & -1 & 0 & 1 & $\frac{1}{2}$ &  & 0 & 1 & 2 &\\ \hline\hline

\parbox[l][22pt][c]{0pt}{}  $\slp(2,\Com)\oplus\g_1$ & 6& 4& 4 &2& 1 & 14 & 12 & 12 & 10 & 9  &  0 & 0 & 1 & 0\\ \hline

\parbox[l][22pt][c]{0pt}{}  $\g_{4,8}(-1)$ & 6& 4& 5 &4& 4 & 14 & 12 & 13 &12 &12  &  0 & 0& 1 &0\\ \hline

\parbox[l][22pt][c]{0pt}{}  $\g_{3,4}(-1)\oplus\g_1$ & 7& 7& 6 &5& 5 & 16 & 16 & 15 & 14 & 14 &  3 & 3 & 3 & 1\\ \hline

\parbox[l][22pt][c]{0pt}{}  $\g_{4,1}$ & 7& 7& 7 &7& 7 & 16 & 16 & 15 & 15 & 15 & 3& 3 & 3 & 3\\ \hline

\parbox[l][22pt][c]{0pt}{}  $\g_{3,1}\oplus\g_1$ & 10& 11& 10 &10& 10 & 19 & 20 & 19 & 19  &19  & 8& 11 & 8 & 8\\ \hline

\parbox[l][22pt][c]{0pt}{}  $4\g_1$ & 16& 16& 16 &16& 16 & 24 & 24 & 24 &24 & 24 & 24& 24 & 24 &24\\ \hline

\end{tabular}\end{center}
\end{table}
\end{example}
\begin{example}
Consider the pair of four--dimensional Lie algebras
$\g_{4,2}(a),\,a\neq 0,\pm 1,-2$ and $\g_{4,5}(a',1),\,a'\neq
0,\pm 1,-2$. There are two possibilities, how the corresponding
tables of the invariant functions $\fb$, in Appendix~\ref{IFLJL}
cases~(g-\ref{g42}) and (g-\ref{g45a1}), can satisfy
$\fb\g_{4,2}(a)\leq \fb\g_{4,5}(a',1)$. The first possibility
leads to conditions $a'+1=2/a$ and $a+1=2/a'$ -- these have
solutions $a=a'=1,-2$ and we excluded them. The second possibility
implies $a=a'$. The necessary condition 1. of
Corollary~\ref{concrit2} therefore admits only the contraction
$\g_{4,2}(a)\map\g_{4,5}(a,1)$. This contraction indeed exists
\cite{Nes}. In Table~\ref{tab3} we summarize the behaviour of the
functions $\fa,\, \fb$ and $\fc$. Note that the function $\fa$
grows only at the points $1,\,a,\,\frac{1}{a}$ and the function
$\fb$ only at one(!) point $1+a$.
\begin{table}[h]
\begin{center}
\centering \caption[l]{\it Invariant functions $\fa,\, \fb$ and
$\fc$ of the contraction:
$\g_{4,2}(a)\map\g_{4,5}(a,1)$\hspace{50pt}}\label{tab3}\vspace{3pt}
\begin{tabular}[b]{|c||c|c|c|c||c|c|c||c|c|c|c|}
\hline\parbox[l][22pt][c]{0pt}{} \hspace{108pt}
&\multicolumn{4}{c||}{$\fa (\al)$}& \multicolumn{3}{|c||}{$\fb
(\al)$} & \multicolumn{4}{c|}{$\fc (\al)$} \\ \hline
\parbox[l][22pt][c]{0pt}{}  $\al$ & $\:\:1\:\:$ & $\:\:a\:\:$& $\:\:\frac{1}{a}\:\:$ &\hspace{16pt}  & $1+a$ & $\:\:\frac{2}{a}\:\:$ & \hspace{16pt}& $\:2\:$ & $1+a$ & $1+\frac{1}{a}$ &\hspace{12pt} \\ \hline\hline

\parbox[l][22pt][c]{0pt}{}  $\g_{4,2}(a)$ & 6& 5& 5 &4& 13 & 13 & 12 & 3 & 1 & 1  &  0 \\ \hline

\parbox[l][22pt][c]{0pt}{}  $\g_{4,5}(a,1)$ & 8& 6& 6 &4& 15 & 13 & 12 & 7 &2 &2  &  0 \\ \hline

\end{tabular}\end{center}
\end{table}
\end{example}
\begin{example}\label{fbcon}
Consider the pair of four--dimensional Lie algebras $\g_{4,7}$ and
$\g_{4,2}(1)$. The necessary conditions $\fa\g_{4,7}\leq
\fa\g_{4,2}(1), $ $[\fa\g_{4,7}](1)< [\fa\g_{4,2}(1)](1) $ and
$\fb\g_{4,7}\leq \fb\g_{4,2}(1)$ are satisfied. But since it holds
$$1=[\fc\g_{4,7}]\left(\frac{3}{2}\right)>
[\fc\g_{4,2}(1)]\left(\frac{3}{2}\right)=0, $$ a contraction is
not possible.
\end{example}

\subsection{Contractions of Two--dimensional Jordan Algebras}
The classification of Jordan algebras is a far more challenging
problem than the classification of Lie algebras. Indeed, even the
classification in dimension $2$ involves the solution of a system
of 13 cubic equations from Section~\ref{jordan} with 6 variables.
This problem was solved recently~\cite{jordanesp}. The resulting
five two--dimensional complex Jordan algebras, together with the
explicit form of the spaces $\gd{\alpha,\beta,\gamma}\A$ and the
function $\fa$, are listed in Appendices~\ref{APAJ}
and~\ref{IFLJJ}. We have:
\begin{thm}[Classification of two--dimensional complex Jordan algebras]
$\,$\newline  Two two--dimensional complex
  Jordan algebras $\mathcal{J}$ and $\wt{\mathcal{J}}$ are
  isomorphic if and only if
$\fa\mathcal{J}=\fa\wt{\mathcal{J}}.$
\end{thm}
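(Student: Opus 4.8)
The plan is to follow verbatim the template already used for the three-dimensional Lie algebra classification in Theorem~\ref{class3dim}, because the invariant function $\fa$ is defined for an \emph{arbitrary} complex algebra $\A$, and because Jordan algebras are commutative the whole reduction machinery of Chapter~\ref{CHgen} applies to them. In particular Theorem~\ref{klass}, whose hypothesis is that $\A$ be commutative or anti--commutative, guarantees that the three parameters of $\gd{\alpha,\beta,\gamma}{\mathcal{J}}$ collapse to one, so that $\fa$ (together with $\fa^0$) already records the full dimensional data of the $(\alpha,\beta,\gamma)$--derivation spaces. This makes $\fa$ a natural candidate for a classifying invariant.

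For the forward implication I would simply invoke Corollary~\ref{invarfunc}. That corollary is stated for arbitrary complex algebras $\A,\wt\A$, and rests on Theorem~\ref{tvr1}, which transports $(\alpha,\beta,\gamma)$--derivations along any algebra isomorphism via $\rho(D)=fDf^{-1}$. Hence $\mathcal{J}\cong\wt{\mathcal{J}}$ yields $\dim\gd{\alpha,1,1}{\mathcal{J}}=\dim\gd{\alpha,1,1}{\wt{\mathcal{J}}}$ for every $\alpha\in\Com$, i.e. $\fa\mathcal{J}=\fa\wt{\mathcal{J}}$, with no Jordan-specific work needed.

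The reverse implication is where the finiteness of the classification does all the work. The recent classification produces exactly five two--dimensional complex Jordan algebras, whose spaces $\gd{\alpha,\beta,\gamma}{\A}$ and invariant functions $\fa$ are tabulated in Appendices~\ref{APAJ} and~\ref{IFLJJ}. Since the set of isomorphism classes is finite, there is no parametric continuum to resolve and no analogue of Lemma~\ref{examg34} is required. I would therefore read off the five tables of $\fa$ and check that they are pairwise distinct: if $\fa\mathcal{J}=\fa\wt{\mathcal{J}}$, then the tables being mutually different forces $\mathcal{J}$ and $\wt{\mathcal{J}}$ into the same isomorphism class, hence $\mathcal{J}\cong\wt{\mathcal{J}}$.

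The only point requiring genuine verification, and thus the main (if modest) obstacle, is the pairwise distinctness of these five $\fa$--tables; this is a finite inspection rather than a structural argument. Should two tables happen to coincide, $\fa$ alone would fail to classify and one would have to adjoin a further invariant, for instance $\fa^0$ or a dimension $d_{(\alpha,\beta,\gamma)(\alpha',\beta',\gamma')}$ furnished by Corollary~\ref{corgenderprun}. By analogy with the three-dimensional Lie case I expect the five tables to be all different, so that $\fa$ by itself suffices and no supplementary invariant is needed.
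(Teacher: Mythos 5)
Your proposal matches the paper's proof in essence: the paper likewise disposes of the theorem by a direct comparison of the finite classification of two--dimensional complex Jordan algebras with their tabulated $\fa$--values in Appendix~\ref{IFLJJ} (the forward direction being the general invariance from Corollary~\ref{invarfunc}). The finite inspection you flag as the one genuine obligation does succeed --- the six tables (the abelian $2\j_1$ and $\j_{2,1},\dots,\j_{2,5}$) are pairwise distinct, e.g.\ distinguished already by the value pairs $(\fa(1),\fa(2))$ --- so no supplementary invariant such as $\fa^0$ is needed.
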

\begin{proof}
The result follows from the direct comparison of the
classification of two-dimensional complex Jordan algebras and the
corresponding values of the invariant function~$\fa$ -- see
Appendix~\ref{IFLJJ}.
\end{proof}
It was also pointed out~\cite{jordanesp} that the necessary
contraction criterion (\ref{dimdercon3}) holds for
two--dimensional Jordan algebras. The modification of the proof of
Theorem~\ref{dimderconthm} for Jordan algebras and $q=1$ is
straightforward. Thus, the necessary criterion in
Corollary~\ref{concritmain} is also valid for Jordan algebras. The
behaviour of the function $\fa$ determines the classification and
contractions of two--dimensional complex Jordan algebras. The
contractions of two--dimensional complex Jordan algebras were
recently also classified~\cite{jordanesp}:

\begin{thm}\label{contract3dimJ}
Only the following non--trivial contractions among
two--dimensional complex Jordan algebras exist:
\begin{enumerate}
\item $\j_{2,1}$ is a contraction of $\j_{2,5}$,
\item $\j_{2,2}$ is a contraction of $\j_{2,5}$,
\item $\j_{2,3}$ is a contraction of $\j_{2,1}$, $\j_{2,2}$ and $\j_{2,5}$.
\end{enumerate}
\end{thm}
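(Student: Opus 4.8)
The plan is to prove the statement in two parts: first that each of the contractions listed in items~1.--3. genuinely exists, and second that no further non--trivial contraction is possible. For the existence part I would follow \cite{jordanesp} and exhibit, for each listed pair $\mathcal{J}\map\mathcal{J}_0$, an explicit continuous family $U:(0,1\rangle\map GL(\mathcal{J})$ for which the limit~(\ref{contr}) exists and reproduces the multiplication of the target algebra; since the five algebras $\j_{2,1},\dots,\j_{2,5}$ are given in Appendix~\ref{APAJ} by concrete structure constants, each such $U(\ep)$ may be taken in diagonal (or elementary Jordan) form and the limit verified directly. Because contraction is reflexive and transitive (a composition of contraction maps is again a contraction map), it suffices to realize the generating relations; for instance, once $\j_{2,5}\map\j_{2,1}\map\j_{2,3}$ is established, the remaining relations of item~3. follow by composition.

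For the exclusion part I would argue exactly as in the proof of Theorem~\ref{conthmmain}. The remark preceding the theorem established that the necessary criterion of Corollary~\ref{concritmain} is valid for two--dimensional Jordan algebras, so whenever $\mathcal{J}_0$ is a proper contraction of $\mathcal{J}$ one must have $\fa\mathcal{J}\leq\fa\mathcal{J}_0$ together with $\fa\mathcal{J}(1)<\fa\mathcal{J}_0(1)$. Since the preceding classification theorem guarantees that the five algebras have pairwise distinct tables of $\fa$, reading these tables off from Appendix~\ref{IFLJJ} unambiguously identifies each algebra, and I would then test every ordered pair $(\mathcal{J},\mathcal{J}_0)$ with $\mathcal{J}\ncong\mathcal{J}_0$. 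For each pair \emph{not} appearing in~1.--3.\ one of the two required inequalities fails: either the pointwise domination $\fa\mathcal{J}\leq\fa\mathcal{J}_0$ is violated at some $\alpha$, or it holds but $\fa\mathcal{J}(1)=\fa\mathcal{J}_0(1)$, contradicting strictness at $\alpha=1$. With only five algebras this is a finite ledger of at most $5\cdot 4$ ordered pairs.

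The main obstacle I anticipate is that the criterion of Corollary~\ref{concritmain} is only \emph{necessary}: some spurious ordered pair might satisfy both $\fa$--inequalities yet admit no contraction. Where this happens I would supplement $\fa$ with the derivation criterion~(\ref{dimdercon3}), i.e.\ compare $\dim\der\mathcal{J}$ with $\dim\der\mathcal{J}_0$, and, if even that is inconclusive, rule the contraction out directly by showing that no $U(\ep)$ can yield the limit~(\ref{contr}) with the required target structure constants. Conversely, should the existence direction not simply be imported from \cite{jordanesp}, the genuinely laborious step is producing the explicit contraction matrices and verifying convergence of~(\ref{contr}); this computational check, rather than the conceptual table comparison, is where the real work lies.
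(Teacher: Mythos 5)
You should first note a structural fact: the paper does not prove Theorem~\ref{contract3dimJ} at all. It is quoted verbatim from the classification in~\cite{jordanesp}, exactly as the analogous Lie-algebra statement (Theorem~\ref{contract3dim}) is quoted from~\cite{Nes}; the paper's own contribution is the \emph{converse} packaging, Theorem~\ref{conthmmainJ}, which is obtained by comparing the quoted list with the $\fa$-tables of Appendix~\ref{IFLJJ}. So your proposal does not diverge from the paper's proof --- it supplies a proof where the paper supplies a citation, and its two halves are sound: the exclusion half is legitimate because you only use Corollary~\ref{concritmain} as a \emph{necessary} criterion, and your anticipated obstacle (a spurious pair passing both $\fa$-inequalities without admitting a contraction) in fact never materializes: a direct check of the six tables in Appendix~\ref{IFLJJ} shows that \emph{every} ordered pair of non-isomorphic non-abelian algebras outside items 1.--3.\ violates either the domination $\fa\mathcal{J}\leq\fa\mathcal{J}_0$ at some $\alpha\in\{0,2,\text{generic}\}$ or the strictness at $\alpha=1$ (the paper's Examples~\ref{ExcJ1} and~\ref{ExcJ2} are two instances of this ledger). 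Note also that your fallback via~(\ref{dimdercon3}) is redundant rather than supplementary: since $\fa\mathcal{J}(1)=\dim\der\mathcal{J}$, item~2 of Corollary~\ref{concritmain} \emph{is} the derivation-dimension criterion.

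The one genuine soft spot is your appeal to transitivity: ``a composition of contraction maps is again a contraction map'' is not a formality for one-parametric continuous contractions as defined by~(\ref{contr}). Given $U(\ep)$ realizing $\j_{2,5}\map\j_{2,1}$ and $V(\ep)$ realizing $\j_{2,1}\map\j_{2,3}$, the naive composite $U(\ep)V(\ep)$ need not converge to the right limit, because the first limit is only attained asymptotically; making this rigorous requires an orbit-closure (degeneration) argument that the paper nowhere develops, and over $\Com$ the equivalence of degeneration with one-parametric contraction is itself a nontrivial theorem. Since only five contractions are at stake in dimension two, you should simply exhibit all five matrices $U(\ep)$ explicitly rather than generate $\j_{2,5}\map\j_{2,3}$ by composition; this is cheap, e.g.\ $U(\ep)=\left(\begin{smallmatrix}\ep & 0\\ 1 & \ep\end{smallmatrix}\right)$ realizes $\j_{2,1}\map\j_{2,3}$, as one verifies from $e_1\circ_\ep e_1=\ep e_1+e_2\map e_2$, $e_1\circ_\ep e_2=\ep e_2\map 0$, $e_2\circ_\ep e_2=0$. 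With that repair, and with the Jordan analogue of Theorem~\ref{dimderconthm} for $q=1$ (which the paper asserts, correctly, to be a straightforward modification), your plan yields a complete proof of the statement --- indeed more than the thesis itself records.
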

Comparing the tables of the invariant functions $\fa$ of
two--dimensional complex Jordan algebras in Appendix~\ref{IFLJJ}
and Theorem~\ref{contract3dimJ} we formulate an analogous theorem
to Theorem~\ref{conthmmain}.
\begin{thm}[Contractions of two--dimensional complex Jordan
algebras]\label{conthmmainJ} $\, $\newline
  Let $\mathcal{J}$ and $\mathcal{J}_0$ be two two--dimensional complex
  Jordan
  algebras. Then there exists a proper one--parametric continuous
  contraction $\mathcal{J}\map\mathcal{J}_0$ if and only if
  \begin{equation*}
  \fa\mathcal{J}\leq\fa\mathcal{J}_0 \quad \text{and}\quad \fa\mathcal{J}(1)<\fa\mathcal{J}_0(1).
\end{equation*}
\end{thm}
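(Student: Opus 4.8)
The plan is to mirror the proof of Theorem~\ref{conthmmain} almost verbatim, since the present statement is its exact analogue for two-dimensional complex Jordan algebras. The argument splits, as there, into a necessity (``only if'') direction supplied by an inequality criterion, and a sufficiency (``if'') direction supplied by a finite comparison of the explicit tables of $\fa$.

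For the forward implication I would simply invoke the Jordan-algebra version of Corollary~\ref{concritmain}. As noted in the text preceding this theorem, the proof of Theorem~\ref{dimderconthm} carries over to Jordan algebras for $q=1$ with only notational changes, and the criterion~(\ref{dimdercon3}) was shown in~\cite{jordanesp} to remain valid in this setting. Consequently, whenever $\mathcal{J}_0$ is a proper contraction of $\mathcal{J}$, both inequalities $\fa\mathcal{J}\leq\fa\mathcal{J}_0$ and $\fa\mathcal{J}(1)<\fa\mathcal{J}_0(1)$ hold, exactly as in the Lie case.

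For the converse I would exploit the complete classification: there are only the five two-dimensional complex Jordan algebras $\j_{2,1},\dots,\j_{2,5}$, listed together with their invariant functions $\fa$ in Appendix~\ref{IFLJJ}. This gives finitely many ordered pairs $(\mathcal{J},\mathcal{J}_0)$. For each pair I would read off from the tables whether the two inequalities are satisfied, and cross-check the resulting list against the contractions enumerated in Theorem~\ref{contract3dimJ}. The forward implication already guarantees that every genuine contraction satisfies the inequalities; what remains is to confirm that no \emph{spurious} pair arises, i.e. that for every ordered pair absent from Theorem~\ref{contract3dimJ} at least one of the two inequalities fails.

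The main obstacle is precisely this spurious-pair check in the sufficiency direction. Unlike necessity, which is a single appeal to the inequality criterion, sufficiency rests on the specific numerical shapes of the five tables, and one must verify that the discriminating power of $\fa$ alone (through $\leq$ together with the strict inequality at $\alpha=1$) is fine enough to rule out every non-contraction. Analogously to Examples~\ref{ExcL1} and~\ref{ExcL2} in the Lie setting, I expect this to reduce, for each excluded pair, to exhibiting a single value of $\alpha$ at which $\fa\mathcal{J}(\alpha)>\fa\mathcal{J}_0(\alpha)$, or else observing $\fa\mathcal{J}(1)=\fa\mathcal{J}_0(1)$. With only five algebras this is a short finite inspection, so beyond careful bookkeeping I anticipate no genuine difficulty.
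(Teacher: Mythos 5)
Your overall plan coincides with the paper's own argument: necessity is the Jordan version of Corollary~\ref{concritmain} (via \cite{jordanesp} and the straightforward $q=1$ modification of Theorem~\ref{dimderconthm}), and sufficiency is a finite comparison of the tables of $\fa$ in Appendix~\ref{IFLJJ} against the classification of contractions in Theorem~\ref{contract3dimJ}, exactly mirroring the proof of Theorem~\ref{conthmmain}.

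There is, however, one concrete step in your sufficiency check that fails as stated. You enumerate only the five non-Abelian algebras $\j_{2,1},\dots,\j_{2,5}$ and assert that every ordered pair absent from Theorem~\ref{contract3dimJ} must violate one of the two inequalities. This conflates \emph{proper} with \emph{non-trivial}: Theorem~\ref{contract3dimJ} lists only the non-trivial contractions, while by the paper's conventions the contraction to the Abelian algebra $2\j_1$, always available via $U(\ep)=\ep\,1$, is trivial yet \emph{proper} whenever the source is non-Abelian. Since $(\fa\,2\j_1)(\alpha)=4$ for all $\alpha$ (the maximal value in dimension two) while $\fa\j_{2,i}(1)\leq 2$ for $i=1,\dots,5$, every pair $(\j_{2,i},2\j_1)$ satisfies both inequalities but is absent from Theorem~\ref{contract3dimJ}; your spurious-pair criterion would therefore wrongly flag these pairs as counterexamples to sufficiency. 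The repair is short: include $2\j_1$ in the enumeration of two-dimensional Jordan algebras; for targets $\mathcal{J}_0\cong 2\j_1$ observe that the trivial contraction supplies the required proper contraction; and for $2\j_1$ as a source observe that $\fa\,2\j_1\equiv 4$ makes the strict inequality at $\alpha=1$ unsatisfiable, consistent with the Abelian algebra admitting no proper contraction. With this amendment your finite inspection of the remaining non-Abelian pairs goes through in the manner of Examples~\ref{ExcJ1} and~\ref{ExcJ2}, and the proof matches the paper's.
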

Similarly to Examples \ref{ExcL1}, \ref{ExcL2}, we show that both
conditions in Theorem~\ref{conthmmainJ} are necessary for the
description of contractions of two--dimensional Jordan algebras.
\begin{example}\label{ExcJ1}
Observing the corresponding tables of the functions $\fa$, we see
that the contraction of $\j_{2,1}$ to $\j_{2,2}$ is not possible
due to $\fa\j_{2,1}(1)=\fa\j_{2,2}(1)=1$. The necessary condition
$\fa\j_{2,1}\leq\fa\j_{2,2}$ is satisfied and, thus, does not
exclude the existence of a contraction.
\end{example}
\begin{example}\label{ExcJ2}
On the contrary to the previous example, observing the
corresponding tables of the functions $\fa$, we see that the
contraction of $\j_{2,1}$ to $\j_{2,4}$ is not possible due to
$2=\fa\j_{2,1}(2)>\fa\j_{2,4}(2)=1$. The necessary condition
$1=\fa\j_{2,1}(1)<\fa\j_{2,4}(1)=2$ is satisfied and does not
exclude the existence of a contraction.
\end{example}

\section{Graded Contractions of Lie Algebras} Suppose we have
some Abelian group $G$. We say that a complex Lie algebra
$\el=(V,[\,\,,\,\,])$ is $G${\bf --graded}, if there is a
decomposition into subspaces $L_i,\,i\in G$ -- called a {\bf
grading} $\Gamma
$
--
\begin{equation}\label{gra}\Gamma : \el= \bigoplus _{i \in G}
L_i
\end{equation} and the relation
\begin{equation}\label{grupa}
  [L_i,L_j]\subseteq L_{i+j},\q \forall i,j \in G
\end{equation}
holds. The decomposition $\el=\bigoplus _{i \in G} gL_i$, where
$g\in \aut \el$, is also a grading of $\el$ and is {\bf
equivalent} to $\Gamma$.

We define for all $x \in {L}_{i}, \ y\in {L}_{j},\, i,j\in G$ and
$\ep_{ij}\in \Com$ a new bilinear mapping on $V$ by the formula
\begin{equation}\label{kontr}
  [x,y]_{\ep} = \ep_{ij} [x,y].
\end{equation}
Since we claim the bilinearity of $[\,\,,\,\,]_{\ep}$, the
condition (\ref{kontr}) determines this mapping on the whole $V$.
If $\el^\ep:=(V,[\,\,,\,\,]_{\ep})$ is a Lie algebra, then it is
called a {\bf graded contraction} of the Lie algebra $\el$. Note
that the contraction preserves a grading because it is also true
that
\begin{equation}\label{gra2}
\el^\ep= \bigoplus _{i \in G} L_i
\end{equation}
is a grading of $\el^\ep$.

There are two conditions which the parameters $\ep_{ij}$ must
fulfill. Antisymmetry of $[\,\,,\,\,]_{\ep}$ immediately gives
\begin{equation}\label{antisd}
  \ep_{ij}=\ep_{ji}.
\end{equation}
The validity of the Jacobi identity requires: for all (unordered)
triples $i,j,k \in G$
\begin{equation}\label{eqpp} e(i\:j\:k):\, [x,[y,z]_\ep]_\ep + [z,[x,y]_\ep]_\ep +[y,[z,x]_\ep]_\ep
=0\q (\forall x \in L_i)(\forall y\in L_j)(\forall z \in L_k).
\end{equation}
is satisfied. Each set of $\ep_{ij}$'s which satisfies the above
conditions -- determines a Lie algebra -- can be written in the
form of a symmetric matrix $\ep =(\ep_{ij})$ which is called a
{\bf contraction matrix}.

\subsection{Invariant Functions and Graded Contractions of $\slp(3,\Com)$}\label{higher}
The Lie algebra $\slp(3,\Com)$ has four non--equivalent gradings
\cite{HPP6}. One of them is a $\Z_3\times\Z_3$--grading, called
the {\bf Pauli grading} \cite{PZ1}, and has the following explicit
form:
\begin{align}\label{zkos} \slp(3,\Com) 
&=\Com
\begin{pmatrix} 1 & 0 & 0
\\ 0 & \omega  & 0 \\ 0 & 0 & \omega ^{2}
\end{pmatrix}
\oplus \Com
\begin{pmatrix}
1 & 0 & 0 \\ 0 & \omega ^{2} & 0 \\ 0 & 0 & \omega
\end{pmatrix}
\oplus \Com
\begin{pmatrix}
0 & 1 & 0 \\ 0 & 0 & 1 \\ 1 & 0 & 0
\end{pmatrix}
 \oplus \Com
\begin{pmatrix}
0 & 0 & 1 \\ 1 & 0 & 0 \\ 0 & 1 & 0
\end{pmatrix}
 \oplus \nonumber\\& \oplus \Com
\begin{pmatrix}
0 & \omega  & 0 \\ 0 & 0 & \omega ^{2} \\ 1 & 0 & 0
\end{pmatrix}
 \oplus \Com
\begin{pmatrix}
0 & 0 & \omega  \\ 1 & 0 & 0 \\ 0 & \omega ^{2} & 0
\end{pmatrix}
 \oplus \Com
\begin{pmatrix}
0 & \omega ^{2} & 0 \\ 0 & 0 & \omega  \\ 1 & 0 & 0
\end{pmatrix}
 \oplus \Com
\begin{pmatrix}
0 & 0 & \omega ^{2} \\ 1 & 0 & 0 \\ 0 & \omega  & 0
\end{pmatrix} \nonumber
\end{align}
where $\omega=\exp (2\pi i/3)$.

In~\cite{HN2}, all graded contractions corresponding to the Pauli
grading of $\slp(3,\Com)$ have been found. Classifying the outcome
of the contraction, it turned out that the set of invariant
characteristics of Lie algebras, basically the set $\inv \el$
without the dimension of associated Lie algebras, is insufficient
to distinguish among the results. This difficulty was dealt with
by the explicit calculation of the isomorphism problem: complex
Lie algebras $\el$ and $\wt\el$ of the same dimension $n$
determined by structure constants $c_{ij}^k$ and $\wt c_{ij}^k$ in
are isomorphic if and only if there exists a regular matrix
$A=(A_{ij})\in GL(n,\Com)$, whose elements satisfy the following
system of quadratic equations
\begin{equation}\label{ire}
\sum_{r=1}^{n} c_{ij}^{r}A_{kr} = \sum_{s, t =1}^{n} A_{s i}A_{t
j}\wt c_{s t}^k,\quad \forall i,j,k\in\{1,\dots , n\}.
\end{equation}

This direct computation was used in \cite{HN2} to show that the
two graded contractions of the Pauli graded $\slp(3,\Com)$ denoted
by $\el_{17,9}$, $\el_{17,12}$ and defined already in
Example~\ref{intersect} are non--isomorphic. We have proved this
fact in Example~\ref{intersect} by analyzing the structure of
associated Lie algebra $\gd{1,1,1}\el\cap\gd{0,1,1}\el$. It was
pointed out in~\cite{HN6} that the set of invariants $\inv \el$
enriched by the sets $\inv\gd{1,1,1}$ and $\inv
[\gd{1,1,1}\el\,\cap\,\gd{0,1,1}\el]$ is sufficient for the
description of all results of the Pauli graded $\slp(3,\Com)$ --
parametric continua excluded.

Employing the invariant function $\fa$, one may also distinguish
between $\el_{17,9}$ and $\el_{17,12}$.

\begin{example}\label{intersect2}
Consider the graded contractions of the Pauli graded
$\slp(3,\Com)$ denoted by $\el_{17,9}$ and $\el_{17,12}$ and
defined in Example~\ref{intersect}. We have seen that calculating
the set of invariants $\inv \el$ one has: $$
\begin{array}{llll}
\inv\el_{17,9} =\inv\el_{17,12} & (8,4,0) (8,4,2,0)(2,5,8) & 2 &
[16,19,9,11]
\end{array}
$$ Calculating the dimensions of the associated Jordan algebras we
obtain:
\begin{align*}
\dim \gd{1,1,-1}\el_{17,9}&=\dim \gd{1,1,-1}\el_{17,12}=8
\\ \dim \gd{0,1,-1}\el_{17,9}&=\dim \gd{0,1,-1}\el_{17,12}=17
\end{align*}
We may add the values of the invariant function $\fa^0$
\begin{center}
\vspace{-12pt}
\begin{tabular}[t]{|l||c|c|c|}
\hline  \parbox[l][20pt][c]{0pt}{}  $\alpha$ & 0 & 1 & \\ \hline
\parbox[l][20pt][c]{0pt}{} $\fa^0\el_{17,9}(\alpha)$ & 16  & 9 & 8\\ \hline
\parbox[l][20pt][c]{0pt}{} $\fa^0\el_{17,12}(\alpha)$ & 16  & 9 & 8\\
\hline
\end{tabular}
\end{center}
and we see that the unique characterization is not attained.
However, calculating the invariant function $\fa$ yield:
\begin{center}\vspace{-12pt}
\begin{tabular}[t]{|l||c|c|c|}
\hline \parbox[l][20pt][c]{0pt}{}   $\alpha$ & 0 & -2 & \\ \hline
\parbox[l][20pt][c]{0pt}{} $\fa\el_{17,9}(\alpha)$ & 19 & 17 & 16\\ \hline
\end{tabular}\quad
\begin{tabular}[t]{|l||c|c|c|}
\hline \parbox[l][20pt][c]{0pt}{}   $\alpha$             & 0  &
$-\frac{1}{2}$ & \\ \hline \parbox[l][20pt][c]{0pt}{}
$\fa\el_{17,12}(\alpha)$ & 19 & 17            & 16\\ \hline
\end{tabular}
\end{center}
and since $\fa\el_{17,9} \neq \fa\el_{17,12}$, we see that the
conclusion $\el_{17,9} \ncong \el_{17,12}$ provides the function
$\fa$, as well as the set of invariants $\inv
[\gd{1,1,1}\el\,\cap\,\gd{0,1,1}\el]$; laborious direct
computation of the isomorphism problem is now unnecessary.
\end{example}

When solving the isomorphism problem for parametric continua of
Lie algebras, the situation is far more challenging.

\begin{example}\label{e1825}
Consider the following graded contraction of the Pauli graded
$\slp(3,\Com)$:
\begin{equation}\label{1825}
\ep_{18,25}(a)=\left(
\begin{smallmatrix}
0 & 0 & 1 & a & 0 & 0 & 0 & 0 \\ 0 & 0 & 1 & 1 & 0 & 0 & 0 & 0 \\
1 & 1 & 0 & 0 & 1 & 0 & 1 & 0 \\ a & 1 & 0 & 0 & 0 & 0 & 0 & 0 \\
0 & 0 & 1 & 0 & 0 & 0 & 0 & 0 \\ 0 & 0 & 0 & 0 & 0 & 0 & 0 & 0 \\
0 & 0 & 1 & 0 & 0 & 0 & 0 & 0 \\ 0 & 0 & 0 & 0 & 0 & 0 & 0 & 0
\end{smallmatrix}
 \right)
\end{equation}
We determine this graded contraction by listing its non--zero
commutation relations in $\Z_3$--labeled basis $( l_{01}, l_{02},
l_{10},  l_{20}, l_{11}, l_{22}, l_{12},l_{21})$: $$
\begin{array}{ll}
\el_{18,25}(a) \quad & [l_{01},l_{10}]=l_{11},\
[l_{01},l_{20}]=-al_{21},\ [l_{02},l_{10}]=l_{12},\
[l_{02},l_{20}]=l_{22},\\
            & [l_{10},l_{11}]=l_{21},\ [l_{10},l_{12}]=l_{22}, \q a\in \Com \\

\end{array}
$$

Infinitely many Lie algebras $\el_{18,25}(a)$ -- the parametric
continuum -- are all indecomposable and nilpotent. It is clear,
that the set $\inv \el$, or a similar finite set of certain
dimensions, can never completely characterize an infinite number
of algebras in $\el_{18,25}(a)$. Moreover, all invariants based on
the trace of the adjoint representation, such as $C_{pq}$
(\ref{Cpq}) and $\chi_i $(\ref{Chi}) are, due to
Theorem~\ref{Engel}, worthless in the case of the nilpotent
parametric continuum. The behaviour of the function $\fa$ turns
out to be quite valuable, otherwise we would be forced to try to
solve the isomorphism problem explicitly.

First, we achieve partial characterization by isolating two points
of $\el_{18,25}(a)$, namely $a=0,-1$, and obtain $$
\begin{array}{llll}
 \inv\el_{18,25}(0) \qquad & (8,4,0)
(8,4,2,0)(2,5,8) & 4 & [21,23,10,14] \\ \inv\el_{18,25}(-1) \qquad
& (8,4,0) (8,4,2,0)(2,5,8) & 4 & [22,22,10,13] \\
\inv\el_{18,25}(a)\quad a\neq 0,-1 \qquad & (8,4,0)
(8,4,2,0)(2,5,8) & 4 & [20,22,10,13] \\
\end{array}
$$ Calculating the dimensions of the associated Jordan algebras we
obtain for all $a\in \Com$:
\begin{align*}
\dim \gd{1,1,-1}\el_{18,25}(a)&=9
\\ \dim
\gd{0,1,-1}\el_{18,25}(a) &=18.
\end{align*}
The invariant function $\fa^0$ is not again of much use:
\begin{center}\vspace{-12pt}
\begin{tabular}[t]{|l||c|c|c|}
\hline \parbox[l][20pt][c]{0pt}{}   $\alpha$  & 0  & 1 & \\ \hline
\parbox[l][20pt][c]{0pt}{} $\fa^0\el_{18,25}(a)(\alpha)$ & 16 & 10 & 9\\ \hline
\end{tabular}
\end{center}
However, the calculation of the invariant function $\fa$ yield:
\begin{center}\vspace{-12pt}
\begin{tabular}[t]{|l||c|c|}
\multicolumn{3}{l}{$a = 0$} \\ \hline \parbox[l][20pt][c]{0pt}{}
$\alpha$  & 0  &  \\ \hline \parbox[l][20pt][c]{0pt}{}
$\fa\el_{18,25}(0)(\alpha)$ & 23 & 21 \\ \hline
\end{tabular}\quad
\begin{tabular}[t]{|l||c|c|c|c|}
\multicolumn{5}{l}{$a=1$} \\ \hline \parbox[l][20pt][c]{0pt}{}
$\alpha$  & 0  & -1 & 1 &\\ \hline \parbox[l][20pt][c]{0pt}{}
$\fa\el_{18,25}(1)(\alpha)$ & 22 & 21 & 20 & 19\\ \hline
\end{tabular}

\vspace{12pt}

\begin{tabular}[t]{|l||c|c|c|}
\multicolumn{4}{l}{$a=-1$} \\ \hline \parbox[l][20pt][c]{0pt}{}
$\alpha$  & 0  & 1 & \\ \hline \parbox[l][20pt][c]{0pt}{}
$\fa\el_{18,25}(-1)(\alpha)$ & 22 & 22 & 19\\ \hline
\end{tabular}

\vspace{12pt}

\begin{tabular}[t]{|l||c|c|c|c|c|}
\multicolumn{6}{l}{$a \neq 0, \pm 1$} \\ \hline
\parbox[l][20pt][c]{0pt}{}   $\alpha$  & 0  & 1 & $-a$ &
$-\frac{1}{a}$ &\\ \hline \parbox[l][20pt][c]{0pt}{}
$\fa\el_{18,25}(a)(\alpha)$ & 22 & 20 & 20 & 20 & 19\\ \hline
\end{tabular}
\end{center}
In that last table ($a \neq 0, \pm 1$) we observe that the
function $\fa$ has the same form for pairs $\el_{18,25}(a)$ and
$\el_{18,25}(1/a)$. This indicates possible isomorphism between
these pairs and, indeed, we verify directly that
$$\el_{18,25}(a)\cong \el_{18,25}(1/a),\q a \neq 0, \pm 1. $$
Finally since for $a, a' \neq 0,\pm1, \ a\neq a',\ a \neq 1/a'$
the inequality
 $$\fa\el_{18,25}(a)(-a)= 20 \neq \fa\el_{18,25}(a')(-a) = 19$$
holds, the relation $\el(a) \ncong \el(a')$ is thus guaranteed.
The invariant function $\fa$ provides us therefore with a complete
characterization of the continuum $\el_{18,25}(a)$.
\end{example}

\begin{example}\label{e1713} Similarly to the previous example, consider the following graded
contraction of the Pauli graded $\slp(3,\Com)$:
\begin{equation}\label{1713}
\ep^{17,13}(a)= \left(
\begin{smallmatrix}
0 & 0 & a & 1 & 0 & 1 & 0 & 0 \\ 0 & 0 & 1 & 0 & 0 & 0 & 0 & 0 \\
a & 1 & 0 & 0 & 1 & 1 & 0 & 0 \\ 1 & 0 & 0 & 0 & 0 & 1 & 0 & 0 \\
0 & 0 & 1 & 0 & 0 & 0 & 0 & 0 \\ 1 & 0 & 1 & 1 & 0 & 0 & 0 & 0 \\
0 & 0 & 0 & 0 & 0 & 0 & 0 & 0 \\ 0 & 0 & 0 & 0 & 0 & 0 & 0 & 0
\end{smallmatrix}
 \right)
\end{equation}
We determine this graded contraction by listing its non--zero
commutation relations in $\Z_3$--labeled basis $( l_{01}, l_{02},
l_{10},  l_{20}, l_{11}, l_{22}, l_{12},l_{21})$:
 $$
\begin{array}{ll}
\el_{17,13}(a) \quad & [l_{01},l_{10}]=al_{11},\
[l_{01},l_{20}]=l_{21},\ [l_{01},l_{22}]=l_{20},\
[l_{02},l_{10}]=l_{12},\\
  & [l_{10},l_{11}]=l_{21},\ [l_{10},l_{22}]=l_{02},\ [l_{20},l_{22}]=l_{12},\q a\neq 0 \\
\end{array}
$$ Lie algebras $\el_{17,13}(a)$ are all indecomposable and
nilpotent. Isolating one point, $a=-1$ we obtain $$
\begin{array}{llll}
\inv\el_{17,13}(-1) \qquad & (8,5,0) (8,5,2,0)(2,5,8) & 4 &
[19,19,8,9]
\\ \inv\el_{17,13}(a)\q a\neq 0,-1 \qquad & (8,5,0) (8,5,2,0)(2,5,8) & 4 &
[17,19,8,9] \\
\end{array}
$$ Calculating the dimensions of the associated Jordan algebras we
obtain for all $a\in \Com$:
\begin{align*}
\dim \gd{1,1,-1}\el_{17,13}(a)&=7\\
\dim\gd{0,1,-1}\el_{17,13}(a)&=18.
\end{align*}
The invariant function $\fa^0$ has the form
\begin{center}\vspace{-12pt}
\begin{tabular}[t]{|l||c|c|c|}
\hline \parbox[l][20pt][c]{0pt}{}   $\alpha$  & 0  & 1 & \\ \hline
\parbox[l][20pt][c]{0pt}{} $\fa^0\el_{17,13}(a)(\alpha)$ & 16 & 8 & 7\\ \hline
\end{tabular}
\end{center}
and the invariant function $\fa$ has the form

\begin{center}\vspace{-12pt}
\begin{tabular}[t]{|l||c|c|c|c|}
\multicolumn{5}{l}{$a=1$} \\ \hline \parbox[l][20pt][c]{0pt}{}
$\alpha$  & -1  & 0 & 1 &\\ \hline \parbox[l][20pt][c]{0pt}{}
$\fa\el_{17,13}(1)(\alpha)$ & 19 & 19 & 17 & 16\\ \hline
\end{tabular}\qquad
\begin{tabular}[t]{|l||c|c|c|c|}
\multicolumn{5}{l}{$a=-1$} \\ \hline \parbox[l][20pt][c]{0pt}{}
$\alpha$  & 0 & 1 & -1 &\\ \hline \parbox[l][20pt][c]{0pt}{}
$\fa\el_{17,13}(-1)(\alpha)$ & 19 & 19 & 17 & 16\\ \hline
\end{tabular}

\vspace{12pt}

\begin{tabular}[t]{|l||c|c|c|c|c|c|}
\multicolumn{7}{l}{$a \neq \pm1$} \\ \hline
\parbox[l][20pt][c]{0pt}{}   $\alpha$  & 0 & 1 & $-1$ & $-a$ &
$-\frac{1}{a}$ & \\ \hline \parbox[l][20pt][c]{0pt}{}
$\fa\el_{17,13}(a)(\alpha)$ & 19 & 17 & 17 & 17 & 17 & 16\\ \hline
\end{tabular}
\end{center}
After verifying $\el_{17,13}(a)\cong\el_{17,13}(1/a),\,a\neq 0,\pm
1$, we conclude that the function $\fa$ again represents a
priceless instrument providing complete description of presented
parametric continuum of Lie algebras.
\end{example}

\begin{example}\label{e177} Consider the following graded
contraction of the Pauli graded $\slp(3,\Com)$:
\begin{equation}\label{177}
\ep^{17,7}(a)=
 \left(
\begin{smallmatrix}
0 & 0 & a & 1 & 1 & 1 & 0 & 0 \\ 0 & 0 & 1 & 0 & 0 & 1 & 0 & 0 \\
a & 1 & 0 & 0 & 1 & 0 & 0 & 0 \\ 1 & 0 & 0 & 0 & 0 & 0 & 0 & 0 \\
1 & 0 & 1 & 0 & 0 & 0 & 0 & 0 \\ 1 & 1 & 0 & 0 & 0 & 0 & 0 & 0 \\
0 & 0 & 0 & 0 & 0 & 0 & 0 & 0 \\ 0 & 0 & 0 & 0 & 0 & 0 & 0 & 0
\end{smallmatrix}
 \right),
\end{equation}
where $a\neq 0$. We may also determine this graded contraction by
listing its non--zero commutation relations in $\Z_3$--labeled
basis $( l_{01}, l_{02}, l_{10},  l_{20}, l_{11}, l_{22},
l_{12},l_{21})$: $$
\begin{array}{ll}
\el_{17,7}(a): \quad & [l_{01},l_{10}]=-al_{11},\
[l_{01},l_{20}]=l_{21},\ [l_{01},l_{11}]=l_{12},\
[l_{01},l_{22}]=l_{20},\\
  & [l_{02},l_{10}]=l_{12},\ [l_{02},l_{22}]=l_{21},\ [l_{10},l_{11}]=l_{21},\q a\neq 0 .\\
\end{array} $$ Lie algebras $\el_{17,7}(a)$ are all
indecomposable and nilpotent. We obtain for all $a\in \Com,\,
a\neq 0$: $$
\begin{array}{llll}
\inv\el_{17,7}(a) \qquad & (8,4,0) (8,4,2,0)(2,5,8) & 2 &
[19,20,9,12]
\end{array}
$$ Computing the dimensions of the associated Jordan algebras we
obtain for all $a\in \Com$:
\begin{align*}
\dim \gd{1,1,-1}\el_{17,7}(a)&=8\\
\dim\gd{0,1,-1}\el_{17,7}(a)&=17.
\end{align*}
The invariant functions $\fa^0$ and $\fa$ have the following form:
\begin{center}\vspace{-12pt}
\begin{tabular}[t]{|l||c|c|c|}
\hline \parbox[l][20pt][c]{0pt}{}   $\alpha$  & 0  & 1 & \\ \hline
\parbox[l][20pt][c]{0pt}{} $\fa^0\el_{17,7}(a)(\alpha)$ & 16 & 9 & 8\\ \hline
\end{tabular}
\qquad
\begin{tabular}[t]{|l||c|c|c|}
\hline \parbox[l][20pt][c]{0pt}{}   $\alpha$  & 0  & 1 & \\ \hline
\parbox[l][20pt][c]{0pt}{} $\fa\el_{17,7}(a)(\alpha)$ & 20 & 19 & 18\\ \hline
\end{tabular}
\end{center}
In this case, the function $\fa$ completely fails  -- does not
depend on $a\neq 0$. We are able, however, to advance by
calculation of the function $\fb$:
\begin{center}\vspace{-12pt}
\begin{tabular}[t]{|l||c|c|c|c|}
\multicolumn{5}{l}{$a=1$} \\ \hline \parbox[l][20pt][c]{0pt}{}
$\alpha$  & 0 & 1 & -1 &\\ \hline \parbox[l][20pt][c]{0pt}{}
$\fb\el_{17,7}(1)(\alpha)$ & 112 & 83 & 81 & 80\\ \hline
\end{tabular}\qquad
\begin{tabular}[t]{|l||c|c|c|c|}
\multicolumn{5}{l}{$a=-1$} \\ \hline \parbox[l][20pt][c]{0pt}{}
$\alpha$  & 0 & 1 & -1 &\\ \hline \parbox[l][20pt][c]{0pt}{}
$\fb\el_{17,7}(-1)(\alpha)$ & 104 & 83 & 81 & 80\\ \hline
\end{tabular}

\vspace{12pt}

\begin{tabular}[t]{|l||c|c|c|c|}
\multicolumn{5}{l}{$a=\frac{1}{4}+\frac{\sqrt{7}}{4}i$}\parbox[l][20pt][c]{0pt}{}
\\ \hline
\parbox[l][26pt][c]{0pt}{} $\alpha$  & 0 & 1 & $-\frac{1}{4}-\frac{\sqrt{7}}{4}i$ &\\ \hline
\parbox[l][26pt][c]{0pt}{} $\fb\el_{17,7}\left(\frac{1}{4}+\frac{\sqrt{7}}{4}i\right)(\alpha)$ & 104 & 82 &
82 & 80\\ \hline
\end{tabular}

\vspace{12pt}

\begin{tabular}[t]{|l||c|c|c|c|}
\multicolumn{5}{l}{$a=\frac{1}{4}-\frac{\sqrt{7}}{4}i$}\parbox[l][20pt][c]{0pt}{}
\\ \hline
\parbox[l][26pt][c]{0pt}{} $\alpha$  & 0 & 1 & $-\frac{1}{4}+\frac{\sqrt{7}}{4}i$ &\\ \hline
\parbox[l][26pt][c]{0pt}{} $\fb\el_{17,7}\left(\frac{1}{4}-\frac{\sqrt{7}}{4}i\right)(\alpha)$ & 104 & 82 &
82 & 80\\ \hline
\end{tabular}

\vspace{12pt}

\begin{tabular}[t]{|l||c|c|c|c|}
\multicolumn{5}{l}{$a=\frac{1}{3}$}\parbox[l][20pt][c]{0pt}{}
\\ \hline
\parbox[l][26pt][c]{0pt}{} $\alpha$  & 0 & 1 & $-\frac{1}{3}$ &\\ \hline
\parbox[l][26pt][c]{0pt}{} $\fb\el_{17,7}\left(\frac{1}{3}\right)(\alpha)$ & 104 & 83 &
81 & 80\\ \hline
\end{tabular}

\vspace{12pt}

\begin{tabular}[t]{|l||c|c|c|c|c|}
\multicolumn{6}{l}{$a \neq
0,\pm1,\frac{1}{3},\frac{1}{4}\pm\frac{\sqrt{7}}{4}i
$}\parbox[l][20pt][c]{0pt}{} \\ \hline
\parbox[l][20pt][c]{0pt}{}   $\alpha$  & 0 & 1 & $-a$ & $-\frac{1}{2}+\frac{1}{2a}$ & \\
  \hline \parbox[l][20pt][c]{0pt}{}
$\fb\el_{17,7}(a)(\alpha)$ & 104 & 82 & 81 & 81 & 80 \\ \hline
\end{tabular}
\end{center}
In order to verify that
\begin{equation}\label{equ177}
\fb\el_{17,7}(a):\,  104_1,82_1,81_2,80, \q a \neq
0,\pm1,\frac{1}{3},\frac{1}{4}\pm\frac{\sqrt{7}}{4}i
\end{equation}
we have to check the equality
\begin{equation*}
-a=-\frac{1}{2}+\frac{1}{2a}
\end{equation*}
which has the solutions $\frac{1}{4}\pm\frac{\sqrt{7}}{4}i$. Thus,
(\ref{equ177}) is verified.

We proceed to solve the relation
$$\fb\el_{17,7}(a)=\fb\el_{17,7}(a'), \q a,a' \neq
0,\pm1,\frac{1}{3},\frac{1}{4}\pm\frac{\sqrt{7}}{4}i $$ and we
obtain
\begin{enumerate}
\item If $-a=-a',\,
-\frac{1}{2}+\frac{1}{2a}=\frac{1}{2}+\frac{1}{2a'}$ then $a=a'$.
\item If $-a=-\frac{1}{2}+\frac{1}{2a'},\,-a'=-\frac{1}{2}+\frac{1}{2a}
$ then $a=a'=\frac{1}{4}\pm\frac{\sqrt{7}}{4}i$.
\end{enumerate}
The second case is not possible. Observing that all other tables
of the function $$\fb\el_{17,7}(a),\, a=
\pm1,\frac{1}{3},\frac{1}{4}\pm\frac{\sqrt{7}}{4}i $$ are mutually
different, we have: if $\fb\el_{17,7}(a)=\fb\el_{17,7}(a'), \,
a,a' \neq 0$ then $a=a'$. We conclude that even though the
function $\fa$ did not distinguish the algebras in nilpotent
parametric continuum, the function $\fb$ provided their complete
description.
\end{example}

A further analysis of the parametric continua resulting from the
Pauli graded $\slp(3,\Com)$ as well as the analysis of all graded
contractions of $\slp(3,\Com)$ will be done elsewhere~\cite{Nov}.
We summarize the results from Examples \ref{e1825}, \ref{e1713}
and \ref{e177} into the following theorem.
\begin{thm}\label{zvast}
Let  $\el_{18,25}(a)$, $\el_{17,13}(a)$ and $\el_{17,7}(a)$,
$a\neq 0$ be graded contractions of the Pauli graded
$\slp(3,\Com)$, defined by the contraction matrices (\ref{1825}),
(\ref{1713}) and (\ref{177}), respectively. Then it holds:
\begin{enumerate}
\item $\el_{18,25}(a)\cong \el_{18,25}(a') $ if and only if
$a'=a,\,1/a$.
\item $\el_{17,13}(a)\cong \el_{17,13}(a') $ if and only if
$a'=a,\,1/a$.
\item $\el_{17,7}(a)\cong \el_{17,7}(a') $ if and only if $a'=a$.
\end{enumerate}
\end{thm}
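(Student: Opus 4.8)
The statement gathers the three parametric-continuum problems settled in Examples~\ref{e1825}, \ref{e1713} and~\ref{e177}, so the plan is to assemble it from those computations rather than to recompute the invariant functions. In every case the equivalence splits into two implications. For the ``only if'' direction the argument is uniform: by Corollary~\ref{invarfunc} an isomorphism $\el\cong\wt\el$ forces $\fa\el=\fa\wt\el$, and by Corollary~\ref{invarfunc2} it forces $\fb\el=\fb\wt\el$; hence two isomorphic members of a continuum must have \emph{identical} invariant-function tables, and the admissible parameter relations are exactly those making the relevant table equal. For the ``if'' direction I must exhibit an explicit change of basis realizing each admitted coincidence.

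For cases~1 and~2 the decisive invariant is $\fa$. From the generic tables in Examples~\ref{e1825} and~\ref{e1713}, the distinguishing value of $\fa$ (namely $20$ for $\el_{18,25}$ and $17$ for $\el_{17,13}$) is attained exactly at the moving points $\alpha=-a,-1/a$ together with parameter-free points ($\alpha=1$ in case~1, $\alpha=\pm1$ in case~2), while the maximal value sits at $\alpha=0$. For two generic members the equality $\fa\el_{18,25}(a)=\fa\el_{18,25}(a')$ (resp.\ $\fa\el_{17,13}(a)=\fa\el_{17,13}(a')$) thus reduces to the set equality $\{-a,-1/a\}=\{-a',-1/a'\}$, i.e.\ $\{a,1/a\}=\{a',1/a'\}$, whose only solutions are $a'=a$ and $a'=1/a$. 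It then remains to dispose of the finitely many degenerate values ($a=\pm1$; $a=0$ being excluded): one checks that each such member has an $\fa$ that differs, as a function, from every generic table (already the multiset of occurrence counts is different) and from that of the remaining degenerate member (the distinguishing value is located at a different $\alpha$), so a degenerate member is isomorphic to nothing but itself, in accordance with $1/a=a$ there. Finally the admitted coincidence $a'=1/a$ is realized by the isomorphisms $\el_{18,25}(a)\cong\el_{18,25}(1/a)$ and $\el_{17,13}(a)\cong\el_{17,13}(1/a)$, each a regular linear map verifying the quadratic system~(\ref{ire}).

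For case~3 the function $\fa$ is useless, because Example~\ref{e177} shows it is independent of $a\neq0$; I pass to $\fb$. Its generic table realizes the occurrence pattern $104_1,82_1,81_2,80$, with the value $81$ attained precisely at $\alpha=-a$ and $\alpha=-\frac{1}{2}+\frac{1}{2a}$ and the value $82$ at the parameter-free point $\alpha=1$. Imposing $\fb\el_{17,7}(a)=\fb\el_{17,7}(a')$ therefore forces $\{-a,-\frac{1}{2}+\frac{1}{2a}\}=\{-a',-\frac{1}{2}+\frac{1}{2a'}\}$; the diagonal identification gives $a'=a$, while the crossed identification $-a=-\frac{1}{2}+\frac{1}{2a'}$, $-a'=-\frac{1}{2}+\frac{1}{2a}$ leads to $2a^2-a+1=0$, i.e.\ $a=a'=\frac{1}{4}\pm\frac{\sqrt{7}}{4}i$, a value already split off as degenerate. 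After separately checking that the remaining special tables ($a=\pm1,\tfrac{1}{3},\frac{1}{4}\pm\frac{\sqrt7}{4}i$) are mutually distinct as functions, one concludes that $\fb$ forces $a'=a$; the ``if'' direction is then trivial, and there is no nontrivial coincidence to realize, which is why no $a\leftrightarrow1/a$ isomorphism occurs here.

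The genuinely hard part is the constructive ``if'' direction in cases~1 and~2: producing the explicit regular matrices conjugating $\el(a)$ onto $\el(1/a)$ and confirming~(\ref{ire}), which is the only step requiring an actual isomorphism rather than a mere invariant. A secondary but indispensable point is the rigour of the occurrence-count argument --- one must confirm that for generic $a$ the moving points $-a,-1/a$ (resp.\ $-a,-\frac{1}{2}+\frac{1}{2a}$) are pairwise distinct and distinct from the parameter-free points, so that each value in the table has exactly the claimed number of occurrences; the auxiliary equations controlling this (such as $-a=-\frac{1}{2}+\frac{1}{2a}$ in case~3) are precisely those solved in the examples and they determine which parameter values have to be excluded as degenerate.
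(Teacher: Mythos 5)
Your proposal matches the paper's own proof in both structure and substance: the paper proves Theorem~\ref{zvast} precisely by summarizing Examples~\ref{e1825}, \ref{e1713} and~\ref{e177}, i.e.\ the ``only if'' direction via the invariance of $\fa$ and $\fb$ (Corollaries~\ref{invarfunc}, \ref{invarfunc2}) and comparison of the tables at the moving points $-a,\,-1/a$ (resp.\ $-a,\,-\tfrac{1}{2}+\tfrac{1}{2a}$, where the crossed identification is discarded exactly as you do, via $a=a'=\tfrac{1}{4}\pm\tfrac{\sqrt{7}}{4}i$), together with the separate treatment of the degenerate parameter values, and the ``if'' direction via the directly verified isomorphisms $\el_{18,25}(a)\cong\el_{18,25}(1/a)$ and $\el_{17,13}(a)\cong\el_{17,13}(1/a)$. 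Your reduction to the set equality $\{a,1/a\}=\{a',1/a'\}$ and the occurrence-count caveats are correct and faithfully reflect the computations the paper relies on.
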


\nonumchapter{Conclusion}

In Chapter~\ref{CHgen} we have introduced a new concept -- the
$(\alpha,\beta,\gamma)$--derivations and presented results related
to their structure and significance as invariants. There exist,
however, several non-equivalent ways of generalizing the
derivation of a Lie algebra. For example in \cite{Hartwig}, a
linear operator $A \in \End\el$ is called a
$(\sigma,\tau)$--derivation of $\el$ if for some $\sigma, \tau \in
\End\el$ the property $ A[x,y] = [Ax,\tau y] + [\sigma x,Ay]$
holds for all $x,y \in \el$. This generalization for $\sigma,\tau$
homomorphisms appears already in \cite{Jacobson}. If there exists
$B\in \der\el$ such that for all $x,y \in \el$ the condition $
A[x,y] = [Ax,y] + [x,By]$ holds, then the operator $A$ forms
another generalization \cite{Bresar}. A more general definition
emerged in \cite{Leger} and runs as follows: $A \in \End\el$ is
called a {\it generalized derivation} of $\el$ if there exist
$B,C\in \End\el$ such that for all $x,y \in \el$ the property $
C[x,y] = [Ax,y] + [x,By]$ holds.

We see that various definitions of generalized derivations were
formulated with the aid of some operators inserted into the
equation $D[x,y]=[D x,y]+[x,D y]$. But for the 'invariance' of the
definition (\ref{gd}), i.~e. the validity of the equation
(\ref{invar}), it is essential that these operators commute with
an arbitrary isomorphism of the Lie algebra. Thus, we have chosen
in (\ref{gd}), from this point of view, a 'maximal' set of
operators, i.~e. multiples of the identity operator.

Among the results of the concept of
$(\alpha,\beta,\gamma)$--derivations are 'associated' Lie and
Jordan algebras. Jordan algebra $\gd{1,1,-1}\el$ and Lie algebras
$\gd{0,1,1}\el$, $ \gd{1,1,1}\el \cap \gd{0,1,1}\el$ seem to be
unnoticed in the existing literature. We demonstrated in
Proposition~\ref{indepass} and Example~\ref{intersect} that their
dimension and structure are a really useful addition to the
knowledge about given Lie algebra. The sets $\gd{1,1,0}{\el}$ and
$\gd{0,1,-1}{\el}$ appeared in \cite{Leger} and are called {\it
centroid} and {\it quasicentroid}, respectively. The inquiry under
which conditions these sets coincide has also been discussed
in~\cite{Leger}.

In Chapter~\ref{CHtwi} we have defined the set of
$\kappa$--twisted cocycles $Z^q(\el,f,\kappa)$ for an arbitrary
representation. Then we investigated in detail the case
$f=\ad_\el$. This study provided a complete description of
four--dimensional complex Lie algebras and the algorithm for their
identification. There are, however, two other obvious choices for
the representation $f$: either the representation $\ad_\el^*$, or
the trivial representation $f:\el\map\Com$. Another option is to
define generalized prederivations (see e. g. \cite{Bur4}) -- i.~e.
insert four parameters into the equation $$P[x,[y,z]]=[P
x,[y,z]]+[x,[P y, z]]+ [x,[y,P z]],\q \forall x,y,z\in \el. $$
None of these possibilities turned out to be very fruitful,
therefore we did not investigate such generalizations in detail.
Note that according to Proposition~\ref{Albertprop} it is
pointless to modify by parameters the operator Lie multiplication
$XY-YX$, where $X,Y\in \End\el$.

In Chapter~\ref{CHcon} we have applied the concept of twisted
cocycles to formulate a necessary criterion for the existence of
continuous contraction. We demonstrated that
Corollary~\ref{concritmain} decides about the existence of a
continuous contraction for three--dimensional Lie and
two--dimensional Jordan algebras. In Examples~\ref{e1825},
\ref{e1713} and \ref{e177}, we have successfully applied the
invariant functions to the eight--dimensional graded contractions
of the Pauli graded~$\slp(3,\Com)$.

Resolving parametric continua of Lie algebras belongs to the most
challenging parts of their classification. Except for the explicit
calculation, all known tools for such resolving were based on the
adjoint representation. 'Trace' invariants $\chi_i$ and $C_{pq}$,
defined by relations~(\ref{Chi}), (\ref{Cpq}) and able to resolve
a parametric continuum, have been successfully used in
\cite{AY,Bur1,Nes}. Advantages of these invariants are easy
calculation and invariance under continuous contractions. One may
also consider action of the adjoint representation on the
nilradical of a Lie algebra -- considering eigenvalues of this
action one may also resolve a parametric continuum. All these
approaches, however, fail in the case of a nilpotent continuum  --
due to Theorem~\ref{Engel} none of the invariants $\chi_i$ nor
$C_{pq}$ exists, eigenvalues of the adjoint representation are all
zeros. We have seen in Examples~\ref{e1825}, \ref{e1713} and
\ref{e177} that such cases may resolve our new invariant functions
$\fa,\fb$ -- their form is non--trivial and depends on the
parameter of a nilpotent continuum. The main idea behind the
present work -- the classification of all graded contractions of
$\slp (3,\Com)$ -- can now be easier to achieve.

It is computationally advantageous that determining associated Lie
algebras is a linear problem: in order to determine the space
$\gd{\alpha,\beta,\gamma}\el$ for fixed
$\alpha,\beta,\gamma\in\Com$, one has to solve homogeneous system
of linear equations~(\ref{sys11}). The investigation how the
dimension of the vector space $\gd{\delta,1,1}\el$ depends on
$\delta$ -- computing the function $\fa$ -- is more challenging.
One has to analyze the rank of the $\delta$--dependent matrix
corresponding to the linear system~(\ref{sys11}). Determining the
rank of parametric matrices is called a {\it specialization
problem}. There has been some progress concerning the application
of various computational algorithms to parametric matrices. The
results, however, are unsatisfactory and computation of the rank
of parametric matrices is not implemented in standard symbolic
mathematical tools -- MAPLE, Mathematica. It was therefore
necessary to develop a new algorithm. This new algorithm is based
on the standard Gaussian elimination with row pivoting. The given
column is at first searched for a non--zero complex number. If
this number is not found, then the $\delta$--dependent element
with minimal length is assumed to be non--zero and this assumption
is added to some set -- the set of assumptions. The resulting set
of assumptions is analyzed and solved for the equality of its
elements to zero. For these solutions, special values of $\delta$,
the standard Gaussian elimination is performed again. This
algorithm, implemented in MAPLE VIII, turned out to be sufficient
for our purpose. However, it is in some cases quite
computationally demanding. Namely, computing Examples~\ref{AYE1},
\ref{AYE2} from~\cite{AY} took a significant amount of
computational time. Even more demanding is the computing the
functions $\fb,\fc$ for eight--dimensional Lie algebras -- the
$\delta$--dependent matrices have $224$ columns and contain
approximately two thousand rows.

Thus, allowing a compact formulation in Theorem~\ref{class4dim},
the significance of the functions $\fa, \fb$ in dimension four is
more theoretical than computationally advantageous. The invariants
$\chi_i$ provide sufficient description of parametric
four--dimensional Lie algebras. On the other hand, the knowledge
about the dimensions of highly non--trivial structures, interlaced
with given Lie algebra, may also be very valuable. Considering
parametric continua of nilpotent algebras the invariant functions
$\fa,\fb,\fc$ seem to be even more important. In these cases, the
behaviour of the invariant functions
$\fa,\fb,\fc$ is quite unique and irreplaceable.

\appendix
\chapter{Classification of $(\alpha,\beta,\gamma)$--derivations of two and three--dimensional
complex Lie and Jordan Algebras}\label{APA} Appendix \ref{APA} is
divided into two sections. Section~\ref{APAL} contains matrices of
$(\alpha,\beta,\gamma)$--derivations of two and three--dimensional
non--abelian complex Lie algebras. Section \ref{APAJ} contains
matrices of $(\alpha,\beta,\gamma)$--derivations of
two--dimensional non--abelian complex Jordan algebras. Instead of
symbols $\gd{\alpha,\beta,\gamma}\A$, abbreviated symbols
$\gd{\alpha,\beta,\gamma}$ are used.
\section{Lie Algebras}\label{APAL}
\subsection*{Two--dimensional Complex Lie Algebras}

%
\right) \right\}} $
\end{itemize}

\chapter{Invariant Functions of complex Lie and Jordan Algebras of dimensions $2,\,3$ and $4$}\label{IFLJ}
\vspace{-12pt} Appendix \ref{IFLJ} is divided into two sections.
Section \ref{IFLJL} contains the classification of complex Lie
algebras up to dimension four and the invariant functions
$\fa,\fb,\fc$. We basically follow the notation of~\cite{Nes} and
present a list connecting it to other notations.
Section~\ref{IFLJJ} contains the classification of one and
two--dimensional complex Jordan algebras \cite{jordanesp} and the
tables of the invariant function $\fa$. Instead of the symbols
$\fa\A,\fb\A,\fc\A$, abbreviated symbols $\fa,\fb,\fc$ are used.
Blank spaces in the tables of the functions $\fa,\fb,\fc$ denote
general complex numbers, different from all previously listed
values, e. g. it holds: $$\fa\g_{3,4}(-1)(\al)=3,\q\alpha\in
\Com,\,\al\neq\pm 1.$$

\section{Lie Algebras}\label{IFLJL}
\subsection*{Two--dimensional Complex Lie Algebras}
\hspace{16pt}
\qquad
\end{center}


\end{document}